\newtheorem{theorem}{Theorem}[section]
\newtheorem{lemma}{Lemma}[section]
\newtheorem{proposition}{Proposition}[section]
\newtheorem{definition}{Definition}[section]
\newtheorem{remark}{Remark}[section]
\newtheorem{example}{Example}[section]
\newenvironment{proof}[1][Proof]{\noindent \textbf{#1.} }{\  \rule{0.5em}{0.5em}}
\newenvironment{assumptionp}[1]{
  
  \assumptionalt
}{\endassumptionalt}
\newcommand{\mb}[1]{\mathbb{#1}}
\newcommand{\wh}[1]{\widehat{#1}}
\newcommand{\mr}[1]{\mathrm{#1}}
\newcommand{\mc}[1]{\mathcal{#1}}
\newcommand{\ind}{1\!\mathrm{l}}
\newcommand{\ul}[1]{\underline{#1}}
\newcommand{\ol}[1]{\overline{#1}}
\renewcommand\paragraph{\@startsection{paragraph}{4}{\z@}%
                                    {0pt \@plus1ex \@minus.2ex}%
                                    {-1em}%
                                    {\normalfont\normalsize\bfseries}}
\newcommand{\E}{\mathbb{E}}
\newcommand{\bpm}{\begin{pmatrix}}
\newcommand{\epm}{\end{pmatrix}}
\begin{document}

\defaultbibliography{sm}
\defaultbibliographystyle{chicago}
\begin{bibunit}
\shortcites{BenTal2013}

\author{%
{Timothy Christensen\thanks{%
 Department of Economics, New York University. \texttt{timothy.christensen@nyu.edu}}
 \quad
 Benjamin Connault\thanks{%
 Latour Trading LLC. \texttt{benjamin.connault@gmail.com}}
 }
}

\title{%
Counterfactual Sensitivity and Robustness\footnote{
We are very grateful to a co-editor and four anonymous referees, R.~Allen, S.~Bonhomme, G.~Chamberlain, A.~Galvao, B.~Kaplowitz, J.~Lazarev, M.~Masten, K.~Menzel, D.~Miller, M.~Mogstad, F.~Molinari, J.~Nesbit, A.~Poirier, J.~Porter, T.~Sargent, A.~Torgovitsky, and numerous seminar and conference participants for helpful comments and suggestions. Maximilian Huber provided outstanding research assistance. Support from the National Science Foundation via grant SES-1919034 is gratefully acknowledged.}
}

\date{May 25, 2022}

\maketitle
\thispagestyle{empty}

\begin{abstract}  
\singlespacing
\noindent 
We propose a framework for analyzing the sensitivity of counterfactuals to parametric assumptions about the distribution of latent variables in structural models. In particular, we derive bounds on counterfactuals as the distribution of latent variables spans nonparametric neighborhoods of a given parametric specification while other ``structural'' features of the model are maintained. Our approach recasts the infinite-dimensional problem of optimizing the counterfactual with respect to the distribution of latent variables (subject to model constraints) as a finite-dimensional convex program. We also develop an MPEC version of our method to further simplify computation in models with endogenous parameters (e.g., value functions) defined by equilibrium constraints. We propose plug-in estimators of the bounds and two methods for inference. We also show that our bounds converge to the sharp nonparametric bounds on counterfactuals as the neighborhood size becomes large. To illustrate the broad applicability of our procedure, we present empirical applications to matching models with transferable utility and dynamic discrete choice models.

\medskip 

\noindent \textbf{Keywords:} Robustness, ambiguity, model uncertainty, misspecification, global sensitivity analysis.

\medskip

\noindent \textbf{JEL codes:} C14, C18, C54, D81

\end{abstract} 

\clearpage
\setcounter{page}{1}

\pagenumbering{arabic}

\newpage

\section{Introduction}

Researchers frequently make parametric assumptions about the distribution of latent variables in structural models. These assumptions are typically made for computational convenience\footnote{Examples include the conventional Gumbel (or type-I extreme value) assumption in discrete choice models following \cite{McFadden1974}, dynamic discrete choice models following \cite{Rust}, and matching models with transferable utility following \cite{Dagsvik} and \cite{ChooSiow}. Models of static or dynamic discrete games often impose parametric assumptions about the distribution of payoff shocks---see, e.g., \cite{Berry1992}, \cite{AM2007}, \cite{BBL}, and \cite{CilibertoTamer}.} or because simulation-based methods are used for estimation. In many models, such as those we consider in this paper, the distribution of latent variables is not nonparametrically identified. This raises the possibility that model parameters and the outcomes of policy experiments, or \emph{counterfactuals}, may be only partially identified when parametric assumptions are relaxed. That is, different distributions may fit the data equally well in-sample, but may yield different values of the counterfactual. It is therefore natural to question whether counterfactuals are sensitive or robust to researchers' parametric assumptions, especially when evaluating the credibility of structural modeling exercises. 

This paper proposes a framework for analyzing the sensitivity of counterfactuals to parametric assumptions about the distribution of latent variables in a class of structural models. In particular, we derive bounds on counterfactuals as the distribution of latent variables spans nonparametric neighborhoods of a given parametric specification while other ``structural'' features of the model are maintained. This approach is in the spirit of global sensitivity analysis advocated by \cite{Leamer1985} (see also \cite{Tamer2015}). Global sensitivity analyses are important in this context: many structural models are nonlinear so policy interventions can have different effects at different points in the parameter space. But a major difficulty with implementing global sensitivity analyses is tractability. A more tractable alternative are local sensitivity analyses, which are based on small perturbations around a chosen specification. Because local approaches rely on linearization, they may fail to correctly characterize the range of counterfactuals predicted by a nonlinear model when the distribution  differs nontrivially from the researcher's chosen parametric specification. 

Our main insight is to borrow from the robustness literature in economics pioneered by \citeauthor{HS2001} (\citeyear{HS2001}, \citeyear{HS2008}) to simplify computation using convex programming.\footnote{Our approach is also related to the field of \emph{distributionally robust optimization} in operations research. See, e.g., \cite{Shapiro2017}, \cite{DuchiNamkoong}, and references therein.} Following this literature, we define neighborhoods around the researcher's parametric specification using statistical divergence (e.g., Kullback--Leibler divergence), with the option to add certain shape restrictions as appropriate. For tractability, we restrict our attention to models that may be written as a finite number of moment (in)equalities, where the expectation is with respect to the distribution of latent variables. While restrictive, this class accommodates many important models of static and dynamic discrete choice, discrete games, and matching. 

To describe our procedure, consider the problem of minimizing or maximizing the counterfactual at a fixed value of structural parameters by varying the distribution of latent variables over a  neighborhood, subject to the model's (in)equality restrictions. We use duality to recast this infinite-dimensional optimization problem as a finite-dimensional convex program. The value of this \emph{inner} program is treated as a criterion function, which is optimized in an \emph{outer} optimization with respect to structural parameters. Importantly, the dimension of the inner problem is independent of the neighborhood size, making our procedure tractable over both small and large neighborhoods.
To further simplify computation, we develop an MPEC version of our procedure for models featuring endogenous parameters (e.g., value functions) defined by equilibrium constraints. We show that this implementation can produce significant computational gains for dynamic discrete choice models in particular.

Our approach is conceptually different from nonparametric partial identification analyses which derive bounds on counterfactuals under minimal distributional assumptions. But as we show, bounds computed using our procedure converge to the (sharp) nonparametric bounds in the limit as the neighborhood size becomes large. 
Aside from sensitivity analyses, our methods may therefore be used to approximate nonparametric bounds by taking the neighborhood size to be large but finite.

For estimation and inference, we propose simple plug-in estimators of the bounds and establish their consistency. We also propose and theoretically justify two methods for inference: a computationally simple but conservative projection procedure and a relatively more efficient  bootstrap procedure. 

We illustrate our procedures with two empirical applications. The first revisits the ``marital college premium'' estimates reported in \cite{CSW}, which relied on an i.i.d. Gumbel (type-I extreme value) assumption for the distribution of individuals' idiosyncratic marital preferences (see also \cite{ChooSiow}). The second empirical application performs a counterfactual welfare analysis in the canonical dynamic discrete choice model of \cite{Rust}. 

\medskip

\paragraph{Related literature.} 
Our approach has connections with global prior sensitivity in Bayesian analysis \citep{ChamberlainLeamer,Leamer1982,Berger1984}, most notably \cite{GKU} and \cite{Ho} who consider sets of priors constrained by Kullback--Leibler divergence relative to a default prior. 

Motivated by questions of sensitivity, 
\cite{CTT} study inference in semiparametric likelihood models using sieve approximations for the infinite-dimensional nuisance parameter (the distribution of latent variables in our setting). For the class of moment-based models we consider, our approach instead eliminates the infinite-dimensional nuisance parameter via a convex program of fixed dimension. 

Several other works have used convex duality to characterize identified sets in models with latent variables. Most closely related are \cite{EGH} and \cite{Schennach}.\footnote{Works using other notions of ``duality'' to construct identified sets include \cite{BMM}, \cite{GalichonHenry2011}, \cite{ChesherRosen2017}, and \cite{Li}.} The problem we study is different, both because of its focus on counterfactuals, rather than structural parameters, and because the optimization is performed over a neighborhood, rather than over all distributions. As a consequence, our estimation and inference methods are also quite different.

\cite{Torgovitsky2019QE} uses linear programming to characterize sharp identified sets in latent variable models defined by quantile restrictions. Within this class, his approach is more computationally convenient than ours for characterizing identified sets. Several important moments or counterfactuals cannot be expressed as quantile restrictions, such as social surplus in discrete choice models and Bellman equations in dynamic discrete choice models. Our approach is compatible with these moments and counterfactuals, thereby allowing the user to characterize identified sets in broader classes of model as well as to perform sensitivity analyses.

There is also a literature deriving nonparametric bounds in specific latent variable  models. Examples include \cite{Manski2007,Manski2014},  \cite{AllenRehbeck}, \cite{TTY}, \cite{Laffers}, \cite{Torgovitsky}, and \cite{GualdaniSinha}. Most closely related is \cite{NoretsTang}, who construct identified sets of counterfactual conditional choice probabilities (CCPs) in dynamic binary choice models. Their approach is specific to counterfactual CCPs and to dynamic binary choice models. Our approach allows for a wider range of counterfactual (e.g., welfare), shape restrictions, and multinomial choice, in addition to performing sensitivity analyses.\footnote{\cite{KSS} and  \cite{KKLS} consider the converse problem, in which flow payoffs are nonparametric (as they can be in our setting) but the distribution of latent payoff shocks is known.}

Finally, our work is complementary to the recent literature on local sensitivity---see, e.g., \cite{KOE}, \cite{AGS2017,AGS2018}, \cite{AK}, \cite{BW}, and \cite{Mukhin}. Much of this literature is concerned with local misspecification of moment conditions, which is different from the setting we consider.

\medskip

\paragraph{Outline.} Section~\ref{s:procedure} introduces our procedure, estimators of the bounds, and shows our approach recovers nonparametric bounds  as the neighborhood size becomes large. Section~\ref{s:implementation} discusses practical aspects and implementation details. Section~\ref{s:delta} gives guidance for interpreting the neighborhood size. Empirical applications are presented in Section~\ref{s:examples}. Section~\ref{s:asymptotics} discusses estimation and inference. The online appendix presents extensions of our methodology, connections with local sensitivity analyses, additional empirical results, and proofs of our main results. A secondary online appendix presents background material on Orlicz classes and supplemental proofs.

\section{Procedure}\label{s:procedure}

We begin in Section~\ref{s:setup} by describing the class of models to which our procedure may be applied. Section~\ref{s:method} describes our approach, Section~\ref{s:duality} shows how duality is used to simplify the bounds, and Section~\ref{s:estimators} introduces our estimators of the bounds. Section~\ref{s:sharp} shows our bounds converge to the sharp nonparametric bounds as the neighborhood size becomes large.

\subsection{Setup}\label{s:setup}

We consider a class of models that link a structural parameter $\theta \in \Theta \subset \mb R^{d_\theta}$, a  vector of targeted moments $P_0 \in \mathcal P \subseteq \mathbb R^{d_P}$, and possibly an auxiliary parameter $\gamma_0 \in \Gamma$ (a metric space) via the moment restrictions
\begin{subequations}
\begin{align}
 \mb E^F[ g_1(U,\theta,\gamma_0)] & \leq P_{10} ,  \label{e:mod:1} \\
 \mb E^F[ g_2(U,\theta,\gamma_0)] & = P_{20} ,  \label{e:mod:2} \\
 \mb E^F[ g_3(U,\theta,\gamma_0)] & \leq 0 , \label{e:mod:3} \\
 \mb E^F[ g_4(U,\theta,\gamma_0)] & = 0 , \label{e:mod:4} 
\end{align}
\label{e:mod}%
\end{subequations}
where $g_1,\ldots,g_4$ are vectors of moment functions, $P_0 = (P_{10},P_{20})$ is partitioned conformably, and $\mb E^F$ denotes expectation with respect to a vector of latent variables $U \sim F$. We assume that the researcher has consistent estimators $(\hat P,\hat \gamma)$ of $(P_0,\gamma_0)$. We also assume that the researcher is interested in a (scalar) counterfactual of the form
\begin{equation} \label{e:kappa}
 \kappa = \mb E^F[k(U,\theta,\gamma_0)]  \,.
\end{equation}
This setup accommodates counterfactuals that do not depend explicitly on $U$, in which case (\ref{e:kappa}) reduces to $\kappa = k(\theta,\gamma_0)$. Note that $\kappa$ will still depend on the distribution of $U$ through $\theta$, whose values are disciplined by the moment conditions (\ref{e:mod}).

Several models and counterfactuals of interest fall into this framework. We review three examples before proceeding.

\begin{example}[Discrete choice and consumer welfare]\label{ex:choice-welfare} \normalfont
Suppose an individual derives utility $h_j(X,\theta) + U_{j}$ from choice $j \in \mc J_0 := \{0, 1, \ldots, J\}$, where $X \in \mc X$ are observed covariates and $U = (U_{j})_{j \in \mc J_0}$ is latent (to the econometrician). We assume, as typical, that $U$ is drawn independently across individuals from a continuous distribution $F$. The probability that an individual with characteristics $x$ chooses $j$ is
\begin{equation} \label{eq:ccp-dc}
 p(j|x) = \mb P_F \left( h_j(x,\theta) + U_{j} = \textstyle \max_{j' \in \mc J_0} \left( h_{j'}(x,\theta) + U_{j'} \right) \right)\,,
\end{equation}
where $\mb P_F$ denotes probabilities when $U \sim F$.  In empirical work, $\theta$ is typically estimated using a criterion that fits the model-implied choice probabilities (\ref{eq:ccp-dc}) to probabilities observed in the data.  Welfare analyses are often based on the social surplus \citep{McFadden1978}
\[
 W(x) = \mb E^F\left[\textstyle \max_{j \in \mc J_0} \left(  h_{j}(x,\theta) + U_{j} \right) \right] ,
\]
which is the average utility consumers with characteristics $x$ derive from the choice problem. A related welfare measure is the change in surplus $\Delta W(x_a,x_b) = W(x_a) - W(x_b)$ associated with a shift from $x_b$ to $x_a$. In practice, it is common to assume the $U_j$ are i.i.d. Gumbel (type-I extreme value), as this yields closed-form expressions for choice probabilities and the welfare measures $W(x)$ and $\Delta W(x_a,x_b)$.

Our approach may be used to perform a sensitivity analysis of $W(x)$ and $\Delta W(x_a,x_b)$ to parametric assumptions about $F$ when $\mathcal X$ is finite. 
A leading example is matching models with finitely many agent types---see Section~\ref{s:csw} and references therein. 
Understanding the sensitivity of $W(x)$ and $\Delta W(x_a,x_b)$ to $F$ is important in this case because $W(x)$ and $\Delta W(x_a,x_b)$ are not nonparametrically identified.\footnote{See, e.g., \cite{BerryHaile2010,BerryHaile2014} and \cite{AllenRehbeck} for nonparametric identification of utilities and welfare measures in discrete choice models when characteristics have continuous support.} 

In our notation, $g_2$ collects indicator functions representing the choice probabilities (\ref{eq:ccp-dc}) across covariates $x \in \mc X$ and choices $j \in \mc J := \{1,\ldots,J\}$ ($j = 0$ is redundant):
\[
 g_2(U,\theta) = \left( \ind \left\{  \textstyle h_j(x,\theta) + U_{j} = \max_{j' \in \mc J_0} \left( h_{j'}(x,\theta) + U_{j'} \right)  \right\} \right)_{(j,x) \in \mc J \times \mc X} 
\]
and $P_{20} = (\Pr(j|x))_{(j,x) \in \mc J \times \mathcal X}$ is the vector of true choice probabilities. There are no $g_1$, $g_3$, $g_4$, or $\gamma$ in this model. Finally, $k(U,\theta) = \max_{j \in \mc J_0} \left(  h_{j}(x,\theta) + U_{j} \right)$ for $W(x)$ and $k(U,\theta) = \max_{j \in \mc J_0} \left(  h_{j}(x_a,\theta) + U_{j} \right) - \max_{j \in \mc J_0} \left(  h_{j}(x_b,\theta) + U_{j} \right)$ for $\Delta W(x_a,x_b)$.
$\hfill \square$
\end{example}

\begin{example}[Discrete games]\label{ex:game} \normalfont
Following \cite{BresnahanReiss,BresnahanReiss1991}, \cite{Berry1992}, and \cite{Tamer2003}, consider the complete-information game  in Table~\ref{table:simplegame:regressor}.

\begin{table}[h]
\begin{center}
  \begin{tabular}{*{4}{c}}
    \multicolumn{2}{c}{} & \multicolumn{2}{c}{Firm $2$\phantom{00000000}}\\
    \multicolumn{1}{c}{} &  & $0$  & $1$ \\[4pt]
    \multirow{2}*{Firm $1$}  & $0\phantom{000}$ & $(0,0)$ & $(0,\beta_2'x  + U_2)$ \\
    & $1\phantom{000}$ & $(\beta_1' x + U_1 ,0)$ & $(\beta_1' x - \Delta_1 + U_1,\beta_2 ' x - \Delta_2 + U_2)$ \\
  \end{tabular}
  \end{center}
\caption{\label{table:simplegame:regressor} Payoff matrix for (Firm 1, Firm 2) when $X = x$.}
\end{table}

Here $U = (U_1,U_2)$ is the latent (to the econometrician) component of firms' profits, which is independent of covariates $X$. Suppose that the solution concept is restricted to equilibria in pure strategies. The econometrician may estimate the probabilities of the potential market structures $(0,0)$, $(0,1)$, $(1,0)$, $(1,1)$ (conditional on $X$) from data on a large number of markets. 
As the model is incomplete---there are values of $U$ for which there are multiple equilibria---moment inequality methods are typically used in empirical work to avoid restricting the equilibrium selection mechanism. However, strong parametric assumptions are often made about the distribution of $U$ (typically bivariate Normal) to derive the model-implied probabilities for different market structures; see, e.g., \cite{Berry1992}, \cite{CilibertoTamer}, \cite{BMM}, and \cite{KlineTamer2016}. It therefore seems natural to also question the sensitivity of counterfactuals to parametric assumptions for $U$.

This model falls into our setup when the regressors $X$ have finite support $\mc X$.\footnote{Continuous regressors are often discretized in empirical applications; see, e.g.,  \cite{CilibertoTamer}, \cite{Grieco}, \cite{KlineTamer2016}, and \cite{CCT}.} In our notation, $g_1$ collects the moment inequalities that bound the probabilities of $(0,1)$ and $(1,0)$ across $x \in \mc X$, with $P_{10}$ denoting the corresponding true probabilities. The inequalities are typically expressed as upper bounds on the probabilities of $(0,1)$ and $(1,0)$; we flip the sign to be compatible with (\ref{e:mod:1}):
\[
\begin{aligned}
 g_1(U,\theta) & = 
 \left[  \begin{array}{c}
  \left( - \ind\{ U_1 \geq -\beta_1'x ; U_2 \leq \Delta_2-\beta_2'x\} \right)_{x \in \mc X} \\
  \left( - \ind\{ U_1 \leq \Delta_1-\beta_1'x ; U_2 \geq -\beta_2'x \} \right)_{x \in \mc X} \end{array}  \right] ,
 &
 P_{10} & = 
 \left[ \begin{array}{c}
  \left( - \Pr((1,0)|X = x) \right)_{x \in \mc X} \\
  \left( - \Pr((0,1)|X = x) \right)_{x \in \mc X} \end{array} \right] ,
\end{aligned}
\]
where $\theta = (\Delta_1,\Delta_2,\beta_1,\beta_2)$. Similarly, $g_2$ and $P_{20}$ collect the moment conditions and probabilities for outcomes $(0,0)$ and $(1,1)$, which are always realized as the result of unique equilibria:
\begin{align*}
 g_2(U,\theta) & = 
 \left[ \begin{array}{c}
  \left(  \ind\{ U_1 \leq -\beta_1'x; \, U_2 \leq -\beta_2'x\} \right)_{x \in \mc X} \\
  \left(  \ind\{ U_1 \geq \Delta_1 - \beta_1'x; \, U_2 \geq \Delta_2 -\beta_2'x \} \right)_{x \in \mc X} \end{array} \right] ,
 &
 P_{20} & = 
 \left[ \begin{array}{c}
  \left( \Pr((0,0)|X = x) \right)_{x \in \mc X} \\
  \left( \Pr((1,1)|X = x) \right)_{x \in \mc X} \end{array} \right] .
\end{align*}
There is no $g_3$, $g_4$, or $\gamma$ in this model. \cite{CilibertoTamer} compute upper bounds on the probability of entrants under a counterfactual payoff shift, say $\tau(\theta)$. The function $k(U,\theta) = \ind\{ U_1 \geq \tau(\theta) - \beta_1'x\}$ corresponds to the upper bound on the probability of firm 1 entering when $X = x$ under this counterfactual. 
$\hfill \square$
\end{example}

\begin{example}[Dynamic discrete choice]\label{ex:DDC} \normalfont
Consider a canonical dynamic discrete choice (DDC) model following \cite{Rust}. The decision maker solves
\begin{equation} \label{e:rust-emax}
 V(s) = \mathbb E^{F}\left[ \max_{d \in \mc D_0} \left( \pi_{d,s}(\theta_\pi) + U_d + \beta E[V(s')| d, s] \right) \right] ,
\end{equation}
where $s \in \mc S$ is a Markov state variable, $\mc D_0 = \{0,1,\ldots,D\}$ is the set of actions, $\pi_{d,s}$ is the flow payoff for action $d$ in state $s$ which is parameterized by $\theta_\pi$, $U_d$ is a latent payoff shock, $\beta \in (0,1)$ is a discount parameter, and $E[\,\cdot\,|d, s]$ denotes expectation with respect to the future state $s'$. The distribution $F$ of $U = (U_d)_{d \in \mc D_0}$ is typically assumed to be continuous and independent of $s$. The CCP of action $d$ in state $s$ is
\begin{equation} \label{e:rust-ccp}
 p(d|s) = \mathbb P_F \left( \pi_{d,s}(\theta_\pi) + U_d + \beta E[V(s')| d, s] = \max_{d' \in \mc D_0} \left( \pi_{d',s}(\theta_\pi) + U_{d'} + \beta E[V(s')| d', s] \right)  \right) ,
\end{equation}
where $\mathbb P_F$ denotes probabilities when $U \sim F$. 

It is standard to assume the $U_d$ are i.i.d. Gumbel, as this yields closed-form expressions for the expectation in (\ref{e:rust-emax}) and multinomial-logit expressions for the CCPs (\ref{e:rust-ccp}). Parameters $\theta_\pi$ or $(\theta_\pi,\beta)$ are typically estimated using a criterion function that fits the model-implied CCPs (\ref{e:rust-ccp}) to probabilities observed in the data. Counterfactuals are then computed by solving (\ref{e:rust-emax}) under alternative laws of motion, flow payoffs, or other interventions. 

When $\mc S$ is finite, model parameters, counterfactual CCPs, and counterfactual welfare measures are typically not identified without parametric restrictions on $F$. Our procedure may be used perform a sensitivity analysis of counterfactuals to parametric assumptions on $F$ as follows. Let $\theta = (\theta_\pi,v,\tilde v)$ or $\theta = (\theta_\pi,\beta,v,\tilde v)$, where $v = (V(s))_{s \in \mathcal S}$ and $\tilde v = (\tilde V(s))_{s \in \mathcal S}$ collect the baseline and counterfactual value functions across $s \in \mc S$. Also let $\gamma = (M_d)_{d \in \mc D_0}$ collect the transition matrices for $s$, $g_2$ collect indicator functions for the CCPs  (\ref{e:rust-ccp}) across states $s \in \mc S$ and choices $d \in \mc D := \{1,\ldots,D\}$ ($d = 0$ is redundant):
\[
 g_2(U,\theta,\gamma) = \left( \ind \left\{ \pi_{d,s}(\theta_\pi) + U_d + \beta M_{d,s} v = \max_{d' \in \mc D_0} \left( \pi_{d',s}(\theta_\pi) + U_{d'} + \beta M_{d',s} v \right) \right\} \right)_{(d,s) \in \mc D \times \mathcal S} 
\]
with $M_{d,s}$ denoting the $s$th row of $M_d$, and $P_{20} = (\Pr(d|s))_{(d,s) \in \mc D \times \mathcal S}$ collect the corresponding true CCPs. Finally, $g_4$ collects moment functions representing (\ref{e:rust-emax}) in the baseline model and under the counterfactual:
\begin{equation}
 g_4(U, \theta, \gamma) =  \left[ \begin{array}{c}  ( \max_{d \in \mc D_0} \{  \pi_{d,s}(\theta_\pi) + U_d + \beta M_{d,s} v  \}  - v_{s} )_{s \in \mathcal S}\\[8pt]
  ( \max_{d \in {\mc D}_0} \{ \tilde \pi_{d,s}(\theta_\pi) + U_d + \tilde \beta \tilde M_{d,s} \tilde v  \}  - \tilde v_{s} )_{s \in \mathcal S} \end{array} \right] , \label{eq:ex.ddc.fp}
\end{equation}
where $v_s = V(s)$, $\tilde v_s = \tilde V(s)$, and $\tilde \pi$, $\tilde \beta$, $\tilde M_d$ denote counterfactual flow payoffs, discount factor, and law of motion.\footnote{If $\mb E^F[\max_{d \in \mc D_0}U_d]$ is finite, then $v \mapsto (\mb E^F[ \max_{d \in \mc D_0} \{  \pi_{d,s}(\theta_\pi) + U_d + \beta M_{d,s} v  \} ])_{s \in \mc S}$ is a $\ell^\infty$-contraction of modulus $\beta$ on $\mb R^{|\mc S|}$. Hence, there is a unique $(v,\tilde v)$ solving $\E^F[ g_4(U,\theta,\gamma)] = 0$ at any fixed $(\theta_\pi, \beta, \tilde \beta, F)$. The solution $(v,\tilde v)$ must collect the solutions to (\ref{e:rust-emax}) in the baseline model and counterfactual across states: $v = (V(s))_{s \in \mathcal S}$ and $\tilde v = (\tilde V(s))_{s \in \mathcal S}$. It follows that $F$ satisfies $\E^F[g_4(U,\theta,\gamma)] = 0$ at $\theta = (\theta_\pi,\beta,v,\tilde v)$ if and only if $(v,\tilde v)$ corresponds to the value functions $V$ and $\tilde V$ under $F$.} We recommend including the location normalizations $\E^F[U_d] = 0$ for $d \in \mc D_0$ in $g_4$ for interpretability. We also recommend including scale normalizations in $g_4$ so that $\E^F[\max_{d \in \mc D_0} U_d]$ is finite. For instance, in Section~\ref{s:rust} we normalize $\E^F[U_d^2]$ for all $d \in \mc D_0$.

Counterfactual CCPs can be computed using
\[
 k(U, \theta, \gamma) = \ind \left\{   \tilde \pi_{d,s}(\theta_\pi) + U_d + \tilde \beta \tilde M_{d,s} \tilde v  = \max_{d' \in {\mc D}_0} \left( \tilde \pi_{d',s}(\theta_\pi) + U_{d'} + \tilde \beta \tilde M_{d',s} \tilde v \right) \right\} .
\]
Change in average welfare corresponds to $k(\theta,\gamma) =  w' (\tilde v - v)$ for a weight vector $w$. %Note this $k$ function does not depend explicitly on $U$, though it still depends implicitly on the distribution of $U$ through $\theta$.
\hfill $\square$
\end{example}

\medskip

\begin{remark} 
We allow for conditional moments models with $\mb E[g_1(U,X,\theta,\gamma)|X=x] \leq P_{10}(x)$ (and similarly for (\ref{e:mod:2})-(\ref{e:mod:4})) if $U$ is independent of $X$ and $X$ takes values in a finite set $\mc X$. Moment functions are then stacked across $x \in \mc X$ to form $g_1$, $g_2$, $g_3$, and $g_4$ (see Examples~\ref{ex:choice-welfare}-\ref{ex:DDC}). 
Appendix~\ref{ax:extension} discusses extensions to conditional moment models where the distribution of $U$ may vary with the value of (discrete) covariates, and to non-separable models with discrete covariates. Models with continuous covariates fall outside the scope of our procedure.
\end{remark}

\begin{remark} 
Our setup relies on the counterfactual being expressible as (\ref{e:kappa}). If $k$ is vector-valued, our procedure can be applied to compute the support function\footnote{A closed convex set is determined by its support function---see \citeauthor{Rockafellar} (\citeyear{Rockafellar}, Section 13).} of the identified set of counterfactuals: set $k^\tau(U,\theta,\gamma) = \tau'k(U,\theta,\gamma)$ for a conformable unit vector $\tau$ and replace (\ref{e:kappa}) with $\kappa^\tau = \E^F[k^\tau(U,\theta,\gamma_0)]$. Our setup excludes counterfactuals that are infinite-dimensional, such as the distribution of the number of firms in a market. 
\end{remark}

\begin{remark}
The distribution $F$ is not nonparametrically identified in any of the above examples or, more generally, in the class of models (\ref{e:mod}) when the support of $U$ contains many more points than there are moment conditions (e.g., when $U$ is continuously distributed). 
\end{remark}

In common practice, a seemingly reasonable or computationally convenient distribution, say $F_*$, is assumed by the researcher and maintained throughout the analysis (e.g., bivariate Normal in Example~\ref{ex:game} and i.i.d. Gumbel in Examples~\ref{ex:choice-welfare} and \ref{ex:DDC}). 
Given $F_*$ and estimates $\hat P = (\hat P_1,\hat P_2)$ of $P_0$ and $\hat \gamma$ of $\gamma_0$, the researcher computes an estimate $\hat \theta$ of $\theta$ using a criterion function based on the moment conditions
\begin{equation}
\begin{aligned}
 \mb E^{F_*}[ g_1(U,\theta,\hat \gamma)] & \leq \hat P_{1} \,, &  \mb E^{F_*}[ g_2(U,\theta,\hat \gamma)] & = \hat P_{2} \,, \\
 \mb E^{F_*}[ g_3(U,\theta,\hat \gamma)] & \leq 0 \,,  &
 \mb E^{F_*}[ g_4(U,\theta,\hat \gamma)] & = 0 \,. 
\end{aligned} \label{e:sample}
\end{equation}
Finally, the researcher estimates the counterfactual using $\hat \kappa = \mb E^{F_*}[k(U,\hat \theta,\hat \gamma)]$. If $k$ does not depend on $U$, then the estimated counterfactual is simply $\hat \kappa = k(\hat \theta,\hat \gamma)$. In this case $\hat \kappa$ will still depend {implicitly} on $F_*$ through $\hat \theta$.\footnote{While this discussion has assumed point identification of $\theta$ and $\kappa$ for sake of exposition, our methods allow structural parameters and counterfactuals to be partially identified.}

The researcher's chosen specification $F_*$ is used both for estimation of $ \theta$ and again when computing the counterfactual. A natural question is: to what extent does the counterfactual depend on the choice of distribution? The main contribution of this paper is to provide a tractable econometric framework for answering this question.

\subsection{Our Approach}\label{s:method}

As a sensitivity analysis, we shall relax the researcher's parametric assumption and allow $F$ to vary over nonparametric neighborhoods $\mc N_\delta$ of $F_*$, where $\delta$ is a measure of neighborhood ``size''. When we do so, there may be multiple pairs $(\theta,F) \in \Theta \times \mc N_\delta$ that satisfy (\ref{e:mod}) but which yield different values of the counterfactual. Our objects of interest are the smallest and largest values of the counterfactual over all such $(\theta,F)$ pairs:
\begin{align}
 \ul \kappa_\delta  &  = \inf_{\theta \in \Theta,F \in \mc N_\delta} \mb E^F[k(U,\theta,\gamma_0)] \quad \text{ subject to (\ref{e:mod})} , \label{e:kappa_lower}\\
 \ol \kappa_\delta  & = \sup_{\theta \in \Theta,F \in \mc N_\delta} \mb E^F[k(U,\theta,\gamma_0)] \quad \text{ subject to (\ref{e:mod})} .  \label{e:kappa_upper}
\end{align}
By focusing on $\ul \kappa_\delta$ and $\ol \kappa_\delta$, our approach naturally accommodates models with partially-identified structural parameters and counterfactuals. Our approach also sidesteps having to compute the identified set of structural parameters. 

The optimization problems (\ref{e:kappa_lower}) and (\ref{e:kappa_upper}) are made tractable by a convenient choice of $\mc N_\delta$. Following \cite{HS2001} and \cite{MMR}, we consider neighborhoods constrained by $\phi$-divergence \citep{Csiszar1975}:
\begin{equation} \label{e:nbhd}
\begin{aligned}
 \mc N_\delta & = \{ F \in \mc F : D_\phi(F\|F_*) \leq \delta \} \,, \\ 
 D_\phi(F\|F_*) & = \left[ \begin{array}{ll} \displaystyle \int \phi \left( \frac{\mr d F}{\mr d F_*} \right)\, \mr d  F_* & \mbox{if $F \ll F_*$},\\
 +\infty & \mbox{otherwise}, \end{array} \right.
 \end{aligned}
\end{equation}
where $\mc F$ denotes all probability measures on the support\footnote{That is, $\mc U$ is the set of all values that $U$ could conceivably take according to the model, which is possibly larger that the support of the measure $F_*$.} $\mc U$ of $U$ and $F \ll F_*$ denotes absolute continuity of $F$ with respect to $F_*$.  The convex function $\phi : [0,\infty) \to \mb R_+ \cup \{+\infty\}$ penalizes deviations of $F$ from $F_*$. For example, $\phi(x) = x \log x - x + 1$ corresponds to Kullback--Leibler (KL) divergence, $\phi(x) = \frac{1}{2}(x - 1)^2$ corresponds to Pearson $\chi^2$ divergence, and 
\[
 \phi(x) = \frac{x^p - 1 - p(x - 1)}{p(p-1)}\,, \quad (p > 1)\,,
\]
corresponds to $L^p$ divergence. If $F_*$ has positive (Lebesgue) density, then the absolute continuity condition merely rules out $F$ with mass points.

\begin{remark}
Normalizations and other shape restrictions may be added by augmenting the moment functions $g_1,\ldots,g_4$. Examples include: (i) location normalizations, e.g. $\mb E^F[U] =0$ or $\mb E^F[\ind \{U_i \leq 0\} - 0.5] = 0$ for each element $U_i$ of $U$; (ii) scale normalizations, e.g. $\mb E^F[U_i^2] =1$; (iii) covariance normalizations, e.g.  $\mb E^F[UU'] = I$; and (iv)  smoothness restrictions, e.g. $\mb E^F[\ind\{U_i \leq a_{k+1}\} - \ind\{U_i \leq a_{k}\}] \leq C$ for $a_1 < \ldots < a_K$ and a positive constant $C$.
\end{remark}

\begin{remark}
Appendix~\ref{ax:exchangeable} shows that shape restrictions including symmetry, exchangeability, and, more generally, invariance under a finite group of transforms, are also easy to impose.
\end{remark}

\subsection{Dual Formulation}\label{s:duality}

We use convex duality to simplify computation of $\ul \kappa_\delta$ and $\ol \kappa_\delta$. We start by noting $\ul \kappa_\delta$ and $\ol \kappa_\delta$ may be written as the solution to two profiled optimization problems:
\begin{align*}
 \ul \kappa_\delta & = \inf_{\theta \in \Theta} \ul K_\delta(\theta;\gamma_0,P_0)  \,, &
 \ol \kappa_\delta & = \sup_{\theta \in \Theta} \ol K_\delta(\theta;\gamma_0,P_0) \,,
\end{align*}
where the criterion functions $\ul K_\delta(\theta;\gamma_0,P_0) $ and $\ol K_\delta(\theta;\gamma_0,P_0) $ are, respectively, the infimum and supremum of $\mb E^F[k(U,\theta,\gamma_0)]$ with respect to $F \in \mc N_\delta$ subject to the moment conditions  (\ref{e:mod}). In what follows, it is helpful to define the criterion functions at a generic $(\gamma,P)$. To do so, we say that the moment conditions (\ref{e:mod}) hold ``at $(\theta,\gamma,P)$'' if they hold when $\gamma_0$ is replaced by $\gamma$ and $P_0$ is replaced by $P$. Then
\begin{align}
 \ul K_\delta(\theta;\gamma,P) & = \inf_{F \in \mc N_\delta} \mb E^F[k(U,\theta,\gamma)] \quad \mbox{subject to (\ref{e:mod}) holding at $(\theta,\gamma,P)$} \,, \label{e:crit_l}\\
 \ol K_\delta(\theta;\gamma,P) & = \sup_{F \in \mc N_\delta} \mb E^F[k(U,\theta,\gamma)] \quad \mbox{subject to (\ref{e:mod}) holding at $(\theta,\gamma,P)$} \,, \label{e:crit_u} 
\end{align}
with the understanding that $ \ul K_\delta(\theta;\gamma,P) = +\infty$ and $ \ol K_\delta(\theta;\gamma,P) = -\infty$ if there does not exist a distribution in $\mc N_\delta$ for which the moment conditions (\ref{e:mod}) hold at $(\theta,\gamma,P)$. 

We first impose some mild regularity conditions on $F_*$, $\phi$, and the moment functions to justify the dual formulation. Similar conditions are used in generalized empirical likelihood estimation (see, e.g., \cite{KR}). Let $\Phi_0$ denote the set of all $\phi : [0,\infty) \to \mb R \cup \{+\infty\}$ such that $\phi$ is continuously differentiable on $(0,+\infty)$ and strictly convex, with $\phi(1) = \phi'(1) = 0$, $\phi(0) < +\infty$, $\lim_{x \downarrow 0} \phi'(x) < 0$, $\lim_{x \to +\infty} \phi(x)/x = +\infty$, $\lim_{x \to +\infty} \phi'(x) > 0$, and $\lim_{x \to +\infty} x \phi'(x)/\phi(x) < + \infty$. The functions inducing KL, $\chi^2$, and $L^p$ divergence all belong to $\Phi_0$.

Let $\phi^\star(x) = \sup_{t\geq 0 : \phi(t) < +\infty} (tx - \phi(t))$ denote the convex conjugate of $\phi \in \Phi_0$ and let $\psi(x) = \phi^\star(x) - x$.
Define $\mathcal E = \{ f : \mathcal U \to \mathbb R \mbox{ for which } \E^{F_*}[\psi(c|f(U)|)] < \infty \mbox{ for all } c > 0\}$. The class $\mc E$ is an Orlicz class of functions (see Appendix~\ref{ax:Orlicz} for details). For example,
\begin{align*}
 \mc E &  = \{f : \mc U \to \mb R : \mb E^{F_*}[e^{c|f(U)|}] < \infty \mbox{ for all } c > 0 \} & & \mbox{for KL divergence,} \\
 \mc E & = \{f : \mc U \to \mb R : \mb E^{F_*}[f(U)^2] < \infty\} & & \mbox{for $\chi^2$ divergence, and} \\
 \mc E & = \{f : \mc U \to \mb R : \mb E^{F_*}[|f(U)|^q] < \infty\} & & \mbox{for $L^p$ divergence ($p^{-1} +q^{-1} = 1$).}
\end{align*}
Let $g = (g_1,g_2,g_3,g_4)$ denote the vector formed by stacking each of the moment functions from (\ref{e:mod:1})--(\ref{e:mod:4}). Our key regularity condition is the following:

\begin{assumptionp}{\textPhi}\label{a:phi}
(i)\hskip\labelsep  $\phi \in \Phi_0$.
\begin{enumerate}[topsep=-18pt,itemsep=0pt,parsep=0pt,partopsep=0pt]
\item[(ii)] $k(\,\cdot\,,\theta,\gamma)$ and each entry of $g(\,\cdot\,,\theta,\gamma)$ belong to $\mc E$ for each $\theta \in \Theta$ and $\gamma \in \Gamma$.
\end{enumerate}
\end{assumptionp}

For KL divergence, the class $\mc E$ contains of bounded functions (e.g., indicator functions) and functions that are additively separable in $U$ provided $F_*$ has tails that decay faster than exponentially (e.g., Gaussian but not Gumbel). Assumption~\ref{a:phi} therefore fails for KL divergence in Examples~\ref{ex:choice-welfare} and \ref{ex:DDC}, but holds for $\chi^2$ or $L^p$ divergence as these only require finite second or $q$th moments, respectively. 

Let $d = \sum_{i=1}^4 d_i$ where $d_i$ is the dimension of $g_i$, let $\Lambda = \mb R^{d_1}_+ \times \mb R^{d_2} \times \mb R^{d_3}_+ \times \mb R^{d_4}$, and let $\lambda_{12}$ denote the first $d_1 + d_2$ elements of $\lambda$. A derivation of the following criterion functions is presented in Appendix~\ref{sec:dual_K}.

\begin{proposition} \label{prop:criterion} 
Suppose that Assumption~\ref{a:phi} holds. Then the criterion functions (\ref{e:crit_l}) and (\ref{e:crit_u}) may be restated as
\begin{align}
 \ul K_\delta(\theta;\gamma,P) & = \sup_{\eta > 0, \zeta \in \mb R, \lambda \in \Lambda} -\eta \E^{F_*}\left[ {\textstyle \phi^\star \left(\frac{k(U,\theta,\gamma) + \zeta + \lambda' g(U,\theta,\gamma) }{-\eta} \right) } \right] - \eta \delta - \zeta - \lambda_{12}'P \label{e:dual:1} \,,\\
 \ol K_\delta(\theta;\gamma,P) & = \inf_{\eta > 0, \zeta \in \mb R, \lambda \in \Lambda}  \phantom{-}\eta \E^{F_*}\left[ {\textstyle \phi^\star \left( \frac{k(U,\theta,\gamma) - \zeta- \lambda' g(U,\theta,\gamma)}{\eta} \right) } \right] + \eta \delta + \zeta + \lambda_{12}'P \,. \label{e:dual:2} 
\end{align}
Moreover, the value of (\ref{e:dual:1}) is $+\infty$ (equivalently, the value of (\ref{e:dual:2}) is $-\infty$) if and only if there is no distribution in $\mc N_\delta$ under which (\ref{e:mod}) holds at $(\theta,\gamma,P)$.
\end{proposition}

\begin{remark}
Problems (\ref{e:dual:1}) and (\ref{e:dual:2}) are convex in $(\eta, \zeta , \lambda)$. The parameter $\eta$  is the Lagrange multiplier for the constraint $D_\phi(F\|F_*) \leq \delta$. Similarly, $\lambda$ collects the Lagrange multipliers for the moment (in)equalities (\ref{e:mod:1})--(\ref{e:mod:4}). These multipliers are non-negative if they correspond to inequality restrictions and unconstrained otherwise. Finally, $\zeta$ is the Lagrange multiplier for the constraint $\int \mathrm d F = 1$, which ensures that the optimization is over probability measures.
\end{remark}

Problems (\ref{e:dual:1}) and (\ref{e:dual:2}) simplify in some special cases. For KL neighborhoods, $\phi^\star(x) = e^x - 1$ and the multiplier $\zeta$ has a closed-form solution, leading to 
\begin{align*}
 \ul K_\delta (\theta;\gamma,P) &  = \sup_{\eta > 0, \lambda \in \Lambda} -\eta \log \E^{F_*}\left[ e^{- (k(U,\theta,\gamma)+ \lambda' g(U,\theta,\gamma) )/\eta} \right]  - \eta \delta - \lambda_{12}'P \,,  \\ 
 \ol K_\delta (\theta;\gamma,P) & = \inf_{\eta > 0,\lambda \in \Lambda} \eta \log \E^{F_*}\left[ e^{(k(U,\theta,\gamma)- \lambda' g(U,\theta,\gamma) )/\eta} \right]  + \eta \delta + \lambda_{12}'P \,. 
\end{align*}
Another special case is when $k(u,\theta,\gamma)$ does not depend on $u$. To analyze this case, consider
\begin{equation} \label{e:p0_dist}
 \Delta(\theta; \gamma, P)  := \inf_{F} D_\phi(F\|F_*) \quad \mbox{subject to (\ref{e:mod}) holding at $(\theta,\gamma,P)$} .
\end{equation}
The value $\Delta(\theta; \gamma, P)$ is the minimum $\phi$-divergence between $F_*$ and a distribution $F$ for which the moment conditions hold at $(\theta,\gamma,P)$. We show in Proposition~\ref{prop:dual_dist} that $\Delta(\theta; \gamma, P)$ has an equivalent dual formulation:
\begin{equation} \label{e:qual:dual}
 \Delta(\theta;\gamma,P) = \sup_{\zeta \in \mb R,\lambda \in \Lambda} - \E^{F_*}\Big[ \phi^\star(-\zeta - \lambda' g(U,\theta,\gamma))  \Big] - \zeta - \lambda_{12}'P\,. 
\end{equation}
For KL divergence, $\zeta$ may be solved for in closed-form and problem (\ref{e:qual:dual}) simplifies to
\[
 \Delta(\theta;\gamma,P) = \sup_{\lambda \in \Lambda} \; - \log \E^{F_*}\Big[ e^{ - \lambda' g(U,\theta,\gamma)}  \Big] - \lambda_{12}'P\,.
\]
When $k$ does not depend on $u$, by a change of variables\footnote{Substitute $\eta \zeta - k(\theta,\gamma)$ in place of $\zeta$ in (\ref{e:dual:1}) and  $\eta \zeta + k(\theta,\gamma)$ in place of $\zeta$ in (\ref{e:dual:2}), then substitute $\eta \lambda$ in place of $\lambda$ in both (\ref{e:dual:1}) and  (\ref{e:dual:2}).} we may then restate problems (\ref{e:dual:1}) and (\ref{e:dual:2}) as
\begin{equation} \label{e:k_implicit}
 \ul K_\delta (\theta; \gamma, P) = \left[ \begin{array}{l}
 k(\theta,\gamma) \\[2pt]
 + \infty 
 \end{array} \right. ,
 \quad 
 \ol K_\delta (\theta; \gamma, P) = \left[ \begin{array}{ll}
 k(\theta, \gamma) & \mbox{if $\Delta(\theta; \gamma, P) \leq \delta$,} \\[2pt]
 - \infty & \mbox{if $\Delta(\theta; \gamma, P) > \delta$.}
 \end{array} \right.
\end{equation}

An important feature of our approach is that the optimization problems (\ref{e:dual:1}), (\ref{e:dual:2}), and (\ref{e:qual:dual}) are convex and their dimension does not increase with $\delta$. This feature is not shared by other seemingly natural approaches to flexibly model $F$, such as mixtures or other finite-dimensional sieves. As we show in Section~\ref{s:sharp}, our procedure may be used to approximate sharp nonparametric bounds on counterfactuals by taking $\delta$ to be large but finite.

\subsection{Estimation}\label{s:estimators}

We now propose simple estimators of the bounds $\ul \kappa_\delta$ and $\ol \kappa_\delta$ based on ``plugging in'' consistent estimators $(\hat P,\hat \gamma)$ of $(P_0,\gamma_0)$. Estimators $\hat{\ul \kappa}_\delta$ and $\hat{\ol \kappa}_\delta$ are computed by optimizing criterion functions with respect to $\theta$:
\begin{equation*}
\begin{aligned}
 \hat{\ul \kappa}_\delta & = \inf_{\theta \in \Theta} \hat{\ul K}_\delta(\theta) \,, \quad & \quad 
 \hat{\ol \kappa}_\delta & = \sup_{\theta \in \Theta} \hat{\ol K}_\delta(\theta)\,,
\end{aligned}
\end{equation*}
where
\[
 \hat{\ul K}_\delta(\theta) = \left[ \begin{array}{l}
 \ul K_\delta (\theta; \hat \gamma, \hat P) \\[2pt]
 + \infty 
 \end{array} \right. ,
 \quad 
 \hat{\ol K}_\delta(\theta) = \left[ \begin{array}{ll}
 \ol K_\delta (\theta;\hat \gamma,\hat P) & \mbox{if $\Delta(\theta;\hat \gamma,\hat P) < \delta$,} \\[2pt]
 - \infty & \mbox{if $\Delta(\theta;\hat \gamma,\hat P) \geq \delta$,}
 \end{array} \right.
\]
and $\ul K_\delta (\theta; \hat \gamma, \hat P)$, $\ol K_\delta (\theta; \hat \gamma, \hat P)$, and $\Delta(\theta;\hat \gamma,\hat P)$ are the criterion functions (\ref{e:dual:1}), (\ref{e:dual:2}), and (\ref{e:qual:dual}) evaluated at $(\hat \gamma, \hat P)$. If $k(u,\theta,\gamma) = k(\theta,\gamma)$, then we simply have
\[
 \hat{\ul K}_\delta(\theta) = \left[ \begin{array}{l}
 k(\theta,\hat \gamma)\\
 + \infty 
 \end{array} \right., \quad
 \hat{\ol K}_\delta(\theta) = \left[ \begin{array}{ll}
 k(\theta,\hat \gamma) & \mbox{if $\Delta(\theta;\hat \gamma,\hat P) < \delta$,} \\
  - \infty & \mbox{if $\Delta(\theta;\hat \gamma,\hat P) \geq \delta$.}
 \end{array} \right.
\]
In Section~\ref{s:consistency} we establish consistency of $\hat{\ul \kappa}_\delta$ and $\hat{\ol \kappa}_\delta$ and derive their asymptotic distribution.

\subsection{Nonparametric Bounds on Counterfactuals}\label{s:sharp}

We define the (nonparametric) identified set of counterfactuals as
\[
 \mc K = \left\{ \mb E^{F}[k(U,\theta,\gamma_0)] : \mbox{(\ref{e:mod}) holds for some $\theta \in \Theta$ and $F \in \mc F_\theta$} \right\} ,
\]
where $\mc F_\theta = \{ F \in \mc F : \mb E^F[ g(U,\theta,\gamma_0)] \mbox{ is finite and } F \ll \mu \}$ denotes all distributions on $\mc U$ that are absolutely continuous with respect to a $\sigma$-finite dominating measure $\mu$ and for which the moments in (\ref{e:mod}) are finite at $\theta$. We impose existence of a density with respect to $\mu$ as it is often a structural assumption used, e.g., to avoid ties in CCPs or to establish existence of equilibria. The main result of this section shows that $\ul \kappa_\delta$ and $\ol \kappa_\delta$ approach the sharp nonparametric bounds $\inf\mc K$ and $\sup \mc K$ as $\delta$ becomes large.

We first introduce some additional regularity conditions. Say $k$ is ``$\mu$-essentially bounded'' if $|k(\cdot,\theta,\gamma_0)|$ has finite $\mu$-essential supremum\footnote{The $\mu$-essential supremum of a function $f$ is denoted $\mu\text{-}\mr{ess}\sup f$ and is the smallest value $c$ for which $\mu(\{u : f(u) > c\}) = 0$. The $\mu$-essential infimum, denoted $\mu\text{-}\mr{ess}\inf$, is defined analogously.}
for each $\theta \in \Theta$.  This holds trivially if $k$ is bounded (e.g., counterfactual CCPs in Examples~\ref{ex:game} and \ref{ex:DDC} and change in average welfare in Example~\ref{ex:DDC}). Models with unbounded $k$ may be reparameterized (as a proof device) by setting $\tilde \theta = (\theta,\kappa)$, appending $k(U,\theta,\gamma_0) - \kappa$ as an element of $g_4$, and setting $k(U,\tilde \theta,\gamma_0) = \kappa$. 

We also require a constraint qualification condition. This is a sufficient condition for establishing equivalence of ``nonparametric'' primal and dual problems in Appendix~\ref{ax:sharp}, which is an intermediate step in the proof of the following result. Let $0_{d_i}$ denote a $d_i \times 1$ vector of zeros, $\mc C = \mb R^{d_1}_+ \times \{0_{d_2}\} \times \mb R^{d_3}_+ \times \{0_{d_4}\}$, $\mc G(\theta,\gamma) = \{ \E^{F} [ g(U,\theta,\gamma) ] :  F \in \mc N_\infty \}$ where $\mc N_\infty = \{F : D_\phi(F \|F_*) < \infty\}$, and $\vec P = (P,0_{d_3+d_4})$. For $A,B \subseteq \mb R^d$, we let $\mr{ri}(A)$ denote the relative interior of $A$ and $A + B = \{a + b : a \in A, b \in B\}$.

\begin{definition}\label{cond:s}
\emph{Condition S} holds at $(\theta,\gamma,P)$ if $\vec P \in \mr{ri}(\mc G(\theta,\gamma) + \mc C) $.
\end{definition}

Using relative interior instead of interior allows for moment functions that are collinear at some $\theta$ (i.e., some moments are redundant). To give some intuition, consider moment equality models. Condition S requires that (\ref{e:mod}) holds at $(\theta, \gamma, P)$ under some $F \in \mc N_\delta$ that is ``interior'' to $\mc N_\infty$, in the sense that one can perturb the (non-redundant) moments in any direction by perturbing $F$. For moment inequality models, Condition S also requires that there is $F \in \mc N_\infty$ under which all moment inequalities hold strictly at $(\theta, \gamma, P)$.

Let $\Theta_I = \{ \theta \in \Theta:$ (\ref{e:mod}) holds for some $F \in \mc F_\theta\}$ denote the (nonparametric) identified set for $\theta$. 
Define the ``nonparametric'' objective function
\begin{equation}
 \ul K_{np}(\theta;\gamma,P)  = \inf_{F \in \mc F_\theta} \mb E^F[k(U,\theta,\gamma)] \quad \mbox{subject to (\ref{e:mod}) holding at $(\theta,\gamma,P)$} \,, \label{e:crit_np_l}
\end{equation}
with the understanding that $\ul K_{np}(\theta;\gamma,P) = +\infty$ if the infimum runs over an empty set. Let $\ol K_{np}(\theta;\gamma,P)$ denote the analogous supremum. Evidently, 
\[
 \inf \mc K = \inf_{\theta \in \Theta} \ul K_{np}(\theta;\gamma_0,P_0)  \quad \mbox{and} \quad \sup \mc K = \sup_{\theta \in \Theta} \ol K_{np}(\theta;\gamma_0,P_0)\,.
\]

\begin{definition}
$\Theta_I$ is \emph{$S$-regular} if for all $\epsilon > 0$ there exist $\ul \theta, \ol \theta \in \Theta_I$ such that Condition S holds at $(\ul \theta, \gamma_0, P_0)$ and $(\ol \theta, \gamma_0, P_0)$, $\ul K_{np}(\ul \theta; \gamma_0, P_0) < \inf \mc K + \epsilon$, and $\ol K_{np}(\ol \theta; \gamma_0, P_0) > \sup \mc K - \epsilon$.
\end{definition}
 Intuitively, S-regularity requires that the values the counterfactual takes at ``boundary'' points of $\Theta_I$ (i.e., at which Condition S fails) are not materially more extreme than values it can take at points ``inside'' $\Theta_I$ (i.e., at which Condition S holds). This condition can be verified under more primitive continuity conditions on $k$ and $g$. A sufficient (but not necessary) condition for S-regularity is that Condition S holds at $(\theta,\gamma_0,P_0)$ for all $\theta \in \Theta_I$.

\begin{theorem}\label{t:sharp}
Suppose that Assumption~\ref{a:phi} holds, $k$ is $\mu$-essentially bounded, $\Theta_I$ is $S$-regular, and $\mu$ and $F_*$ are mutually absolutely continuous. Then
\begin{align*}
 \lim_{\delta \to \infty} \ul \kappa_\delta & = \inf \mc K \,, &
 \lim_{\delta \to \infty} \ol \kappa_\delta & = \sup \mc K \,.
\end{align*}
\end{theorem}

Theorem~\ref{t:sharp} shows that our procedure can be used to approximate the sharp nonparametric bounds $\inf \mc K$ and $\sup \mc K$ by setting  $\delta$ to be large but finite.
If $\mu$ is Lebesgue measure---which it often is in applications---then the mutual absolute continuity condition in Theorem~\ref{t:sharp} is satisfied whenever $F_*$ has  strictly positive density over $\mc U$. 

\begin{remark} 
Appendix~\ref{ax:sharp} presents the dual forms of $\ul K_{np}$ and $\ol K_{np}$. Unlike $\ul K_\delta$ and $\ol K_\delta$, the duals of $\ul K_{np}$ and $\ol K_{np}$ are  min-max and max-min problems which involve an inner optimization over $u$. These problems may be computationally challenging, especially when $u$ is multivariate. Comparing Proposition~\ref{prop:criterion} with the duals in Appendix~\ref{ax:sharp}, we see that setting $\delta < \infty$ replaces a ``hard-max'' (an optimization over $u$) with a ``soft-max'' (a convex expectation).  In this respect, adding the constraint $F \in \mc N_\delta$ may be viewed as a regularization of the nonparametric objective functions, similar to the use of entropic penalization to regularize objective functions in optimal transport problems---see, e.g., \cite{Cuturi2013}. Smaller values of $\delta$ impose a stronger regularization.
\end{remark}

Theorem~\ref{t:sharp} is silent on the issue of how large $\delta$ needs to be so that $\ul \kappa_\delta$ and $\ol \kappa_\delta$ are close to the nonparametric bounds. While this is model- and counterfactual-specific, the following toy example suggests that relatively small values of $\delta$ may suffice in some problems where the counterfactual is a choice probability.

\begin{example}\label{ex:simple} \normalfont
Consider the problem 
\[
 \ol \kappa_\delta = \sup_{\theta \in \mb R, F \in \mc N_\delta} \E^F[ \ind\{U \leq \theta \}] \quad \mbox{subject to} \quad \E^F[U - \theta] = 0,
\]
where $\mc N_\delta$ is defined by KL divergence and $F_*$ is the $N(0,1)$ distribution. When $F = F_*$, the only solution to $\E^F[U - \theta] = 0$ is $\theta = 0$. Therefore, the value of the counterfactual under $F_*$ is $\E^{F_*}[ \ind\{U \leq 0 \}] = \frac{1}{2}$ whereas $\sup \mc K = 1$. In Appendix~\ref{ax:simple}, we derive the large-$\delta$ approximation $\ol \kappa_\delta = 1 - 2 \pi e^{-2\delta - 1}(1 + o(1))$. By symmetry, $\ul \kappa_\delta = 2 \pi e^{-2\delta - 1}(1 + o(1))$ and $\inf \mc K = 0$. Therefore, in this example, $\ul \kappa_\delta$ and $\ol \kappa_\delta$ converge rapidly to $\inf \mc K$ and $\sup \mc K$ as $\delta$ increases. \hfill $\square$
\end{example}

More generally, suppose  the dual problems (\ref{e:dual:1}) and (\ref{e:dual:2}) have unique solutions $\ul \eta$ and $\ol \eta$ for $\eta$, where the optimization is performed over $\eta \geq 0$.\footnote{Optimizing over $\eta \geq 0$ rather than $\eta > 0$ does not affect the optimal value---see Proposition~\ref{prop:dual}.}  Under appropriate regularity conditions (see, e.g., \cite{MilgromSegal}), it follows that 
\[
 \frac{\partial \ul K_\delta(\theta;\gamma,P)}{\partial \delta} = -\ul \eta, \quad \quad \frac{\partial \ol K_\delta(\theta;\gamma,P)}{\partial \delta} = \ol \eta.
\]
One can therefore infer from $\ul \eta$ and $\ol \eta$ the extent to which, if at all, the bounds at any fixed $\theta$ would widen further if $\delta$ was increased.

\section{Practical Considerations}\label{s:implementation}

We now discuss practical details for implementing our procedure. Section~\ref{s:computation} discusses computational methods, Section~\ref{s:mpec} presents our MPEC approach, and Section~\ref{s:overid}~discusses methods for dealing with over-identified models.

\subsection{Computation}\label{s:computation}

There are three aspects to computation: (i) computing the expectations with respect to $F_*$ in the objective functions, (ii) solving the inner optimization problems over Lagrange multipliers, and (iii) solving the outer optimization problems over $\theta$.

The expectations in the objective functions (\ref{e:dual:1}), (\ref{e:dual:2}), and (\ref{e:qual:dual}) are available in closed form for certain settings,\footnote{An earlier draft derived closed-form expressions for a discrete game of complete information with Gaussian payoff shocks and KL neighborhoods---see \url{https://arxiv.org/abs/1904.00989v2}.} in which case the dimension of $u$ does not play a role in the computational complexity of our procedure.
Otherwise, the expectations will need to be computed numerically. If so, the dimension of $u$ will play a role in terms of determining how many quadrature points or Monte Carlo draws are needed to control numerical approximation error. In the empirical applications we used a randomized quasi-Monte Carlo approach based on scrambled Halton sequences as in \cite{Owen2017}.

The inner optimization with respect to Lagrange multipliers can be solved rapidly: it is convex and gradients and Hessians are available in closed-form. The envelope theorem can be used to derive gradients for the outer optimization when $k$ and  $g$  are  differentiable in $\theta$.\footnote{In practice, we smoothed any non-smooth moments  and used automatic differentiation to compute derivatives with respect to $\theta$ if these were not easily available analytically.} Our procedures were all implemented in Julia with the inner and outer optimizations solved using Knitro. A general-purpose implementation of our methods in Julia is provided in the supplemental material.

As with parameter estimation in nonlinear structural models, the outer optimization with respect to $\theta$ is typically non-convex. In applications, we used an iterative multi-start procedure in an attempt to converge to global optima. Computation times are reported in the applications below.

\subsection{MPEC Approach}\label{s:mpec}

We now describe and formally justify an MPEC version of our procedure in the spirit of \cite{SuJudd}. This approach simplifies computation in models with endogenous parameters defined by equilibrium conditions (e.g., value functions defined by Bellman equations), resulting in significant computational gains for DDC models in particular.

Suppose $\theta = (\theta_s, \theta_e)$ and $g_4 = (g_{4s}, g_{4e})$ where $\theta_s$ are ``deep'' structural parameters and $\theta_e$ are ``endogenous'' parameters that are defined implicitly by $g_{4e}$. That is, for any $(\theta_s,\gamma,F)$, the parameter $\theta_e = \theta_e(\theta_s,\gamma,F)$ solves
\[
 \mb E^F[ g_{4e}( U,( \theta_s, \theta_e), \gamma ) ] = 0 \,.
\]
For instance, in Example~\ref{ex:DDC} we have $\theta_s = \theta_\pi$ or $(\theta_\pi,\beta)$, while $\theta_e = (v,\tilde v)$ collects the value functions in the baseline model and counterfactual, and $g_{4e}$ collects the functions representing the corresponding Bellman equations, as in display (\ref{eq:ex.ddc.fp}). Although our procedure can be implemented as described in Section~\ref{s:procedure}, that implementation does not make use of the fact that $\theta_e$ is defined implicitly by $g_{4e}$. 

To leverage this structure, consider the subset of moments conditions excluding $g_{4e}$:
\begin{equation}
\label{e:mod-e}
\begin{aligned}
 \mb E^F[ g_1(U,\theta,\gamma_0)] & \leq P_{10} , & 
 \mb E^F[ g_2(U,\theta,\gamma_0)] & = P_{20} ,   \\
 \mb E^F[ g_3(U,\theta,\gamma_0)] & \leq 0 , & 
 \mb E^F[ g_{4s}(U,\theta,\gamma_0)] & = 0 , &  
\end{aligned}
\end{equation}
and define criterion functions using these only:
\begin{align}
  \ul K^s_{\delta}(\theta;\gamma,P) & = \inf_{F \in \mc N_\delta} \mb E^F[k(U,\theta,\gamma)] \quad \mbox{ subject to (\ref{e:mod-e}) holding at $(\theta,\gamma,P)$} \,,\label{e:crit_l:mpec} \\
  \ol K^s_{\delta}(\theta;\gamma,P) & = \sup_{F \in \mc N_\delta} \mb E^F[k(U,\theta,\gamma)] \quad \mbox{ subject to (\ref{e:mod-e}) holding at $(\theta,\gamma,P)$} \,. \label{e:crit_u:mpec} 
\end{align}
Under the conditions of Proposition~\ref{prop:criterion}, these criterion functions may be restated as
\begin{align}
 \ul K_{\delta}^s(\theta;\gamma,P) & = \sup_{\eta > 0, \zeta \in \mb R, \lambda \in \Lambda_s} -\eta \E^{F_*}\left[ {\textstyle \phi^\star \left( \frac{k(U,\theta,\gamma) + \zeta + \lambda' g_s(U,\theta,\gamma) }{-\eta}\right) } \right] - \eta \delta - \zeta - \lambda_{12}'P \label{e:dual:1:mpec}\,,  \\
 \ol K_{\delta}^s(\theta;\gamma,P) & = \inf_{\eta > 0, \zeta \in \mb R, \lambda \in \Lambda_s} \phantom{-} \eta \E^{F_*}\left[ {\textstyle \phi^\star \left( \frac{k(U,\theta,\gamma) - \zeta- \lambda' g_s(U,\theta,\gamma) }{\eta}\right) } \right] + \eta \delta + \zeta + \lambda_{12}'P \,, \label{e:dual:2:mpec}
\end{align}
with $g_s = (g_1,g_2,g_3,g_{4s})$ and $\Lambda_s = \mb R^{d_1}_+ \times \mb R^{d_2} \times \mb R^{d_3}_+ \times \mb R^{d_{4s}}$ with $d_{4s} = \dim(g_{4s})$. Problems (\ref{e:dual:1:mpec}) and (\ref{e:dual:2:mpec}) simplify analogously to (\ref{e:k_implicit}) when $k$ does not depend on $u$, with the minimum divergence problem $\Delta$ defined using $g_s$ in place of $g$.

In our MPEC approach, the criterion functions (\ref{e:dual:1:mpec}) and (\ref{e:dual:2:mpec}) are optimized with respect to $\theta$, with the remaining moment conditions involving $g_{4e}$ appended as constraints. Importantly, these constraints are evaluated under the ``least favorable'' distributions $\ul F_{\delta,\theta}$ and $\ol F_{\delta,\theta}$ that solve problems (\ref{e:crit_l:mpec}) and (\ref{e:crit_u:mpec}), respectively. The following proposition formally justifies this approach.

\begin{proposition} \label{p:mpec}
Suppose that Assumption~\ref{a:phi} holds. Then the problems
\[
 \inf_{\theta \in \Theta} \ul K_\delta(\theta;\gamma,P)
\]
and
\[
 \inf_{\theta \in \Theta} \ul K_{\delta}^s(\theta;\gamma,P) \;\mbox{ subject to } \; \E^{\ul F_{\delta,\theta}}[g_{4e}(U,\theta,\gamma)] = 0
\]
have the same value. An analogous result holds for the upper bound.
\end{proposition}

To implement our MPEC approach, note that the expectations in the constraints may be expressed in terms of changes of measure. Let $\ul m_{\delta,\theta} = {\mr d \ul F_{\delta,\theta}}/{\mr d F_*}$ and $\ol m_{\delta,\theta} = {\mr d \ol F_{\delta,\theta}}/{\mr d F_*}$ so that
\[
\begin{aligned}
 \E^{\ul F_{\delta,\theta}}[\, \cdot \,] & = \E^{F_*}[ \ul m_{\delta,\theta}(U) \, \, \cdot \,] \,, &
 \E^{\ol F_{\delta,\theta}}[\, \cdot \,] & = \E^{F_*}[ \ol m_{\delta,\theta}(U) \, \, \cdot \,] \,.
\end{aligned}
\]
If $k$ depends on $u$, then we construct $\ul m_{\delta,\theta}$ and $\ol m_{\delta,\theta}$ from solutions to (\ref{e:dual:1:mpec}) and (\ref{e:dual:2:mpec}), say $(\ul \eta, \ul \zeta, \ul \lambda)$ and $(\ol \eta, \ol \zeta, \ol \lambda)$ (these solutions exist under the regularity conditions below). If $\ul \eta > 0$, then the distribution solving (\ref{e:crit_l:mpec}) is unique and is induced by the change of measure
\begin{equation} \label{e:m_lower}
 \ul m_{\delta,\theta}(u) = \dot \phi^\star\left( \frac{k(u,\theta,\gamma) + \ul \zeta + \ul \lambda' g_s(u,\theta,\gamma)}{ -\ul \eta} \right)  ,
\end{equation}
where $\dot \phi^\star(x) = \frac{d \phi^\star(x)}{dx}$. The function $\ol m_{\delta,\theta}(u)$ is constructed similarly, replacing $(\ul \eta, \ul \zeta, \ul \lambda)$ in (\ref{e:m_lower}) by $(-\ol \eta, - \ol \zeta, - \ol \lambda)$. For KL divergence the change of measure simplifies to
\[
  \ul m_{\delta,\theta}(u) = \frac{e^{(k(u,\theta,\gamma) + \ul \lambda' g_s(u,\theta,\gamma))/ -\ul \eta}}{\mb E^{F_*}\left[e^{(k(u,\theta,\gamma) + \ul \lambda' g_s(u,\theta,\gamma))/ -\ul \eta}\right]} \,,
\]
and similarly for $\ul m_{\delta,\theta}(u)$.

If $\ul \eta = 0$, then there may be multiple minimizing distributions. As shown in the proof of Proposition~\ref{prop:dual_F}, each such distribution must be supported on  
\[
 \ul A_{\delta,\theta} := \{ u : k(u,\theta,\gamma) + \ul \lambda' g_s(u,\theta,\gamma) = F_*\text{-}\mr{ess} \inf ( k(\cdot,\theta,\gamma) + \ul \lambda' g_s(\cdot,\theta,\gamma))\}\,.
\]
Note $F_*(\ul A_{\delta,\theta}) > 0$ is required for $\ul \eta = 0$ to be a solution. Otherwise, any distribution supported on $\ul A_{\delta,\theta}$ is not absolutely continuous with respect to $F_*$ and is therefore not in $\mc N_\delta$. If $\ul \eta = 0$ and $F_*(\ul A_{\delta,\theta}) > 0$, then we construct $\ul m_{\delta,\theta}$  by restricting $F_*$ to $\ul A_{\delta , \theta}$ and rescaling:
\[
 \ul m_{\delta,\theta}(u) = \ind\{u \in \ul A_{\delta,\theta}\}/F_*(\ul A_{\delta, \theta}).
\]
The function $\ol m_{\delta, \theta}(u)$ is constructed analogously, replacing $\ul \lambda$ with $-\ol \lambda$ and the set $\ul A_{\delta,\theta}$ with $\ol A_{\delta, \theta} = \{ u : k(u,\theta,\gamma) - \ol \lambda' g_s(u,\theta,\gamma) = F_*\text{-}\mr{ess} \sup (k(\cdot,\theta,\gamma) - \ol \lambda' g_s(\cdot,\theta,\gamma))\}$.

If $k$ does not depend on $u$, then $\ul m_{\delta,\theta}$ and $\ol m_{\delta,\theta}$ are constructed from solutions to a version of problem (\ref{e:qual:dual}) with $g_s$ in place of $g$. Under the regularity conditions below, this program has a solution, say $(\ul \zeta, \ul \lambda)$. In this case, we define
\begin{equation} \label{e:m_delta}
 \ul m_{\delta,\theta}(u) = \ol m_{\delta,\theta}(u) = \dot \phi^\star\left( - \ul \zeta - \ul \lambda' g_s(u,\theta,\gamma) \right) \,.
\end{equation}
For KL divergence the change of measure simplifies to
\[
 \ul m_{\delta,\theta}(u) = \ol m_{\delta,\theta}(u) = \frac{e^{- \ul \lambda' g_s(u,\theta,\gamma)}}{\mb E^{F_*}\left[e^{- \ul \lambda' g_s(u,\theta,\gamma)}\right]} \,.
\]

\begin{proposition}\label{prop:dual_F}
Suppose that Assumption~\ref{a:phi} holds, Condition S holds at $(\theta,\gamma,P)$, and there exists a distribution $F$ with $D(F\|F_*) < \delta$ under which (\ref{e:mod-e}) holds at $(\theta,\gamma,P)$. Then the distributions $\ul F_{\delta,\theta}$ and $\ol F_{\delta,\theta}$ induced by $\ul m_{\delta,\theta}$ and $\ol m_{\delta,\theta}$ solve (\ref{e:crit_l:mpec}) and (\ref{e:crit_u:mpec}), respectively. 
\end{proposition}

\paragraph{Example.} We consider a numerical example for the DDC model of \cite{Rust}  based on the parameterization in Section 5.4 of \cite{NoretsTang}. The counterfactual they consider is a hypothetical change in the law of motion of the state. We follow these papers and use state-space of dimension 90. As $|\mc S| = 90$ and $\mc D_0 = \{0,1\}$, there are 90 functions in $g_2$ representing the observed CCPs. There are another 180 functions  in $g_{4e}$ representing the Bellman equations in the baseline model and counterfactual across states. We also impose the normalization $\E^F[U_d] = 0$ for $d = 0,1$. Hence, $g_{4s}(U,\theta,\gamma) = (U_0,U_1)$. Our MPEC approach has 92 moments in the inner optimization (90 for CCPs and two mean-zero normalizations on the shocks) with the remaining 180 moments representing the Bellman equations appended as constraints. The full approach uses all 272 moments in the inner optimization. 

Table~\ref{tab:mpec} reports computation times for the inner optimization problems (\ref{e:dual:2}) and (\ref{e:dual:2:mpec}) (denoted $\ol K_\delta$) for maximizing the counterfactual CCP in the highest mileage state.\footnote{The times in Table~\ref{tab:mpec} are based on initializing the solver at $\eta = 1$, $\zeta = 0$, and $\lambda = 0$. When embedded in the outer optimization over $\theta$, computation times for the inner problem are reduced significantly by using a warm start that initializes at the solution to the inner problem at the previous value of $\theta$.} We also report times for solving the minimum divergence problem (\ref{e:qual:dual}) (denoted $\Delta$) using the full set of moment functions $g$ and its MPEC analogue using $g_s$. Neighborhoods are constrained by a hybrid of KL and $\chi^2$ divergence as in the empirical applications---see Section~\ref{s:examples}. As can be seen, the inner optimization problems are solved at least 20 times faster for the MPEC implementation, with the relative efficiency increasing in $\delta$.

\begin{table}
\begin{center}
\caption{\label{tab:mpec}Computation times (in seconds) for the inner problems}
\begin{tabular}{lcccc} \hline \hline
 \multicolumn{1}{c}{Implementation} & \multicolumn{4}{c}{Objective} \\\cline{2-5} \\[-10pt]
 &  $\ol K_{0.01}$ & $\ol K_{0.10}$ & $\ol K_{1.00}$ & $\Delta$ \\[2pt]
 MPEC (92 moments) & 0.207 & \phantom{1}0.232 & \phantom{1}0.256 & 0.108 \\[2pt]
 Full (272 moments) & 4.317 & 12.978 & 43.699 & 3.365 \\[2pt] \hline \\[-10pt]
\end{tabular}
\parbox{\textwidth}{\small \emph{Note:} Expectations are computed using 50,000 scrambled Halton draws. Computations are performed in Julia v1.6.4 and Knitro v12.4.0 on a 2.7GHz MacBook Pro with 16GB memory.} 
\end{center}
\end{table}

\subsection{Over-identification}\label{s:overid}

In over-identified models (i.e., where the number of moment conditions $d$ exceeds the dimension $d_\theta$ of $\theta$), there might not exist $\theta \in \Theta$ for which the sample moment conditions (\ref{e:sample}) hold under $F_*$. We propose two methods for handling over-identified models.

First, one may compute the smallest value of $\delta$ for which there exists $F \in \mc N_\delta$ consistent with the sample moment conditions (\ref{e:sample}) by solving the optimization problem
\[
 \hat \delta = \inf_{\theta \in \Theta} \Delta(\theta;\hat \gamma,\hat P).
\]
The interval $[\hat{\ul \kappa}_\delta,\hat{\ol \kappa}_\delta]$ will be nonempty for $\delta > \hat \delta$. If the model is correctly specified under $F_*$,\footnote{Neither our theoretical results developed in Section~\ref{s:procedure} nor the estimation and inference results in Section~\ref{s:asymptotics} require correct specification of the model under $F_*$.} then $\hat \delta$ will converge in probability to zero under the conditions of Theorem~\ref{t:c-const}. In this case, the interval $[\hat{\ul \kappa}_\delta,\hat{\ol \kappa}_\delta]$ will be nonempty with probability approaching one for each fixed $\delta > 0$.

It is also possible that $\hat \delta = +\infty$ in correctly specified but over-identified models when $\hat P$ is incompatible with certain model restrictions. For instance, CCPs are often estimated nonparametrically using empirical choice frequencies. If some choices aren't observed in the data, then the estimated CCPs will be zero even though model-implied CCPs are strictly positive. 

This issue can be circumvented in models defined by equality restrictions only (hence $P_0 \equiv P_{20}$) using the following two-step approach. First, compute a preliminary estimator~$\tilde \theta$ of $\theta$ based on (\ref{e:sample}). Then, set $\hat P = \mb E^{F_*}[g_2(U,\tilde \theta, \hat \gamma)]$. This second-step estimator $\hat P$ is compatible with the model by construction, thereby ensuring that the interval $[\hat{\ul \kappa}_\delta, \hat{\ol \kappa}_\delta]$ is nonempty for each $\delta > 0$. The estimator $\hat P$ will be consistent and asymptotically normal under mild regularity conditions provided the model is correctly specified under $F_*$, so the consistency and inference results developed in Section~\ref{s:asymptotics} will also apply.

\section{Interpreting the Neighborhood Size}\label{s:delta}

This section presents some theoretical results and practical methods to help interpret the neighborhood size $\delta$. Sections~\ref{s:invariance} and \ref{s:different_phi} discuss properties of $\phi$-divergences and their implications for interpreting $\delta$. Section~\ref{s:lfd} shows how to construct the ``least favorable'' distributions that minimize or maximize the counterfactual. Section~\ref{s:lens} gives a practical, model-based metric for interpreting $\delta$.

\subsection{Invariance}\label{s:invariance}

A defining property of $\phi$-divergences are their invariance to invertible transformations. That is, if $T$ is an invertible transformation and $G$ and $G_*$ denote the distributions of $T(U)$ when $U \sim F$ and $U \sim F_*$, respectively, then $D_{\phi}(F\|F_*) = D_{\phi}(G\|G_*)$.\footnote{See, e.g., \cite{LieseVajda}. A more direct statement is in \cite{QiaoMinematsu}.} An important consequence of invariance is that $\delta$ has the same interpretation under a change in units. For instance, if one researcher writes a model in terms of dollars with $U \sim F_*$ and another researcher uses thousands of dollars with $U \sim G_*$ for $G_*(u) = F_*(10^{-3} u)$, then $F$ is in $\mc N_\delta$ if and only if its rescaled counterpart $G$ is in a $\delta$-neighborhood of $G_*$. A second consequence is that neighborhood size is invariant under invertible location and scale transformations of $F_*$ (e.g., $N(\mu,\Sigma)$ versus $N(0,I)$).

\subsection{Least Favorable Distributions}\label{s:lfd}

A useful feature of our approach is that the ``least favorable'' distributions (LFDs) that attain the smallest or largest values of the counterfactual may easily be recovered. To help interpret $\delta$, one may plot the LFDs and compute other quantities of interest (e.g., correlations or welfare measures) under them. 

Section~\ref{s:mpec} describes how to construct LFDs when our MPEC approach is used. LFDs for our full (i.e., non-MPEC) approach are a special case with $g_4 = g_{4s}$. To briefly summarize, consider the LFD $\ul F_{\delta,\theta}$ solving the minimization problem (\ref{e:crit_l}). First suppose that $k$ depends on $u$. Let $(\ul \eta, \ul \zeta, \ul \lambda)$ solve problem (\ref{e:dual:1}). If $\ul \eta > 0$, then $\ul F_{\delta,\theta}$ is unique and its change-of-measure $\ul m_{\delta,\theta} = \mr d\ul F_{\delta,\theta}/\mr d F_*$ is given by 
\begin{equation} \label{e:m:general}
 \ul m_{\delta,\theta}(u) = \dot \phi^\star\left( \frac{k(u,\theta,\gamma) + \ul \zeta + \ul \lambda' g(u,\theta,\gamma)}{ -\ul \eta} \right) .
\end{equation}
The LFD $\ol F_{\delta, \theta}$ solving the maximization problem (\ref{e:crit_u}) is constructed similarly, replacing $(\ul \eta, \ul \zeta, \ul \lambda)$ in (\ref{e:m:general}) with $(-\ol \eta, - \ol \zeta, - \ol \lambda)$, where $(\ol \eta, \ol \zeta, \ol \lambda)$ solves (\ref{e:dual:2}). If $\ul \eta = 0$ or $\ol \eta = 0$, then there may exist multiple distributions solving (\ref{e:crit_l}) and  (\ref{e:crit_u}) at $\theta$. LFDs in this case are constructed analogously to the method described in Section~\ref{s:mpec}.
Note that $\ul \eta = 0$ or $\ol \eta = 0$ is unlikely if $k$ and/or elements of $g$ are unbounded in $u$---see the discussion in Section~\ref{s:mpec}. 
If $k$ does not depend on $u$, then we set
\begin{equation} \label{e:m:delta}
 \ul m_{\delta,\theta}(u) = \ol m_{\delta,\theta}(u) = \dot \phi^\star \left( - \ul \zeta - \ul \lambda' g(u,\theta,\gamma) \right)
\end{equation}
where $(\ul \zeta, \ul \lambda)$ solves (\ref{e:qual:dual}). While there may exist multiple distributions solving  (\ref{e:crit_l}) and  (\ref{e:crit_u}) in this case, the distribution induced by (\ref{e:m:delta}) has smallest $\phi$-divergence relative to $F_*$ among all such distributions.

\subsection{Viewing Neighborhood Size through the Lens of the Model}\label{s:lens}

Another method for interpreting $\delta$ is based on measuring the variation in the moments at the distributions solving (\ref{e:kappa_lower}) and (\ref{e:kappa_upper}) relative to their values under $F_*$.

Consider the sets of minimizing and maximizing values of $\theta$ at which $\ul \kappa_\delta$ and $\ol \kappa_\delta$ are attained, say $\ul \Theta_\delta$ and $\ol \Theta_\delta$. These are nonempty under the regularity conditions in Section~\ref{s:asymptotics}. While the moment conditions (\ref{e:mod}) hold at any $\theta \in \ul \Theta_\delta \cup \ol \Theta_\delta$ under the corresponding LFD, they will typically not hold at $\theta$ under $F_*$. We therefore define
\begin{multline*}
 size(\delta)  = \sup_{\theta \in \ul \Theta_\delta \cup \ol \Theta_\delta} \max   \Big\{ \big\| \left(\mb E^{F_*}[g_1(U,\theta,\gamma_0)] - P_{10}\right)_+ \big\|_\infty , \big\| \mb E^{F_*}[g_1(U,\theta,\gamma_0)] - P_{20} \big\|_\infty , \\
 \big\| \left(\mb E^{F_*}[g_3(U,\theta,\gamma_0)] \right)_+ \big\|_\infty , \big\| \mb E^{F_*}[g_4(U,\theta,\gamma_0)] \big\|_\infty \Big\} \,,
\end{multline*}
where $(v)_+ = ( \max\{v_i,0\})_{i=1}^d$ for a vector $v \in \mb R^d$. The quantity $size(\delta)$ is the maximum degree to which the moments at $\theta  \in \ul \Theta_\delta \cup \ol \Theta_\delta$ violate  (\ref{e:mod}) under $F_*$.

 This measure is informative about the extent to which the distortions to $F_*$ required to attain the smallest and largest values of the counterfactual over $\mc N_\delta$ are reflected in (\ref{e:mod}). Small values of $size(\delta)$ indicate that the LFDs supporting $\ul \kappa_\delta$ and $\ol \kappa_\delta$ distort $F_*$ in a way that moves the counterfactual but barely moves the moments. Conversely, large values of $size(\delta)$ indicate that distortions required to increase or decrease the counterfactual also have a material impact on the moments. In practice, this measure can be computed by replacing $(P_0,\gamma_0)$ by estimators $(\hat P,\hat \gamma)$ and $\ul \Theta_\delta$ and $\ol \Theta_\delta$ by the minimizers and maximizers of the sample criterions or by the estimators of $\ul \Theta_\delta$ and $\ol \Theta_\delta$ introduced in Section~\ref{s:bootstrap}.

\subsection{Relating Different Divergences}\label{s:different_phi}

It is well known that $\phi$-divergences are equivalent over local neighborhoods (see, e.g., Theorem 4.1 of \cite{CsiszarShields}). However, $\ul \kappa_\delta$ and $\ol \kappa_\delta$ may depend on the choice of $\phi$ when $\delta$ is not arbitrarily small. Bounds induced by different $\phi$ functions may be related as follows. Let $\mc N_{\delta,1}$ and $\mc N_{\delta,2}$ denote $\delta$-neighborhoods induced by $\phi_1$ and $\phi_2$, respectively. The quantity
\[
 \bar a = \sup_{x \geq 0, x \neq 1} \frac{\phi_1(x)}{\phi_2(x)} 
\]
is a measure of relative neighborhood size: if $\bar a < \infty$ then $\mc N_{\delta,2} \subseteq \mc N_{\bar a \delta,1}$ for each $\delta > 0$, as shown formally in the proof of Proposition~\ref{prop:order} below. 
For instance, when comparing KL divergence ($\phi_1(x) = x \log x - x + 1$) and $\chi^2$ divergence ($\phi_2(x) = \frac{1}{2}(x-1)^2$)  we obtain $\bar a = 2$. Therefore, $\delta$-neighborhoods under $\chi^2$ divergence  are contained in $2\delta$-neighborhoods under KL divergence. Interchanging $\phi_1$ and $\phi_2$ produces $\bar a = +\infty$, which reflects the fact that KL divergence is weaker than $\chi^2$ divergence. 

Let $\ul \kappa_{\delta,1}$ and $\ul \kappa_{\delta,2}$ denote the smallest counterfactual from display (\ref{e:kappa_lower}) over $\mc N_{\delta,1}$ and $\mc N_{\delta,2}$, respectively. Define $\ol \kappa_{\delta,1}$ and $\ol \kappa_{\delta,2}$ analogously.

\begin{proposition}\label{prop:order}
Suppose that  Assumption~\ref{a:phi} holds for both $\phi_1$ and $\phi_2$ and $\bar a$ is finite. Then $[\ul \kappa_{\delta,2}, \ol \kappa_{\delta,2}] \subseteq [\ul \kappa_{\bar a \delta, 1}, \ol \kappa_{\bar a \delta, 1}] $ for each $\delta > 0$.
\end{proposition}

It follows from Proposition~\ref{prop:order} that bounds that are wide under $\phi_2$ must necessarily be wide under $\phi_1$. Similarly, narrow bounds under $\phi_1$ must also be narrow under $\phi_2$. Note also that the inclusion in Proposition~\ref{prop:order} holds for any counterfactual.

\section{Empirical Applications}\label{s:examples}

\subsection{Marital College Premium}\label{s:csw}

\cite{CSW}, henceforth CSW, study the evolution of marital returns to education using a frictionless matching model with transferable utility \citep{ChooSiow}. Within this framework, the ``marital college premium'' is the additional expected utility that an individual would derive from the marriage market if they had a (counterfactually) higher level of education. CSW find that marital college premiums for women in the United States increased significantly across cohorts from the mid to late 20th century, particularly for the more highly educated. 

As is conventional following \cite{Dagsvik} and \cite{ChooSiow}, CSW assume latent variables representing individuals' idiosyncratic marital preferences are i.i.d. Gumbel. The marital college premium is only partially identified when the distribution of these latent variables is not specified. We therefore perform a sensitivity analysis of CSW's estimates to departures from this conventional parametric assumption. 

Our analysis makes several findings. 
First, it seems impossible to draw conclusions about whether marital college premiums have increased or decreased over time under small nonparametric relaxations of the i.i.d. Gumbel assumption. Interestingly, premiums have narrow nonparametric bounds at {fixed} parameter values, but a slight relaxation of the i.i.d. Gumbel assumption allows for significant variation in parameters which, in turn, produces uninformatively wide bounds. As parameters are just-identified under any fixed distribution of shocks \citep{GalichonSalanie}, further restrictions on parameters or shape restrictions on the distribution are required to tighten the bounds. We show that imposing exchangeability can tighten the bounds significantly.

\medskip

\paragraph{Model and Benchmark Estimates.}

Agents are male or female and one of $J$ types (education levels). 
A type-$a$ male receives utility $\varepsilon_{a0}$ if he chooses to be unmatched and $z_{ab} + \varepsilon_{ab}$ if he matches with a type-$b$ female. Similarly, a type-$b$ female receives utility $e_{0b}$ if she chooses to be unmatched and $t_{ab} + e_{ab}$ if she matches with a type-$a$ male. The parameters $(z_{ab},t_{ab})_{a,b=1}^J$ represent the common deterministic component of marital preferences. The latent shocks $(\varepsilon_{a0},\ldots,\varepsilon_{aJ})$ and $(e_{0b},\ldots,e_{Jb})$ represent individuals' idiosyncratic  marital preferences. Shocks are i.i.d. across individuals and have mean zero. The type $b$ to $b'$ marital education premium for females is the difference in expected marital utility between types $b$ and $b'$:
\begin{equation} \label{eq:kappa_matching}
 \kappa = \mathbb E^F \left[ \max_{a = 0,\ldots,J} \Big( t_{ab'} + e_{ab'} \Big) \right] - \mathbb E^F \left[ \max_{a = 0,\ldots,J} \Big( t_{ab} + e_{ab} \Big)  \right] ,
\end{equation}
where $F$ denotes the distribution of $(e_{0b},\ldots,e_{Jb'})$ and $t_{0b} = t_{0b'} = 0$.

CSW use data from the American Community Survey. They form 28 cohorts indexed by female birth year from 1941 (cohort 1) to 1968 (cohort 28), each of which is treated as an independent marriage market. We focus on CSW's estimates for whites. There are $J = 5$ types: ``high-school dropouts'', ``high-school graduates'', ``some college'', ``college graduate'', and ``college-plus''. 
We center our analysis on the ``some college'' to ``college graduate'' premium, though we obtained qualitatively similar results (not reported) for the ``college graduate'' to ``college-plus'' premium. Figure~\ref{fig:csw_3} presents estimates and 95\% confidence sets (CSs) for the premium  under the i.i.d. Gumbel assumption (cf. Figure 21  in CSW) based on CSW's replication files.

\medskip

\paragraph{Implementation.}

The model reduces to a standard individual-level discrete choice problem for each type (see CSW's Propositions 1 and 2). We assume that the distribution of females' preference shocks does not depend on their type, so we drop the $b$ subscript and consider a single random vector $U = (e_0,\ldots,e_J)$. We allow the distribution $F$ of $U$ to vary across cohorts and implement our procedures cohort-by-cohort.\footnote{In view of the just-identification results of \cite{GalichonSalanie}, we would obtain the same bounds if $F$ was homogeneous across cohorts. Allowing for heterogeneity in own-type would result in wider bounds.}

Under any fixed $F$, a cohort's parameters $(t_{ab})_{a=1}^J$ are just-identified from the marriage probabilities for that cohort's type-$b$ women \citep{GalichonSalanie}. We therefore impose only the moment conditions involving the parameters $\theta = (t_{ab},t_{ab'})_{a = 1}^J$ appearing in (\ref{eq:kappa_matching}), as the remaining parameters can be chosen to fit the remaining marriage probabilities under the resulting least-favorable distribution.
We form $g_2$ to explain the type $b$ and $b'$ marriage probabilities for women in a given cohort:
\[
 g_2(U,\theta) =  \left[ \begin{array}{c}
  (\ind \{ t_{ab} + e_{a}  = \max_{a' = 0,\ldots,J} ( t_{a'b} + e_{a'} )  \})_{a=1}^J  \\
  (\ind \{ t_{ab'} + e_{a}  = \max_{a' = 0,\ldots,J} ( t_{a'b'} + e_{a'} )  \})_{a=1}^J \end{array} \right] 
\]
and form $\hat P_2$ using CSW's estimates of the corresponding type-$b$ and $b'$ marriage probabilities. We set  $g_4(U,\theta) = (e_j,e_j^2 - \pi^2/6)_{j=0}^{J}$ so that shocks have mean zero and the same variance as the Gumbel distribution. The scale normalization also ensures that the nonparametric bounds on the premium are finite at any fixed $\theta$. As $J = 5$, there are 22 moments (10 for marriage probabilities and 12 location/scale normalizations), and $\theta$ has dimension $10$. 

We consider a second implementation which imposes invariance of $F$ under rotations and reflections of potential spouse types, so that the model-implied marriage probabilities depend on $\theta$ but not the labeling of potential spouse types (though they may depend on their ordering).\footnote{Allowing dependence on the ordering of types seems desirable here as types correspond to education levels, which are naturally ordered.} Formally, this shape restriction corresponds to dihedral exchangeability (see Appendix~\ref{ax:exchangeable}); we refer to it simply as ``exchangeability''.
Under this shape restriction, $F$ must satisfy the 22 moment conditions under all 12 rotations and reflections of the elements of $U$. This implementation therefore imposes a total of $264$ moment conditions. Rather than including all 264 moments separately, it suffices to form $g_2$ and $g_4$ by taking the averages of the 22 moments across the 12 permutations (see Appendix~\ref{ax:exchangeable}). Both implementations therefore have inner optimization problems of the same dimension.

Computations are performed as described in Section~\ref{s:computation}. The first implementation uses 50,000 scrambled Halton draws to compute the expectations. The second uses 10,000 draws which are concatenated over the 12 permutations (see Remark~\ref{rmk:concatenate}), for a total of 120,000 draws. Computation times are reported in Appendix~\ref{ax:csw}. CSs for $\ul \kappa_\delta$ and $\ol \kappa_\delta$ are computed using the bootstrap procedure in Section~\ref{s:bootstrap}. Appendix~\ref{ax:csw} discusses bootstrap details and presents projection CSs using the method from Section~\ref{s:projection}.

We define neighborhoods using a hybrid of KL and $\chi^2$ divergence:
\[
 \phi(x) = \left[ \begin{array}{ll}
 x \log x - x + 1 & \mbox{if $x \leq e$,} \\
 {\textstyle \frac{1}{2e}(x-e)^2 + (x-e) + 1} & \mbox{if $x > e$.}
 \end{array} \right.
\]
We use this divergence because Assumption~\ref{a:phi}(ii) fails for KL divergence, whereas hybrid divergence only requires finite second moments for Assumption~\ref{a:phi}(ii). The LFDs under hybrid divergence are also everywhere positive, which is not guaranteed under $\chi^2$ divergence. We repeated our analysis with neighborhoods constrained by $\chi^2$ and $L^4$ divergences as robustness checks. Overall, our findings are not sensitive to $\phi$ (see Appendix~\ref{ax:csw} for a discussion).

\medskip

\paragraph{Findings.}

\begin{figure}[t]
\begin{center}
\begin{subfigure}[t]{\textwidth}
\begin{center}
\caption*{Smaller neighborhoods}
\vskip -2pt
\begin{subfigure}[t]{0.49\textwidth}
\begin{center}
\caption{Without exchangeability} \label{fig:csw_3_short}
\vskip -6pt
\includegraphics[width=\linewidth]{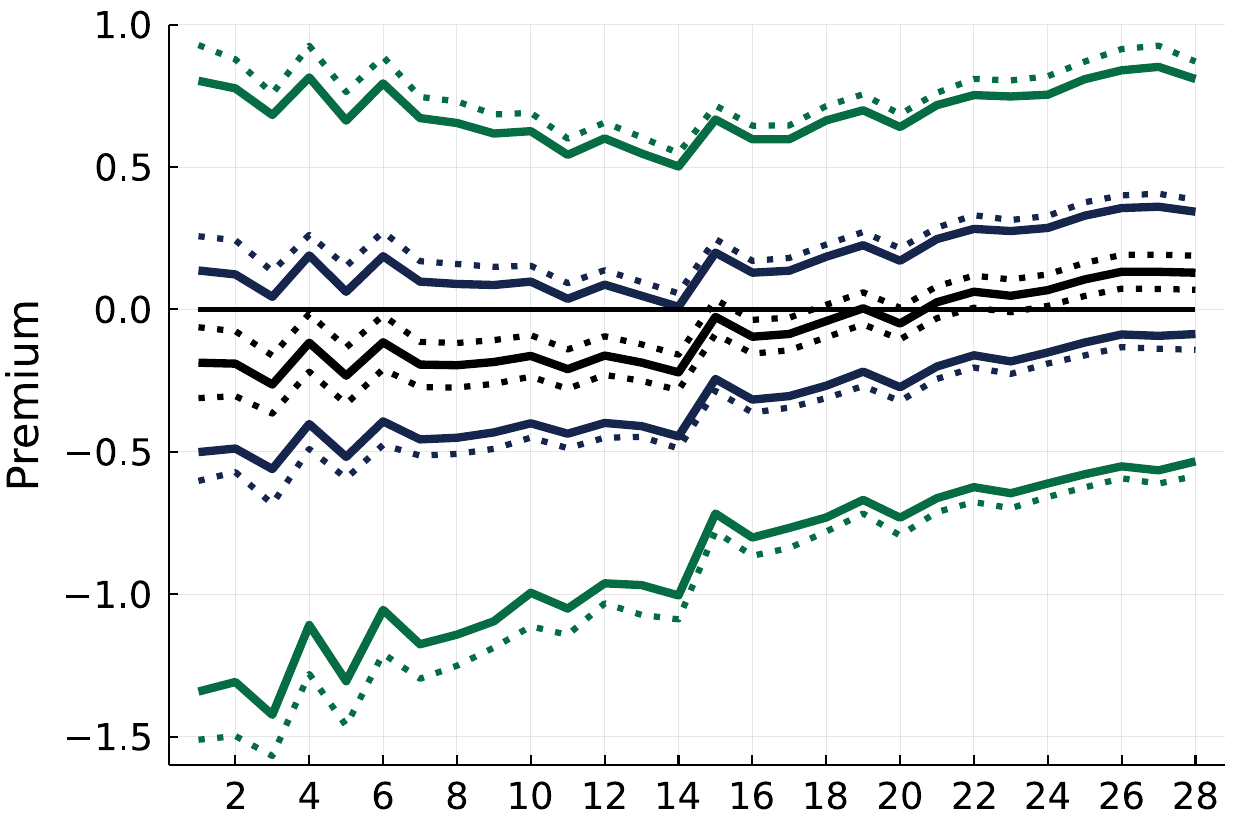}
\end{center}
\end{subfigure}
\begin{subfigure}[t]{0.49\textwidth}
\begin{center}
\caption{With exchangeability\phantom{out}} \label{fig:csw_3_short_exch}
\vskip -6pt
\includegraphics[width=\linewidth]{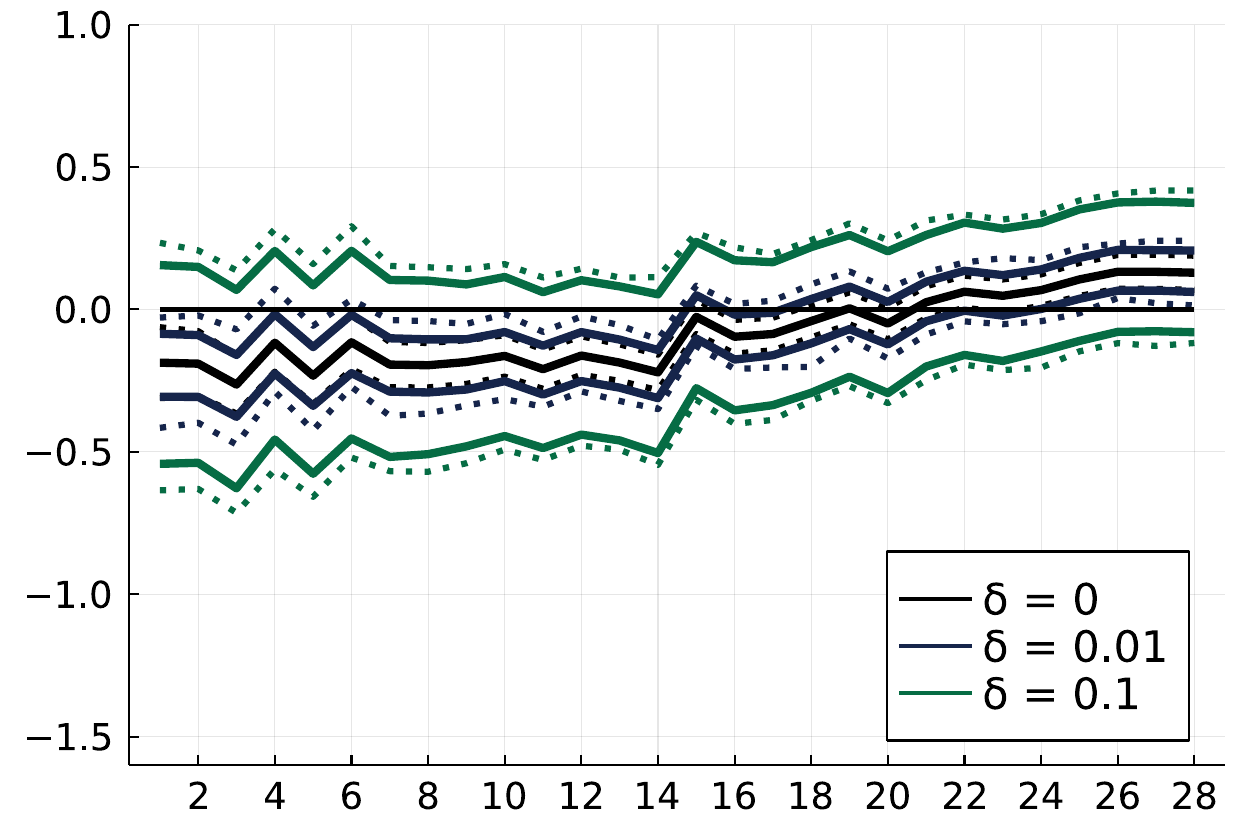}
\end{center}
\end{subfigure}
\end{center}
\end{subfigure}

\vskip 10pt

\begin{subfigure}[t]{\textwidth}
\begin{center}
\caption*{Larger neighborhoods}
\vskip -2pt
\begin{subfigure}[t]{0.49\textwidth}
\begin{center}
\caption{Without exchangeability} \label{fig:csw_3_full}
\vskip -6pt
\includegraphics[width=\linewidth]{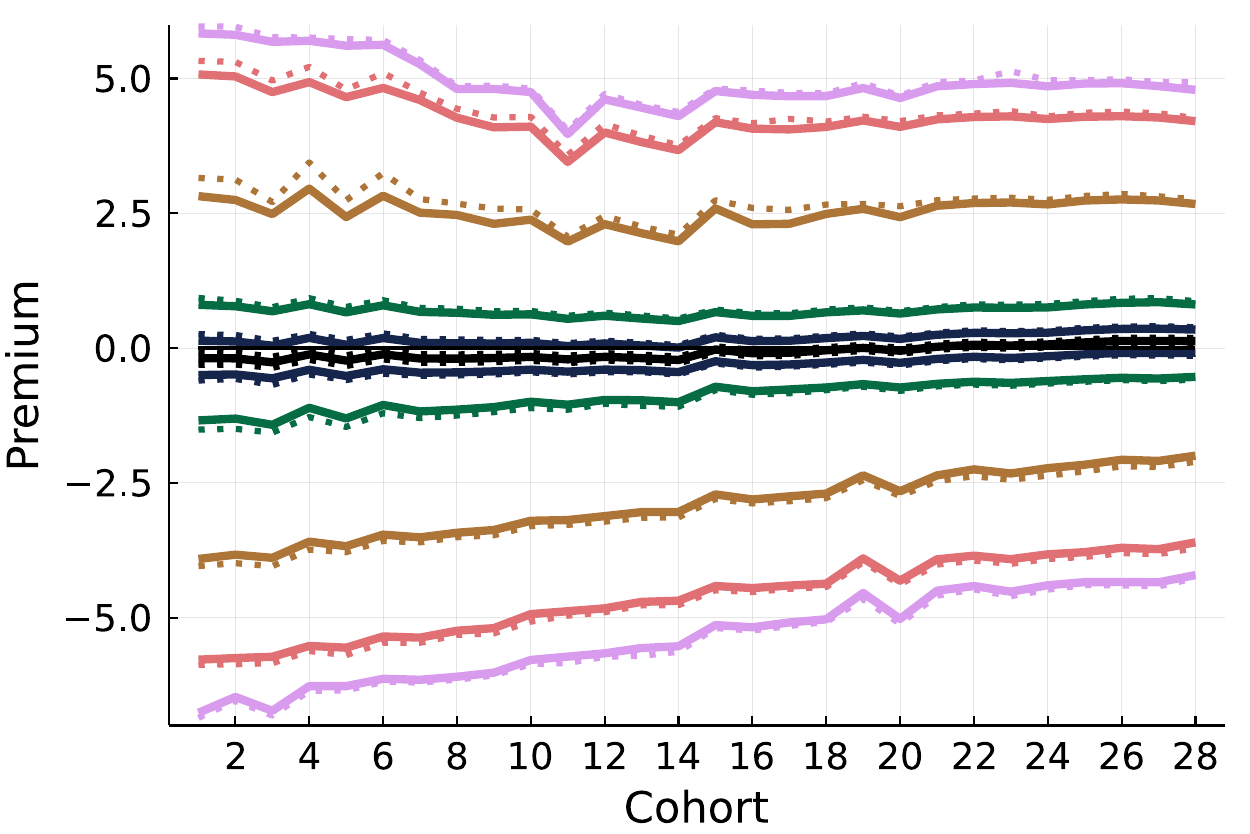}
\end{center}
\end{subfigure}
\begin{subfigure}[t]{0.49\textwidth}
\begin{center}
\caption{With exchangeability\phantom{out}} \label{fig:csw_3_full_exch}
\vskip -6pt
\includegraphics[width=\linewidth]{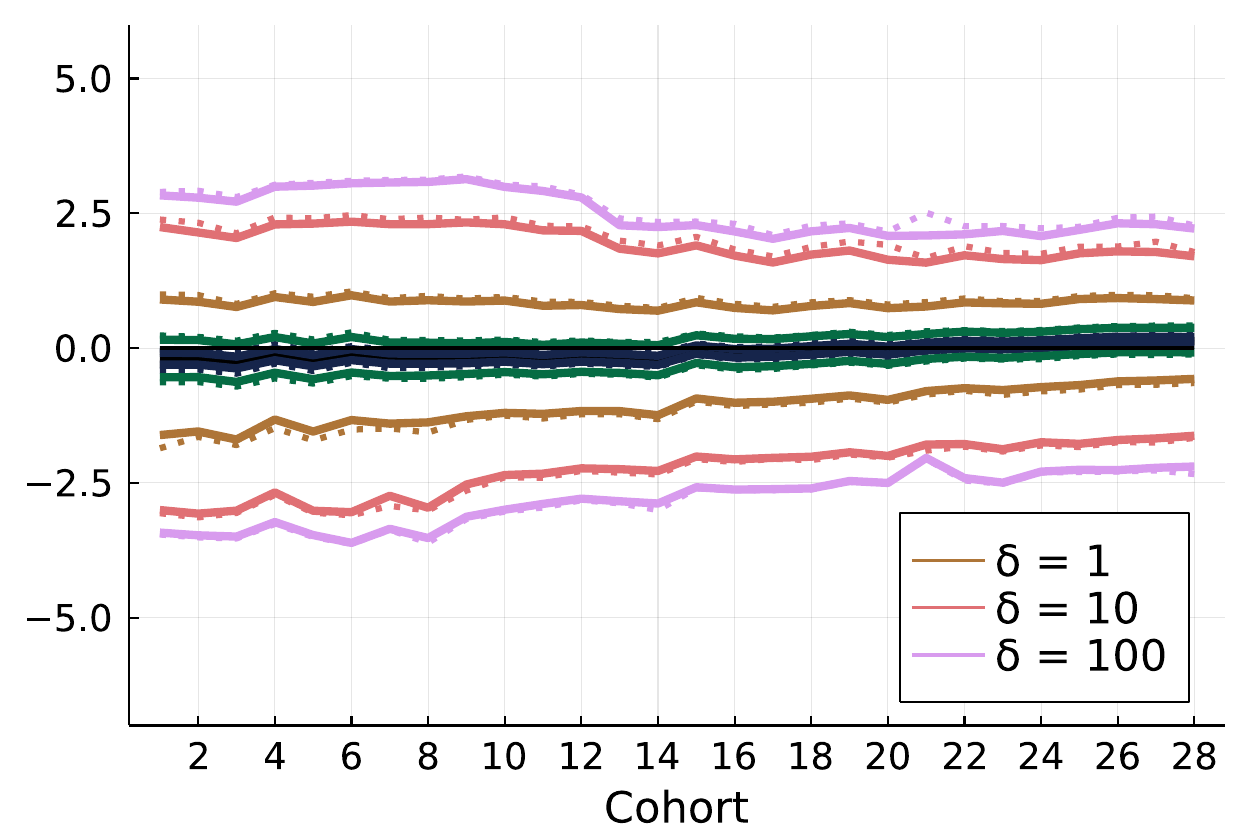}
\end{center}
\end{subfigure}
\end{center}
\end{subfigure}

\caption{\label{fig:csw_3}Sensitivity analysis of the ``some college'' to ``college graduate'' premium across cohorts. \emph{Note:} Solid lines are estimates, dotted lines are (cohort-wise) 95\% CSs. CSW's estimates and CSs correspond to $\delta = 0$.}
\vskip -15pt
\end{center}
\end{figure}

Figure~\ref{fig:csw_3} presents a sensitivity analysis of the ``some college'' to ``college graduate'' premium. Cohort-wise estimates and CSs for $\ul \kappa_{\delta}$ and $\ol \kappa_{\delta}$ are presented, beginning at $\delta = 0.01$ and increasing $\delta$ by factors of 10 up to $\delta = 100$. Even with $\delta = 0.01$, estimates of $\ul \kappa_\delta$ and $\ol \kappa_\delta$ lie uniformly below and above zero across cohorts without exchangeability (see Figure~\ref{fig:csw_3_short}).  
Imposing exchangeability can tighten the bounds, with the bounds for $\delta = 0.01$ significantly negative in early cohorts and significantly positive in later cohorts (see Figure~\ref{fig:csw_3_short_exch}). But the $\delta = 0.1$ bounds with exchangeability again contain zero across all cohorts. Bounds for larger $\delta$ presented in Figures~\ref{fig:csw_3_full} and \ref{fig:csw_3_full_exch} are uninformatively wide.

\begin{figure}
\begin{center}
\begin{subfigure}[t]{\textwidth}
\begin{center}
\caption{Without exchangeability} \label{fig:csw_3_lfd}
\vskip -2pt
\begin{subfigure}[t]{0.43\textwidth}
\begin{center}
\caption*{Type 0 (Unmatched)} \label{fig:qq_1_0}
\vskip -6pt
\includegraphics[width=\linewidth]{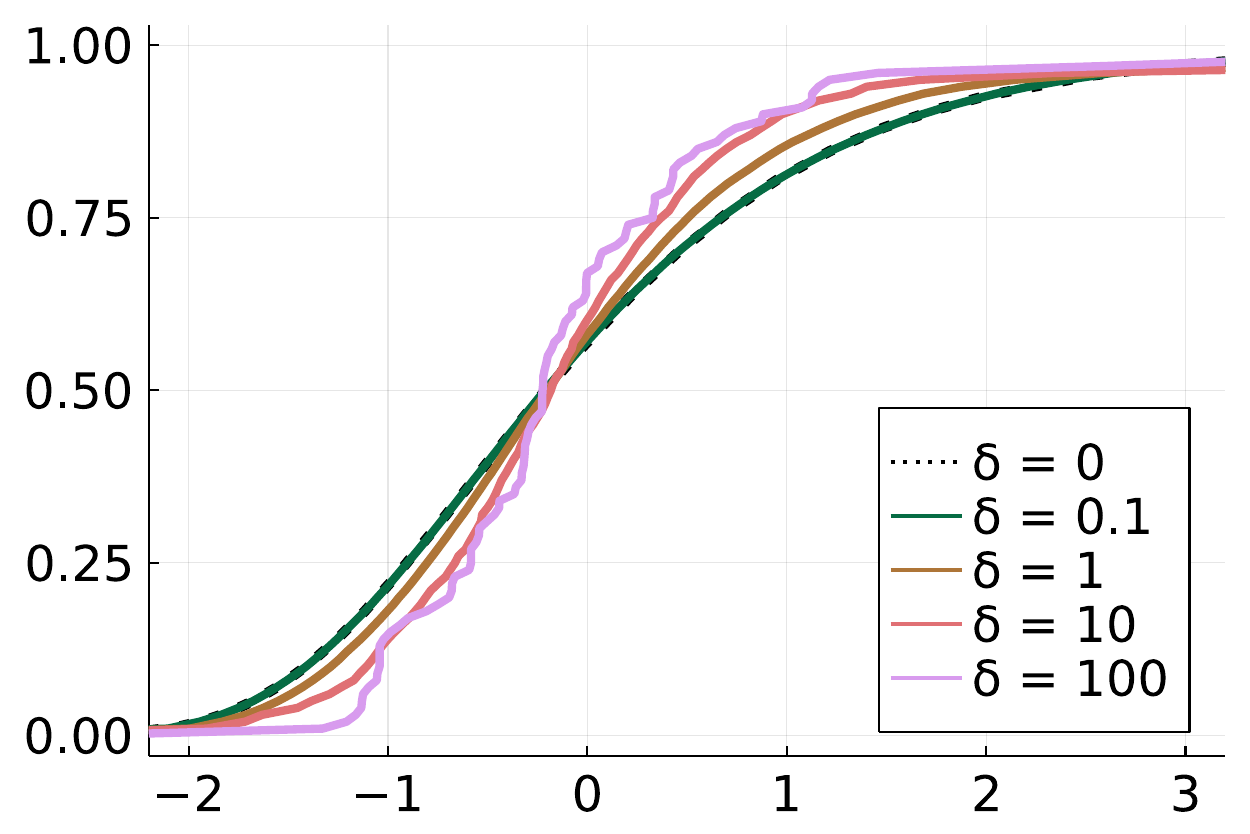} 
\end{center}
\end{subfigure}
\begin{subfigure}[t]{0.43\textwidth}
\begin{center}
\caption*{Type 1 (High-school dropout)} \label{fig:qq_1_1}
\vskip -6pt
\includegraphics[width=\linewidth]{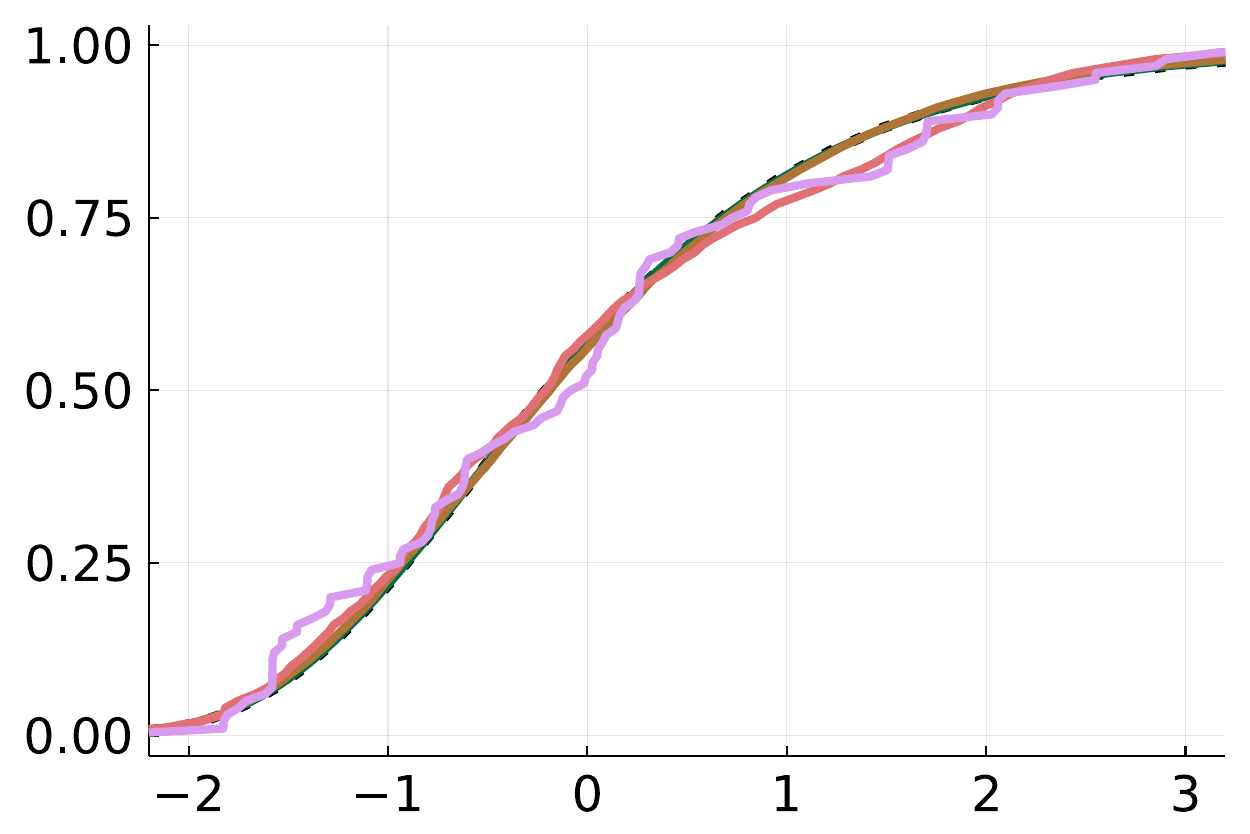} 
\end{center}
\end{subfigure}
\begin{subfigure}[t]{0.43\textwidth}
\begin{center}
\caption*{Type 2 (High-school graduate)} \label{fig:qq_1_2}
\vskip -6pt
\includegraphics[width=\linewidth]{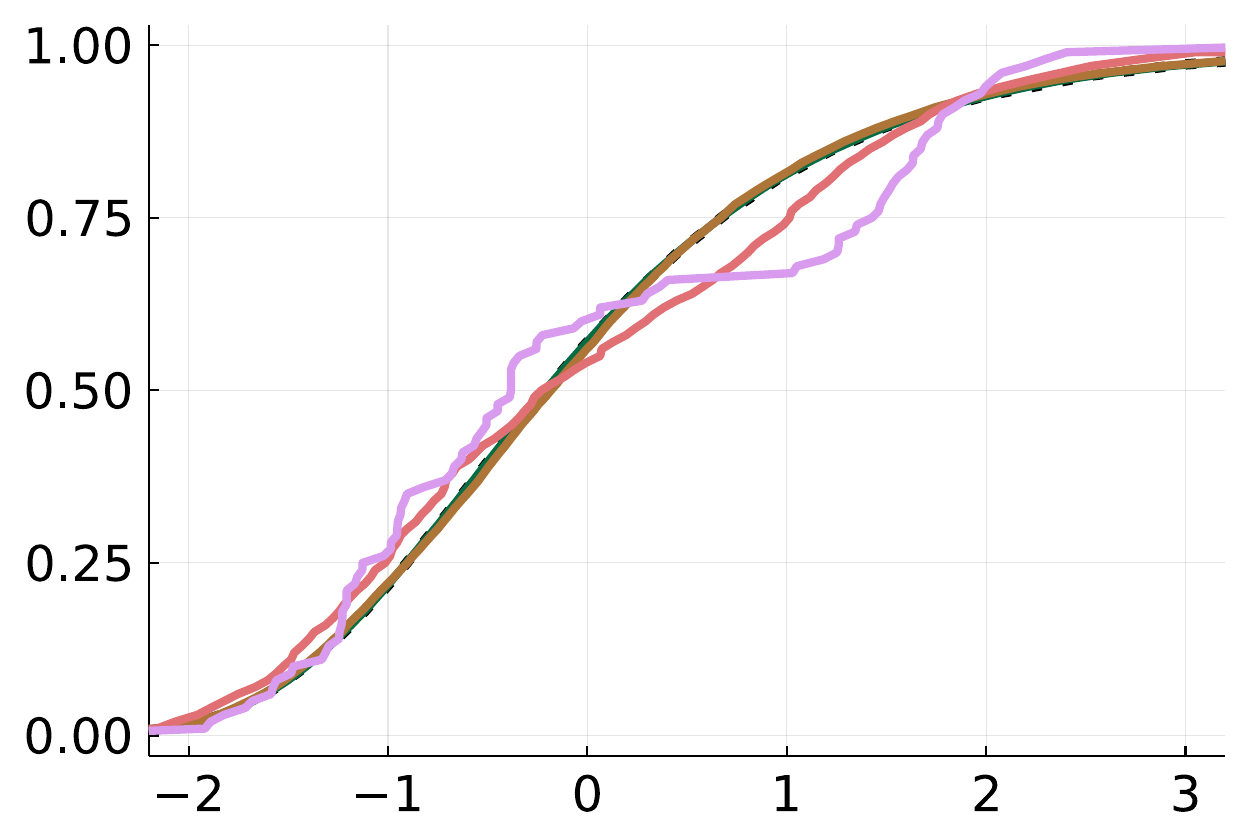} 
\end{center}
\end{subfigure}
\begin{subfigure}[t]{0.43\textwidth}
\begin{center}
\caption*{Type 3 (Some college)} \label{fig:qq_1_3}
\vskip -6pt
\includegraphics[width=\linewidth]{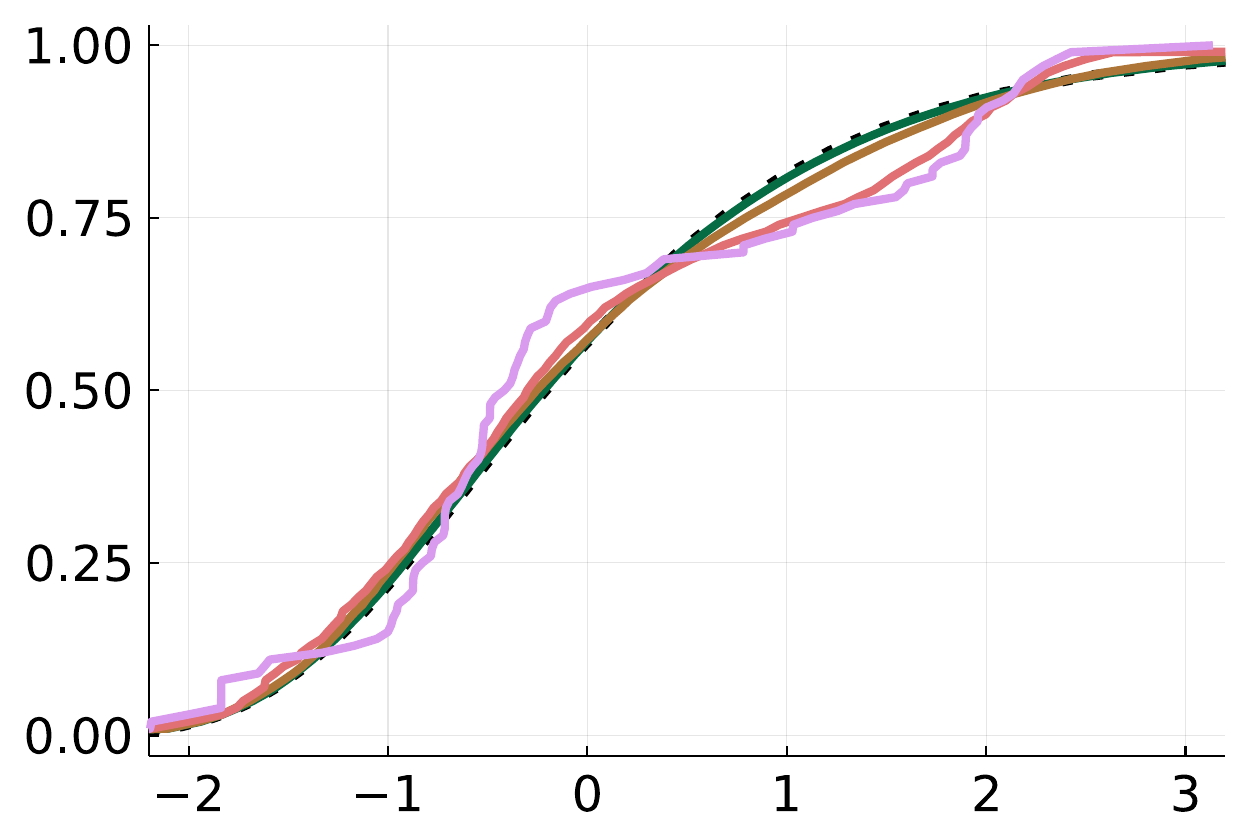} 
\end{center}
\end{subfigure}
\begin{subfigure}[t]{0.43\textwidth}
\begin{center}
\caption*{Type 4 (College graduate)} \label{fig:qq_1_4}
\vskip -6pt
\includegraphics[width=\linewidth]{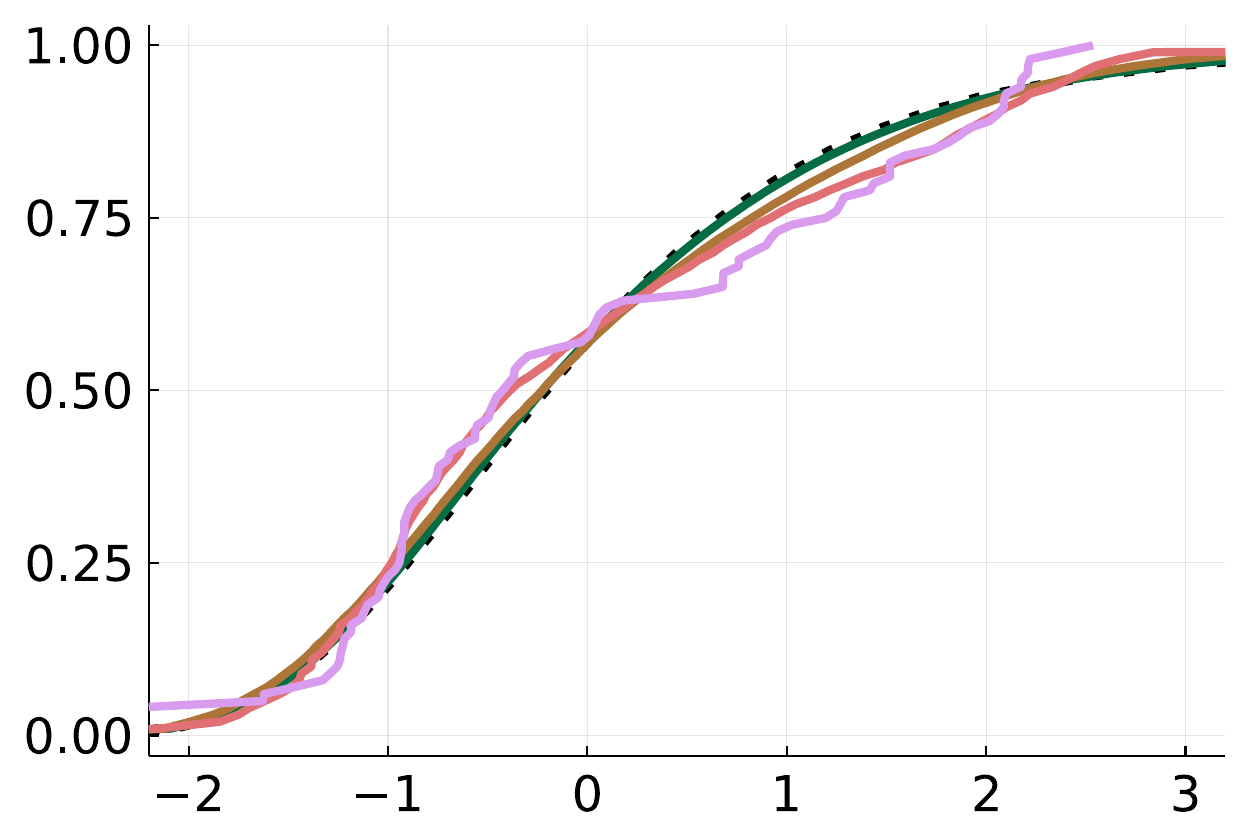} 
\end{center}
\end{subfigure}
\begin{subfigure}[t]{0.43\textwidth}
\begin{center}
\caption*{Type 5 (College-plus)} \label{fig:qq_1_5}
\vskip -6pt
\includegraphics[width=\linewidth]{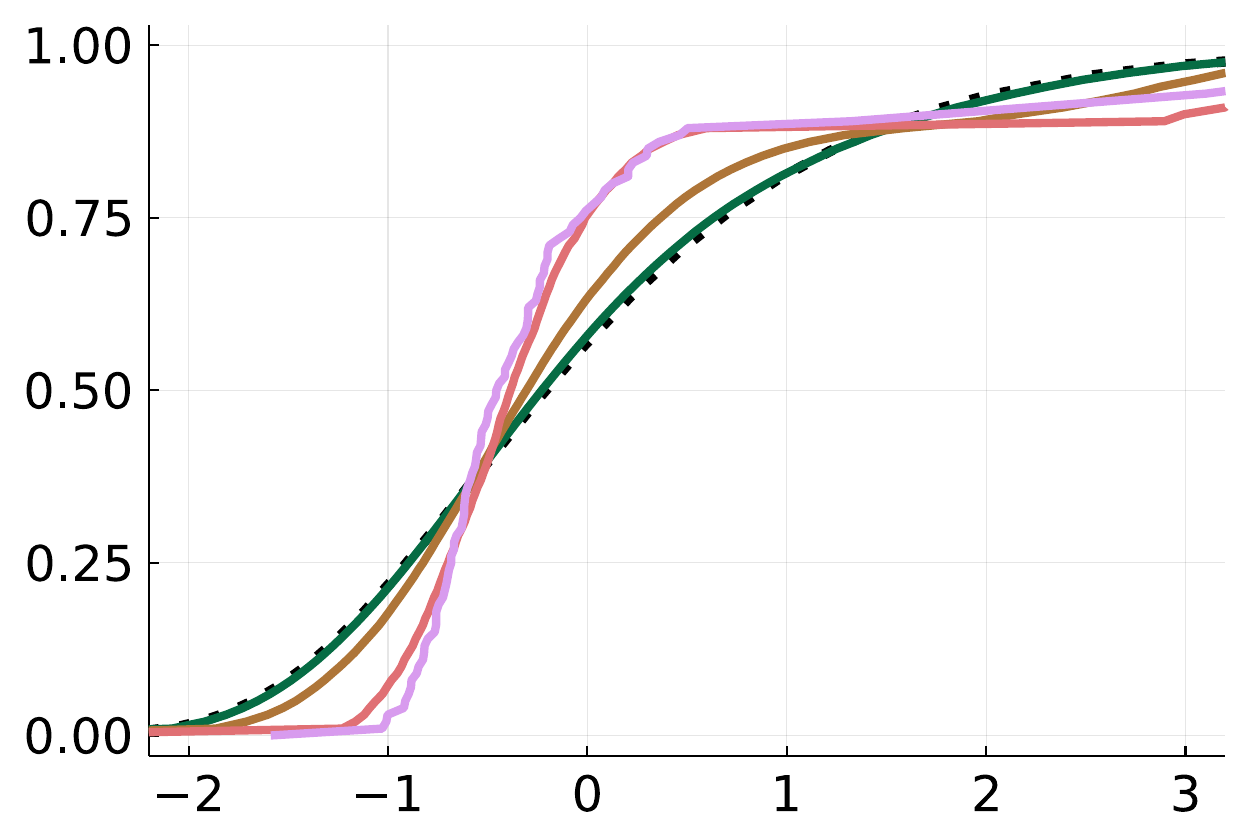} 
\end{center}
\end{subfigure}
\end{center}
\end{subfigure}

\vskip 10pt

\begin{subfigure}[t]{\textwidth}
\begin{center}
\caption{With exchangeability (all types)} \label{fig:csw_3_lfd_exch}
\vskip -2pt
\begin{subfigure}[t]{0.45\textwidth}
\begin{center}
\includegraphics[width=\linewidth]{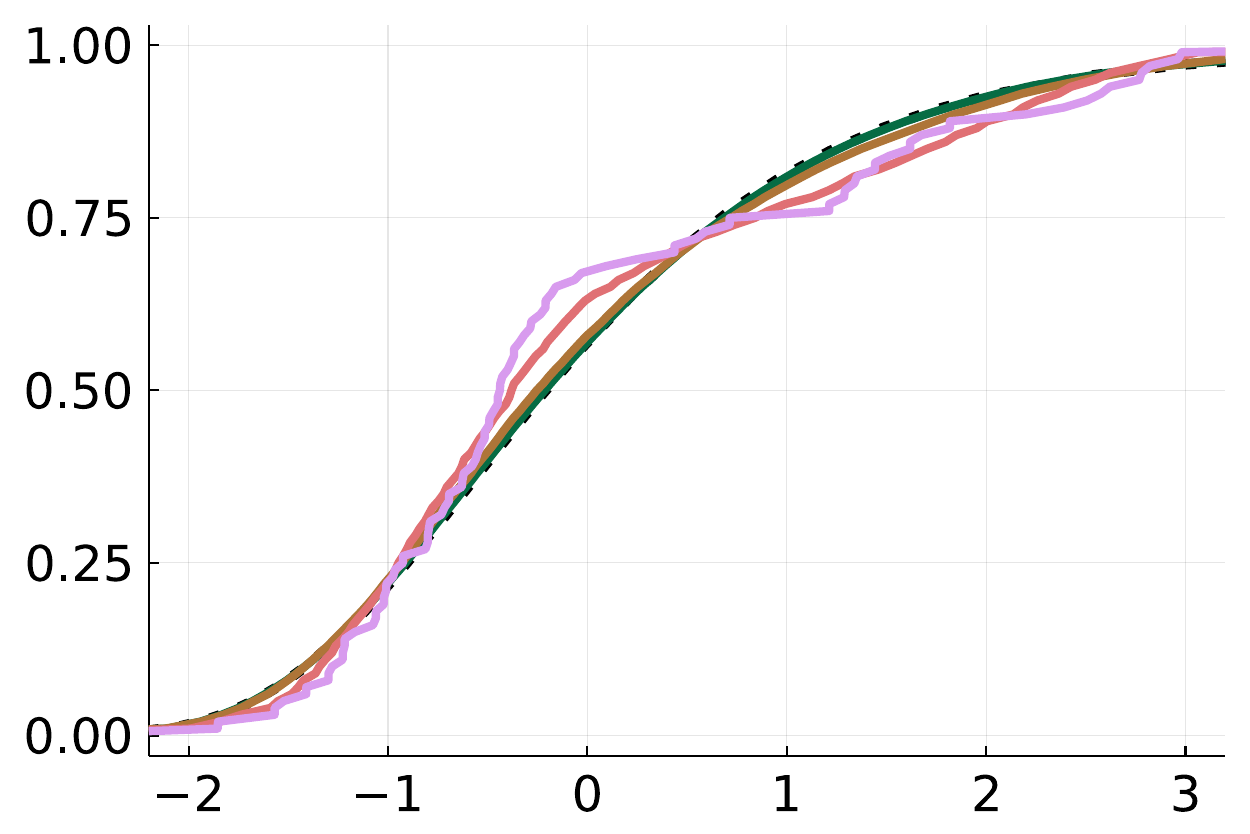} 
\label{fig:qq_2}
\end{center}
\end{subfigure}
\end{center}
\end{subfigure}

\vskip -20pt

\caption{\label{fig:csw_3_qq}Marginal CDFs for the LFDs maximizing the ``some college'' to ``college graduate'' premium in cohort 1 across potential spouse types. 
}
\end{center}
\vskip -14pt
\end{figure}

\begin{table}[t]
\begin{center}
\caption{\label{tab:csw_3}Metrics for interpreting $\delta$}
{
\begin{tabular}{ccccccccc} \hline \hline
 & & \multicolumn{3}{c}{Without exchangeability} & & \multicolumn{3}{c}{With exchangeability} \\
  \cline{3-5} \cline{7-9} \\[-10pt]
$\delta$ & & $\rho_{\max}$, $\ul \kappa_\delta$ & $\rho_{\max}$, $\ol \kappa_\delta$ & $size$ & & $\rho_{\max}$, $\ul \kappa_\delta$ & $\rho_{\max}$, $\ol \kappa_\delta$ & $size$ \\[2pt] 
0.01 & & -0.015 & -0.014 & 0.010 & & -0.022 & \phantom{-}0.013 & 0.006 \\
0.10 & & -0.071 & -0.073 & 0.038 & & -0.061 & \phantom{-}0.054 & 0.023 \\
1 & & -0.247 & -0.197 & 0.112 & & -0.139 & \phantom{-}0.115 & 0.099 \\
10 & & -0.502 & -0.496 & 0.242 & & -0.204 & \phantom{-}0.236 & 0.176 \\
100 & & -0.620 & -0.576 & 0.266 & & -0.266 & \phantom{-}0.284 & 0.178 \\ \hline \\[-10pt]
\end{tabular}
}
\parbox{\textwidth}{\small \emph{Note:} Averages across cohorts of the largest element of the correlation matrix for $U$ under the LFDs at which the estimated lower bounds ($\rho_{\max}$, $\ul \kappa_\delta$) and upper bounds ($\rho_{\max}$, $\ol \kappa_\delta$) are attained, and our $size$ measure from Section~\ref{s:lens}. Each is computed at the parameter values at which the estimated upper and lower bounds are attained.} 
\end{center}
\vskip -15pt
\end{table}

To understand better what is meant by ``small'' and ``large'' neighborhoods, Figure~\ref{fig:csw_3_qq} plots marginal CDFs for the LFDs under which the upper bounds for cohort 1 are attained. Similar LFDs (not reported) were obtained for other cohorts and the lower bounds. Without exchangeability, the LFDs with $\delta = 0.1$ are almost identical to Gumbel (plots with $\delta = 0.01$ are indistinguishable from Gumbel). LFDs appear close to Gumbel across most potential spouse types with $\delta = 1$, while for $\delta = 10$ and $\delta = 100$ the LFDs have kinks and indicate shifts in mass from the center of the distribution to the tails. 

Under exchangeability (Figure~\ref{fig:csw_3_lfd_exch}), the marginal distribution of shocks is independent of potential spouse type. In this case the LFDs for $\delta = 1$ or smaller  are virtually indistinguishable from Gumbel. LFDs with $\delta = 10$ and $\delta = 100$ are also less kinked than Figure~\ref{fig:csw_3_lfd} because distortions are spread more evenly across potential spouse types.

We also computed the largest correlation of shocks under the LFDs at which the bounds are attained and our $size$ measure from Section~\ref{s:lens}. As these quantities are stable across cohorts, we present their averages across cohorts in Table~\ref{tab:csw_3}. Shocks are independent when $\delta = 0$ and only very weakly correlated for small $\delta$, while for large $\delta$ some shocks are strongly negatively correlated. The maximal correlations under exchangeability are smaller, especially for large $\delta$. Turning to the $size$ measure, the LFDs for $\delta = 0.01$ without exchangeability shift the model-implied marriage probabilities by 0.01 (on average, across cohorts) from their values under the i.i.d. Gumbel assumption. LFDs for $\delta = 10$ and $\delta = 100$ shift  marriage probabilities around 0.25  (on average, across cohorts). Imposing exchangeability reduces the $size$ measure by around 25\% because model parameters do not vary as much under this shape restriction.

In view of the small-$\delta$ bounds in Figure~\ref{fig:csw_3}, the LFDs in Figure~\ref{fig:csw_3_qq}, and the metrics in Table~\ref{tab:csw_3}, it seems impossible to draw conclusions about how the sign of the premium has changed over time under slight nonparametric relaxations of the i.i.d. Gumbel assumption. To help understand why, Figure~\ref{fig:csw_3_inner_both} plots bounds where $F$ is allowed to vary but $\theta$ is held fixed at CSW's estimates. These ``fixed-$\theta$''  bounds for $\delta = 10$ and $\delta = 100$ are almost identical, and are roughly the same width as the $\delta = 0.01$ bounds in Figure~\ref{fig:csw_3}. The width of the bounds in Figure~\ref{fig:csw_3} therefore seems largely attributable to the additional variation in $\theta$ that is permitted when parametric assumptions for $F$ are relaxed.

Overall, our findings are complementary to \cite{GualdaniSinha} who perform a nonparametric reanalysis of CSW using the PIES methodology of \cite{Torgovitsky2019QE}. Although they do not derive nonparametric bounds on the marital education premium itself, only terms that contribute to it, they also find no evidence of an increase in premiums across cohorts.

\subsection{Welfare Analysis in a Rust Model}\label{s:rust}

Our second empirical illustration is a sensitivity analysis for welfare counterfactuals in the DDC model of \cite{Rust}.

\medskip

\paragraph{Model and Benchmark Estimates.} 

We focus on the specification in Table IX of \cite{Rust} where maintenance costs are linear in the state (i.e., mileage). In the notation of Example~\ref{ex:DDC}, $|\mc S| = 90$, $\beta = 0.9999$, and $\theta_\pi = (RC, MC)$ where $RC$ is the replacement cost and $MC$ is a maintenance cost parameter. Our counterfactual of interest is the change in average welfare arising from a 10\% reduction in maintenance costs. Hence, $\pi_{1,s}(\theta_\pi) = \tilde \pi_{1,s}(\theta_\pi) = -RC$ and  $\pi_{0,s}(\theta_\pi) = -0.001 MC \times s$ (baseline) and $\tilde \pi_{0,s}(\theta_\pi) = 0.9 \pi_{0,s}(\theta_\pi)$ (counterfactual). The counterfactual function is $k(\theta, \gamma) = w' (\tilde v - v)$ where $w$ is the stationary distribution of the state in the baseline model. 

Under the i.i.d. Gumbel assumption, the estimated counterfactual at the maximum likelihood estimate (MLE) of $\theta_\pi$ is 73.07 and its 95\% CS is [48.25,101.31].\footnote{We construct this CS by simulation. We draw $\hat \theta^*_\pi \sim N(\hat \theta_\pi,\hat \Sigma)$ where $\hat \theta_\pi$ is the MLE and $\hat \Sigma$ is an estimate of the inverse information matrix. For each $\hat \theta^*_\pi$ draw, we compute the baseline and counterfactual value functions $v^*$ and $\tilde v^*$, and hence the counterfactual $\hat \kappa^* = w'( \tilde v^* - v^*)$.} Note the counterfactual is point-identified under the i.i.d. Gumbel assumption because $\theta_\pi$ is point-identified.

\medskip

\paragraph{Implementation.}

We estimate CCPs using Rust's Group 4 data. Nonparametric estimates of the 90 CCPs are zero in many states, so we proceed as in Section~\ref{s:overid} and take the model-implied CCPs at the MLE of $\theta_\pi$ (under the i.i.d. Gumbel assumption) as our estimate $\hat P_2$. We drop moment conditions for CCPs in states where the replacement probability is less than $0.001$ to avoid numerical instabilities induced by including near-degenerate moments. This reduces the dimension of $g_2$ to 71. We normalize $F$ so that shocks have mean zero and the same variance as the Gumbel distribution by appending $\E^F[U_d] = 0$ and $\E^F[U_d^2 - \pi^2/6] = 0$, for $d = 0,1$, to $g_4$. In total, there are 255 moments (71 for CCPs, 180 for Bellman equations, and 4 location/scale normalizations) and $\theta = (\theta_\pi, v, \tilde v)$ has dimension 182. 

We implement our methods as described in Section~\ref{s:mpec}. The inner optimization uses 75 moments (71 for CCPs and 4 for normalizations), with the remaining 180 moments appended as constraints in the outer optimization. We define neighborhoods using hybrid divergence from Section~\ref{s:csw} so that Assumption~\ref{a:phi}(ii) holds. Similar results are obtained with $\chi^2$ and $L^4$ neighborhoods (see Appendix~\ref{ax:rust}). Expectations are computed using 50,000 scrambled Halton draws---see Appendix~\ref{ax:rust} for computation times. We compute 95\% CSs for $\ul \kappa_\delta$ and $\ol \kappa_\delta$ using the bootstrap procedure from Section~\ref{s:bootstrap} and projection procedure from Section~\ref{s:projection}. See Appendix~\ref{ax:rust} for details.

\medskip

\paragraph{Findings.}

\begin{figure}
\begin{center}
\includegraphics[width=0.6\textwidth]{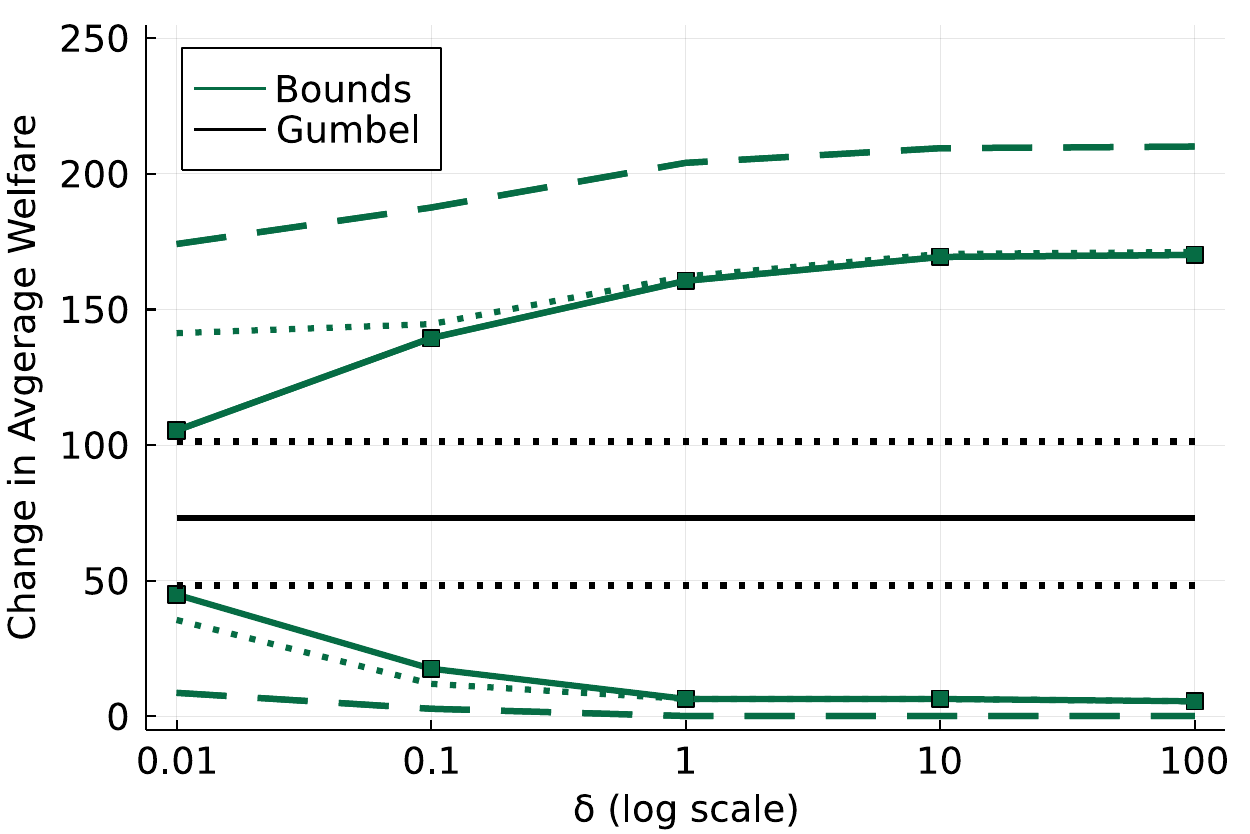} 
\caption{\label{fig:rust_bounds} Sensitivity analysis for change in average welfare under a 10\% maintenance cost subsidy. \emph{Note:} Solid lines are estimates, dotted lines are bootstrap CSs, dashed lines are projection CSs.}
\end{center}
\vskip -14pt
\end{figure}

Estimates and CSs for $\ul \kappa_\delta$ and $\ol \kappa_\delta$ are plotted in Figure~\ref{fig:rust_bounds} for values of $\delta$ from $0.01$ to $100$.\footnote{The width of the bootstrap CSs relative to the bounds reduces as $\delta$ gets large. We re-estimated our bounds using several different draws of bootstrapped CCPs in place of $\hat P_2$ and obtained bounds that spanned a range similar to the bootstrap CSs for small $\delta$, but which for many draws converged to values close to our estimates of the bounds for large $\delta$. This corroborates the behavior of our bootstrap CSs. We conjecture that other features of the model are potentially more important than the numerical values of the CCPs in determining nonparametric bounds on the welfare counterfactual. } As can be seen, the bounds expand rapidly under slight relaxations of the i.i.d. Gumbel assumption then stabilize around $\delta = 1$, where the lower bound is 6.45 and the upper bound of 160.5 represents approximately 220\% of the value under the i.i.d. Gumbel assumption.

\begin{figure}[t]
\begin{center}
\begin{subfigure}{0.43\textwidth}
\begin{center}
\caption*{Lower bound} \label{fig:rust_ldf_lower}
\includegraphics[width=\linewidth]{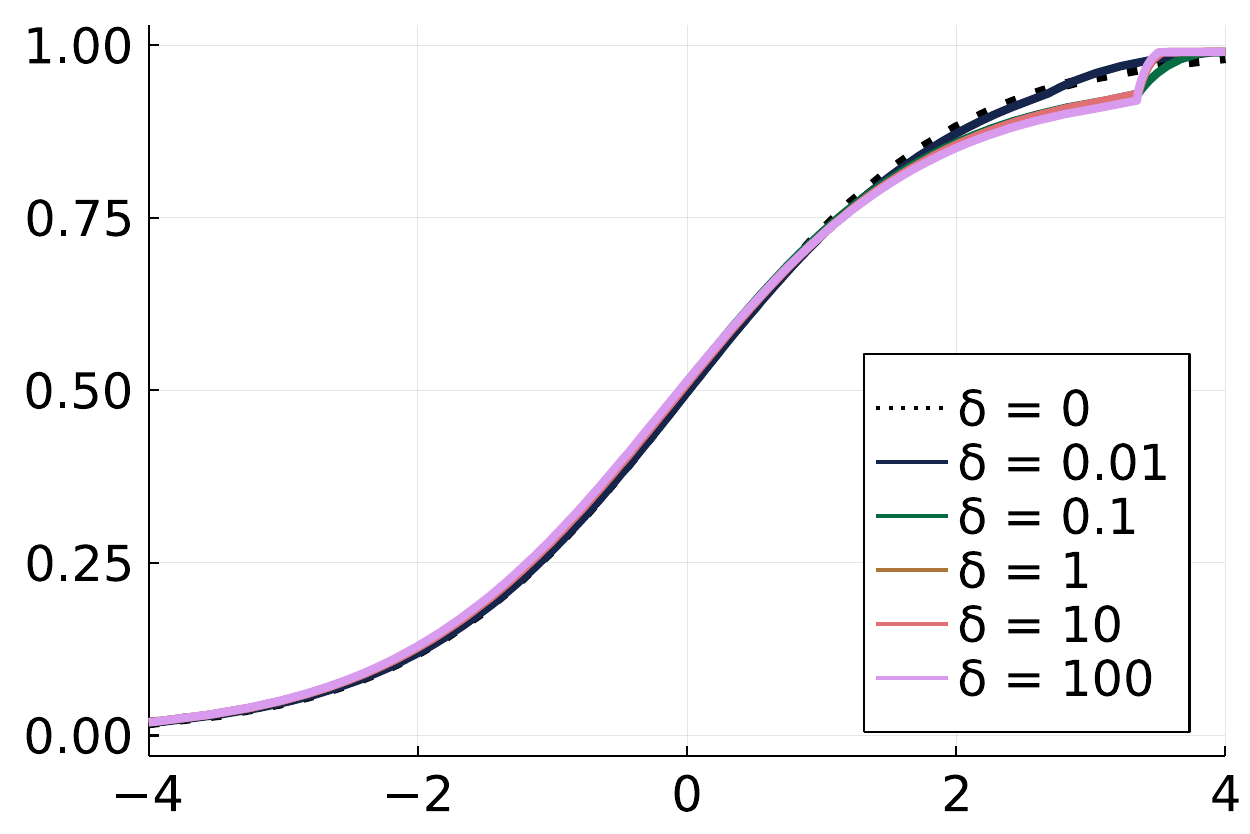} 
\end{center}
\end{subfigure}
\begin{subfigure}{0.43\textwidth}
\begin{center}
\caption*{Upper bound} \label{fig:rust_lfd_upper}
\includegraphics[width=\linewidth]{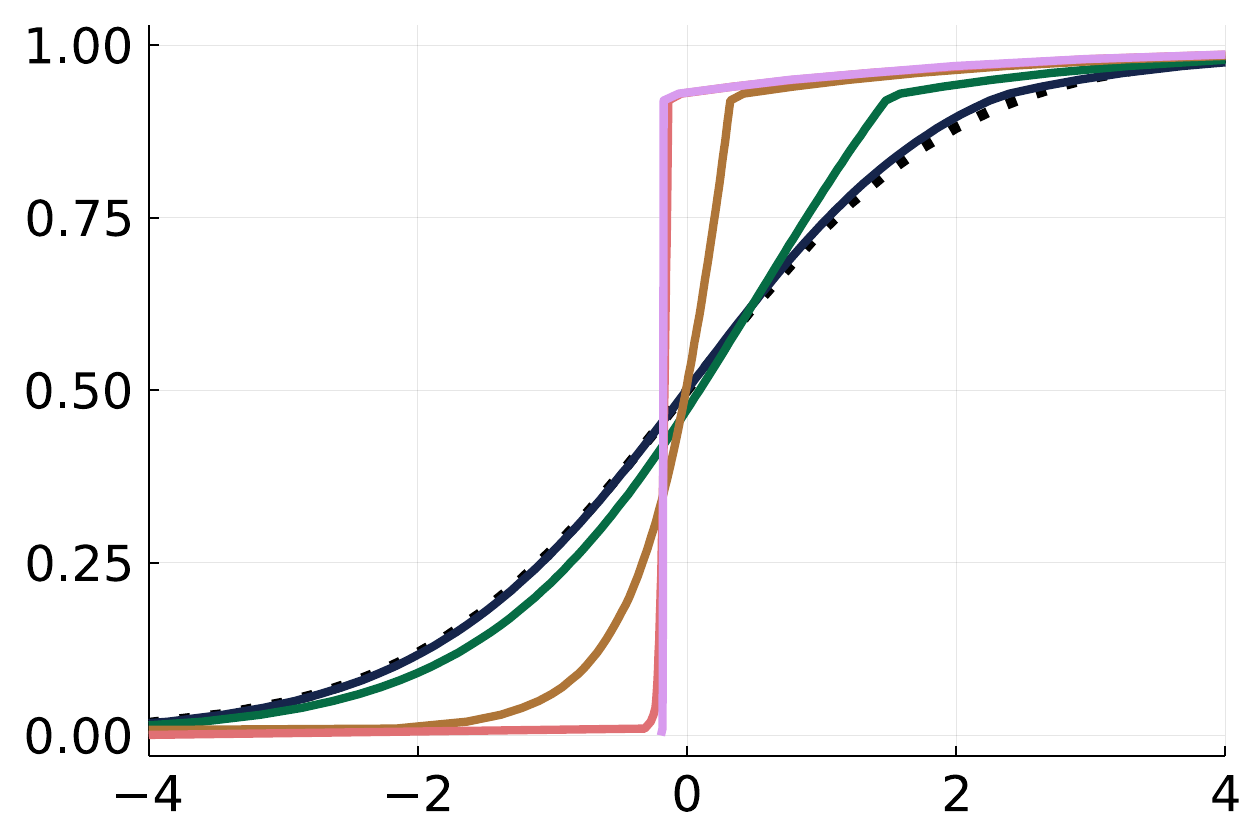} 
\end{center}
\end{subfigure}

\caption{\label{fig:rust_lfd} CDFs of $U_1 - U_0$ under the LFDs at which the estimated lower and upper bounds on the welfare counterfactual are attained.}
\end{center}
\end{figure}

To interpret $\delta$, in Figure~\ref{fig:rust_lfd} we plot the CDFs of $U_1 - U_0$ under the LFDs at which the estimated bounds $\hat{\ul \kappa}_\delta$ and $\hat{\ol \kappa}_\delta$ are attained. LFDs were computed as described in Section~\ref{s:lfd} using the construction (\ref{e:m:delta}). The distributions appear very close to logistic (their distribution when $\delta = 0$) for $\delta = 0.01$. Therefore, we see that large differences in welfare counterfactuals can arise under very slight departures from the i.i.d. Gumbel assumption. LFDs for the upper bound shift increasing amounts of mass to the center of the distribution of $U_1 - U_0$ as $\delta$ increases. LFDs corresponding to the lower bound are relatively less distorted, but have increasing amounts of mass shifted into the right tail. These are similar for $\delta = 0.1$ through $\delta = 100$ because the estimated lower bound stabilizes for smaller values of $\delta$ than the upper bound  (cf. Figure~\ref{fig:rust_bounds}).

\begin{table}[t]
\begin{center}
\caption{\label{tab:rust}Metrics for interpreting $\delta$}
{
\begin{tabular}{ccccccccccc} \hline \hline
 & & \multicolumn{4}{c}{Lower bound} & & \multicolumn{4}{c}{Upper bound} \\
  \cline{3-6} \cline{8-11} \\[-10pt]
$\delta$ & & $corr$ & $size$ & $RC$ & $MC$ & & $corr$ & $size$ & $RC$ & $MC$ \\[2pt] 
0 & & \phantom{-}0.000 & 0.000 & 10.208 & 2.294  & & \phantom{-}0.000 & 0.000 & 10.208 & 2.294 \\
0.01 & & \phantom{-}0.036 & 0.010 & \phantom{1}7.357 & 1.411 & & -0.027 & 0.016 & 13.390 & 3.307 \\
0.1 & & -0.058 & 0.039 & \phantom{1}5.186 & 0.553 & & \phantom{-}0.149 & 0.109 & 16.134 & 4.374 \\
1 & & -0.045 & 0.039 & \phantom{1}4.023 & 0.203 & & \phantom{-}0.616 & 0.346 & 17.166 & 5.038 \\
10 & & -0.040 & 0.039 & \phantom{1}4.022 & 0.202 & & \phantom{-}0.765 & 0.461 & 17.595 & 5.331 \\
100 & & -0.063 & 0.039 & \phantom{1}3.931 & 0.176 & & \phantom{-}0.764 & 0.469 & 17.626 & 5.365 \\ \hline \\[-10pt]
\end{tabular}
}
\parbox{\textwidth}{\small \emph{Note:} Correlation of $U_0$ and $U_1$ under the LFD at which the estimated lower and upper bounds are attained ($corr$), our $size$ measure from Section~\ref{s:lens}, and replacement and maintenance cost parameters at which the estimated lower and upper bounds are attained.} 
\end{center}
\end{table}

Table~\ref{tab:rust} lists other metrics to help interpret the neighborhood size. The first is the correlation of $U_0$ and $U_1$ under the LFDs at which $\hat{\ul \kappa}_\delta$ and $\hat{\ol \kappa}_\delta$ are attained. These are very small for $\delta = 0.01$ and remain small under the LFDs for $\hat{\ul \kappa}_\delta$ as $\delta$ increases, while $U_0$ and $U_1$ are strongly positively correlated under the LFDs for $\hat{\ol \kappa}_\delta$, especially for larger $\delta$ values. Given the asymmetry in distortions between the lower and upper values, we compute our $size$ measure separately for both. We measure distortions to the moments corresponding to the CCPs as these are most directly interpretable within the context of the model. We see that the LFDs for $\delta = 0.01$ are distorting $F_*$ in a manner that shifts the model-implied CCPs by at most 0.016. By contrast, the LFDs for $\delta = 10$ and $\delta = 100$ shift the model-implied CCPs from their values under the i.i.d. Gumbel assumption by at most $0.04$ for $\hat{\ul \kappa}_\delta$ and $0.47$ for $\hat{\ol \kappa}_\delta$. 

The parameters at which $\hat{\ul \kappa}_\delta$ and $\hat{\ol \kappa}_\delta$ are attained are also revealing about neighborhood size. Table~\ref{tab:rust} presents MLEs of $MC$ and $RC$, which are similar to the values reported in Table IX of \cite{Rust}. We see from Table~\ref{tab:rust} that  $\hat{\ul \kappa}_\delta$ and $\hat{\ol \kappa}_\delta$  are attained at very different parameter values, with much smaller cost parameters for the lower bound and larger parameters for the upper bound, even for $\delta = 0.01$. Intuitively, a smaller $MC$ means that the saving from the subsidy---which is proportional---must be small. Correspondingly, a low $RC$ is needed to help the model to fit the observed CCPs at the smaller $MC$. While it is known that payoff parameters are not identified without parametric assumptions on $F$, it is perhaps surprising that these parameters vary by so much under slight relaxations of the i.i.d. Gumbel assumption. For instance, with $\delta = 0.01$ the lower bound is attained with cost parameters $RC = 7.357$ and $MC = 1.411$ while the upper bound is attained with cost parameters that are roughly double these values.

\section{Estimation and Inference}\label{s:asymptotics}

We begin in Section~\ref{s:consistency} by establishing consistency and the asymptotic distribution of the estimators $\hat{\ul \kappa}_\delta$ and $\hat{\ol \kappa}_\delta$ from Section~\ref{s:estimators}. We then present a bootstrap-based inference method in Section~\ref{s:bootstrap} and a projection-based inference method in Section~\ref{s:projection}.

\subsection{Large-sample Properties of Plug-in Estimators}\label{s:consistency}

We first introduce some regularity conditions. 
Recall the space $\mc E$ from Assumption~\ref{a:phi}. We equip $\mc E$ with the Orlicz norm (see Appendix~\ref{ax:Orlicz})
\[
 \|f\|_\psi = \inf_{c > 0} \frac{1}{c} \left( 1 + \E^{F_*}[\psi(c|f(U)|)] \right).
\]
This norm is equivalent to the $L^2(F_*)$ norm for $\chi^2$ and hybrid divergence and equivalent to the $L^q(F_*)$ norm for $L^p$ divergence ($p^{-1} + q^{-1} = 1$), while for KL divergence it is stronger than any $L^p(F_*)$ norm with $p < \infty$ but weaker than the sup-norm. Say that a class of functions $\{f_a : a \in \mc A\} \subset \mc E$ indexed by a metric space $\mc A$ is \emph{$\mc E$-continuous in $a$} if $a' \to a$ in $\mc A$ implies $\|f_a - f_{a'}\|_\psi \to 0$. We also require a slightly stronger notion of constraint qualification than Condition S from Section~\ref{s:sharp}.

\begin{definition}\label{cond:sprime}
\emph{Condition S'} holds at $(\theta,\gamma,P)$ if $\vec P \in \mr{int}(\mc G(\theta,\gamma) + \mc C) $.
\end{definition}

Condition S' replaces ``relative interior'' in Condition S with ``interior''. 
Finally, recall $\Delta(\theta;\gamma,P)$ from (\ref{e:qual:dual}) and let $\Theta_\delta(\gamma,P) = \{\theta \in \Theta : \Delta(\theta;\gamma,P) < \delta\}$.

\begin{assumptionp}{M}\label{a:m}
(i)\hskip\labelsep $k(\cdot;\theta,\gamma)$ and each entry of $g(\cdot;\theta,\gamma)$ are $\mc E$-continuous in $(\theta,\gamma)$;
\begin{enumerate}[topsep=-18pt,itemsep=0pt,parsep=0pt,partopsep=0pt]
\item[(ii)] $(\theta,\gamma) \mapsto \mb E^{F_*}[\phi^\star(a_1 + a_2 k(U,\theta,\gamma) + a_3' g(U,\theta,\gamma))]$ is continuous for each $(a_1,a_2,a_3) \in \mb R \times \mb R \times \mb R^{d}$;
\item[(iii)] $\Theta_\delta(\gamma_0,P_0)$ is nonempty and Condition S' holds at $(\theta,\gamma_0,P_0)$ for each $\theta \in \Theta_\delta(\gamma_0,P_0)$;
\item[(iv)] $\mr{cl}(\Theta_\delta(\gamma_0,P_0)) \supseteq \{ \theta \in \Theta : \Delta(\theta;\gamma_0,P_0) \leq \delta\}$;
\item[(v)] $\Theta$ is a compact subset of $\mb R^{d_\theta}$.
\end{enumerate}  
\end{assumptionp}

Parts (i) and (ii) of Assumption~\ref{a:m} are continuity conditions. If $k$ and $g$ consist of indicator functions, then these conditions hold provided the probabilities of the events under $F_*$ are continuous in $(\theta,\gamma)$. In models without $\gamma$, these conditions simply require continuity in $\theta$. 

There are two parts to Assumption~\ref{a:m}(iii). The nonemptyness condition holds when the model is correctly specified under $F_*$ or, more generally, when there is at least one $F \in \mc N_\delta$ that satisfies (\ref{e:mod}) for some $\theta$. The second part is a constraint qualification. This condition requires that for each $\theta \in \Theta_\delta(\gamma_0,P_0)$, there is a distribution $F$ under which (\ref{e:mod}) holds at $(\theta,\gamma_0,P_0)$ that is ``interior'' to $\mc N_\infty$, in the sense that one can perturb the moments at $(\theta,\gamma_0,P_0)$ in all directions by perturbing $F$. Condition S' also requires that there is $F \in \mc N_\infty$ under which any inequality restrictions at $(\theta, \gamma_0, P_0)$ hold strictly. Note, however, that we do not require that this $F$ belongs to $\mc N_\delta$, only to $\mc N_\infty$. We therefore do not view this condition as overly restrictive. We also conjecture it could be relaxed using a notion similar to $S$-regularity from Section~\ref{s:sharp}.  

Assumption~\ref{a:m}(iv) is made for convenience and can be relaxed; this condition simply ensures that there do not exist values of $\theta$ at which $\Delta(\theta;\gamma_0,P_0) = \delta$ that are separated from $\Theta_\delta(\gamma_0,P_0)$.
Assumption~\ref{a:m}(v) is standard and can be relaxed.

\begin{theorem}\label{t:c-const}
Suppose that Assumptions~\ref{a:phi} and \ref{a:m} hold and $(\hat \gamma,\hat P) \to_p (\gamma_0,P_0)$ or, if there is no auxiliary parameter, $\hat P \to_p P_0$. Then $\hat{\ul \kappa}_\delta \to_p \ul \kappa_\delta$ and $\hat{\ol \kappa}_\delta \to_p \ol \kappa_\delta$.
\end{theorem}

To derive the asymptotic distribution of the estimators, we assume $\gamma_0$ is known and suppress dependence of all quantities on $\gamma$ for the remainder of this section. This entails no loss of generality for models without $\gamma$, such as Examples~\ref{ex:choice-welfare} and \ref{ex:game} and the application in Section~\ref{s:csw}. In DDC models this presumes the law of motion of the state is known. The asymptotic distribution therefore reflects only sampling uncertainty from the estimated CCPs, which is the case for confidence sets reported when laws of motion are first estimated ``offline''. Extending our approach to accommodate sampling variation in $\hat \gamma$ in a tractable manner appears to require exploiting application-specific model structure, which we defer to future work.

Define
\begin{equation}
\begin{aligned}
 \ul b_\delta(P) & = \inf_{\theta \in \Theta_\delta(P)} \ul K_\delta(\theta;P) \,, & & & 
 \ol b_\delta(P) & = \sup_{\theta \in \Theta_\delta(P)} \ol K_\delta(\theta;P) \,. \label{e:kappa-functions}
\end{aligned}
\end{equation}
In this notation, $\ul \kappa_\delta = \ul b_\delta(P_0)$ and $\ol \kappa_\delta = \ol b_\delta(P_0)$ (see Lemma~\ref{lem:endpoints}) and $\hat{\ul \kappa}_\delta = \ul b_\delta(\hat P)$ and $\hat{\ol \kappa}_\delta = \ol b_\delta(\hat P)$. 
We derive the asymptotic distribution of $\hat{\ul \kappa}_\delta$ and $\hat{\ol \kappa}_\delta$ by showing $\ul b_\delta$ and $\ol b_\delta$ are directionally differentiable and applying a suitable delta method. Say $f : \mb R^{d_1+d_2} \to \mb R$ is (Hadamard) directionally differentiable at $P_0$ if there is a continuous map $df_{P_0}[\,\cdot\, ] : \mb R^{d_1+d_2} \to \mb R$ such that
\[
  \lim_{n \to \infty} t_n^{-1} \left( f(P_0+t_n h_n)-f(P_0) \right) = df_{P_0}[h]
\]
for all sequences $t_n \downarrow 0$ and $h_n \to h$ \cite[p. 480]{Shapiro1990}. If $df_{P_0}[h]$ is linear in $h$ then $f$ is (fully) differentiable at $P_0$. 
We introduce some additional notation used to define the directional derivatives of $\ul b_\delta$ and $\ol b_\delta$. Let 
\begin{align*}
 \ul \Xi_\delta(\theta;P) & = \mr{argsup}_{\eta \geq 0, \zeta \in \mb R, \lambda \in \Lambda} -\E^{F_*}\Big[ (\eta \phi)^\star( -k(U,\theta) - \zeta - \lambda' g(U,\theta)) \Big] - \eta \delta - \zeta - \lambda_{12}' P \,, 
\end{align*}
where $(\eta\phi)^\star$ denotes the convex conjugate of $x \mapsto \eta \cdot \phi(x)$, and let $\ol \Xi_\delta(\theta;P)$ denote the analogous arginf for the minimization problem corresponding to the upper bound. Recall that $\ul \lambda_{12} = (\ul \lambda_1,\ul \lambda_2)$ collects the first $d_1 + d_2$ elements of $\ul \lambda$. Let
\[
 \ul \Lambda_\delta(\theta;P) = \{ ( \lambda_1, \lambda_2) : ( \eta, \zeta, \lambda_1,  \lambda_2,  \lambda_3,  \lambda_4) \in \ul \Xi_\delta(\theta;P)\} 
\]
denote the projection of $\ul \Xi_\delta(\theta;P)$ for $\ul \lambda_{12}$. We let $\ol \Lambda_\delta(\theta;P)$ denoting the analogous projection of $\ol \Xi_\delta(\theta;P)$. Finally, let 
\[
\begin{aligned}
 \ul \Theta_\delta(P_0) & = \mr{arg}\min_{\theta \in \Theta} \ul K_\delta(\theta;P_0)\,, & 
 \ol \Theta_\delta(P_0) & = \mr{arg}\max_{\theta \in \Theta} \ol K_\delta(\theta;P_0)\,. 
\end{aligned}
\]
The sets $\ul \Theta_\delta(P_0)$ and $\ol \Theta_\delta(P_0)$ are nonempty and compact under Assumptions~\ref{a:phi} and \ref{a:m}. 

The following regularity conditions are presented for the general case where $k$ depends on $u$. It may be possible to weaken some of these regularity conditions in the special case in which $k$ does not depend on $u$.

\begin{assumptionp}{\ref{a:m} (continued)}
(vi)\hskip\labelsep $\ul \Theta_\delta(P_0) \subseteq \Theta_\delta(P_0)$ and $\ol \Theta_\delta(P_0) \subseteq \Theta_\delta(P_0)$;
\begin{enumerate}[topsep=-18pt,itemsep=0pt,parsep=0pt,partopsep=0pt]
\item[(vii)] $\theta \mapsto \ul \Lambda_\delta(\theta;P_0)$ and $\theta \mapsto \ol \Lambda_\delta(\theta;P_0)$ are lower hemicontinuous at each $\theta \in \ul \Theta_\delta(P_0)$ and $\theta \in \ol \Theta_\delta(P_0)$, respectively.
\end{enumerate}
\end{assumptionp}

\begin{theorem}\label{t:asydist}
Suppose that Assumptions~\ref{a:phi} and \ref{a:m} hold. Then $\ul b_\delta$ and $\ol b_\delta$ are directionally differentiable at $P_0$, with
\begin{align*}
 d\ul b_{\delta,P_0}[h] & = \min_{\theta \in \ul \Theta_\delta(P_0)} \max_{\ul \lambda_{12} \in \ul \Lambda_\delta(\theta;P_0)} -\ul \lambda_{12}' h \,, & 
 d\ol b_{\delta,P_0}[h] & = \max_{\theta \in \ol \Theta_\delta(P_0)} \min_{\ol \lambda_{12} \in \ol \Lambda_\delta(\theta;P_0)}  \ol \lambda_{12}' h \,.
\end{align*}
Moreover, if $\sqrt n (\hat P - P_0) \to_d Z \sim N(0,\Sigma)$ with $\Sigma$ finite, then
\[
 \sqrt n \left( \left( \begin{array}{c}
 \hat{\ol \kappa}_\delta  \\
 \hat{\ul \kappa}_\delta  \end{array} \right)
 - \left( \begin{array}{c}
 \ol \kappa_\delta \\
  \ul \kappa_\delta 
 \end{array} \right) \right)
 \to_d 
 \left( \begin{array}{c}
 d\ul b_{\delta,P_0}[Z]  \\
 d\ol b_{\delta,P_0}[Z]
 \end{array} \right) .
\]
\end{theorem}

The asymptotic distribution presented in Theorem~\ref{t:asydist} is non-Gaussian. In the special case in which $\cup_{\theta \in \ul \Theta_\delta(P_0)} \ul \Lambda_\delta( \theta;P_0) = \{\ul \lambda_{12}\}$, the asymptotic distribution of $\hat{\ul \kappa}_\delta$ simplifies to $N(0,\ul \lambda_{12}'\Sigma \ul \lambda_{12})$. An analogous simplification holds for $\hat{\ol \kappa}_\delta$ when $\cup_{\theta \in \ol \Theta_\delta(P_0)} \ol \Lambda_\delta( \theta;P_0)$ is a singleton.

\subsection{Inference Procedure 1: Bootstrap}\label{s:bootstrap}

Our first inference procedure specializes the general approach of \cite{FangSantos} for inference on directionally differentiable functions to the present setting. Define
\begin{equation*}
\begin{aligned}
 \wh{d\ul b}_{\delta,P_0}[h] & = \inf_{\theta \in \hat{\ul \Theta}_{\delta,n}} \sup_{\ul \lambda_{12} \in \ul \Lambda_\delta(\theta;\hat P)} -\ul \lambda_{12}' h \,, & & & 
 \wh{d\ol b}_{\delta,P_0}[h] & = \sup_{\theta \in \hat{\ol \Theta}_{\delta,n}} \inf_{\ol \lambda_{12} \in \ol \Lambda_\delta(\theta;\hat P)}  \ol \lambda_{12}' h \,, & 
\end{aligned}
\end{equation*}
where 
\begin{equation*}\begin{aligned}
 \hat{\ul \Theta}_{\delta,n} & = \{\theta \in \Theta_\delta(\hat P) : \ul K_\delta(\theta; \hat P) \leq \hat{\ul \kappa}_\delta + \hat \nu \sqrt{\log n/n}\} \,, \mbox{ and}\\
 \hat{\ol \Theta}_{\delta,n} & = \{\theta \in \Theta_\delta(\hat P) : \ol K_\delta(\theta; \hat P) \geq \hat{\ol \kappa}_\delta - \hat \nu \sqrt{\log n/n}\} \,,
\end{aligned}
\end{equation*}
with $\hat \nu$ a (possibly random) positive scalar tuning parameter for which $\hat \nu \to_p \nu > 0$. Any such $\hat \nu$ results in a confidence set with asymptotically correct coverage. 
We give some practical guidance for choosing $\hat \nu$ below.

Let $\hat P^*$ denote a bootstrapped version of $\hat P$. In practice any bootstrap can be used provided it satisfies mild consistency conditions. In the empirical application in Section~\ref{s:csw} we simply draw $\hat P^* \sim N(\hat P,\hat \Sigma/n)$ where $\hat \Sigma$ is a consistent estimator of $\Sigma$. Let 
\begin{align*}
 \hat {\ul c}_{\alpha} & = \mbox{ $\alpha$-quantile of } \wh{d\ul b}_{\delta,P_0}[\sqrt n(\hat P^* - \hat P)] \,, &
 \hat {\ol c}_{\alpha} & = \mbox{ $\alpha$-quantile of } \wh{d\ol b}_{\delta,P_0}[\sqrt n(\hat P^* - \hat P)] \,,
\end{align*}
where the quantiles are computed by resampling $\hat P^*$ (conditional on the data). 
Lower, upper, and two-sided $100(1-\alpha)$\% CSs for $\ul \kappa_\delta$ and $\ol \kappa_\delta$ are 
\begin{align*}
 & CS_{\delta,L}^{1-\alpha} = \left[ \hat{\ul \kappa}_\delta - { \frac{\hat {\ul c}_{1-\alpha}}{\sqrt n} }, +\infty\right) ,  \\
 & CS_{\delta,U}^{1-\alpha} = \left(-\infty, \hat{\ol \kappa}_\delta - { \frac{\hat {\ol c}_{\alpha}}{\sqrt n}  } \right] , & 
 & CS_\delta^{1-\alpha} = \left[ \hat{\ul \kappa}_\delta - { \frac{\hat {\ul c}_{1-\alpha/2}}{\sqrt n}}, \hat{\ol \kappa}_\delta - { \frac{\hat {\ol c}_{\alpha/2}}{\sqrt n}  } \right] .
\end{align*}
We require a slight strengthening of Assumption~\ref{a:m}(vii) to establish validity of the procedure. As before, regularity conditions are presented for the general case where $k$ depends on $u$. It may be possible to weaken these conditions when $k$ does not depend on $u$.

\begin{assumptionp}{\ref{a:m} (continued)}
(vii')\hskip\labelsep $(\theta,P) \mapsto \ul \Lambda_\delta(\theta;P)$ and $(\theta,P) \mapsto \ol \Lambda_\delta(\theta;P)$ are lower hemicontinuous at $(\theta,P_0)$ for each $\theta \in \ul \Theta_\delta(P_0)$ and $\theta \in \ol \Theta_\delta(P_0)$, respectively.
\end{assumptionp}

\begin{theorem}\label{t:ci}
Suppose that Assumptions~\ref{a:phi} and \ref{a:m}(i)--(vi),(vii') hold, $\sqrt n (\hat P - P_0) \to_d Z \sim N(0,\Sigma)$ with $\Sigma$ finite, and $\hat P^*$ satisfies Assumption 3 of \cite{FangSantos}. Then the distribution of $\wh{d\ul b}_{\delta,P_0}[\sqrt n(\hat P^* - \hat P)]$ and $\wh{d\ol b}_{\delta,P_0}[\sqrt n(\hat P^* - \hat P)]$ (conditional on the data) is consistent for the asymptotic distribution derived in Theorem~\ref{t:asydist}. Moreover, if the CDFs of $d\ul b_{\delta,P_0}[Z]$ and $d\ol b_{\delta,P_0}[Z]$ are continuous and increasing at their $\alpha/2$, $\alpha$, $1-\alpha$, and $1-\alpha/2$ quantiles, then
\begin{align*}
 \lim_{n \to \infty} \Pr( \ul \kappa_\delta \in CS_{\delta,L}^{1-\alpha}) & = 1-\alpha \,, \\ 
 \lim_{n \to \infty} \Pr( \ol \kappa_\delta \in CS_{\delta,U}^{1-\alpha}) & = 1-\alpha \,, &
 \liminf_{n \to \infty} \Pr( [\ul \kappa_\delta,\ol \kappa_\delta] \subseteq CS_\delta^{1-\alpha}) & \geq 1-\alpha \,.
\end{align*}
\end{theorem}

Any $\hat \nu$ that satisfies $\hat\nu \to_p \nu > 0$ results in asymptotically valid CSs. In view of the functional forms of $\wh{d\ul b}_{\delta,P_0}[\,\cdot\,]$ and $\wh{d\ul b}_{\delta,P_0}[\,\cdot\,]$, smaller $\hat \nu$ produce (weakly) wider CSs. In the CSW application, we set $\hat \nu$ equal to the minimum diagonal element of the covariance matrix of the moments evaluated at $(\hat \theta, \hat \gamma,\hat P)$ under $F_*$, where $\hat \theta$ is computed under $F_*$. We chose this quantity as it is related to the convexity of the inner problem for small $\delta$. In practice, this resulted in $\hat \nu$ between 0.001 and 0.01. We recommend setting $\hat  \nu$ to be of a similarly small magnitude, then performing a sensitivity analysis to check that critical values aren't too dependent on $\hat \nu$. 
Setting $\hat \nu = 0$ and replacing $\hat{\ul \Theta}_{\delta,n}$ and $\hat{\ol \Theta}_{\delta,n}$ by $\{\hat{\ul \theta}_\delta\}$ and $\{\hat{\ol \theta}_\delta\}$ where $\hat{\ul \theta}_\delta$ and $\hat{\ol \theta}_\delta$ minimize and maximize the sample criterions is also valid, but may be conservative.

\subsection{Inference Procedure 2: Projection}\label{s:projection}

This second approach is computationally simple but possibly conservative.\footnote{We are grateful to a referee for suggesting this approach.} Suppose we have random vectors $\hat P_{1,U}^{1-\alpha}$, $\hat P_{2,U}^{1-\alpha}$, and $\hat P_{2,L}^{1-\alpha}$ that form a $100(1-\alpha)$\% rectangular CS for $P_0$:
\begin{equation} \label{e:cvg:P}
 \liminf_{n \to \infty} \Pr\left(  P_{10} \leq \hat P_{1,U}^{1-\alpha}\,,\, \hat P_{2,L}^{1-\alpha} \leq P_{20} \leq \hat P_{2,U}^{1-\alpha} \right) \geq 1-\alpha \,,
\end{equation}
where the inequalities should be understood to hold element-wise (we discuss how to construct a rectangular CS for $P_0$ below). 

The idea behind this approach is to replace any moment conditions involving $P$ by inequalities constructed from the rectangular CS. Define the criterion functions
\[
 \hat{\ul K}_{\delta,1-\alpha}(\theta) = \left[ \begin{array}{l}
 \ul K_{\delta,cs} (\theta;  \hat P_{1-\alpha}) \\[2pt]
  + \infty 
 \end{array} \right. ,
 \quad 
 \hat{\ol K}_{\delta,1-\alpha}(\theta) = \left[ \begin{array}{ll}
 \ol K_{\delta,cs} (\theta; \hat P_{1-\alpha}) & \mbox{if $\Delta_{cs}(\theta;\hat P_{1-\alpha}) < \delta$,} \\[2pt]
 - \infty & \mbox{if $\Delta_{cs}(\theta;\hat P_{1-\alpha}) \geq \delta$,}
 \end{array} \right.
\]
where $\ul K_{\delta,cs}$, $\ol K_{\delta,cs}$, and $\Delta_{cs}$ are versions of (\ref{e:dual:1}), (\ref{e:dual:2}), and (\ref{e:qual:dual}) formed using 
\begin{equation} \label{e:mod-relaxed}
\begin{aligned} 
 \mb E^{F}[ g_1(U,\theta)] & \leq \hat P_{1,U}^{1-\alpha} \,, &  
 \mb E^{F}[ g_2(U,\theta)] & \leq \hat P_{2,U}^{1-\alpha} \,, &  
 \mb E^{F}[ -g_2(U,\theta)] & \leq -\hat P_{2,L}^{1-\alpha} \,,
\end{aligned}
\end{equation}
as well as (\ref{e:mod:3}) and (\ref{e:mod:4}). 
In these criterions, $\Lambda$ is replaced by $\Lambda_{cs} = \mb R_+^{d_1 + 2 d_2 + d_3} \times \mb R^{d_4}$, $g$ is replaced by $g_{cs} = (g_1, g_2, -g_2, g_3, g_4)$, $P$ is replaced by $\hat P_{1-\alpha} = (\hat P_{1,U}^{1-\alpha},\hat P_{2,U}^{1-\alpha},-\hat P_{2,L}^{1-\alpha})$, and $\lambda_{12}$ denotes the first $d_1 + 2d_2$ elements of $\lambda$. 

Critical values are computed by optimizing the criterions $\hat{\ul K}_{\delta,1-\alpha}$ and $\hat{\ol K}_{\delta,1-\alpha}$ with respect to $\theta$:
\begin{equation*} 
\begin{aligned} 
 \hat{\ul \kappa}_{\delta,1-\alpha} & = \inf_{\theta \in \Theta} \hat{\ul K}_{\delta,1-\alpha}(\theta) \,, &
 \hat{\ol \kappa}_{\delta,1-\alpha} & = \sup_{\theta \in \Theta} \hat{\ol K}_{\delta,1-\alpha}(\theta)\,.
\end{aligned}
\end{equation*}
Lower, upper, and two-sided $100(1-\alpha)$\% CSs for $\ul \kappa_\delta$ and $\ol \kappa_\delta$ are then given by
\begin{align*}
 & CS_{\delta,L}^{1-\alpha} = \left[ \hat{\ul \kappa}_{\delta,1-\alpha}, +\infty \right) \,, & 
 & CS_{\delta,U}^{1-\alpha} = \left(-\infty, \hat{\ol \kappa}_{\delta,1-\alpha} \right] \,, &
 & CS_\delta^{1-\alpha} = \left[ \hat{\ul \kappa}_{\delta,1-\alpha}, \hat{\ol \kappa}_{\delta,1-\alpha} \right] \,.
\end{align*}

\begin{theorem}\label{t:ci:2}
Suppose that Assumptions~\ref{a:phi} and \ref{a:m}(i),(iii)--(v) hold and $\hat P_{1-\alpha}$ satisfies (\ref{e:cvg:P}). Then
\begin{align*}
 \liminf_{n \to \infty} \Pr( \ul \kappa_\delta \in CS_{\delta,L}^{1-\alpha}) & \geq 1-\alpha \,, \\ 
 \liminf_{n \to \infty} \Pr( \ol \kappa_\delta \in CS_{\delta,U}^{1-\alpha}) & \geq 1-\alpha \,, &
 \liminf_{n \to \infty} \Pr( [\ul \kappa_\delta,\ol \kappa_\delta] \subseteq CS_\delta^{1-\alpha}) & \geq 1-\alpha\,.
\end{align*}
\end{theorem}

To construct a rectangular CS for $P_0$ satisfying (\ref{e:cvg:P}), suppose $\sqrt n(\hat P - P_0) \to_d N(0,\Sigma)$ and we have a consistent estimator $\hat \Sigma$ of $\Sigma$. Let $\hat \sigma$ denote the vector formed by taking the square root of each diagonal entry of $\hat \Sigma$. Partition $\hat \sigma$ conformably as $\hat \sigma = (\hat \sigma_{(1)},\hat \sigma_{(2)})$ and set 
\begin{align*}
 \hat P_{1,L}^{1-\alpha} & = \hat P_1 + n^{-1/2} \hat c_{1-\alpha,1} \hat \sigma_{(1)} \,, & 
 \hat P_{2,L}^{1-\alpha} & = \hat P_2 - n^{-1/2} \hat c_{1-\alpha,2} \hat \sigma_{(2)} \,, & 
 \hat P_{2,U}^{1-\alpha} & = \hat P_2 + n^{-1/2} \hat c_{1-\alpha,2} \hat \sigma_{(2)} \,,
\end{align*}
where the (scalar) critical values $\hat c_{1-\alpha,1}$ and $\hat c_{1-\alpha,2}$ solve 
\[
 \Pr\left(\max_{1 \leq i \leq d_1} Z_i/\hat \sigma_i \leq \hat c_{1-\alpha,1},\max_{d_1+1 \leq i \leq d_2} |Z_i/\hat \sigma_i| \leq \hat c_{1-\alpha,2}\right) = 1-\alpha  \,, \quad Z \sim N(0, \hat \Sigma)\,.
\]
If $d_2 = 0$, then $\hat c_{1-\alpha,1}$ is the $(1-\alpha)$-quantile of $\max_{1 \leq i \leq d_1} Z_i/\hat \sigma_i$; similarly, if $d_1 = 0$, then $\hat c_{2,1-\alpha}$ is the $(1-\alpha)$-quantile of $\max_{1 \leq i \leq d_2} |Z_i/\hat \sigma_i|$.

\section{Conclusion} \label{s:conc}

This paper introduced a framework for analyzing the sensitivity of counterfactuals to parametric assumptions about the distribution of latent variables in structural models. In particular, we derived bounds on the set of counterfactuals obtained as the distribution of latent variables spans nonparametric neighborhoods of a given parametric specification while other ``structural'' model features are maintained. We illustrated our procedure with empirical applications to matching models and dynamic discrete choice.

\let\oldbibliography\thebibliography
\renewcommand{\thebibliography}[1]{\oldbibliography{#1}
\setlength{\itemsep}{0pt}}

\putbib

\end{bibunit}

\begin{bibunit}

\appendix

\newpage
\clearpage
\pagenumbering{arabic}\renewcommand{\thepage}{\arabic{page}}

\begin{center}
{\Large Online Appendix to ``Counterfactual Sensitivity and Robustness''}

\vskip 24pt

{\large Timothy Christensen \quad \quad Benjamin Connault}

\end{center}

\vskip 8pt

This supplement presents extensions of our methodology in Appendix~\ref{ax:extension}, additional results on nonparametric bounds on counterfactuals in Appendix~\ref{ax:sharp}, connections with local approaches to sensitivity analysis in Appendix~\ref{ax:local}, additional details on the empirical applications in Appendix~\ref{ax:empirical}, and proofs of results from the main text in Appendix~\ref{ax:proofs}.

\section{Extensions} \label{ax:extension}

This appendix presents three extensions of our methodology. Proofs of all results in this appendix are deferred to Appendix~\ref{ax:extension_proofs}.

\subsection{Group Invariance} \label{ax:exchangeable}

In certain settings it can be attractive to impose shape restrictions on $F$ such as symmetry, exchangeability, or, more generally, invariance to a finite group of transforms. For instance, imposing exchangeability of $F$ in discrete choice modeling ensures that alternatives' choice probabilities depend on their deterministic components of utility but not their labeling. These shape restrictions can be easily imposed whenever $F_*$ is invariant.

Formally, let $J$ denote the number of elements of $U$ and let $\Pi$ be a finite commutative group of transforms on $\mb R^J$---see, e.g., Section 1.4 of \cite{LehmannCasella}. We say that a distribution $F$ of $U$ is \emph{$\Pi$-invariant} if $\varpi U \sim F$ for all $\varpi \in \Pi$. 

\medskip

\paragraph{Example A.1 (Symmetry)} Central symmetry corresponds to $\Pi = \{I, -I\}$ for $I$ the identity matrix. Sign symmetry corresponds to taking $\Pi$ to be the collection of all $2^J$ diagonal matrices with $\pm 1$ in each diagonal entry. \hfill $\square$

\medskip

\paragraph{Example A.2 (Exchangeability)} Let $\Pi_J$ denote the group of all $J!$ permutation matrices of dimension $J$. Full exchangeability (permutation invariance) corresponds to $\Pi = \Pi_J$. Cyclic exchangeability (rotation invariance) corresponds to $\Pi = \Pi^{c}_J$ where $\Pi^c_J$ is the collection of all $J$ cyclic permutation matrices of dimension $J$ ($\Pi^c_J = \Pi_J$ when $J = 2$ and is a strict subset otherwise). When $J \geq 3$, dihedral exchangeability (rotation and reflection invariance) corresponds to taking $\Pi$ to be the set of all $2J$ permutation matrices representing rotations and reflections of $\{1,\ldots,J\}$. These types of exchangeability ensure the elements of $U$ are identically distributed, but they have different implications for the joint distribution of the elements of $U$. For instance, the distribution of $(U_i,U_j)$ for $i \neq j$ depends on $i-j$ and $|i-j|$ under cyclic and dihedral exchangeability, but is independent of $(i,j)$ under full exchangeability. \hfill $\square$

\medskip

Let $\mc N_\delta^\Pi = \{F \in \mc N_\delta : F$ is $\Pi$-invariant$\}$. We are interested in 
\begin{equation}
 \ul \kappa_{\delta}^\Pi : = \inf_{\theta \in \Theta, F \in \mc N_\delta^\Pi} \mb E^{F}[k(U,\theta,\gamma_0)] \quad \text{ subject to (\ref{e:mod})} , \label{e:kappa_lower-ex}
\end{equation}
and $\ol \kappa_{\delta}^\Pi$  defined as the analogous supremum. One may write $\ul \kappa_{\delta}^\Pi$ and $\ol \kappa_{\delta}^\Pi$ as the value of two optimization problems in which criterion functions $\ul K_\delta^\Pi(\theta;\gamma_0,P_0)$ and $\ol K_\delta^\Pi(\theta;\gamma_0,P_0)$ are optimized with respect to $\theta$. For a generic $(\theta,\gamma,P)$, define
\begin{equation}
 \ul K_\delta^\Pi(\theta;\gamma,P)  = \inf_{F \in \mc N_\delta^\Pi} \mb E^F[k(U,\theta,\gamma)] \quad \mbox{subject to (\ref{e:mod}) holding at $(\theta,\gamma,P)$} \,, \label{e:crit_l_ex}
\end{equation}
and define $\ol K_\delta^\Pi(\theta;\gamma,P)$ as the analogous supremum. These criterions have dual representations as finite-dimensional convex programs when $F_*$ is $\Pi$-invariant. Define
\[
\begin{aligned}
 k^\Pi(U,\theta,\gamma) & = \frac{1}{|\Pi|} \sum_{\varpi \in \Pi} k(\varpi U,\theta,\gamma) \,, & 
 g^\Pi_j(U,\theta,\gamma) & = \frac{1}{|\Pi|} \sum_{\varpi \in \Pi} g_j(\varpi U,\theta,\gamma) \,, \;\; j = 1,2,3,4, 
\end{aligned}
\]
where $|\Pi|$ denotes the cardinality of $\Pi$, and let $g^\Pi = (g_1^\Pi,g_2^\Pi,g_3^\Pi,g_4^\Pi)$.

\begin{proposition} \label{prop:criterion-exch} 
Suppose that Assumption~\ref{a:phi} holds and $F_*$ is $\Pi$-invariant. Then 
\begin{align}
 \ul K_\delta^\Pi(\theta;\gamma,P) & = \sup_{\eta > 0, \zeta \in \mb R, \lambda \in \Lambda} -\eta \E^{F_*}\left[ {\textstyle \phi^\star \left(\frac{k^\Pi(U,\theta,\gamma) + \zeta + \lambda' g^\Pi(U,\theta,\gamma) }{-\eta} \right) } \right] - \eta \delta - \zeta - \lambda_{12}'P \label{e:dual:1_exch} \,,\\
 \ol K_\delta^\Pi(\theta;\gamma,P) & = \inf_{\eta > 0, \zeta \in \mb R, \lambda \in \Lambda} \phantom{-} \eta \E^{F_*}\left[ {\textstyle \phi^\star \left( \frac{k^\Pi(U,\theta,\gamma) - \zeta- \lambda' g^\Pi(U,\theta,\gamma)}{\eta} \right) } \right] + \eta \delta + \zeta + \lambda_{12}'P \,. \label{e:dual:2_exch} 
\end{align}
Moreover, the value of problem (\ref{e:dual:1_exch}) is $+\infty$ (equivalently, the value of problem (\ref{e:dual:2_exch}) is $-\infty$) if and only if there is no distribution in $\mc N_\delta^\Pi$ under which (\ref{e:mod}) holds at $(\theta,\gamma,P)$.
\end{proposition}

\begin{remark}\label{rmk:exchangeable_moments}  
If $F$ is $\Pi$-invariant and satisfies (\ref{e:mod}), then it must also satisfy (\ref{e:mod}) under all $|\Pi|$ transformations of the elements of $U$. Therefore, in effect there are a total of $|\Pi| \times d$ moment conditions imposed in the inner optimization, namely
\begin{equation} \label{eq:mom-exchangeable}
 \begin{aligned}
 \mb E^F[ g_1(\varpi U,\theta,\gamma_0)] & \leq P_{10} , & 
 \mb E^F[ g_2(\varpi U,\theta,\gamma_0)] & = P_{20} ,   \\
 \mb E^F[ g_3(\varpi U,\theta,\gamma_0)] & \leq 0 , & 
 \mb E^F[ g_4(\varpi U,\theta,\gamma_0)] & = 0 , &  
\end{aligned} \quad \mbox{for all } \varpi \in \Pi\,.
\end{equation}
In principle one could form a criterion by including all $|\Pi| \times d$ moments. By $\Pi$-invariance of $F_*$ and convexity of the objective, the multipliers on the moments $g(\varpi U, \theta,\gamma)$ will be identical across all $\varpi \in \Pi$. It therefore suffices to form the criterion using only the $d$ averaged moments $g^\Pi$ rather than the full set of $|\Pi| \times d$ moments, thereby reducing the dimension of the inner optimization by a factor of $|\Pi|$.
\end{remark}

\begin{remark}\label{rmk:concatenate}
When Monte Carlo integration is used to compute expectations, taking a sample from $F_*$ and then concatenating the sample across each of its $|\Pi|$ transformations ensures the empirical distribution of the random draws is $\Pi$-invariant.
\end{remark}

\subsection{Conditional Moment Models}

Consider the conditional moment model
\begin{equation}
\label{e:mod-conditional}
\begin{aligned}
 \mb E^F[ g_1(U,X,\theta,\gamma_0)|X = x] & \leq P_{10,x} , & 
 \mb E^F[ g_2(U,X,\theta,\gamma_0)|X = x] & = P_{20,x} ,   \\
 \mb E^F[ g_3(U,X,\theta,\gamma_0)|X = x] & \leq 0 , & 
 \mb E^F[ g_4(U,X,\theta,\gamma_0)|X = x] & = 0 , &  
\end{aligned}
\mbox{for all $x \in \mc X$}
\end{equation}
where $\mathcal X$ is a finite set, and a counterfactual\footnote{Note $\kappa$ can be the expected value at a particular $x_0$ if $k(U,x,\theta,\gamma_0) = 0$ for $x \neq x_0$. More generally, $\kappa$ can be a weighted average by incorporating the weighting into the definition of $k(u,x,\theta,\gamma_0)$.}
\begin{equation} \label{e:kappa-conditional}
 \kappa = \sum_{x \in \mc X} \mb E^F[k(U,X,\theta,\gamma_0)|X = x] \,.
\end{equation}
Suppose the researcher assumes $U|X = x \sim F_*$ for each $x$. We wish to relax this assumption and allow each conditional distribution of $U$ given $X = x$, say $F_x$, to vary in a neighborhood $\mc N_{\delta_x}$ of $F_*$. In doing so, we are allowing the conditional distributions $F_x$ to vary with $x$, and therefore relaxing independence of $U$ and $X$.\footnote{The case with $U$ independent of $X$ is subsumed in (\ref{e:mod}) by stacking the moment functions and reduced-form parameters by values of the conditioning variable, as in Examples~\ref{ex:choice-welfare}--\ref{ex:DDC}.} 

We assume each $\mc N_\delta$ is defined by the same $\phi$ to simplify the exposition, but we allow the neighborhood size to vary with $x$.  Let $\boldsymbol \delta = (\delta_x)_{x \in \mc X}$. We are interested in
\begin{equation}
 \ul \kappa_{\boldsymbol \delta}   : = \inf_{\theta \in \Theta,(F_x \in \mc N_{\delta_x})_{x \in \mc X}} \sum_{x \in \mc X} \mb E^{F_x}[k(U,x,\theta,\gamma_0)] \quad \text{ subject to (\ref{e:mod-conditional})} , \label{e:kappa_lower-conditional}
\end{equation}
and $\ol \kappa_{\boldsymbol \delta}$ defined as the analogous supremum. One may write $\ul \kappa_{\boldsymbol \delta}$ and $\ol \kappa_{\boldsymbol \delta}$ as the value of two optimization problems where $\ul K_{\boldsymbol \delta}(\theta;\gamma_0,P_0)$ and $\ol K_{\boldsymbol \delta}(\theta;\gamma_0,P_0)$ are optimized with respect to $\theta$. Let $P = (P_x)_{x \in \mc X}$ where $P_{x} = (P_{1,x},P_{2,x})$ is partitioned conformably with $g_1$ and $g_2$. For a generic $(\theta,\gamma,P)$, define
\[
 \ul K_{\boldsymbol \delta}(\theta;\gamma,P) = \inf_{(F_x \in \mc N_{\delta_x})_{x \in \mc X}} \sum_{x \in \mc X} \mb E^{F_x}[k(U,x,\theta,\gamma_0)]  \quad \mbox{subject to (\ref{e:mod-conditional}) holding at $(\theta,\gamma,P)$} ,
\]
and define $\ol K_{\boldsymbol \delta}(\theta;\gamma,P)$ as the analogous supremum. These criterion functions have dual forms analogous to Proposition~\ref{prop:criterion}. Let $g(\cdot,x,\theta,\gamma) = (g_1(\cdot,x,\theta,\gamma),\ldots,g_4(\cdot,x,\theta,\gamma))$. Recall $d = \sum_{i=1}^4 d_i$ where $d_i$ is the dimension of $g_i$, and  $\Lambda = \mb R^{d_1}_+ \times \mb R^{d_2} \times \mb R^{d_3}_+ \times \mb R^{d_4}$. Let $\lambda_{12,x}$ denote the first $d_1 + d_2$ elements of $\lambda_x \in \Lambda$.

\begin{assumptionp}{\textPhi-conditional}\label{a:phi-conditional}
(i)\hskip\labelsep  $\phi \in \Phi_0$.
\begin{enumerate}[topsep=-18pt,itemsep=0pt,parsep=0pt,partopsep=0pt]
\item[(ii)] $k(\,\cdot\,, x, \theta,\gamma)$ and each entry of $g(\,\cdot\,, x, \theta,\gamma)$ belong to $\mc E$ for each $(\theta,\gamma,x) \in \Theta \times \Gamma\times \mc X$.
\end{enumerate}
\end{assumptionp}

\begin{proposition} \label{prop:criterion-conditional} 
Suppose that Assumption~\ref{a:phi-conditional} holds. Then 
\begin{align}
 & \ul K_{\boldsymbol \delta}(\theta;\gamma,P) \label{e:dual:1-conditional} \\
 & = \sup_{(\eta_x > 0, \zeta_x \in \mb R, \lambda_x \in \Lambda)_{x \in \mc X}} \sum_{x \in \mc X} \left( -\eta_x \E^{F_*}\left[ {\textstyle \phi^\star \left(\frac{k(U,x,\theta,\gamma) + \zeta_x + \lambda_x' g(U,x,\theta,\gamma) }{-\eta_x} \right) } \right] - \eta_x \delta_x - \zeta_x - \lambda_{12,x}'P_x \right) \notag ,\\
 & \ol K_{\boldsymbol \delta}(\theta;\gamma,P) \label{e:dual:2-conditional} \\
 & = \inf_{(\eta_x > 0, \zeta_x \in \mb R, \lambda_x \in \Lambda)_{x \in \mc X}} \sum_{x \in \mc X} \left( \eta_x \E^{F_*}\left[ {\textstyle \phi^\star \left( \frac{k(U,x,\theta,\gamma) - \zeta_x- \lambda_x' g(U,x,\theta,\gamma)}{\eta_x} \right) } \right] + \eta_x \delta_x + \zeta_x + \lambda_{12,x}'P_x \right) . \notag 
\end{align}
Moreover, the value of (\ref{e:dual:1-conditional}) is $+\infty$ (equivalently, the value of (\ref{e:dual:2-conditional}) is $-\infty$) if and only if for some $x \in \mc X$ there is no distribution in $\mc N_{\delta_x}$ under which (\ref{e:mod-conditional}) holds at $(\theta,\gamma,P)$.
\end{proposition}

As before, estimators $\hat{\ul \kappa}_{\boldsymbol \delta}$ and $\hat{\ol \kappa}_{\boldsymbol \delta}$ of $\ul \kappa_{\boldsymbol \delta}$ and $\ol \kappa_{\boldsymbol \delta}$ are computed by optimizing sample criterions with respect to $\theta$. Let $\hat P = (\hat P_x)_{x \in \mc X}$. The sample criterions are
\[
 \hat{\ul K}_{\boldsymbol \delta}(\theta) = \left[ \begin{array}{l}
 \ul K_{\boldsymbol \delta} (\theta; \hat \gamma, \hat P) \\[2pt]
 + \infty 
 \end{array} , \right. 
 \quad 
 \hat{\ol K}_{\boldsymbol \delta}(\theta) = \left[ \begin{array}{ll}
 \ol K_{\boldsymbol \delta} (\theta;\hat \gamma,\hat P) & \mbox{if $\Delta_x(\theta;\hat \gamma,\hat P_x) < \delta_x$ for each $x \in \mc X$,} \\[2pt]
 - \infty & \mbox{if $\Delta_x(\theta;\hat \gamma,\hat P_x) \geq \delta_x$ for some $x \in \mc X$,}
 \end{array} \right.
\]
where $\ol K_{\boldsymbol \delta} (\theta; \hat \gamma, \hat P)$ and $\ol K_{\boldsymbol \delta} (\theta; \hat \gamma, \hat P)$ denote the programs in Proposition~\ref{prop:criterion-conditional} evaluated at $(\hat \gamma, \hat P)$, and 
\[
 \Delta_x(\theta;\hat \gamma, \hat P_x) = \sup_{\zeta_x \in \mb R,\lambda_x \in \Lambda} - \E^{F_*}\Big[ \phi^\star(-\zeta_x - \lambda_x' g(U,x,\theta,\hat \gamma))  \Big] - \zeta_x - \lambda_{12,x}'\hat P_x .
\]

\subsection{Non-separable Models}

Consider the model
\begin{equation}
\label{e:mod-nonsep-0}
\begin{aligned}
 \mb E^H[\tilde g_1(U,X,\theta,\tilde \gamma_0)] & \leq P_{10} , & 
 \mb E^H[\tilde  g_2(U,X,\theta,\tilde \gamma_0)] & = P_{20} ,   \\
 \mb E^H[\tilde  g_3(U,X,\theta,\tilde \gamma_0)] & \leq 0 , & 
 \mb E^H[\tilde  g_4(U,X,\theta,\tilde \gamma_0)] & = 0 , &  
\end{aligned}
\end{equation}
and counterfactual 
\begin{equation} \label{e:kappa-nonsep-0}
 \kappa = \mb E^H[\tilde k(U,X,\theta,\tilde \gamma_0)] \,,
\end{equation}
where the expectation is with respect to the distribution $H$ of $(U,X)$ and $X$ takes values in a finite set $\mc X$. 
Suppose the researcher assumes $U|X = x \sim F_*$ for each $x$. We wish to relax this assumption and allow the conditional distribution of $U$ given $X = x$, say $F_x$, to vary in a neighborhood $\mc N_{\delta_x}$ of $F_*$.  

Write $H(u,x) = q_{0,x} \cdot F_x(u)$ where $q_{0,x} = \Pr(X = x)$. The vector $q_0 = (q_{0,x})_{x \in \mc X}$ can be consistently estimated from data on $X$. Let $\gamma_0 = (\tilde \gamma_0, q_0)$. Define $g_1(U,x,\theta,\gamma_0) = q_{0,x} \cdot \tilde g_1(U,x,\theta,\tilde \gamma_0)$ and similarly for $g_2$, $g_3$, $g_4$, and $k$. The model (\ref{e:mod-nonsep-0}) and counterfactual (\ref{e:kappa-nonsep-0}) can then be written as
\begin{equation}
\label{e:mod-nonsep}
\begin{aligned}
 \sum_x \mb E^{F_x}[g_1(U,x,\theta,\gamma_0)] & \leq P_{10} , & 
 \sum_x \mb E^{F_x}[g_2(U,x,\theta,\gamma_0)] & = P_{20} ,   \\
 \sum_x \mb E^{F_x}[g_3(U,x,\theta,\gamma_0)] & \leq 0 , & 
 \sum_x \mb E^{F_x}[g_4(U,x,\theta,\gamma_0)] & = 0 ,  &  
\end{aligned}
\end{equation}
and $\kappa = \sum_x \mb E^{F_x}[k(U,x,\theta,\gamma_0)]$. 
We again assume each $\mc N_\delta$ is defined by the same $\phi$ function, but allow the neighborhood size to vary with $x$. Let $\boldsymbol \delta = (\delta_x)_{x \in \mc X}$. We are interested in 
\[
 \ul \kappa_{\boldsymbol \delta}   : = \inf_{\theta \in \Theta,(F_x \in \mc N_{\delta_x})_{x \in \mc X}} \sum_{x} \mb E^{F_x}[k(U,x,\theta,\gamma_0)] \quad \text{ subject to (\ref{e:mod-nonsep})} , 
\]
and $\ol \kappa_{\boldsymbol \delta}$  defined as the analogous supremum. One may write $\ul \kappa_{\boldsymbol \delta}$ and $\ol \kappa_{\boldsymbol \delta}$ as the value of two optimization problems where criterion functions $\ul K_{\boldsymbol \delta}(\theta;\gamma_0,P_0)$ and $\ol K_{\boldsymbol \delta}(\theta;\gamma_0,P_0)$ are optimized with respect to $\theta$. For a generic $(\theta,\gamma,P)$, define
\[
 \ul K_{\boldsymbol \delta}(\theta;\gamma,P) = \inf_{(F_x \in \mc N_{\delta_x})_{x \in \mc X}} \sum_{x} \mb E^{F_x}[k(U,x,\theta,\gamma_0)]  \quad \mbox{s.t. (\ref{e:mod-nonsep}) holding at $(\theta,\gamma,P)$} \,, 
\]
and define $\ol K_{\boldsymbol \delta}(\theta;\gamma,P)$ as the analogous supremum. These criterion functions have dual forms analogous to Proposition~\ref{prop:criterion}. Let $g(\cdot,x,\theta,\gamma) = (g_1(\cdot,x,\theta,\gamma),\ldots,g_4(\cdot,x,\theta,\gamma))$. The remaining notation the same as Proposition~\ref{prop:criterion}.

\begin{proposition} \label{prop:criterion-nonsep} 
Suppose that Assumption~\ref{a:phi-conditional} holds. Then 
\begin{align}
 & \ul K_{\boldsymbol \delta}(\theta;\gamma,P) \label{e:dual:1-nonsep} \\
 & = \sup_{(\eta_x > 0, \zeta_x \in \mb R)_{x \in \mc X},\lambda \in \Lambda} \sum_x \left( -\eta_x \E^{F_*}\left[ {\textstyle \phi^\star \left(\frac{k(U,x,\theta,\gamma) + \zeta_x + \lambda' g(U,x,\theta,\gamma) }{-\eta_x} \right) } \right] - \eta_x \delta_x - \zeta_x - \lambda_{12}'P \right) \notag \,,\\
 & \ol K_{\boldsymbol \delta}(\theta;\gamma,P) \label{e:dual:2-nonsep}  \\
 & = \inf_{(\eta_x > 0, \zeta_x \in \mb R)_{x \in \mc X}, \lambda \in \Lambda} \sum_x \left( \eta_x \E^{F_*}\left[ {\textstyle \phi^\star \left( \frac{k(U,x,\theta,\gamma) - \zeta_x- \lambda' g(U,x,\theta,\gamma)}{\eta_x} \right) } \right] + \eta_x \delta_x + \zeta_x + \lambda_{12}'P \right) \,. \notag
\end{align}
Moreover, the value of (\ref{e:dual:1-nonsep}) is $+\infty$ (equivalently, the value of (\ref{e:dual:2-nonsep}) is $-\infty$) if and only if there is no  $H(u,x) = q_{0,x} \cdot F_x(u)$ with $F_x \in \mc N_{\delta_x}$ under which (\ref{e:mod-nonsep-0}) holds at $(\theta,\gamma,P)$.
\end{proposition}

As before, estimators $\hat{\ul \kappa}_{\boldsymbol \delta}$ and $\hat{\ol \kappa}_{\boldsymbol \delta}$ of $\ul \kappa_{\boldsymbol \delta}$ and $\ol \kappa_{\boldsymbol \delta}$ are computed by optimizing sample criterion functions with respect to $\theta$. The sample criterion functions are
\[
 \hat{\ul K}_{\boldsymbol \delta}(\theta) = \left[ \begin{array}{l}
 \ul K_{\boldsymbol \delta} (\theta; \hat \gamma, \hat P) \\[2pt]
 + \infty 
 \end{array} \right. ,
 \quad 
 \hat{\ol K}_{\boldsymbol \delta}(\theta) = \left[ \begin{array}{ll}
 \ol K_{\boldsymbol \delta} (\theta;\hat \gamma,\hat P) & \mbox{if $\Delta_{nonsep}(\theta;\hat \gamma,\hat P) < 0$} \\[2pt]
 - \infty & \mbox{if $\Delta_{nonsep}(\theta;\hat \gamma,\hat P) \geq 0$,}
 \end{array} \right.
\]
where $\ol K_{\boldsymbol \delta} (\theta; \hat \gamma, \hat P)$ and $\ol K_{\boldsymbol \delta} (\theta; \hat \gamma, \hat P)$ denote the programs in Proposition~\ref{prop:criterion-nonsep} evaluated at $(\hat \gamma, \hat P)$ with $\hat \gamma = (\hat{\tilde \gamma}, \hat q)$ for estimators $\hat{\tilde \gamma}$ of $\tilde \gamma$ and $\hat q$ of $q_0$, and
\begin{multline*}
  \Delta_{nonsep}(\theta;\gamma,P) \\
  = \sup_{\substack{(\eta_x \geq 0 ,\zeta_x \in \mb R)_{x \in \mc X},\lambda \in \Lambda \\ \sum_{x \in \mc X} \eta_x \leq 1}} \left(- \sum_{x \in \mc X} \E^{F_*}\Big[ (\eta_x \phi)^\star(-\zeta_x - \lambda_x' g(U,x,\theta,\gamma))  \Big] - \eta_x \delta_x - \zeta_x \right) - \lambda_{12}'P \,.
\end{multline*}
By similar arguments to Appendix~\ref{ax:Delta}, $\Delta_{nonsep}(\theta;\gamma,P)$ may be shown to be the dual of
\[
 \inf_{ t \in \mb R,(F_x)_{x \in \mc X}} t \quad \mbox{s.t.} \quad D_\phi(F_x\|F_*) \leq \delta_x + t \mbox{ for each $x \in \mc X$ and (\ref{e:mod-nonsep}) holding at $(\theta,\gamma,P)$.}
\]
Therefore, if there exists $F_x$ with $D_\phi(F_x\|F_*) < \delta_x$ for each $x$ such that (\ref{e:mod-nonsep}) holds at $(\theta,\gamma,P)$, then $\Delta_{nonsep}(\theta;\gamma,P) < 0$.

\section{Additional Results on Nonparametric Bounds}\label{ax:sharp}

This appendix presents further details to supplement Section~\ref{s:sharp}. Proofs of all results in this appendix are presented in Appendix~\ref{ax:sharp_proofs}. Our first result concerns the behavior of $\ul \kappa_\delta$ and $\ol \kappa_\delta$ as the neighborhood size $\delta$ becomes large. 
Recall $\mc N_\infty = \{ F : D_\phi(F \|F_*) < \infty\}$. Let
\[
 \mc K_\infty = \{ \mb E^{F}[k(U,\theta,\gamma_0)] : \mbox{(\ref{e:mod}) holds at $(\theta,\gamma_0,P_0)$ for some $\theta \in \Theta$, $F \in \mc N_\infty$}\} \,.
\]

\begin{lemma} \label{lem:limits}
Suppose that Assumption~\ref{a:phi} holds. Then 
\begin{align*}
 \lim_{\delta \to \infty} \ul \kappa_\delta & = \inf \mc K_\infty \,, &
 \lim_{\delta \to \infty} \ol \kappa_\delta & = \sup \mc K_\infty \,.
\end{align*}
\end{lemma}

Next, we characterize bounds on $\mc K_\infty$ using profiled optimization problems and derive their dual forms. Define
\begin{equation}
 \ul K_\infty(\theta;\gamma_0, P_0) = \inf_{F \in \mc N_\infty} \mb E^F[k(U,\theta,\gamma_0)] \quad \mbox{subject to (\ref{e:mod}) holding at $(\theta,F)$} \,, \label{e:crit_inf_l}
\end{equation}
and let $\ol K_\infty(\theta;\gamma_0, P_0)$ denote the analogous supremum. By definition, we have
\[
\begin{aligned}
 \inf \mc K_\infty & = \inf_{\theta \in \Theta} \ul K_\infty(\theta;\gamma_0,P_0) \,, & & & 
 \sup \mc K_\infty & = \sup_{\theta \in \Theta} \ol K_\infty(\theta;\gamma_0,P_0) \,.
\end{aligned}
\]
Let $F_*\text{-}\mr{ess}\inf$ and $F_*\text{-}\mr{ess}\sup$ denote the $F_*$-essential infimum and supremum, respectively.

\begin{lemma}\label{lem:dual:infty}
Suppose that Assumption~\ref{a:phi} holds and Condition S holds at $(\theta,\gamma,P)$. Then
\begin{align*}
 \ul K_\infty (\theta;\gamma,P) & = \sup_{\lambda \in \Lambda : F_*\text{-}\mr{ess}\inf (  k(\cdot,\theta,\gamma) + \lambda'g(\cdot,\theta,\gamma) ) > - \infty} \left( F_*\text{-}\mr{ess}\inf (  k(\cdot,\theta,\gamma) + \lambda'g(\cdot,\theta,\gamma) ) - \lambda_{12}'P \right) , \\
 \ol K_\infty (\theta;\gamma,P) & = \inf_{\lambda \in \Lambda : F_*\text{-}\mr{ess}\sup (k(\cdot,\theta,\gamma) - \lambda' g(\cdot,\theta,\gamma)) < +\infty} \left(  F_*\text{-}\mr{ess}\sup (k(\cdot,\theta,\gamma) - \lambda' g(\cdot,\theta,\gamma)) + \lambda_{12}'P \right) .
\end{align*}
\end{lemma}

We now derive analogous dual representations for the criterion functions $\ul K_{np}$ and $\ol K_{np}$ from Section~\ref{s:sharp} (see display (\ref{e:crit_np_l})).
We require a slightly different constraint qualification:

\begin{definition}\label{cond:snp}
\emph{Condition S$_{np}$} holds at $(\theta,\gamma,P)$ if $\vec P \in \mr{ri}(\{ \E^{F} [ g(U,\theta,\gamma) ] :  F \in \mc F_\theta \}  + \mc C) $.
\end{definition}

If $F_*$ and $\mu$ are mutually absolutely continuous, then Condition S$_{np}$ is equivalent to Condition S from Section~\ref{s:sharp}  (see Lemma~\ref{lem:cq_np}).

\begin{lemma}\label{lem:dual:sharp}
Suppose that Condition S$_{np}$ holds at $(\theta,\gamma,P)$ and $k$ is $\mu$-essentially bounded. Then
\begin{align*}
 \ul K_{np} (\theta;\gamma,P) & = \sup_{\lambda \in \Lambda : \mu\text{-}\mr{ess}\inf (  k(\cdot,\theta,\gamma) + \lambda'g(\cdot,\theta,\gamma) ) > - \infty} \left( \mu\text{-}\mr{ess}\inf (  k(\cdot,\theta,\gamma) + \lambda'g(\cdot,\theta,\gamma) ) - \lambda_{12}'P \right) , \\
 \ol K_{np} (\theta;\gamma,P) & = \inf_{\lambda \in \Lambda : \mu\text{-}\mr{ess}\sup (k(\cdot,\theta,\gamma) - \lambda' g(\cdot,\theta,\gamma)) < +\infty} \left(  \mu\text{-}\mr{ess}\sup (k(\cdot,\theta,\gamma) - \lambda' g(\cdot,\theta,\gamma)) + \lambda_{12}'P \right) .
\end{align*}
\end{lemma}

\section{Local Sensitivity}\label{ax:local}

In this appendix, we first introduce a measure of local sensitivity of the counterfactual with respect to $F$. We then contrast our approach with recent work on local sensitivity.

\subsection{Measure of Local Sensitivity}

Our \emph{measure of local sensitivity} of the counterfactual $\kappa$ with respect to $F$ at $F_*$ is
\[
 s =  \lim_{\delta \downarrow 0} \frac{( \ol \kappa_\delta - \ul \kappa_\delta)^2}{4\delta} \,.
\]
If $s$ is finite, then under the regularity conditions below
\[
\begin{aligned}
 \ul \kappa_\delta & = \kappa_* - \sqrt{ \delta s} + o( \sqrt \delta) \,, &
 \ol \kappa_\delta & = \kappa_* + \sqrt{ \delta s} + o( \sqrt \delta)  \,, & \mbox{as $\delta \downarrow 0$,}
\end{aligned}
\]
 where $\kappa_* = \mb E^{F_*}[k(U,\theta_*,\gamma_0)]$ and $\theta_*$ solves (\ref{e:mod}) under $F_*$. 

To draw connections with the local sensitivity literature, we restrict attention to moment equality models and impose (standard) regularity conditions. These conditions  allow us to characterize $s$ very tractably via an influence function representation, which leads to a simple estimator $\hat s$ of $s$. Assume that under $F_*$ the moment conditions  (\ref{e:mod:2}) and (\ref{e:mod:4})  point identify a structural parameter $\theta_* \in \mr{int}(\Theta)$, where we again assume $\Theta$ is compact. With some abuse of notation, let
\[
 g(u,\theta,\gamma,P_2) = \left[ \begin{array}{c}
 g_2(u,\theta,\gamma)-P_2 \\
  g_4(u,\theta,\gamma) \end{array} \right] \,,
\]
$g_*(u) = g(u,\theta_*,\gamma_0,P_{20})$, and $k_*(u) = k(u,\theta_*,\gamma_0)$. Let $\mb E^{F_*}[g(U,\theta,\gamma_0,P_{20})]$ and $\mb E^{F_*}[k(U,\theta,\gamma_0)]$ be continuously differentiable with respect to $\theta$ at $\theta_*$, $G = \left. \frac{\partial}{\partial \theta'} \mb E^{F_*}[g(U,\theta,\gamma_0,P_{20})] \right|_{\theta = \theta_*}$ have full rank, $V =  \mb E^{F_*}[g_*(U)g_*(U)']$ be finite and positive definite, $\mb E^{F_*}[k(U,\theta_*,\gamma_0)^2]$ be finite, and $k(\cdot,\theta,\gamma_0)$ and $g(\cdot,\theta,\gamma_0,P_{20})$ be $L^2(F_*)$-continuous in $\theta$ at $\theta_*$. 

Define the \emph{influence function} of the counterfactual $\kappa$ with respect to $F$ at $F_*$ as
\[
 \iota(u) = \mb M k_*(u) - J' (G'V^{-1}G)^{-1}G'V^{-1} g_*(u)  \,,
\]
where $\mb M k_*(u)  = k_*(u) - \kappa_* - \mb E^{F_*}[ k_*(U)g_*(U)'](V^{-1} - V^{-1}G(G'V^{-1}G)^{-1}G'V^{-1})g_*(u)$ and $J =  \left. \frac{\partial}{\partial \theta} \mb E^{F_*}[k(U,\theta,\gamma_0)] \right|_{\theta = \theta_*}$. 
The following theorem relates $s$ and $\iota$. We restrict attention to neighborhoods characterized by $\chi^2$ divergence. Other $\phi$-divergences are locally equivalent to $\chi^2$ divergence, so this restriction entails no great loss of generality.\footnote{See Theorem 4.1 of \cite{CsiszarShields}. The quantity $2 \mb E^{F_*}[\iota(U)^2]$ should be rescaled by a factor of $\phi''(1)$ for other $\phi$ divergences.}

\begin{theorem}\label{t:local}
Suppose that the above GMM-type regularity conditions hold and neighborhoods are defined using $\chi^2$ divergence. Then $s = 2 \mb E^{F_*}[\iota(U)^2]$.
\end{theorem}

The proof of Theorem~\ref{t:local} is presented in Appendix~\ref{ax:local_proofs}.
In addition to reporting an estimated counterfactual $\hat \kappa = \E^{F*}[k(U,\hat \theta, \hat \gamma)]$, researchers could also report an estimate of its local sensitivity to $F$:
\[
 \hat s = 2 \mb E^{F_*}[(\hat k(U)-\hat \kappa)^2] + 2 \hat Q'\hat V \hat Q - 4 \mb E^{F_*}[\hat g(U)(\hat k(U)-\hat \kappa)]'\hat Q \,,
\]
where $\hat k(u) = k(u,\hat \theta,\hat \gamma)$, $\hat g(u) = g(u,\hat \theta,\hat \gamma,\hat P_2)$, $\hat V = \mb E^{F_*}[\hat g(U)\hat g(U)']$, and 
\[
 \hat Q' = \mb E^{F_*}[ \hat k(U)\hat g(U)'](\hat V^{-1} - \hat V^{-1}\hat G(\hat G'\hat V^{-1}\hat G)^{-1}\hat G'\hat V^{-1}) + \hat J' (\hat G'\hat V^{-1}\hat G)^{-1}\hat G'\hat V^{-1} \,,
\]
with $\hat G =  \frac{\partial}{\partial \theta'} \mb E^{F_*}[g(U, \theta,\hat \gamma,\hat P_2)] |_{\theta = \hat \theta}$ and $\hat J =  \frac{\partial}{\partial \theta} \mb E^{F_*}[k(U, \theta,\hat \gamma)] |_{\theta = \hat \theta}$. Lemma~\ref{lem:local} in Appendix~\ref{ax:local_proofs} shows $\hat s$ is consistent. Bounds on counterfactuals as $F$ varies over small neighborhoods of $F_*$ can then be estimated using $\hat \kappa \pm \sqrt{\delta \hat s}$.

\subsection{Comparison with Other Approaches}

We now compare our approach with \cite{AGS2017,AGS2018}, henceforth AGS, and \cite{BW}, henceforth BW. To simplify the comparison, we consider models characterized by moments of the form (\ref{e:mod:2}) with $d_2 \geq d_\theta$ and in which there is no $\gamma$.

AGS consider a setting in which the moments (\ref{e:mod:2}) are locally misspecified:
\begin{equation}\label{e:local-misspec}
 \E^{F_*}[g_2(U,\theta_*)] = P_{20} + n^{-1/2}c\,.
\end{equation}
Suppose a researcher has a first-stage estimator $\hat P_2$, computes an estimator $\hat \theta$ by minimizing
\[
 (\E^{F_*}[g_2(U,\theta)] - \hat P_2)' \hat W (\E^{F_*}[g_2(U,\theta)] - \hat P_2),
\]
then estimates the counterfactual using $\hat \kappa = \E^{F_*}[k(U,\hat \theta)]$. AGS's measure of \emph{sensitivity} of $\hat \kappa$ to $\hat P_2$ is $J'(G'WG)^{-1}G'W$, where $W$ is the probability limit of $\hat W$. The first-order asymptotic bias of $\hat \kappa$ due to local misspecification is therefore $J'(G'WG)^{-1}G'Wc$. AGS's measure of \emph{informativeness} of $\hat P_2$ for $\hat \kappa$ is $1$, meaning that all sampling variation in $\hat \kappa$ is explained by sampling variation in $\hat P_2$. Our measure $s$ instead characterizes ``specification variation'' in $\kappa$ as the researcher varies $F$ subject to the moment condition (\ref{e:mod:2}). 

BW consider estimation of a target parameter ($\kappa$ in our context) using a reference model $\mc M_R = \{(\theta,F) \in \Theta \times \{F_*\} \}$ when the true $(\theta_0,F_0)$ possibly belongs to a larger model $\mc M_L =  \{(\theta,F) \in \Theta \times \mc N_\delta \}$ with $\delta \downarrow 0$ as the sample size $n$ increases so that $n \delta \to \tau \geq 0$. 
BW seek estimators of $\kappa$ under $\mc M_R$ that minimize worst-case asymptotic bias or MSE over $\mc M_L$. Consider the one-step estimator
\[
 \hat \kappa = \E^{F_*}[k(U,\hat \theta)] + a'(\E^{F_*}[g_2(U,\hat \theta)] - \hat P_2) \,,
\]
where $\hat \theta$ is a $\sqrt n$-consistent estimator of $\theta_*$ and $a \in \mb R^{d_2}$ satisfies $J' = -a' G$ so that $\hat \kappa$ does not depend asymptotically on $\hat \theta$. The true counterfactual is $\kappa_0 = \E^{F_0}[k(U,\theta_0)]$ where $(\theta_0,F_0) \in \mc M_L$ satisfies $\E^{F_0}[g_2(U,\theta_0)] = P_{20}$. 
If $\mc M_R$ is correctly specified so that $\E^{F_*}[g_2(U,\theta_*)] = P_{20}$, then for any $a$ the worst-case asymptotic bias of the one-step estimator is 
\[
 \lim_{n \to \infty} \sup_{(\theta_0,F_0) \in \mc M_L:  \E^{F_0}[g_2(U,\theta_0)] = P_{20}} |\sqrt n(\kappa_* - \kappa_0))| = \sqrt{\tau s} \,,
\]
where $s$ is our measure of local sensitivity. 

 If we allow for local misspecification of $\mc M_R$, so that $\E^{F_*}[g_2(U,\theta_*)] \neq P_{20}$, then the worst-case asymptotic bias of the one-step estimator is
\[
 \lim_{n \to \infty} \sup_{(\theta_0,F_0) \in \mc M_L:  \E^{F_0}[g_2(U,\theta_0)] = P_{20}} \left|\sqrt n\left(\kappa_* - \kappa_0 + a'\left( \E^{F_*}[g_2(U,\theta_*)] - P_{20} \right)\right)\right| = \sqrt{\tau s_a} \,,
\]
where $s_a$ is our local sensitivity measure with $k$ replaced by $k + a'g_2$.

\section{Additional Details for the Empirical Applications}\label{ax:empirical}

\subsection{Marital College Premium}\label{ax:csw}

\paragraph*{Bootstrap Details.}
CSs reported Section~\ref{s:csw} with $\delta > 0$ are computed using the bootstrap procedure from Section~\ref{s:bootstrap}. To implement the bootstrap, we take 1,000 independent draws of $\hat P_2^* \sim N(\hat P_2,\hat \Sigma)$ where $\hat \Sigma$ is CSW's estimate of the covariance matrix of $\hat P_2$. We compute $\hat P_2$ and $\hat \Sigma$ based on CSW's replication files.

\medskip

\paragraph*{Fixed-$\theta$ Bounds.}

Figure~\ref{fig:csw_3_inner} plots lower and upper bounds on the ``some college'' to ``college graduate'' premium across cohorts when $\theta$ is fixed at CSW's  estimates (computed under $F_*$) but $F$ is allowed to vary. These bounds for large $\delta$ contain zero across each cohort,  and are approximately the same width as the bounds with $\delta = 0.01$ reported in Figure~\ref{fig:csw_3_short} where both $\theta$ and $F$ are allowed to vary. Imposing exchangeability (Figure~\ref{fig:csw_3_inner_exch}) is seen to tighten the bounds substantially, producing bounds that span negative values only for early cohorts and positive values only in the latest few cohorts.

\begin{figure}
\makebox[\textwidth]{
\begin{subfigure}{.49\textwidth}
  \centering
  \caption{Without exchangeability} \label{fig:csw_3_inner}
  \includegraphics[width=\linewidth]{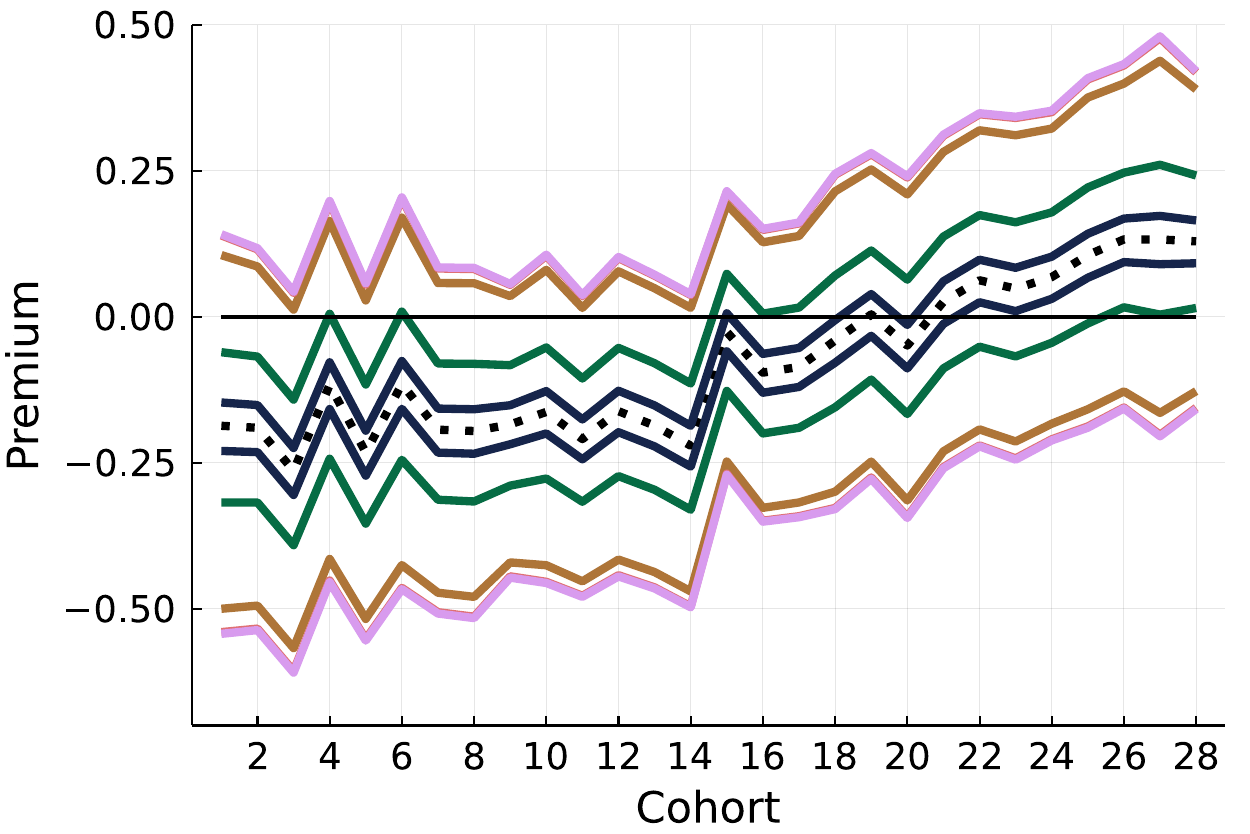}
  \end{subfigure}%
\begin{subfigure}{.49\textwidth}
  \centering
  \caption{With exchangeability} \label{fig:csw_3_inner_exch}
  \includegraphics[width=\linewidth]{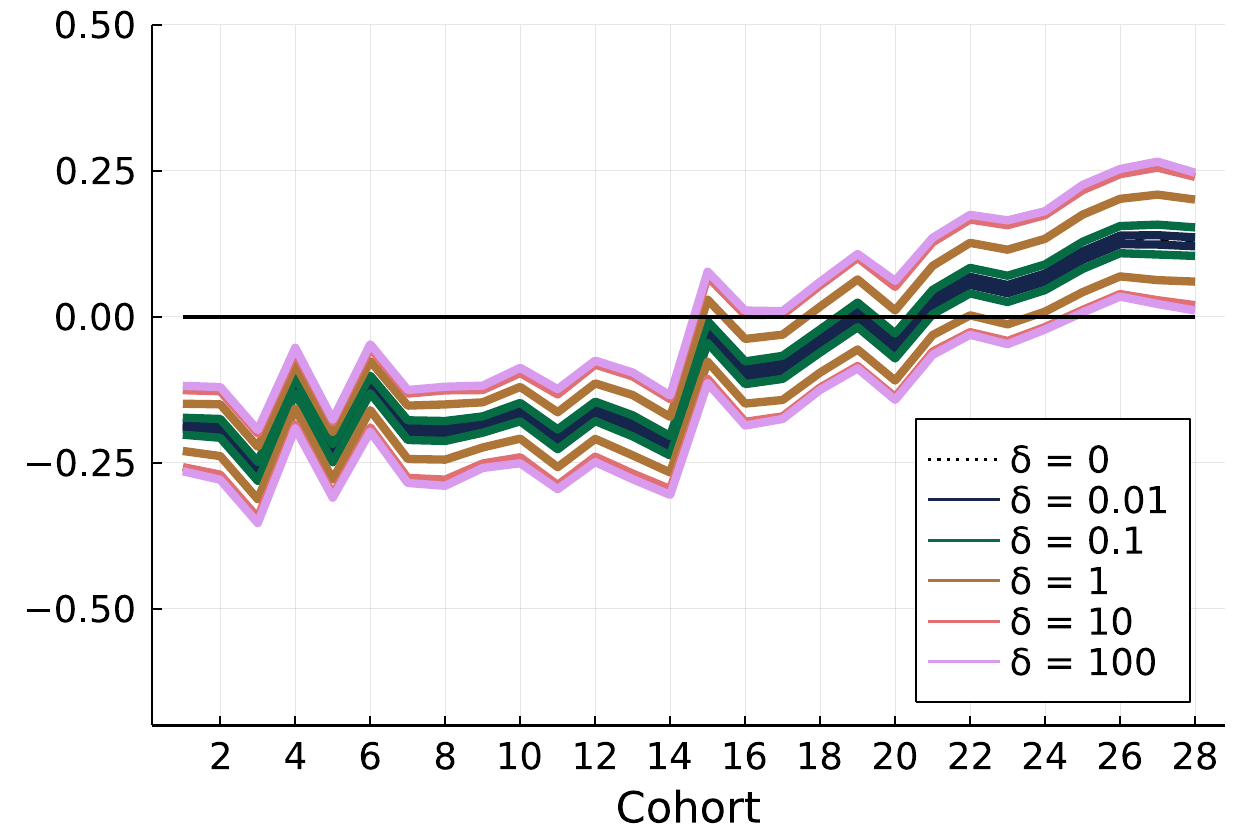}
\end{subfigure}}
\centering

\caption{\label{fig:csw_3_inner_both}Fixed-$\theta$ bounds on the ``some college'' to ``college graduate'' premium when structural parameters are held fixed at CSW's estimates.}

\end{figure}

\medskip

\paragraph*{Projection CSs.}

Figure~\ref{fig:csw_3_proj} reports projection CSs  computed using the procedure in Section~\ref{s:projection}. We formed 95\% rectangular CSs for each cohort's $P_{20}$ as described in Section~\ref{s:projection} using CSW's estimates for $\hat P_2$ and their asymptotic variance estimates for $\hat \Sigma$. These CSs are significantly wider than the bootstrap CSs reported in Figure~\ref{fig:csw_3}. Some conservativeness is to be expected, as these CSs project a 95\% CS for $P_{20}$ down to one dimension. The relative inefficiency is especially pronounced for the earlier cohorts. Note also from Figure~\ref{fig:csw_3_proj_exch} that the projection CSs with $\delta = 0.01$  span zero across each cohort, whereas the bootstrap CSs with $\delta = 0.01$ in Figure~\ref{fig:csw_3_short_exch} contain negative values only in some early cohorts and positive values only in later cohorts.

\begin{figure}[t]
\makebox[\textwidth]{
\begin{subfigure}{.49\textwidth}
  \centering
  \caption{Without exchangeability} \label{fig:csw_3_proj}
  \includegraphics[width=\linewidth]{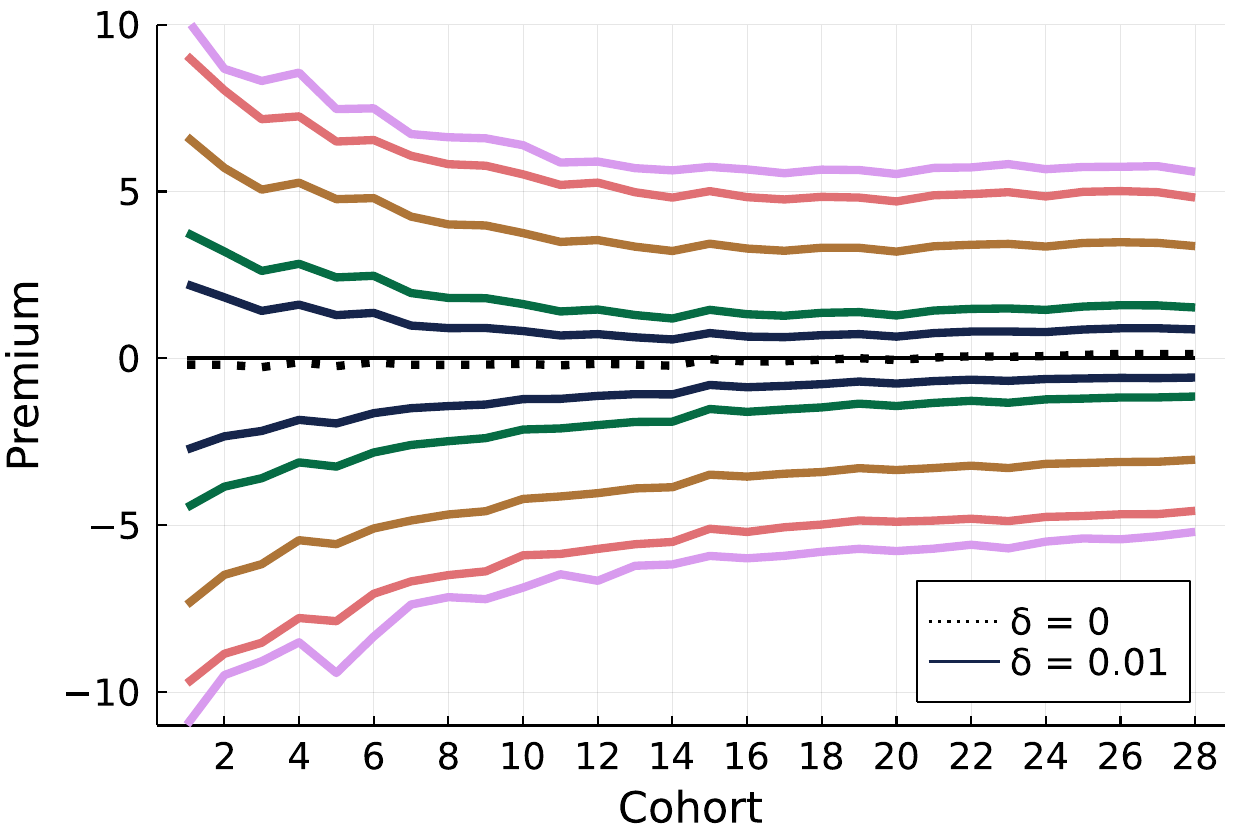}
  \end{subfigure}%
\begin{subfigure}{.49\textwidth}
  \centering
  \caption{With exchangeability} \label{fig:csw_3_proj_exch}
  \includegraphics[width=\linewidth]{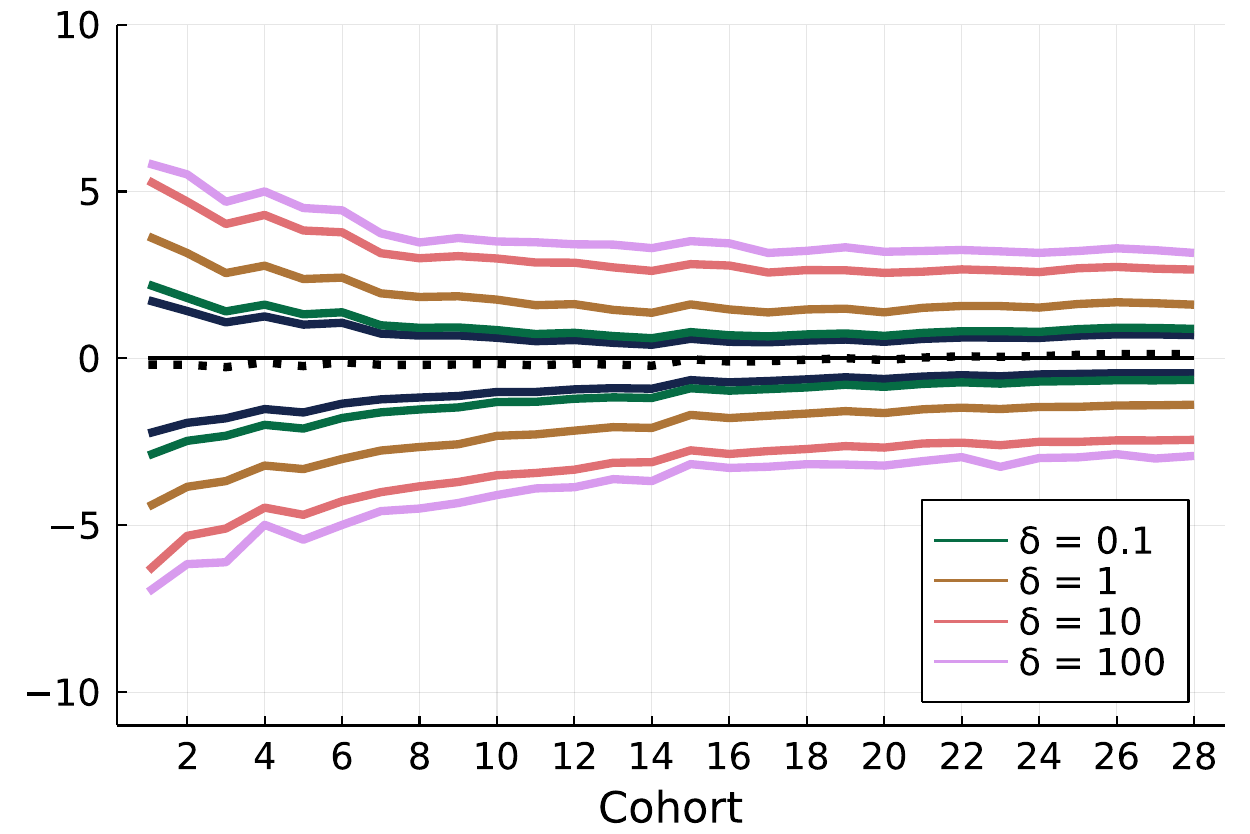}
\end{subfigure}}
\centering

\caption{\label{fig:csw_3_proj}Projection 95\% CSs for bounds on the ``some college'' to ``college graduate'' premium across cohorts. }

\end{figure}

\medskip

\paragraph*{Computation Times.}

Table~\ref{tab:csw_time} reports times for solving the inner problem for maximizing the premium in cohort 1. This optimization problem defines the criterion function $\ol K_\delta(\theta;\hat P)$. As times vary with $\theta$, we report times at CSW's estimates. Times increase somewhat with $\delta$, but are all under 0.6 seconds. The outer optimization times varied with cohort, $\delta$, and implementation but were typically solved in at most a few minutes (often under 90 seconds).

\begin{table}
\begin{center}
\caption{\label{tab:csw_time}Computation times for the inner problem in the matching application}
\begin{tabular}{cccccc} \hline \hline
 \multicolumn{1}{c}{Implementation} & \multicolumn{5}{c}{$\delta$} \\\cline{2-6} \\[-10pt]
 &  0.01 & 0.1 & 1 & 10 & 100 \\[2pt]
 Without exchangeability & 0.074 & 0.056 & 0.076 & 0.579 & 0.188 \\[2pt]
 With exchangeability & 0.146 & 0.184 & 0.350 & 0.311 & 0.488 \\[2pt] \hline \\[-10pt]
\end{tabular}
\parbox{\textwidth}{\small \emph{Note:} Times (in seconds) for solving the inner optimization problem for maximizing the premium in cohort 1 at CSW's parameter estimate $\theta$. We use 50,000 Monte Carlo draws without exchangeability and 120,000 draws with exchangeability. All computations are performed in Julia version 1.6.4 and Knitro 12.4.0 on a 2.7GHz MacBook Pro with 16GB memory.} 
\end{center}
\vskip -14pt
\end{table}

\medskip

\paragraph*{Sensitivity to $\phi$.}

Using $\chi^2$ and $L^4$ divergences produced near identical bounds for $\delta = 0.01$ and $0.1$. The $\chi^2$ bounds with $\delta = 1$ and $10$   were at most 10\% narrower than the hybrid bounds. The  $L^4$ bounds were 60\%-70\% of the width of the hybrid bounds for $\delta = 1$, $10$, and $100$ across cohorts ($L^4$ divergence is stronger than $\chi^2$ and hybrid divergence). The shapes of the sets were also similar to those reported for hybrid divergence. Overall, these results show that the conclusions we draw from our analysis are not sensitive to the choice of $\phi$.

\subsection{Welfare Analysis in a Rust Model}\label{ax:rust}

\paragraph*{Bootstrap Details.}
Bootstrap CSs reported Section~\ref{s:rust} with $\delta > 0$ are computed using the procedure from Section~\ref{s:bootstrap}. We take 1,000 independent draws of $\hat \theta_\pi^* \sim N(\hat \theta_\pi, \hat \Sigma)$ where $\hat \theta_\pi$ is the MLE of $(RC, MC)$ under the i.i.d. Gumbel assumption and $\hat \Sigma$ is an estimate of the inverse information matrix. We then set $\hat P_2^*$ to be the model-implied CCPs at $\hat \theta_\pi^*$ under the i.i.d. Gumbel assumption. 

As $k$ depends only implicitly on $u$ through $\theta$, we compute $\hat{\ul \kappa}_\delta$ and $\hat{\ol \kappa}_\delta$ using the criterion functions in display (\ref{e:k_implicit}), which is more computationally efficient than using criterions (\ref{e:dual:1}) and (\ref{e:dual:2}). The $\lambda$ multipliers on the minimum divergence problem $\Delta(\theta; P)$ in (\ref{e:k_implicit}) differ from $\lambda$ in criterions (\ref{e:dual:1}) and (\ref{e:dual:2}) by the factor $\eta$ (see the discussion in Section~\ref{s:duality}). As our bootstrap methods are derived based on criterions (\ref{e:dual:1}) and (\ref{e:dual:2}), when implementing the bootstrap we rescale the multiplier $\lambda$ solving (\ref{e:qual:dual}) by the multiplier $\eta$ on the constraint $\Delta(\theta;\hat P) \leq \delta$ in the outer optimization.\footnote{This rescaling is also justified as follows. Let $\ol b_\delta(P) =  \sup_{\theta \in \Theta : \Delta(\theta;P) \leq \delta} k(\theta)$ and note $\ol \kappa_\delta = \ol b_\delta(P_0)$ and $\hat{\ol \kappa}_\delta = \ol b_\delta(\hat P)$. By similar arguments to Corollary 5 of \cite{MilgromSegal}, one may deduce that the directional derivative of $\ol b_\delta(P)$ at $P_0$ involves multiplying the directional derivative of $P \mapsto \Delta(\theta;P)$ at $P_0$ by the multiplier for $\Delta(\theta;P) \leq \delta$. The directional derivative of $P \mapsto \Delta(\theta;P)$ at $P_0$ may be shown to be
\[
 \lim_{n \to \infty} t_n^{-1} \left( \Delta(\theta;P_0+t_n h_n)-\Delta(\theta;P_0) \right) = \sup_{\lambda_{12} \in \ul \Lambda(\theta;P_0)} - \lambda_{12}' h,
\]
where $\ul \Lambda(\theta;P_0)$ is constructed analogously to $\ul \Lambda_\delta(\theta;P)$ in Section~\ref{s:bootstrap} using the set of multipliers that solve the minimum divergence problem (\ref{e:qual:dual}).} As $\eta$ and $\lambda$ are computed separately in the outer and inner optimizations, respectively, it is computationally most convenient to implement our bootstrap CSs with $\hat \nu = 0$. As discussed in Section~\ref{s:bootstrap}, this choice is valid but possibly conservative. Despite this potentially conservative choice, the bootstrap CSs are not materially wider than the bootstrap CSs under the i.i.d. Gumbel assumption.

To construct the projection CSs, we form a 95\% rectangular CS for $P_{20}$ as described in Section~\ref{s:projection}. For each draw of $\hat \theta_\pi^*$ we compute the model-implied CCPs $\hat P_2^*$ under the i.i.d. Gumbel assumption. We construct $t$-statistics for each CCP by centering $\hat P_2^*$ at $\hat P_2$ and studentizing by its standard deviation across draws. For each draw we compute the maximum of the absolute value of the $t$-statistics. We then take the critical value $\hat c_{2,1-\alpha}$ to be the $1-\alpha$ quantile of the maximum statistic across draws.

\medskip

\begin{table}
\begin{center}
\caption{\label{tab:rust_time}Computation times for the inner problem in the DDC application}
\begin{tabular}{cccccc} \hline \hline
  & \multicolumn{5}{c}{$\delta$} \\\cline{2-6} \\[-10pt]
 &  0.01 & 0.1 & 1 & 10 & 100 \\[2pt]
 Lower bound & 0.124 & 0.144 & 0.164 & 0.285 & 0.265 \\[2pt]
 Upper bound & 0.101 & 0.119 & 0.142 & 0.266 & 1.039 \\[2pt] \hline \\[-10pt]
\end{tabular}
\parbox{\textwidth}{\small \emph{Note:} Times (in seconds) for solving the inner optimization problem at the parameter values at which  $\hat{\ul \kappa}_\delta$ and $\hat{\ol \kappa}_\delta$ are attained. All computations are performed in Julia version 1.6.4 and Knitro 12.4.0 on a 2.7GHz MacBook Pro with 16GB memory.} 
\end{center}
\vskip-14pt
\end{table}

\paragraph*{Computation times.}

Table~\ref{tab:rust_time} reports computation times for the inner optimization for evaluating the criterion functions $\hat{\ul K}_\delta(\theta;\hat \gamma, \hat P)$ and $\hat{\ol K}_\delta(\theta;\hat \gamma, \hat P)$ at the parameter values at which $\hat{\ul \kappa}_\delta$ and $\hat{\ol \kappa}_\delta$ are attained. The computation times correspond to solving the minimum divergence problem $\Delta(\theta;\hat \gamma, \hat P)$ because $k$  does not depend on $u$ (cf. display (\ref{e:k_implicit})). The outer optimizations were typically solved in a few minutes in a 8-core environment with 64GB memory.

\medskip

\paragraph*{Sensitivity to $\phi$.}

Bounds with $\chi^2$-divergence  were between 4\% narrower and 1\% wider than the bounds for hybrid divergence for all values of $\delta$. Repeating the analysis with $L^4$-divergence, which is stronger than $\chi^2$ and hybrid divergence, produced bounds that were 10-30\% narrower than the hybrid divergence bounds up to $\delta = 1$ and at most 5\% narrower than the hybrid divergence bounds for larger values of $\delta$. As with the matching application, these results again show that the conclusions we draw from our analysis are not sensitive to the choice of $\phi$ function.

\section{Proofs of Main Results}\label{ax:proofs}

Throughout the proofs, we abbreviate upper-semicontinuous and upper-semicontinuity to u.s.c. and lower-semicontinuous and lower-semicontinuity to l.s.c.

\subsection{Proofs for Section~\ref{s:procedure}}\label{s:procedure_proofs}

\begin{proof}[Proof of Proposition~\ref{prop:criterion}]
Immediate from Proposition~\ref{prop:dual} in Appendix~\ref{sec:dual_K}.
\end{proof}

\medskip

Recall Condition S from Definition~\ref{cond:s} and Condition S$_{np}$ from Definition~\ref{cond:snp}.

\begin{lemma}\label{lem:cq_np}
Suppose that Assumption~\ref{a:phi} holds and $\mu$ and $F_*$ are mutually absolutely continuous. Then Condition S holds at $(\theta,\gamma,P)$ if and only if Condition S$_{np}$ holds at $(\theta,\gamma,P)$.
\end{lemma}

\begin{proof}[Proof of Lemma~\ref{lem:cq_np}]
In view of H\"older's inequality for Orlicz classes (see (\ref{eq:holder})), Assumption~\ref{a:phi} implies $\mc N_\infty = \{F : D_\phi(F\|F_*) < \infty\} \subseteq \mc F_\theta$. Therefore, 
\[
 \mc G_\infty := \{\mb E^F[g(U,\theta,\gamma)] : F \in \mc N_\infty\} \subseteq \{\mb E^F[g(U,\theta,\gamma)] : F \in \mc F_\theta\} =: \mc G_\theta\,.
\]
By Corollary 6.6.2 of \cite{Rockafellar}, it suffices to show  $\mr{ri}(\mc G_\infty) = \mr{ri}(\mc G_\theta)$. As $\mr{ri}(\mc G_\infty) \subseteq \mc G_\infty \subseteq \mc G_\theta$, it suffices to show $\mc G_\theta \subseteq \mr{cl}(\mc G_\infty)$ \cite[Remark 2.1.9]{HU-L}. For any $x \in \mc G_\theta$, we have $x = \mb E^F[g(U,\theta,\gamma)]$ for some $F \in \mc F_\theta$. As $F \ll \mu$ and $F_*$ and $\mu$ are mutually absolutely continuous, $F$ has a density, say $m$, with respect to $F_*$. For each $n \geq 1$, let $m (u) \wedge n = \min\{m(u), n\}$ and define
\[
 m_n(u) = \frac{m(u) \wedge n}{\int (m (u) \wedge n) \, \mr d F_*(u)} \,.
\]
Each $F_n$ defined by $\mr d F_n = m_n \mr d F_*$ belongs to $\mc N_\infty$. It follows that $\E^{F_n}[g(U,\theta,\gamma)]  \in \mc G_\infty$. By monotone convergence, we have $\mb E^{F_n}[g(U,\theta,\gamma)] \to x$. Therefore, $x \in \mr{cl}(\mc G_\infty)$.
\end{proof}

\begin{proof}[Proof of Theorem~\ref{t:sharp}]
We prove only the result for $\inf \mc K$; the result for $\sup \mc K$ follows similarly. 
Note
\[
 \inf \mc K = \inf_{\theta \in \Theta} \ul K_{np}(\theta;\gamma_0,P_0) = \inf_{\theta \in \Theta_I} \ul K_{np}(\theta;\gamma_0,P_0) \,,
\] 
where the first equality is by definition and the second equality holds because, if $\theta \not \in \Theta_I$, then there does not exist a distribution $F \in \mc F_\theta$ under which the moment conditions hold at $(\theta,\gamma_0,P_0)$ and consequently $\ul K_{np}(\theta;\gamma_0,P_0) = +\infty$. If $\theta \not \in \Theta_I$, then there does not exist $F \in \mc N_\infty$ under which the moment conditions hold at $(\theta,\gamma_0,P_0)$ either because $\mc N_\infty \subseteq \mc F_\theta$ for all $\theta$ under Assumption~\ref{a:phi}. Therefore, $\ul K_\infty(\theta;\gamma_0,P_0) = +\infty$ in that case too. We therefore have
\[
 \inf \mc K_\infty = \inf_{\theta \in \Theta_I} \ul K_\infty(\theta;\gamma_0,P_0) \,.
\]

In view of Lemma~\ref{lem:limits}, it suffices to show $\inf \mc K = \inf \mc K_\infty$. Note that $\inf \mc K \leq \mc \inf \mc K_\infty$ holds by virtue of the inclusion $\mc N_\infty \subseteq \mc F_\theta$ for all $\theta$. For the reverse inequality, choose any $\epsilon > 0$. By $S$-regularity of $\Theta_I$, there exists $\ul \theta \in \Theta_I$ for which Condition S holds at $(\ul \theta, \gamma_0, P_0)$ and for which $\ul K_{np}(\ul \theta; \gamma_0, P_0) \leq \inf \mc K + \epsilon$. As Condition S holds at $(\ul \theta,\gamma_0,P_0)$ and $\mu \ll F_* \ll \mu$, Lemma~\ref{lem:cq_np} implies that Condition S$_{np}$ must also hold at $(\ul \theta,\gamma_0,P_0)$. Moreover, the $\mu$-essential infimum and $F_*$-essential infimum of any function are equal because $\mu \ll F_* \ll \mu$. Therefore by Lemmas~\ref{lem:dual:infty} and \ref{lem:dual:sharp}, we have  $\ul K_\infty(\ul \theta;\gamma_0,P_0) = \ul K_{np}(\ul \theta;\gamma_0,P_0)$. It follows by definition of $\inf \mc K_\infty$ that $\inf \mc K_\infty \leq \ul K_\infty(\ul \theta; \gamma_0, P_0) = \ul K_{np}(\ul \theta;\gamma_0,P_0) \leq \inf \mc K + \epsilon$. Therefore, $\inf \mc K_\infty \leq \inf \mc K$.
\end{proof}

\subsection{Proofs for Section~\ref{s:implementation}}

\begin{proof}[Proof of Proposition~\ref{p:mpec}]
We prove the result for $\ul K_\delta$; the proof for $\ol K_\delta$ follows similarly.
Consider
\begin{align}
 v^A & :=  \inf_{\theta \in \Theta,F \in \mc N_\delta} \mb E^F[k(U,\theta,\gamma)] \quad \mbox{subject to (\ref{e:mod}) holding at $(\theta,\gamma,P)$} \,, \tag{Program A} \\
 v^B & := \inf_{\theta \in \Theta} \mb E^{\ul F_{\delta,\theta}}[k(U,\theta,\gamma)]  \quad  \mbox{subject to }   \mb E^{\ul F_{\delta,\theta}}[g_{4e}(U,\theta,\gamma)] = 0  \,, \tag{Program B}
\end{align}
where $\ul F_{\delta,\theta}$ solves 
\[
 \inf_{F \in \mc N_\delta} \mb E^F[k(U,\theta,\gamma)] \quad \mbox{subject to (\ref{e:mod-e}) holding at $(\theta,\gamma,P)$}\,,
\]
and $v^B = +\infty$ if there is no solution to this problem. Program A is the approach described in Section~\ref{s:procedure} whereas Program B is equivalent to our MPEC implementation.

The inequality $v^A \leq v^B$ is trivial if $v^B = +\infty$. If $v^B$ is finite, for any $\varepsilon > 0$ there exists $\theta^B_\varepsilon \in \Theta$ for which $\mb E^{\ul F_{\delta,\theta^B_\varepsilon}}[k(U,\theta^B_\varepsilon,\gamma)] \leq v^B + \varepsilon$ and $\mb E^{\ul F_{\delta,\theta^B_\varepsilon}}[g_{4e}(U,\theta^B_\varepsilon,\gamma)] = 0$ where $\ul F_{\delta,\theta^B_\varepsilon}$ is well defined by Lemma~\ref{lem:v_properties}(ii). As $(\theta^B_\varepsilon,\ul F_{\delta,\theta^B_\varepsilon})$ are feasible for Program A, we have $v^A \leq v^B + \varepsilon$. As $\varepsilon$ is arbitrary, we have $v^A \leq v^B$ whenever $v^B > -\infty$. 

A similar argument applies when $v^B = -\infty$: for any $n \in \mb N$ there exists $\theta^B_n \in \Theta$ for which $\mb E^{\ul F_{\delta,\theta^B_n}}[k(U,\theta^B_\varepsilon,\gamma)]  \leq -n$ and $\mb E^{\ul F_{\delta,\theta^B_n}}[g_{4e}(U,\theta^B_\varepsilon,\gamma)]  = 0$, where the distribution $\ul F_{\delta,\theta^B_n}$ is well defined by Lemma~\ref{lem:v_properties}(ii). As $(\theta^B_n,\ul F_{\delta,\theta^B_n})$ are feasible for Program A, we have $v^A \leq -n$. Hence, $v^A = v^B = -\infty$.

Note $v^B \leq v^A$ holds trivially if $v^A = +\infty$. If $v^A$ is finite, rewrite Program B as
\[
 \inf_{\kappa \in \mb R,\theta \in \Theta} \kappa  \quad \mbox{subject to } \mb E^{\ul F_{\delta,\theta,\kappa}}[g_{4e}(U,\theta,\gamma)] = 0\,,
\]
where $\ul F_{\delta,\theta,\kappa}$ solves the feasibility program
\begin{equation} \label{e:v_b_feasible}
 \inf_{F \in \mc N_\delta} 0 \quad \mbox{subject to (\ref{e:mod-e}) and $\E^F[k(U,\theta,\gamma)] = \kappa$ holding at $(\theta,\gamma,P)$.}
\end{equation}
For any $\varepsilon > 0$ there exists $\theta^A_\varepsilon \in \Theta$ and $F^A_\varepsilon \in \mc N_\delta$ such that the constraints in Program A are satisfied, i.e. $\mb E^{F^A_\varepsilon}[ g_1(U,\theta^A_\varepsilon,\gamma)] \leq P_{1}$, $\ldots$, $\mb E^{F^A_\varepsilon}[ g_4(U,\theta^A_\varepsilon,\gamma)] = 0$, and 
\[
 \mb E^{F^A_\varepsilon}[ k(U,\theta^A_\varepsilon,\gamma)] \leq v^A + \varepsilon \,.
\]
Then $\ul F^A_\varepsilon$ solves the feasibility program (\ref{e:v_b_feasible}) with $\theta = \theta^A_\varepsilon$ and $\kappa = \kappa^A_\varepsilon := \mb E^{F^A_\varepsilon}[ k(U,\theta^A_\varepsilon,\gamma)]$. Note that $\mb E^{F^A_\varepsilon}[g_{4e}(U,\theta^A_\varepsilon,\gamma)] = 0$ also holds by construction. Therefore, $(\kappa^A_\varepsilon,\theta^A_\varepsilon)$ are feasible for the augmented form of Program B. It follows that $v^B \leq \kappa^A_\varepsilon \leq v^A + \varepsilon$ holds for each $\varepsilon > 0$. As $\varepsilon > 0$ is arbitrary, we have  $v^B \leq v^A$ whenever $v^A > -\infty$. 

A similar argument applies if $v^A = -\infty$: for any $n \in \mb N$, we may choose $\theta^A_n \in \Theta$ and $F^A_n \in \mc N_\delta$ such that the constraints in Program A are satisfied and $\mb E^{F^A_n}[ k(U,\theta^A_n,\gamma)] \leq -n$. It follows that $v^B \leq -n$. Hence, $v^B = v^A = -\infty$.
\end{proof}

\begin{proof}[Proof of Proposition~\ref{prop:dual_F}]
We prove the result for $\ul F_{\delta, \theta}$, the result for $\ol F_{\delta, \theta}$ follows similarly. We drop dependence of $k$ and $g$ on $(\theta,\gamma)$ to simplify notation in what follows. 

First, suppose $k$ depends on $u$. The dual formulation is justified by Proposition~\ref{prop:criterion}. A dual solution $(\ul \eta, \ul \zeta, \ul \lambda)$ exists by Proposition~\ref{prop:dual}(iii). 

Suppose $\ul \eta > 0$. We wish to show that the change of measure $\ul m_{\delta,\theta}(u) = \dot \phi^\star(-\ul \eta^{-1} (k(u) + \ul \zeta + \ul \lambda' g_s(u) ))$ induces a distribution that solves the primal problem (\ref{e:crit_l:mpec}) at $\theta$. Differentiability of the objective function in $(\eta, \zeta, \lambda)$ is guaranteed by Assumption~\ref{a:phi}. Also note that Assumption~\ref{a:phi}(i) ensures $\dot \phi^\star \geq 0$. The first-order condition (FOC) for $\ul \zeta$ is
\[
 0 = \E^{F_*}\left[ \dot \phi^\star(-\ul \eta^{-1} (k(U) + \ul \zeta + \ul \lambda' g_s(U) ))\right] - 1
\]
which implies $\E^{F_*}[\ul m_{\delta,\theta}] = 1$ and hence that $\ul F_{\delta,\theta}$ is a probability measure. The FOC for $\ul \lambda$ is 
\begin{align*}
 0 & \geq \E^{F_*}\left[ \dot \phi^\star(-\ul \eta^{-1} (k(U) + \ul \zeta + \ul \lambda' g_s(U) )) g_1(U)\right] - P_1 \,, \\
 0 & = \E^{F_*}\left[ \dot \phi^\star(-\ul \eta^{-1} (k(U) + \ul \zeta + \ul \lambda' g_s(U) )) g_2(U)\right] - P_2 \,, \\
 0 & \geq \E^{F_*}\left[ \dot \phi^\star(-\ul \eta^{-1} (k(U) + \ul \zeta + \ul \lambda' g_s(U) )) g_3(U)\right] \,, \\
 0 & = \E^{F_*}\left[ \dot \phi^\star(-\ul \eta^{-1} (k(U) + \ul \zeta + \ul \lambda' g_s(U) )) g_{4s}(U)\right] \,,
\end{align*}
hence (\ref{e:mod:1})--(\ref{e:mod:3}) and $\E^F[g_{4s}(U,\theta,\gamma)] = 0$ hold at $(\theta,\gamma,P)$ under $\ul F_{\delta,\theta}$. The FOC for $\ul \eta > 0$ is
\begin{align*}
 0 & =  \E^{F_*}\left[ \dot \phi^\star(-\ul \eta^{-1} (k(U) + \ul \zeta + \ul \lambda' g_s(U) ))(-\ul \eta^{-1} (k(U) + \ul \zeta + \ul \lambda' g_s(U) ))  \right]  \\
 & \quad  - \E^{F_*}\left[ \phi^\star(-\ul \eta^{-1} (k(U) + \ul \zeta + \ul \lambda' g_s(U) )) \right] -  \delta \,.
\end{align*}
By Assumption~\ref{a:phi}(i), we may write the convex conjugate $\phi^{\star\star}$ of $\phi^\star$ using its Legendre transform:
\[
 \phi^{\star \star}(x^\star) = x^\star (\dot \phi^\star)^{-1}(x^\star) - \phi^\star( (\dot \phi^\star)^{-1}(x^\star) )
\]
for any $x^\star$ in the range of $\dot \phi^\star$ \cite[Theorem 26.4]{Rockafellar}. Setting $x^\star = \dot \phi^\star(x)$ and noting that $\phi^{\star \star} = \phi$ holds by the Fenchel--Moreau theorem, we obtain
\[
 \phi(\dot \phi^\star(x)) = x \dot \phi^\star(x) - \phi^\star( x) \,.
\]
It follows that we may rewrite the FOC for $\ul \eta$ as $\delta = \mb E^{F^*} \left[ \phi( \ul m_{\delta, \theta}(U) ) \right]$ and so $\ul F_{\delta, \theta} \in \mc N_\delta$.

Now suppose $\ul \eta = 0$. Here we wish to show that $\ul m_{\delta,\theta}(u) = \ind\{u \in \ul A_{\delta,\theta}\}/F_*(\ul A_{\delta, \theta})$ induces a distribution that solves the primal problem (\ref{e:crit_l:mpec}) at $\theta$. As the neighborhood constraint $F \in \mc N_\delta$ is not binding, the value of the objective must be the same as the optimal value when $\delta  =\infty$. In view of Lemma~\ref{lem:dual:infty}, the value is $F_*\text{-}\mr{ess}\inf (  k(\cdot) + \ul \lambda'g_s(\cdot) ) - \ul \lambda_{12}'P$. We can write problem (\ref{e:dual:1:mpec}) as a nested optimization:
\[
 \sup_{ \lambda \in \Lambda_s} \left( \sup_{\eta > 0, \zeta \in \mb R} -\eta \E^{F_*}\left[ {\textstyle \phi^\star \left( \frac{k(U) + \zeta + \lambda' g_s(U) }{-\eta}\right) } \right] - \eta \delta - \zeta - \lambda_{12}'P \right) .
\]
At $\lambda = \ul \lambda$, the inner problem is the dual of $\inf_{F \in \mc N_\delta} \E^F[k(U) + \ul \lambda' g_s(U) - \ul \lambda_{12}'P]$. As $\ul \eta = 0$, the constraint $F \in \mc N_\delta$ is not binding and so the minimizing distribution must be supported on $\ul A_{\delta,\theta}$. Finally, by convexity of $\phi$, the distribution induced by $\ul m_{\delta,\theta}$ minimizes $D_\phi(\,\cdot\,\|F_*)$ among all distributions with support $\ul A_{\delta, \theta}$.

Now suppose $k$ does not depend on $u$. By Proposition~\ref{prop:dual_dist}, the primal and dual values of (\ref{e:p0_dist}) are equal and a dual solution exists. By similar arguments to above, $\E^{F_*}[\ul m_{\delta,\theta}(U)] = 1$, and (\ref{e:mod:1})--(\ref{e:mod:3}) and $\E^F[g_{4s}(U,\theta,\gamma)] = 0$ hold at $(\theta,\gamma,P)$ under $\ul F_{\delta,\theta}$. Finally, as there exists $F \in \mc N_\delta$ under which the moment conditions (\ref{e:mod:1})--(\ref{e:mod:3}) and $\E^F[g_{4s}(U,\theta,\gamma)] = 0$ hold at $(\theta,\gamma,P)$,  we must have $D(\ul F_{\delta,\theta}\|F_*) \leq D(F\|F_*) \leq \delta$, as required.
\end{proof}

\subsection{Proofs for Section~\ref{s:delta}}

\begin{proof}[Proof of Proposition~\ref{prop:order}]
As $\phi_1(x) \leq \bar a \phi_2(x)$ for all $x > 0$, we have $D_{\phi_1}(F\|F_*) \leq \bar a D_{\phi_2}(F\|F_*)$. Hence, $\mc N_{\delta,2} \subseteq \mc N_{\bar a \delta,1}$ for each $\delta > 0$. The result follows from this inclusion, noting that $\ul \kappa_{\bar a \delta, 1}$ and $\ul \kappa_{\bar a \delta, 1}$ are both finite because Assumption~\ref{a:phi} holds for $\phi_1$.
\end{proof}

\subsection{Proofs for Section~\ref{s:asymptotics}}

We first present some preliminary lemmas. 

\begin{lemma}\label{lem:cgt}
Suppose that Assumptions~\ref{a:phi} and \ref{a:m}(i),(v) hold. Let $\{(F_n,\theta_n,\gamma_n,P_n)\} \subseteq \mc N_\delta \times \Theta \times \Gamma \times \mc P$ with $(\gamma_n,P_n) \to (\tilde \gamma,\tilde P) \in \Gamma \times \mc P$ and with (\ref{e:mod}) holding under $F_n$ at $(\theta_n,\gamma_n,P_n)$. Then: there exists a convergent subsequence $(F_{n_l},\theta_{n_l},\gamma_{n_l},P_{n_l}) \to (\tilde F,\tilde \theta,\tilde \gamma,\tilde P) \in \mc N_\delta \times \Theta \times \Gamma \times \mc P$ along which $\lim_{l \to \infty} \mb E^{F_{n_l}}[k(U,\theta_{n_l},\gamma_{n_l})] = \mb E^{\tilde F}[k(U,\tilde \theta,\tilde \gamma)]$ and similarly for each entry of $g_1,\ldots,g_4$, and (\ref{e:mod}) holds under $\tilde F$ at $(\tilde \theta, \tilde \gamma, \tilde P)$.
\end{lemma}

\begin{proof}[Proof of Lemma~\ref{lem:cgt}]
Let $m_n = \frac{\mr d F_n}{\mr d F_*}$. By Assumption~\ref{a:m}(v), $\{\theta_n\}$ has a convergent subsequence $\{\theta_{n_l}\}$. As $\{m_{n_l}\}$ is $\|\cdot\|_\phi$-norm bounded (Lemma~\ref{lem:orlicz}(ii)), taking a further subsequence if necessary we may assume $\{m_{n_l}\}$ is $\mc E$-weakly convergent to $\tilde m \in \mc L$ (see Appendix~\ref{ax:Orlicz}).  By the triangle inequality, the H\"older inequality (\ref{eq:holder}),  $\mc E$-weak convergence, and Assumption~\ref{a:m}(i), we have
\begin{align*}
 & \left| \E^{F_{n_l}}[ m_{n_l}(U) k(U,\theta_{n_l},\gamma_{n_l})] - \E^{F_*}[ \tilde m(U)  k(U,\tilde  \theta,\tilde \gamma)] \right| \\
 &  \quad \leq |\E^{F_*}[ ( m_{n_l}(U) - \tilde m(U) ) k(U,\tilde \theta,\tilde \gamma)]| + \|m_{n_l}\|_\phi \|k(\,\cdot\,,\theta_{n_l},\gamma_{n_l})-k(\,\cdot\,,\tilde \theta,\tilde \gamma)\|_{\psi} \to 0 .
\end{align*}
It follows by similar arguments that
\begin{align*}
 \E^{F_*}[\tilde  m(U)] & = 1 \,, &
 \E^{F_*}[\tilde  m(U)g_1(U,\tilde  \theta,\tilde \gamma)] & \leq \tilde P_1 \,, &
 \E^{F_*}[\tilde  m(U)g_2(U,\tilde  \theta,\tilde \gamma)] & = \tilde P_2 \,, \\
 & & 
 \E^{F_*}[\tilde  m(U)g_3(U,\tilde \theta,\tilde \gamma)] & \leq 0 \,, &
 \E^{F_*}[\tilde  m(U)g_4(U,\tilde \theta,\tilde \gamma)] & = 0  \,.
\end{align*}
Finally, By Lemma~\ref{lem:orlicz}(i), we have $\delta \geq \liminf_{l \to \infty} \E^{F_*}[ \phi(m_{n_l}(U))] \geq \E^{F_*}[ \phi(\tilde m(U))]$.
\end{proof}

\begin{lemma}\label{lem:endpoints}
Suppose that Assumptions~\ref{a:phi} and \ref{a:m}(i),(iii)--(v) hold. Then $\ul \kappa_\delta$ and $\ol \kappa_\delta$ are finite, and
\[
\begin{aligned}
 \ul \kappa_\delta & = \inf_{\theta \in \Theta_\delta(\gamma_0,P_0)} \ul K_\delta(\theta;\gamma_0, P_0) \,, & 
 \ol \kappa_\delta & = \sup_{\theta \in \Theta_\delta(\gamma_0,P_0)} \ol K_\delta(\theta;\gamma_0, P_0) \,.
\end{aligned}
\]
\end{lemma}

\begin{proof}[Proof of Lemma~\ref{lem:endpoints}]
We prove the result only for $\ul \kappa_\delta$; the result for $\ol \kappa_\delta$ follows similarly.

Finiteness of $\ul \kappa_\delta$ follows by Assumptions~\ref{a:phi} and \ref{a:m}(i)(v) and the H\"older inequality (\ref{eq:holder}). To simplify notation, we suppress dependence of $\Theta_\delta(\gamma_0,P_0)$ on $(\gamma_0,P_0)$ in what follows. Suppose there is $\ul \theta \not\in \Theta_\delta$ with $\ul K_\delta(\ul \theta ;\gamma_0, P_0) < \inf_{\theta \in \Theta_\delta} \ul K_\delta(\theta;\gamma_0, P_0) $. Then there must exist $ F_{\ul \theta} \in \mc N_\delta$ satisfying (\ref{e:mod}) at $(\ul \theta,\gamma_0,P_0)$. As $\Delta(\ul \theta;\gamma_0, P_0) = \delta$, it follows by convexity of $\phi$ that $ F_{\ul \theta}$ must be unique. Therefore
\begin{equation} 
  \E^{F_{\ul \theta}}[k(U,\ul \theta,\gamma_0)] = \ul K_\delta(\ul \theta;\gamma_0, P_0) < \inf_{\theta \in \Theta_\delta} \ul K_\delta(\theta;\gamma_0, P_0) \leq \inf_{\theta \in \Theta_\delta} \E^{F_\theta}[ k(U,\theta_0,\gamma_0)]  \,, \label{e:ktheta0}
\end{equation}
where, for each $\theta \in \Theta_\delta$, the distribution $F_\theta$ solves $\inf_{F} D_\phi(F\|F_*)$ subject to (\ref{e:mod}). Existence of $F_\theta$ follows by similar arguments to the proof of Lemma~\ref{lem:v_properties}(ii); its uniqueness follows by strict convexity of $\phi$. 

Choose $\{\theta_n\} \subset \Theta_\delta$ with $\theta_n \to \ul \theta$ (we may choose such a sequence by Assumption~\ref{a:m}(iv)). By Lemma~\ref{lem:cgt}, there is a subsequence $\{(\theta_{n_l},F_{\theta_{n_l}},\gamma_0,P_0)\}$ with $(\theta_{n_l},F_{\theta_{n_l}}) \to (\ul \theta,\ul F)$ for some $\ul F \in \mc N_\delta$ for which (\ref{e:mod}) holds under $\ul F$ at $(\ul \theta,\gamma_0,P_0)$. It follows by uniqueness of $F_{\ul \theta}$ that $\ul F = F_{\ul \theta}$. By Lemma~\ref{lem:cgt}, we therefore have 
\[
 \inf_{\theta \in \Theta_\delta} \E^{F_\theta}[k(U,\theta,\gamma_0)] \leq \lim_{l \to \infty} \E^{F_{\theta_{n_l}}}[  k(U,\theta_{n_l},\gamma_0)]  = \E^{F_{\ul \theta}}[   k(U,\ul \theta,\gamma_0)]  \,,
\]
which contradicts (\ref{e:ktheta0}). 
\end{proof}

\medskip

Define
\begin{equation*}
\begin{aligned}
 \ul b_\delta(\gamma,P) & = \inf_{\theta \in \Theta_\delta(\gamma,P)} \ul K_\delta(\theta;\gamma,P) \,, & 
 \ol b_\delta(\gamma,P) & = \inf_{\theta \in \Theta_\delta(\gamma,P)} \ol K_\delta(\theta;\gamma,P) \,.
\end{aligned}
\end{equation*}

\begin{lemma}\label{lem:c-cts}
Suppose that Assumptions~\ref{a:phi} and \ref{a:m}(i)--(v) hold. Then $\ul b_\delta(\gamma,P)$ and $\ol b_\delta(\gamma,P)$ are continuous at $(\gamma_0,P_0)$.
\end{lemma}

\begin{proof}[Proof of Lemma~\ref{lem:c-cts}]
We prove the result only for $\ul b_\delta$; the result for $\ol b_\delta$ follows similarly.

Fix $\varepsilon > 0$. By Lemma~\ref{lem:endpoints}, we may choose $\theta_\varepsilon \in \Theta_\delta(\gamma_0,P_0)$ such that $\ul K_\delta(\theta_\varepsilon;\gamma_0,P_0) < \ul b_\delta(\gamma_0,P_0) + \varepsilon$. By Lemma~\ref{lem:phi-cts} and Assumption~\ref{a:m}(ii) we have $\Delta(\theta_\varepsilon;\gamma,P) < \delta$ on a neighborhood $N$ of $(\gamma_0,P_0)$. Moreover, by Lemma~\ref{lem:mult:unif}(i) and Assumption~\ref{a:m}(i)--(iii) we have 
\[
 \ul K_\delta(\theta_\varepsilon;\gamma,P)  < \ul K_\delta(\theta_\varepsilon;\gamma_0,P_0) + \varepsilon
\]
on a neighborhood $N'$ of $(\gamma_0,P_0)$. On $N \cap N'$ we therefore have
\[
  \ul b_\delta(\gamma,P) 
  \leq \ul K_\delta(\theta_\varepsilon; \gamma,P) 
  < \ul K_\delta(\theta_\varepsilon;\gamma_0,P_0) + \varepsilon
  < \ul b_\delta(\gamma_0,P_0) +2  \varepsilon \,,
\]
establishing u.s.c. of  $\ul b_\delta(\gamma,P)$ at $(\gamma_0,P_0)$.

To establish l.s.c., suppose there is $\varepsilon > 0$ and $(\gamma_n,P_n) \to (\gamma_0,P_0)$ along which
\begin{equation}\label{e:c-cts:1}
 \ul b_\delta(\gamma_n, P_n) \leq \ul b_\delta(\gamma_0,P_0) - 2\varepsilon \,.
\end{equation}
Note $\Theta_\delta(\gamma_n,P_n)$ is nonempty for $n$ sufficiently large by Lemma~\ref{lem:phi-cts} and Assumption~\ref{a:m}(ii)(iii). 
For each $n$ sufficiently large, choose $\theta_n \in \Theta_\delta(\gamma_n,P_n)$ and $F_n \in \mc N_\delta$ for which 
\begin{equation}\label{e:c-cts:2}
 \mb E^{F_n}[k(U,\theta_n,\gamma_n)]  < \ul b_\delta(\gamma_n, P_n) + \varepsilon\,.
\end{equation}
By Lemma~\ref{lem:cgt} there is a subsequence $(F_{n_l},\theta_{n_l},\gamma_{n_l},P_{n_l}) \to (\ul F,\ul \theta, \gamma_0, P_0)$ for some $\ul F \in \mc N_\delta$ and $\ul \theta \in \Theta$, such that (\ref{e:mod}) holds under $\ul F$ at $(\ul \theta,\gamma_0,P_0)$, and for which
\[
 \lim_{l \to \infty} \E^{F_{n_l}}[ k(U,\theta_{n_l},\gamma_{n_l})] = \E^{\ul F}[  k(U,\ul \theta,\gamma_0)] \geq \ul K_\delta(\ul \theta;\gamma_0, P_0)\,.
\]
In view of (\ref{e:c-cts:1}) and (\ref{e:c-cts:2}) and Lemma~\ref{lem:endpoints}, this implies $\ul K_\delta(\ul \theta;\gamma_0, P_0) \leq \ul b_\delta(\gamma_0,P_0) - \varepsilon = \ul \kappa_\delta - \varepsilon$, contradicting the definition of $\ul \kappa_\delta$.
\end{proof}

\begin{proof}[Proof of Theorem~\ref{t:c-const}]
Note that $\ul \kappa_\delta = \ul b_\delta(\gamma_0,P_0)$ and $\ol \kappa_\delta = \ol b_\delta(\gamma_0,P_0)$ by Lemma~\ref{lem:endpoints} and $\hat{\ul \kappa}_\delta = \ul b_\delta(\hat \gamma,\hat P)$ and $\hat{\ol \kappa}_\delta = \ol b_\delta(\hat \gamma,\hat P)$ by definition. The result now follows by Lemma~\ref{lem:c-cts} and Slutsky's theorem.
\end{proof}

\begin{lemma}\label{lem:dual_stable}
Suppose that Assumptions~\ref{a:phi} and \ref{a:m}(i),(ii) hold, Condition S' holds at $(\theta,\gamma,P)$, and $\Delta(\theta;\gamma,P) < \delta$. Then there is a neighborhood $N$ of $(\theta,\gamma,P)$ such that Condition S' holds at $(\tilde \theta, \tilde \gamma, \tilde P)$ and $\Delta(\tilde \theta;\tilde \gamma, \tilde P) < \delta$ for all $(\tilde \theta,\tilde \gamma,\tilde P) \in N$.
\end{lemma}

\begin{proof}[Proof of Lemma~\ref{lem:dual_stable}]
By Lemma~\ref{lem:attain-2}, Condition S' holds at all $(\tilde \theta,\tilde \gamma,\tilde P)$ in a neighborhood $N'$ of $(\theta,\gamma,P)$. Moreover, $\Delta(\tilde \theta; \tilde \gamma, \tilde P) < \delta$ holds at all $(\tilde \theta,\tilde \gamma,\tilde P)$ in a neighborhood $N''$ of $(\theta,\gamma,P)$ by Lemma~\ref{lem:phi-cts}. Set $N = N' \cap N''$. 
\end{proof}

\medskip

In the remainder of this subsection we drop dependence of all quantities on $\gamma$.

\medskip

\begin{proof}[Proof of Theorem~\ref{t:asydist}]
We prove the result only for $\ul b_\delta$; the result for $\ol b_\delta$ follows similarly.

\underline{Step 1:} We first show $\ul \Theta_\delta(P_0)$ is nonempty and compact. 
For nonemptiness, choose $\{\theta_n\}$ such that $\ul K_\delta(\theta_n;P_0) \downarrow \ul \kappa_\delta$. Let $F_n$ solve the primal problem for $\theta_n$. By Lemma~\ref{lem:cgt}, there is a subsequence $(F_{n_l},\theta_{n_l}) \to (\ul F,\ul \theta)$ with $\ul F \in \mc N_\delta$ and $\ul \theta \in \Theta$ such that (\ref{e:mod}) holds under $\ul F$ at $(\ul \theta,P_0)$ and for which
\[
 \ul \kappa_\delta = \lim_{l \to \infty} \mb E^{F_{n_l}}[k(U,\theta_{n_l})] = \mb E^{\ul F}[k(U,\ul \theta)] \,.
\]
Therefore, $\Theta_\delta(P_0)$ is nonempty. We may deduce by similar arguments that $\ul \Theta_\delta(P_0)$ is closed. Compactness now follows by Assumption~\ref{a:m}(v).

\underline{Step 2:} We now prove directional differentiability. Let $P_n = P_0 + t_n h_n$ with $t_n \downarrow 0$ and $h_n \to h$. Choose $\ul \theta \in \ul \Theta_\delta(P_0)$. By Lemma~\ref{lem:dual_stable} and Assumption~\ref{a:m}(iii)(vi), Condition S' holds at $(\ul \theta,P_n)$ and $\Delta(\ul \theta;P_n) < \delta$ for $n$ sufficiently large, so by Proposition~\ref{prop:dual}(iv) the set $\ul \Lambda_\delta(\ul \theta; P_n)$ is nonempty and compact for $n$ sufficiently large. It now follows by definition of the objective (\ref{e:dual:1}) that 
\[
 \ul b_\delta(P_n) - \ul b_\delta(P_0) 
  \leq \ul K_\delta(\ul \theta; P_n) - \ul K_\delta(\ul \theta; P_0)  \\
  \leq t_n \times  -\ul \lambda_{12}'h_n \,,
\]
for all $\ul \lambda_{12} \in \ul \Lambda_\delta(\ul \theta; P_n)$. Finally, by Lemma~\ref{lem:mult:unif}(ii) we obtain
\[
 \limsup_{n \to \infty} \frac{\ul b_\delta(P_n) - \ul b_\delta(P_0)}{t_n} \leq \max_{\ul \lambda_{12} \in \ul \Lambda_\delta(\ul \theta;P_0)} -\ul \lambda_{12}'h\,.
\]
Taking the infimum of both sides over $\ul \theta \in \ul \Theta_\delta$ yields
\begin{equation}\label{e:dd-ub}
 \limsup_{n \to \infty} \frac{\ul b_\delta(P_n) - \ul b_\delta(P_0)}{t_n} \leq \inf_{\theta \in \ul \Theta_\delta} \max_{\ul \lambda_{12} \in \ul \Lambda_\delta( \theta;P_0)} -\ul \lambda_{12}'h\,.
\end{equation}

For the lower bound, choose $\theta_n \in \Theta_\delta(P_n)$ with $\ul K_\delta(\theta_n;P_n) \leq \ul b_\delta(P_n) + t_n^2$ for all $n$ sufficiently large. Take a subsequence $\{\theta_{n_l}\}$. By Assumption~\ref{a:m}(v) (taking a further subsequence if necessary), we have $\theta_{n_l} \to \ul \theta \in \Theta$. By similar arguments to step 1, we may in fact deduce that $\ul \theta \in \ul \Theta_\delta$. Reasoning as above, for $l$ sufficiently large we have
\begin{equation*}
\begin{aligned}
 \ul b_\delta(P_{n_l}) - \ul b_\delta(P_0) 
 & \geq \ul K_\delta(\theta_{n_l}; P_{n_l}) - \ul K_\delta(\theta_{n_l}; P_0) - t_{n_l}^2 
 \geq t_{n_l}  \times -\ul \lambda_{12}'h_{n_l} - t_{n_l}^2 \,,
\end{aligned}
\end{equation*}
where the final inequality holds for any $\ul \lambda_{12} \in \ul \Lambda_\delta(\theta_{n_l};P_0)$. By Assumption~\ref{a:m}(vii), we may choose $\ul \lambda_{12,n_l} \in \ul \Lambda_\delta(\theta_{n_l};P_0)$  for which $-\ul \lambda_{12,n_l} ' h \to \max_{\ul \lambda_{12} \in \ul \Lambda_\delta(\ul \theta;P_0)} - \ul \lambda_{12}' h$ as $l \to \infty$. 
Therefore,
\[
 \liminf_{l \to \infty} \frac{\ul b_\delta(P_{n_l}) - \ul b_\delta(P_0) }{t_{n_l}} \geq  \max_{\ul \lambda_{12} \in \ul \Lambda_\delta(\ul \theta;P_0)} - \ul \lambda_{12}' h \geq \inf_{\theta \in \ul \Theta_\delta}  \max_{\ul \lambda_{12} \in \ul \Lambda_\delta( \theta;P_0)} - \ul \lambda_{12}' h \,.
\]
As the lower bound does not depend on the subsequence $\{\theta_{n_l}\}$, we have 
\begin{equation}\label{e:dd-lb}
 \liminf_{n \to \infty} \frac{\ul b_\delta(P_n) - \ul b_\delta(P_0) }{t_n} \geq \inf_{\theta \in \ul \Theta_\delta}  \max_{\ul \lambda_{12} \in \ul \Lambda_\delta( \theta;P_0)} - \ul \lambda_{12}' h \,,
\end{equation}
proving directional differentiability. 
Finally, Assumption~\ref{a:m}(vii) and Lemma~\ref{lem:mult:unif}(ii) imply $\theta \mapsto \ul \Lambda_\delta(\theta;P_0)$ is continuous at each $\theta \in \ul \Theta_\delta$. The set $\ul \Lambda_\delta(\theta;P_0)$ is also compact for each $\theta \in \ul \Theta_\delta$ by Proposition~\ref{prop:dual}(iv). 
It follows by the maximum theorem that the infima in (\ref{e:dd-ub}) and (\ref{e:dd-lb}) can be replaced by minima.

\underline{Step 3:} In view of step 2, the asymptotic distribution follows by Theorem 2.1 of \cite{Shapiro1991} and the fact that $\sqrt n (\hat P - P) \to_d N(0,\Sigma)$.
\end{proof}

\begin{proof}[Proof of Theorem~\ref{t:ci}]
We verify the conditions of Theorem 3.2 of \cite{FangSantos}. Their Assumptions 1 and 2 hold by Theorem~\ref{t:asydist} and because $\sqrt n (\hat P - P_0) \to_d N(0,\Sigma)$ with $\Sigma$ finite, respectively. Their Assumption 3 is assumed directly. Finally, Lemma~\ref{lem:deriv:est} shows that $\wh {d \ul b}_{\delta,P_0}$ and $\wh {d \ol b}_{\delta,P_0}$ satisfy the sufficient conditions for Assumption 4 of \cite{FangSantos}, which is presented in their Remark 3.4. This proves consistency. Coverage of $CS_{\delta,L}^{1-\alpha}$ and $CS_{\delta,U}^{1-\alpha}$ follows by continuity of the distribution functions. Coverage of $CS_\delta^{1-\alpha}$ follows by the Bonferroni inequality.
\end{proof}

\begin{proof}[Proof of Theorem~\ref{t:ci:2}]
We prove the result only for $CS_{\delta}^{1-\alpha}$; the result for the other CSs follow similarly. 
Say that $P_0 \in CS^{1-\alpha}_{P_0}$ if $P_{10} \leq \hat P_{1,U}^{1-\alpha}$ and $P_{20} \in [\hat P_{2,L}^{1-\alpha}, \hat P_{2,U}^{1-\alpha}]$ both hold. By Lemma~\ref{lem:endpoints}, for each $\varepsilon > 0$ we may choose $\ul \theta_\varepsilon, \ol \theta_\varepsilon \in \Theta_\delta(P_0)$ such that $\ul K_\delta(\ul \theta_\varepsilon;P_0) < \ul \kappa_\delta + \varepsilon$ and $\ol K_\delta(\ol \theta_\varepsilon;P_0) > \ol \kappa_\delta - \varepsilon$. Let $\ul F_{\ul \theta_\varepsilon}$ and $\ol F_{\ol \theta_\varepsilon}$ solve problem (\ref{e:p0_dist}) at $(\ul \theta_\varepsilon;P_0)$ and $(\ol \theta_\varepsilon;P_0)$, respectively. Whenever $P_0 \in CS^{1-\alpha}_{P_0}$ holds, $\ul F_{\ul \theta_\varepsilon}$ and $\ol F_{\ol \theta_\varepsilon}$ must also satisfy the ``relaxed'' moment conditions used for computing $\hat{\ul \kappa}_{\delta,1-\alpha}$ and $\hat{\ol \kappa}_{\delta,1-\alpha}$, so it follows that $\Delta_{cs}(\ul \theta_{\varepsilon};\hat P_{1-\alpha})  < \delta$ and $\Delta_{cs}(\ol \theta_{\varepsilon};\hat P_{1-\alpha})  < \delta$. Moreover, as the primal solutions for $\ul K_\delta(\ul \theta_\varepsilon;P_0)$ and $\ol K_\delta(\ol \theta_\varepsilon;P_0)$ are feasible for the relaxed problem whenever $P_0 \in CS^{1-\alpha}_{P_0}$, we have
\[
 \hat{\ul \kappa}_{\delta,1-\alpha} \leq  \ul K_{\delta,cs}(\ul \theta_\varepsilon;\hat P_{1-\alpha}) \leq \ul K_\delta(\ul \theta_\varepsilon; P_0) < \ul \kappa_\delta + \varepsilon \,,
\]
and similarly $\hat{\ol \kappa}_{\delta,1-\alpha} > \ol \kappa_\delta - \varepsilon$. As $\varepsilon$ is arbitrary, we have that $\ul \kappa_\delta \geq \hat{\ul \kappa}_{\delta,1-\alpha}$ and $\ol \kappa_\delta \leq \hat{\ol \kappa}_{\delta,1-\alpha}$ holds whenever $P_0 \in CS^{1-\alpha}_{P_0}$. The desired coverage now follows by (\ref{e:cvg:P}).
\end{proof}

\let\oldbibliography\thebibliography
\renewcommand{\thebibliography}[1]{\oldbibliography{#1}
\setlength{\itemsep}{0pt}}

{
\putbib
}
\end{bibunit}

\begin{bibunit}

\begingroup
\small

\newpage
\clearpage
\pagenumbering{arabic}\renewcommand{\thepage}{\arabic{page}}
\setcounter{equation}{0}
\renewcommand{\theequation}{A.\arabic{equation}}

\begin{center}
{\Large Secondary Online Appendix to ``Counterfactual Sensitivity and Robustness''}

\vskip 24pt

{\large Timothy Christensen \quad \quad Benjamin Connault}

\vskip 8pt

(This secondary online appendix is not intended for publication.)

\end{center}

\vskip 8pt

\section{Background Material on Orlicz Spaces}\label{ax:Orlicz}

In this appendix we briefly review some relevant aspects of the theory of paired Orlicz spaces. We refer the reader to \citeauthor{Kras} (\citeyear{Kras}; KR hereafter) for a textbook treatment. 

Let $L^1(F_*)$ denote the space of (equivalence classes of) all measurable $f : \mc U \to \mb R$ with finite first moment under $F_*$. Define
\[
\begin{aligned} 
 \mc L & = \{ f \in L^1(F_*) :  \E^{F_*}[\phi(1+c|f(U)|)] < \infty \mbox{ for some } c > 0 \} , \\
 \mc E & = \{ f \in L^1(F_*) :  \E^{F_*}[\psi(c|f(U)|)] < \infty \mbox{ for all } c > 0 \} ,
\end{aligned}
\]
where $\psi(x) = \phi^\star(x) - x$ with $\phi^\star$ denoting the convex conjugate of $\phi$. The space $\mc L$ corresponds to the space $L_M^\star$ in KR's notation with $M(x) = \phi(1 + x)$ and $\mc E$ corresponds to $E_N$ in KR's notation with $N(x) = \psi(x)$. The condition $\lim_{x \to \infty} x \phi'(x)/\phi(x) < \infty$ in Assumption~\ref{a:phi}(i) implies $\phi(1 + x)$ satisfies KR's $\Delta_2$-condition (KR, Theorem 4.1). As such,  $\mc L$ and $\mc E$ are separable Banach spaces when equipped with the Orlicz norms 
\[
  \|f\|_\phi = \inf_{c > 0} \frac{1}{c}\left( 1 + \E^{F_*}[\phi(1+c|f(U)|)]\right) , \quad \mbox{and} \quad 
  \|f\|_\psi = \inf_{c > 0} \frac{1}{c}\left( 1 + \E^{F_*}[\psi(c|f(U)|)]\right)  ,
\]
respectively (see KR, Section 10 for a discussion of separability and KR, Theorem 10.5 for the norm). Given $\phi_1,\phi_2$ satisfying Assumption~\ref{a:phi}(i), write $\phi_1 \prec \phi_2$ if there exist positive constants $c$ and $x_0$ such that $\phi_1(x) \leq \phi_2 (cx)$ for all $x \geq x_0$. If $\phi_1 \prec \phi_2$ and $\phi_2 \prec \phi_1$ then $\phi_1$ and $\phi_2$ are said to be \emph{equivalent}. Equivalent $\phi$ functions induce equivalent spaces $\mc L$ and $\mc E$ and equivalent Orlicz norms on these spaces (KR, Section 13). For example, the functions  inducing hybrid and $\chi^2$ divergence are equivalent, and their spaces $\mc L$ and $\mc E$ are equivalent to $L^2(F_*)$, and the Orlicz norms $\|\cdot\|_\phi$ and $\|\cdot\|_\psi$ are equivalent to the $L^2(F_*)$ norm. Similarly, any $\phi$ that is equivalent to $x^p$ ($p > 1$) induces a space $\mc L$ equivalent to $L^p(F_*)$, an Orlicz norm $\|\cdot\|_\phi$ equivalent to the $L^p(F_*)$ norm, a space $\mc E$ equivalent to $L^q(F_*)$ with $\frac{1}{p} + \frac{1}{q} = 1$, and a norm $\|\cdot\|_\psi$ equivalent to the $L^q(F_*)$ norm. As with $L^p$ spaces, there is a version of H\"older's inequality, namely
\begin{equation}\label{eq:holder}
 |\mb E^{F_*}[f(U) g(U)]| \leq \|f\|_\phi \|g\|_\psi,
\end{equation}
which holds for each $f \in \mc L$ and $g \in \mc E$ (KR, Theorem 9.3). 

The spaces $\mc L$ and $\mc E$ are paired spaces under the map $\langle \cdot , \cdot \rangle : \mc L \times \mc E \to \mb R$ given by
\[
 \langle f, g \rangle = \mb E^{F_*}[f(U) g(U)] \,.
\]
A sequence $\{f_n \} \subset \mc L$ is said to be \emph{$\mc E$-weakly convergent} if $\{ \langle f_n , g \rangle \}$ converges for each $g \in \mc E$. The space $\mc L$ is $\mc E$-weakly complete: any $\mc E$-weakly convergent sequence $\{f_n \} \subset \mc L$ has a unique limit $f_0 \in \mc L$ for which
\[
 \lim_{n \to \infty} \langle f_n , g \rangle = \langle f_0, g\rangle 
\]
for each $g \in \mc E$ (KR, Theorem 14.4). The space $\mc L$ is also \emph{$\mc E$-weakly compact}: every $\|\cdot\|_\phi$-norm bounded sequence  in $\mc L$ has an $\mc E$-weakly convergent subsequence (KR, Theorem 14.4). Any $\mc E$-weakly continuous linear functional $\ell$ on $\mc L$ is representable as $\ell(f) = \langle f , g \rangle$ for some $g \in \mc E$ (KR, Theorem 14.7). Similarly, $\{g_n\} \subset \mc E$ is $\mc L$-weakly convergent if $\{\langle f,g_n \rangle \}$ converges for each $f \in \mc L$. As $\phi(1 + x)$ satisfies the $\Delta_2$-condition, every $\mc L$-weakly continuous linear functional $\ell'$ on $\mc E$ is representable in the form $\ell' (g) = \langle f, g \rangle$ for $f \in \mc L$.\footnote{As $\phi(1+x)$ satisfies the $\Delta_2$-condition, in KR's notation $\mc L = L_M^\star = E_M$ with $M(x) = \phi(1+x)$. Therefore, $\mc L$-weak convergence corresponds to what KR calls $E_M$-weak convergence on $\mc E = E_N \subseteq L_N^\star$ with $N(x) = \psi(x)$.} 

We close this section by noting three useful results, the first of which is from \cite{KR}. Let $\mc L_+ = \{f \in \mc L : f\geq 0 \; F_*\mbox{-almost everywhere}\}$.

\begin{lemma}\label{lem:orlicz}
Suppose that Assumption~\ref{a:phi}(i) holds. Then: 
\begin{enumerate}[nosep]
\item[(i)] the functional $m \mapsto \mb E^{F_*}[\phi(m(U))]$ is l.s.c. on $\mc L$ in the $\mc E$-weak topology;
\item[(ii)] $\mb E^{F_*}[\phi(m(U))] \leq \delta$ implies $\|m\|_\phi \leq 2 + \phi(2) + \delta$;
\item[(iii)]  $\mb E^{F_*}[\phi(m(U))] < \infty$ if and only if $m \in \mc L_+$.
\end{enumerate}
\end{lemma}

\begin{proof}[Proof of Lemma~\ref{lem:orlicz}]
Part (i) is stated on p. 961 of \cite{KR}. Part (ii) follows by taking $c = \frac{1}{2}$ in the definition of $\|\cdot\|_\phi$. For part (iii), it suffices by part (ii) and the fact that $\phi(x) = +\infty$ for $x < 0$ to show $\mb E^{F_*}[\phi(m(U))] < \infty$ for all  $m \in \mc L_+$. As $\phi$ satisfies the $\Delta_2$-condition under Assumption~\ref{a:phi}(i), $m \in \mc L$ implies $\mb E^{F_*}[\phi(1+c|m(U)|)] < \infty$ for all $c > 0$. As $\mc L$ contains constant functions and is closed under addition, for any $m \in \mc L_+$ we have
\[
 \infty > \mb E^{F_*}[\phi(1+|m(U)-1|)] = \mb E^{F_*}[ \phi(m(U))\ind\{m(U) \geq 1\}] + \mb E^{F_*}[ \phi(2-m(U))\ind\{m(U) \leq 1\}]
\]
which, by non-negativity of $\phi$, implies that $\mb E^{F_*}[ \phi(m(U))\ind\{m(U) \geq 1\}]$ is finite. Finiteness of the remaining term $\mb E^{F_*}[ \phi(m(U))\ind\{m(U) \leq 1\}]$ follows because $\max_{x \in [0,1]} \phi(x) = \phi(0) < \infty$ under Assumption~\ref{a:phi}(i).
\end{proof}

\section{Supplementary Results and Proofs}

\subsection{Notation}

Throughout this Appendix, we let $\ul K_\delta(\theta;\gamma,P)$ and $\ol K_\delta(\theta;\gamma,P)$ denote the criterion functions (\ref{e:crit_l}) and (\ref{e:crit_u}). We use the notation $\ul K_\delta^\star(\theta;\gamma,P)$ and $\ol K_\delta^\star (\theta;\gamma,P)$ to denote their dual forms defined below in (\ref{eq:dual_l_proof}) and (\ref{eq:dual_u_proof}). Similarly, we let $\Delta(\theta;\gamma,P)$ to denote the primal form of the minimum divergence problem (\ref{e:p0_dist}) and $\Delta^\star(\theta;\gamma,P)$ to denote its dual form in (\ref{e:qual:dual}).

For $x \in \mb R^{n}$ and $A,B \subset \mb R^{n}$ we let $d(x,A) = \inf_{a \in A} \| x - a\|$ and $\vec d_H(A,B) = \sup_{a \in A} d(a,B)$. Let $B_\varepsilon$ denote a Euclidean ball centered at the origin with radius $\varepsilon$, where the dimension should be obvious from the context. Let $T \subseteq \mb R^n$ be a nonempty, closed convex cone with nonempty interior. Let $\partial A = \mr{cl}(A) \setminus \mr{int}(A)$ denote the boundary of $A \subset T$ (relative to $\mb R^n$) and $\partial_T A = \mr{cl}( \partial A \cap \mr{int}(T))$ denote the boundary of $A$ relative to $T$. For example, if $T = \mb R_+ \times \mb R$, and $A = \{(x,y) \in T : x^2 + y^2 \leq 1 \}$, then $\partial A = \{(x,y) \in T : x^2 + y^2 = 1 \} \cup \{0\} \times [-1,1]$ and $\partial_T A = \{(x,y) \in T : x^2 + y^2 = 1 \}$.

\subsection{Preliminary Results on the Dual Form of the Criterion Functions}\label{sec:dual_K}

In this section we will show that the dual problems of  (\ref{e:crit_l}) and (\ref{e:crit_u}) are
\begin{align}
 \ul K_\delta^\star(\theta;\gamma,P) & = \sup_{\eta \geq 0, \zeta \in \mb R, \lambda \in \Lambda} - \E^{F_*}\!\!\left[ (\eta \phi)^\star(- k(U,\theta,\gamma) - \zeta - \lambda' g(U,\theta,\gamma) ) \right] - \eta \delta - \zeta - \lambda_{12}'P , \label{eq:dual_l_proof} \\
 \ol K_\delta^\star(\theta;\gamma,P) & = \inf_{\eta \geq 0, \zeta \in \mb R, \lambda \in \Lambda}  \!\! \E^{F_*}\left[ (\eta \phi)^\star( k(U,\theta,\gamma) - \zeta- \lambda' g(U,\theta,\gamma) ) \right] + \eta \delta + \zeta + \lambda_{12}'P , \label{eq:dual_u_proof}
\end{align}
where $(\eta \phi)^\star$ is the convex conjugate of $x \mapsto \eta \cdot \phi(x)$. In particular, $(\eta \phi)^\star(x) = \eta \phi^\star(x/\eta)$ when $\eta > 0$.
Let $\ul \Xi_\delta(\theta;\gamma,P)$ and $\ol \Xi_\delta(\theta; \gamma,P)$ denote the (possibly empty) set of solutions to the dual problems (\ref{eq:dual_l_proof}) and (\ref{eq:dual_u_proof}), respectively. The main result we prove in this subsection is the following:

\begin{proposition}\label{prop:dual}
Suppose that Assumption~\ref{a:phi} holds. Then: 
\begin{enumerate}[nosep]
\item[(i)] $\ul K_\delta(\theta;\gamma,P) = \ul K_\delta^\star(\theta;\gamma,P)$ and $\ol K_\delta(\theta;\gamma,P) = \ol K_\delta^\star(\theta;\gamma,P)$ (i.e., strong duality holds);
\item[(ii)] Optimizing over $(\eta,\zeta,\lambda) \in (0,\infty) \times \mb R \times \Lambda$ yields the same value for $\ul K_\delta^\star(\theta;\gamma,P)$ and $\ol K_\delta^\star(\theta;\gamma,P)$ as optimizing over $(\eta,\zeta,\lambda)  \in \mb R_+ \times \mb R \times \Lambda$;
\item[(iii)] If Condition S holds at $(\theta,\gamma,P)$ and there is $F$ with $D_\phi(F\|F_*) < \delta$ such that (\ref{e:mod}) holds under $F$ at $(\theta,\gamma,P)$, then $\ul \Xi_\delta(\theta;\gamma,P)$ and $\ol \Xi_\delta(\theta; \gamma,P)$ are nonempty and convex;
\item[(iv)] If Condition S is replaced by Condition S' in (iii), then $\ul \Xi_\delta(\theta;\gamma,P)$ and $\ol \Xi_\delta(\theta; \gamma,P)$ are also compact.
\end{enumerate}
\end{proposition}

We first present some preliminary results used to derive the dual problems and verify the constraint qualification conditions. 
We derive the dual of (\ref{e:crit_l}); the derivation of the dual of (\ref{e:crit_u}) follows similarly, replacing $k$ with  $-k$. Fix any $\theta \in \Theta$ and $\gamma \in \Gamma$. We drop dependence of $k(u,\theta,\gamma)$ and $g(u,\theta,\gamma)$ on $(\theta,\gamma)$ to simplify notation. 

Consider the primal problem 
\begin{equation} \label{e:p0}
 \min_F \mb E^F[k(U)] \quad \mbox{subject to} \quad D_\phi(F\|F_*) \leq \delta \,,\, \mb E^F[g_1(U)] \leq P_1 \,,\, \ldots, \, \mb E^F[g_4(U)] = 0\,.
\end{equation}
The \emph{value} of problem (\ref{e:p0}) is obtained by replacing the $\min$ with an $\inf$ the above display. The value is $+\infty$ if (\ref{e:p0}) has no solution. The criterion function $\ul K_\delta(\theta;\gamma,P)$ in (\ref{e:crit_l}) is the value of problem (\ref{e:p0}). 

We apply duality theory as exposited in \citeauthor{BS} (\citeyear{BS}, Chapter 2.5). Identify each $F \in \mc N_\infty$ with $m = {\mr d F}/{\mr d F_*} \in \mc L$ (see Appendix~\ref{ax:Orlicz}). Pair $\mc L$ with $\mc E$ under $\langle \cdot, \cdot \rangle$, as described in Appendix~\ref{ax:Orlicz}.
Define $\varphi : \mc L \times \mb R^{d+2} \to \mb R \cup \{+\infty\}$ by
\begin{align*}
 \varphi(m,y) = \langle m, k \rangle + \mathbb I_{C}\left(Q_\phi(m) - \delta + y_1 , \langle m, 1 \rangle - 1 + y_2, \langle m, g \rangle - \vec P + y_3 \right) ,
\end{align*}
where $y= (y_1,y_2,y_3) \in \mb R \times \mb R \times \mb R^d$, and where $\vec P = (P,0_{d_3 + d_4})$, $ Q_\phi(m) = \mb E^{F_*}[\phi(m(U))] $, $\langle m, 1 \rangle  = \mb E^{F_*}[m(U)]$, $\langle m, k \rangle  = \mb E^{F_*}[m(U)k(U)]$, $\langle m, g \rangle = \mb E^{F_*}[m(U)g(U)]$,
and $\mb I_{C} : \mb R^{d+2} \to \mb R \cup \{+ \infty\}$ is given by
\[
 \mb I_{C}(a_1,a_2,a_3) = \left[ \begin{array}{cl} 0 & \mbox{if $a_1 \leq 0$, $a_2 = 0$, and $a_3 \in \mb R_{-}^{d_1} \times \{0_{d_2}\} \times \mb R_{-}^{d_3} \times \{0_{d_4}\}$}, \\
 +\infty & \mbox{otherwise},
 \end{array} \right.
\]
with $\mb R_- = (-\infty, 0]$. The \emph{primal} problem for $y \in \mb R^{d+2}$ is
\begin{equation} \label{e:py}
 \min_{m \in \mc L} \varphi(m,y) \tag{$\mr P_y$}
\end{equation}
and its value is
\[ 
 v(y) = \inf_{m \in \mc L} \varphi(m,y)\,.
\]
In particular, $v(0) = \ul K_\delta(\theta;\gamma,P)$.

We first establish some facts about $\varphi$ and $v$. 
A convex function $f : X \to \mb R \cup \{+\infty\}$ is \emph{proper} if  $f(x) > -\infty$ for all $x \in X$ and $f(x) < +\infty$ for some $x \in X$.

\begin{lemma}\label{lem:varphi_properties}
Suppose that Assumption~\ref{a:phi} holds. Then $\varphi$ is proper and convex.
\end{lemma}

\begin{proof}[Proof of Lemma~\ref{lem:varphi_properties}]
First note that $|\langle m, k \rangle | < + \infty$ for any $m \in \mc L$ by H\"older's inequality (see (\ref{eq:holder})) and Assumption~\ref{a:phi}(ii). It follows that $\varphi(m,y) > -\infty$ for all $m \in \mc L$ and $y \in \mb R^{d+2}$. Take any $m \in \mc L_+$. Then $Q_\phi(m) < +\infty$ by Lemma~\ref{lem:orlicz}(iii). Setting $y_1 = \delta - Q_\phi(m)$, $y_2 = 1- \langle m, 1\rangle$ and $y_3 = \vec P - \langle m, g\rangle$ ensures $\mathbb I_{C}(Q_\phi(m) - \delta + y_1 , \langle m, 1 \rangle - 1 + y_2, \langle m, g \rangle - \vec P + y_3 )  = 0$, hence $\varphi(m,y) < +\infty$. Therefore, $\varphi$ is proper.
Convexity of $\varphi$ follows from convexity of $m \mapsto Q_\phi(m)$ and convexity of $\mr{dom}\,\mb I_{C} \equiv \mb R_- \times \{0\} \times \mb R_{-}^{d_1} \times \{0_{d_2}\} \times \mb R_{-}^{d_3} \times \{0_{d_4}\}$.
\end{proof}

Recall $\mc C = \mb R_+^{d_1} \times \{0_{d_2}\} \times \mb R_+^{d_3} \times \{0_{d_4}\}$. For $A,B \subset \mb R^n$, let $A - B = \{a - b: a \in A, b \in B\}$. Let
\[
 \mc Y = \left\{   \left(  \delta - Q_\phi(m)  , 1 - \langle m,1 \rangle , \vec P - \langle m, g \rangle  \right) : m \in \mc L_+  \right\} - (\mb R_+ \times \{0\} \times  \mc C) \,.
\]
The \emph{effective domain} of $f$ is $\mr{dom}\,f = \{ x \in X : f(x) < +\infty\}$.

\begin{lemma}\label{lem:v_properties}
Suppose that Assumption~\ref{a:phi} holds. Then:
\begin{enumerate}[nosep]
\item[(i)] $v$ is proper, convex, and l.s.c. on $\mb R^{d+2}$ with $\mr{dom}\,v = \mc Y$;
\item[(ii)] A solution to the primal problem $($\ref{e:py}$)$ exists for each $y \in \mc Y$.
\end{enumerate}
\end{lemma}

\begin{proof}[Proof of Lemma~\ref{lem:v_properties}]
Convexity of $v$ follows from Lemma~\ref{lem:varphi_properties} and \citeauthor{BS} (\citeyear{BS}, Proposition 2.143).
Note that $\mc Y$ is the set of all $y = (y_1,y_2,y_3) \in \mb R \times \mb R \times \mb R^d$ for which there exists $m_y \in \mc L$ such that 
\[
 \mathbb I_{C}\left(Q_\phi(m_y) - \delta + y_1 , \langle m_y, 1 \rangle - 1 + y_2, \langle m_y, g \rangle - \vec P + y_3 \right) = 0\,.
\]
Take $y \in \mc Y$ and any such $m_y \in \mc L$. Then $\varphi(m_y,y) = \langle m_y, k \rangle < +\infty$ by the proof of Lemma~\ref{lem:varphi_properties}, and so $v(y) < +\infty$. Conversely, if $y \not \in \mc Y$ then $\varphi(m,y) = +\infty$ for all $m \in \mc L$ and so $v(y) = +\infty$. Therefore, $\mr{dom}\, v = \mc Y$.

To see $v$ is proper, take any $y \in \mr{dom}\,v$ and let $y_1$ denote its first element. Then 
\[
 |v(y)| \leq \sup \{ |\langle k,m \rangle| : m \in \mc L, Q_\phi(m) \leq \delta - y_1\} \leq \|k\|_\psi (2 + \phi(2) + \delta - y_1) < \infty \,,
\]
where the first inequality is by definition of $v(y)$ and the second is by inequality (\ref{eq:holder}) and Lemma~\ref{lem:orlicz}(ii).

Before proving l.s.c. we first prove assertion (ii). Take any $y \in \mc Y$. Choose $\{m_n\} \subset \mc L$ such that $\varphi(m_n,y) \downarrow v(y)$ as $n \to \infty$. As $Q_\phi(m_n) \leq \delta - y_1$ holds for each $n$, $\{m_n\}$ is $\|\cdot\|_\phi$-norm bounded (by Lemma~\ref{lem:orlicz}(ii)) and therefore has a $\mc E$-weakly convergent subsequence $\{m_{n_l}\}$ (see Appendix~\ref{ax:Orlicz}). Let $m_0 \in \mc L$ denote the $\mc E$-weak limit. Under Assumption~\ref{a:phi}, we have both  $\lim_{l \to \infty} \langle m_{n_l}, 1 \rangle = \langle m_0 , 1 \rangle$ and $\lim_{l \to \infty} \langle m_{n_l}, g \rangle = \langle m_0 , g \rangle$ by $\mc E$-weak convergence. We also have $\delta - y_1 \geq \liminf_{l \to \infty} Q(m_{n_l}) \geq Q(m_0)$ by Lemma~\ref{lem:orlicz}(i). It follows by closedness of $\mr{dom}\, \mb I_C$ that 
\begin{equation}\label{eq:indic_C}
 \mathbb I_{C}\left(Q_\phi(m_0) - \delta + y_1 , \langle m_0, 1 \rangle - 1 + y_2, \langle m_0, g \rangle - \vec P + y_3 \right) = 0 \,.
\end{equation}
Moreover, by $\mc E$-weak convergence we also have that $v(y) = \lim_{l \to \infty} \langle m_{n_l}, k \rangle = \langle m_0 , k \rangle$. Therefore, $m_0$ solves the primal problem (\ref{e:py}).

To prove l.s.c., we first show that $\mc Y$ is closed. Take $y \in \mr{cl}(\mc Y)$. Choose any sequence $\{y_n\} \subset \mc Y$ converging to $y$. Note $\{v(y_n)\}$ is bounded by the argument used above to establish properness. Choose a subsequence $\{n_l\}$ for which  $\lim_{l \to \infty} v(y_{n_l}) = \liminf_{n \to \infty} v(y_n)$. By part (ii), there exists a solution $m_{n_l}$ to the primal problem for each $y_{n_l}$. The sequence $\{m_{n_l}\}$ is $\|\cdot\|_\phi$-norm bounded (by Lemma~\ref{lem:orlicz}(ii)) and hence, taking a further subsequence if necessary, has an $\mc E$-weak limit $m_0 \in \mc L$. By similar arguments to the above, we may deduce that (\ref{eq:indic_C}) holds for this $m_0$  at $y$. Therefore, $y \in \mc Y$ (establishing closedness) and $v(y) < \infty$. 

To complete the proof of l.s.c., take any $y \in \mc Y$. Choose any sequence $\{y_n\} \subset \mc Y$ converging to $y$ (it is without loss of generality to consider sequences of elements of $\mc Y$, since $v(\tilde y) = +\infty$ for $\tilde y \not \in \mc Y$). By the argument just used to establish closedness, we may extract a subsequence $\{n_l\}$ for which  $\lim_{l \to \infty} v(y_{n_l}) = \liminf_{n \to \infty} v(y_n)$. Let $m_{n_l}$ solve the primal problem along this subsequence. Taking a further subsequence if necessary, $\{m_{n_l}\}$ has a $\mc E$-weak limit $m_0 \in \mc L$. By similar arguments to the above, we may deduce that (\ref{eq:indic_C}) holds for this $m_0$ at $y$. Therefore, by $\mc E$-weak convergence we have
\[
 \liminf_{n \to \infty} v(y_n) = \lim_{l \to \infty} v(y_{n_l}) = \lim_{l \to \infty} \langle m_{n_l}, k \rangle = \langle m_0 , k \rangle \geq v(y),
\]
establishing l.s.c. of $v$ at any $y \in \mr{cl}(\mc Y) \equiv \mc Y$. Finally, l.s.c. of $v$ on $\mb R^{d+2}$ now follows from l.s.c. of $v$ at any $y \in \mc Y$ and the fact that $\mc Y \equiv \mr{dom}\,v $ is closed.
\end{proof}

The \emph{dual} problem of (\ref{e:py}) is \cite[p. 96]{BS}
\begin{equation} \label{e:dy}
 \max_{y^\star \in \mb R^{d+2}} y' y^\star - \varphi^\star(0, y^\star) , \tag{$\mr D_y$}
\end{equation}
 where $y^\star = (y_1^\star,y_2^\star,y_3^\star) \in \mb R \times \mb R \times \mb R^d$ and  $\varphi^\star : \mc E \times \mb R^{d+2} \to \mb R \cup \{+\infty\}$ is the conjugate of $\varphi$:
\[
 \varphi^\star(m^\star,y^\star) = \sup_{(m,y) \in \mc L \times \mb R^{d+2}} \left( \langle m, m^\star \rangle + y' y^\star - \varphi(m, y) \right) .
\]
By direct calculation, 
\begin{align*}
 \varphi^\star(0,y^\star) & = \sup_{(m,y) \in \mc L \times \mb R^{d+2}} \left(  y' y^\star - \langle m, k \rangle - \mathbb I_{C}\left(Q_\phi(m) - \delta + y_1 , \langle m, 1 \rangle - 1 + y_2, \langle m, g \rangle - \vec P + y_3 \right) \right) \\
 & = \sup_{m \in \mc L } 
 \left( - y_1^\star ( Q_\phi(m) - \delta ) - y_2^\star ( \langle m,1 \rangle - 1 ) - y_3^{\star\prime} \left( \langle m, g \rangle - \vec P \right) - \langle m, k \rangle \right)  + \mathbb I_{C^o}(y^\star) 
\end{align*}
where $C^o = \mb R_+ \times \mb R \times \Lambda$ is the polar cone of $C$ and $\mb I_{C^o}(y^\star) = 0$ if $y^\star \in C^o$ and $+\infty$ otherwise.
Write any $y^\star \in C^o$ as $y^\star = (\eta,\zeta,\lambda) \in \mb R_+ \times \mb R \times \Lambda$. We then have
\begin{align}
 \varphi^\star(0,(\eta,\zeta,\lambda)) 
 & = \sup_{m \in \mc L } \left( - \eta  Q_\phi(m)  - \zeta \langle m,1 \rangle - \lambda' \langle m, g \rangle  - \langle m, k \rangle \right) + \eta \delta + \zeta + \lambda'\vec P \notag \\
 & = \sup_{m \in \mc L } \mb E^{F_*} \left[ m(U) (-k(U) - \zeta - \lambda' g(U)) - \eta \phi(m(U))  \right]  + \eta \delta + \zeta + \lambda'\vec P \label{e:dual_alternative_form}\,.
\end{align}
As $\mc L$ is decomposable \cite[Definition 14.59 and Theorem 14.60]{RW}, we may bring the supremum inside the expectation and optimize pointwise to obtain 
\[
 \varphi^\star(0,(\eta,\zeta,\lambda)) = \E^{F_*}\Big[ (\eta \phi)^\star(- k(U) - \zeta - \lambda' g(U)) \Big] + \eta \delta + \zeta + \lambda' \vec P\,,  \quad (\eta,\zeta,\lambda) \in C^o \,,
\]
where 
\[ 
 (\eta \phi)^\star(x) = \sup_{t \geq 0 : \eta \phi(t) < +\infty} (tx - \eta \phi(t)) = \left[ \begin{array}{ll}
  \eta \phi^\star(x/\eta) & \mbox{if $\eta > 0$}, \\
  0 & \mbox{if $\eta = 0$ and $x \leq 0$}, \\
  +\infty & \mbox{if $\eta = 0$ and $x > 0$}. \end{array} \right.
\]
As $\langle y, y^\star \rangle - \varphi^\star(0, y^\star) = -\infty$ whenever $y^\star \not \in C^o$, problem (\ref{e:dy}) can therefore be expressed as
\[
 \max_{ \eta \geq 0, \zeta \in \mb R, \lambda \in \Lambda} \eta y_1 + \zeta y_2 + \lambda' y_3 - \E^{F_*}\Big[ (\eta \phi)^\star(- k(U) - \zeta - \lambda' g(U)) \Big] - \eta \delta - \zeta - \lambda'\vec P \,.
\]
The \emph{value} of this dual problem is obtained by replacing the $\max$ with $\sup$ and corresponds to the biconjugate $v^{\star \star}(y)$ of $v(y)$. Hence, $v(0) = \ul K_\delta(\theta;\gamma,P)$ and $v^{\star \star}(0) = \ul K_\delta^\star(\theta;\gamma,P)$.

\begin{lemma}\label{lem:dual_0}
Suppose that Assumption~\ref{a:phi} holds. Then: 
\begin{enumerate}[nosep]
\item[(i)] $\ul K_\delta(\theta;\gamma,P) = \ul K_\delta^\star(\theta;\gamma,P)$; 
\item[(ii)] If $0 \in \mr{ri}(\mc Y)$, then the set of dual solutions $\ul \Xi_\delta(\theta;\gamma,P)$ is nonempty and convex; 
\item[(iii)] If $0 \in \mr{int}(\mc Y)$, then $\ul \Xi_\delta(\theta;\gamma,P)$ is also compact.
\end{enumerate}
\end{lemma}

\begin{proof}[Proof of Lemma~\ref{lem:dual_0}]
For part (i), we need to show that $v(0) = v^{\star \star}(0)$. As $\varphi$ is convex by Lemma~\ref{lem:varphi_properties} and $v$ is proper by Lemma~\ref{lem:v_properties}, it follows by \citeauthor{BS} (\citeyear{BS}, Theorem 2.144) that $v^{\star \star}(0) = \min\{v(0), \liminf_{y \to 0} v(y)\}$. But $\liminf_{y \to 0} v(y) \geq v(0)$ by Lemma~\ref{lem:v_properties}, and so $v(0) = v^{\star \star}(0)$, as required.

For part (ii), non-emptiness of the set of dual solutions follows by \citeauthor{BS} (\citeyear{BS}, Propositions 2.147 and 2.148(iii)), noting that $v$ is convex by Lemma~\ref{lem:v_properties} and $v(0)$ is finite because $v$ is proper by Lemma~\ref{lem:v_properties} and $0 \in \mr{dom}\,v \equiv \mc Y$ by assumption.
Convexity of the set of dual solutions follows by noting that, in view of (\ref{e:dy}) and (\ref{e:dual_alternative_form}), the dual objective is the pointwise infimum of affine functions of $(\eta,\zeta,\lambda)$, and is therefore concave and u.s.c.
Finally, part (iii) follows from \citeauthor{BS} (\citeyear{BS}, Theorem 2.151 and Proposition 2.152).
\end{proof}

Recall Condition S and the set $\mc C$ from Section~\ref{s:sharp} and Condition S' from Section~\ref{s:asymptotics}. Define
\begin{align}
 \mc Y_1 & = \left\{  \vec P - \langle m, g \rangle  : m \in \mc L_+, \langle m, 1 \rangle = 1 \right\} - \mc C , \notag \\ 
 \mc Y_2 & = \left\{ \left(   1 - \langle m,1 \rangle , \vec P - \langle m, g \rangle \right) : m \in \mc L_+ \right\} - (\{0\} \times \mc C) . \label{e:y2}
\end{align}
Let $0$ denote a vector of zeros whose dimension is determined by the context.

\begin{lemma}\label{lem:cq}
Suppose that Assumption~\ref{a:phi} holds and Condition S holds at $(\theta,\gamma,P)$. Then: 
\begin{enumerate}[nosep]
\item[(i)] $0 \in \mr{ri} (\mc Y_1)$; 
\item[(ii)] $0 \in \mr{ri} (\mc Y_2)$; 
\item[(iii)] if there exists $F$ with $D_\phi(F\|F_*) < \delta$ s.t. the conditions in (\ref{e:p0}) hold at $\theta$, then $0 \in \mr{ri} (\mc Y)$.
\end{enumerate}
Moreover, if Condition S' holds at $(\theta,\gamma,P)$ then ``relative interior'' can be replaced with ``interior'' in parts (i)--(iii).
\end{lemma}

\begin{proof}[Proof of Lemma~\ref{lem:cq}]
In view of Lemma~\ref{lem:orlicz}(iii), identify $F \in \mc N_\infty$ with its Radon--Nikodym derivative $m = \mr d F/\mr d F_* \in \mc L$. Let $\mc V_1 = \{ \langle m, g \rangle : m \in \mc L_+\}$. Then
\[
 \vec P \in \mr{ri}(\{ \E^{F} [ g(U) ] :  F \in \mc N_\infty \}  + \mc C) \iff \vec P \in \mr{ri}(\mc V_1  + \mc C)  \iff 0 \in \mr{ri}(\mc Y_1)\,,
\]
proving part (i).

For part (ii), note that $0 \in \mr{ri}(\mc Y_2)$ is equivalent to $(1,\vec P) \in \mr{ri}(\mc V_2)$, where 
\[
 \mc V_2 =  \left\{  \left(  \langle m,1 \rangle \,, \langle m, g \rangle  \right) : m \in \mc L_+ \right\} +  (\{0\} \times \mc C) = \mr{cone}\left(\{1\} \times (\mc V_1 + \mc C) \right),
\]
with $\mr{cone}(A) = \{t a : a \in A, t \geq 0\}$. By \citeauthor{Rockafellar} (\citeyear{Rockafellar}, Corollary 6.8.1), we have $\mr{ri}( \mc V_2) \supset \{1\} \times \mr{ri}(\mc V_1 + \mc C)$. The result follows because $\vec P \in \mr{ri}(\mc V_1 + \mc C)$ under Condition S. 

For part (iii), note that $0 \in \mr{ri}(\mc Y)$ is equivalent to $(\delta,1,\vec P) \in \mr{ri}(\mc V_3)$, where 
\[
 \mc V_3 =  \left\{ \footnotesize \left( \begin{array}{c} Q_\phi(m) \\ \langle m,1 \rangle \\ \langle m, g \rangle \end{array} \right) : m \in \mc L_+ \right\} + (\mb R_+ \times \{0\} \times \mc C) \,.
\]
It suffices to show that for every $v \in \mc V_3$ there exists $t > 1$ such that $t (\delta,1,\vec P) + (1-t) v \in \mc V_3$ \cite[Theorem 6.4]{Rockafellar}. Take any $v \in \mc V_3$. Write $v = (v_1,v_2) \in \mb R_+ \times \mc V_2$. By part (ii) and  \citeauthor{Rockafellar} (\citeyear{Rockafellar}, Theorem 6.4), there exists $s > 1$, $ m_v \in \mc L_+$, and $ c_3 \in \mc C$ such that 
\begin{equation} \label{e:cq0}
 s \left( \footnotesize \begin{array}{c} 1 \\ \vec P \end{array} \right) 
 + (1-s) v_2
 = \left( \footnotesize \begin{array}{c} \langle  m_v, 1 \rangle \\ \langle  m_v, g \rangle \end{array} \right) + \left( \footnotesize \begin{array}{c} 0 \\  c_3 \end{array} \right) \,.
\end{equation}
By assumption, there exists $F \in \mc N_\delta$ with $D_\phi(F\|F_*) < \delta$ such that the moment conditions in (\ref{e:p0}) hold at $\theta$. Let $\tilde m$ denote the Radon--Nikodym derivative of such an $F$. Then for any $\tau \in (0,1)$, setting $m_\tau = \tau  m_v + (1-\tau) \tilde m$, we have $\langle m_\tau, 1 \rangle = \tau \langle  m_v,1 \rangle + (1-\tau)$ and $\langle m_\tau, g \rangle = \tau \langle  m_v,g \rangle + (1-\tau) (\vec P - \tilde c)$ for some $\tilde c \in \mc C$. But then
\begin{equation} \label{e:cq1}
 \left( \footnotesize \begin{array}{c} \langle  m_v, 1 \rangle \\ \langle  m_v, g \rangle \end{array} \right)
 = \frac{1}{\tau} \left( \footnotesize \begin{array}{c} \langle  m_\tau, 1 \rangle \\ \langle m_\tau, g \rangle \end{array} \right) - \frac{1-\tau}{\tau} \left( \footnotesize \begin{array}{c} 1 \\ \vec P - \tilde c \end{array} \right) \,.
\end{equation}
Substituting (\ref{e:cq1}) into (\ref{e:cq0}) yields
\[
 (1 + \tau (s -1)) \left( \footnotesize \begin{array}{c} 1 \\ \vec P \end{array} \right) 
 - \tau(s - 1) v_2 
 = \left( \footnotesize \begin{array}{c} \langle m_\tau, 1 \rangle \\ \langle m_\tau, g \rangle \end{array} \right) + \left( \footnotesize \begin{array}{c} 0 \\ \tau  c_3 + (1-\tau) \tilde c \end{array} \right) \,.
\]
Note that $Q_\phi(m_\tau)$ can be made arbitrarily close to $Q_\phi(\tilde m) < \delta$ by choosing $\tau$ arbitrarily small. Setting $t = 1 + \tau (s -1)$ with $\tau$ sufficiently small that $t \delta + (1-t) v_1 \geq Q_\phi(m_\tau)$, we may write
\[
 t \left( \footnotesize \begin{array}{c} \delta \\ 1 \\ \vec P \end{array} \right) 
 + (1-t)v 
 = \left( \footnotesize \begin{array}{c} Q_\phi(m_\tau) \\ \langle m_\tau, 1 \rangle \\ \langle m_\tau, g \rangle \end{array} \right) + \left( \footnotesize \begin{array}{c}  c_1 \\ 0 \\ \tau  c_3 +  (1-\tau) \tilde c \end{array} \right) 
\]
for some $ c_1 \geq 0$. As the right-hand side belongs to $\mc V_3$, this completes the proof of part (iii).

Now suppose Condition S' holds. Part (i) holds with ``interior'' by definition of Condition S'. 
For part (ii) with ``interior'', it suffices to show that $\mc Y_2$ has positive volume, in which case its relative interior and interior coincide and the result follows by part (ii) above. A sufficient condition  is that the functions in $g$ and a function that is constant $F_*$-a.e. are not collinear $F_*$-a.e. We prove this by contradiction. Suppose Condition S' holds but that there exists $0 \neq \lambda \in \mb R^d$ and $\zeta \in \mb R$ such that $\lambda'(g(u) - \vec P) = \zeta$ $F_*$-a.e. Then by Condition S', we have $\{ \mb E^F[g(U)] - \vec P : D_\phi(F\|F_*) < \infty\}$ contains a $\varepsilon$-ball with center $c_0$ for some $c_0 \in \mc C$ and $\varepsilon > 0$. Then for any unit vector $x$, we have $\zeta = \lambda' c_0 + \varepsilon \lambda' x$, a contradiction. Thus, part (ii) must hold with ``interior'' when Condition S' holds. 
For part (iii), note that $\mc Y \supseteq (-\infty,\delta] \times \mc Y_2$. Therefore, $\mc Y$ has positive volume as $\mc Y_2$ as positive volume, so its relative interior and interior coincide and part (iii) with ``interior'' follows similarly.
\end{proof}

\begin{proof}[Proof of Proposition~\ref{prop:dual}]
We prove the result for $\ul K_\delta^\star$ and $\ul \Xi_\delta$; the result for  $\ol K_\delta^\star$ and $\ol \Xi_\delta$ follows similarly.

Part (i) follows by Lemma~\ref{lem:dual_0}(i).
For part (ii), in view of (\ref{e:dy}) and (\ref{e:dual_alternative_form}), the dual objective function, say $\ell(\eta,\zeta,\lambda)$, is the pointwise infimum of affine functions of $(\eta,\zeta,\lambda)$, and is therefore concave and u.s.c. If $\ell(0,\zeta,\lambda) = -\infty$ for all $\zeta \in \mb R$ and $\lambda \in \Lambda$, then restricting $(\eta,\zeta,\lambda)$ to $(0,\infty) \times \mb R \times \Lambda$ will not affect the dual value. Now suppose  $\ell(0,\zeta,\lambda) > -\infty$ for some  $(\zeta,\lambda) \in \mb R \times \Lambda$. By u.s.c. of $\ell(\cdot,\zeta,\lambda)$, we have $\ell(0,\zeta,\lambda) \geq \lim_{\eta \downarrow 0} \ell(\eta, \zeta,\lambda)$. To prove the reverse inequality, note by concavity of $\ell$ that for any $\tau \in [0,1]$ and $\bar \eta > 0$, we have
\[
 +\infty > \ell( \tau \bar \eta , \zeta, \lambda) \geq (1-\tau) \ell(0, \zeta,\lambda) + \tau \ell(\bar \eta, \zeta,\lambda) > -\infty \,,
\]
 because $|\ell(\eta,\zeta,\lambda) | < + \infty$ on $(0,\infty) \times \mb R \times \Lambda$. Therefore $\lim_{\eta_\downarrow 0} \ell(\eta,\zeta,\lambda) \geq \ell(0,\zeta,\lambda)$.

For part (iii), if Condition S holds and there exists $F$ with $D_\phi(F\|F_*) < \delta$ under which (\ref{e:mod}) holds, then $0 \in \mr{ri}(\mc Y)$ by Lemma~\ref{lem:cq}. Existence and convexity of the set of dual solutions follows by Lemma~\ref{lem:dual_0}(ii).
Finally, compactness of $\ul \Xi_\delta(\theta;\gamma,P)$ under Condition S' follows similarly by Lemma~\ref{lem:dual_0}(iii) and Lemma~\ref{lem:cq}, proving part (iv).
\end{proof}

\subsection{Additional Details on the Minimum-Divergence Problem} \label{ax:Delta}

Recall that $\Delta(\theta;\gamma,P)$ denotes the value of the primal form of the minimum divergence problem (\ref{e:p0_dist}) and $\Delta^\star(\theta;\gamma,P)$ denotes its dual value in (\ref{e:qual:dual}). The main result we prove in this subsection is the following:

\begin{proposition}\label{prop:dual_dist}
Suppose that Assumption~\ref{a:phi} holds. Then: 
\begin{enumerate}[nosep] 
\item[(i)] $\Delta(\theta;\gamma,P) = \Delta^\star(\theta;\gamma,P)$ (i.e., strong duality holds);
\item[(ii)] If Condition S holds at $(\theta,\gamma,P)$, then the set of dual solutions is nonempty and convex;
\item[(iii)] If Condition S' holds at $(\theta,\gamma,P)$, then the set of dual solutions is compact.
\end{enumerate}
\end{proposition}

We first present some preliminary results. We again drop dependence of $k$ and $g$ on $(\theta,\gamma)$ to simplify notation. Define $\varphi_2 : \mc L \times \mb R^{d+1} \to \mb R \cup \{+\infty\}$ by
\[
 \varphi_2(m,y) = Q_\phi(m) + \mathbb I_{C_2}\left( \langle m, 1 \rangle - 1 + y_1, \langle m, g \rangle - \vec P + y_2 \right) ,
\]
where $y = (y_1,y_2) \in \mb R \times \mb R^d$ and $\mb I_{C_2} : \mb R^{d+1} \to \mb R \cup \{+ \infty\}$ is given by
\[
 \mb I_{C_2}(y_1,y_2) = \left[ \begin{array}{cl} 0 & \mbox{if $y_1 = 0$, and $y_2 \in \mb R_{-}^{d_1} \times \{0_{d_2}\} \times \mb R_{-}^{d_3} \times \{0_{d_4}\}$}, \\
 +\infty & \mbox{otherwise}.
 \end{array} \right.
\]
The primal problem for $y \in \mb R^{d+1}$ is
\begin{equation} \label{e:py_dist}
 \min_{m \in \mc L} \varphi_2(m,y) \tag{$\mr P_y'$}
\end{equation}
and its value is 
\[ 
 v_2(y) = \inf_{m \in \mc L} \varphi_2(m,y)\,.
\]
In particular, $v_2(0) = \Delta(\theta,\gamma,P)$. Recall the set $\mc Y_2$ defined in (\ref{e:y2})

\begin{lemma} \label{lem:v_properties_dist}
Suppose that Assumption~\ref{a:phi} holds. Then:
\begin{enumerate}[nosep]
\item[(i)] $\varphi_2$ is proper and convex;
\item[(ii)] $v_2$ is proper, convex, and $\mr{dom}\,v_2 = \mc Y_2$;
\item[(iii)] A solution to the primal problem $($\ref{e:py_dist}$)$ exists and is unique for each $y \in \mc Y_2$.
\end{enumerate}
\end{lemma}

\begin{proof}[Proof of Lemma~\ref{lem:v_properties_dist}]
Properness and convexity of $\varphi_2$ and $v_2$ follow by similar arguments to Lemma~\ref{lem:varphi_properties} and \ref{lem:v_properties}, and $\mr{dom} \, v_2 = \mc Y_2$ follows by similar arguments to Lemma~\ref{lem:v_properties}.

For part (iii), take any $y \in \mc Y_2$. Choose $\{m_n\} \subset \mc L$ such that $\varphi_2(m_n, y) \downarrow v_2(y)$ as $n \to \infty$. Without loss of generality we may assume $\varphi_2(m_n, y) < \infty$ for each $n$. Then as $\varphi_2(m_n, y)  = Q_\phi(m_n)$, the sequence $\{m_n\}$ is $\|\cdot\|_\phi$-norm bounded (by Lemma~\ref{lem:orlicz}(ii)) and therefore has a $\mc E$-weakly convergent subsequence $\{m_{n_l}\}$ (see Appendix~\ref{ax:Orlicz}). Let $m_0 \in \mc L$ denote the $\mc E$-weak limit. We have both  $\lim_{l \to \infty} \langle m_{n_l}, 1 \rangle = \langle m_0 , 1 \rangle$ and $\lim_{l \to \infty} \langle m_{n_l}, g \rangle = \langle m_0 , g \rangle$ by $\mc E$-weak convergence. Therefore,
\begin{equation}\label{eq:indic_C2}
 \mathbb I_{C_2}\left( \langle m_0, 1 \rangle - 1 + y_1, \langle m_0, g \rangle - \vec P + y_2 \right)  = 0
\end{equation}
and so $\varphi_2(m_0,y) = Q_\phi(m_0)$. It follows by Lemma~\ref{lem:orlicz}(i) that $v_2(y) = \lim_{l \to \infty} Q_\phi(m_{n_l}) \geq Q_\phi(m_0) \geq v_2(y)$, where the final inequality is by definition of $v_2(y)$. Uniqueness of the primal solution follows by strict convexity of $\phi$.
\end{proof}

By similar arguments to Appendix~\ref{sec:dual_K}, the dual problem of (\ref{e:py_dist}) is
\begin{equation} \label{e:dy_dist}
 \max_{\zeta \in \mb R, \lambda \in \Lambda} \zeta y_1 + \lambda' y_2 - \E^{F_*}\Big[ \phi^\star( - \zeta - \lambda' g(U)) \Big] - \zeta - \lambda_{12}'P \,. \tag{$\mr D_y'$}
\end{equation}
The \emph{value} of (\ref{e:dy_dist}) is obtained by replacing the $\max$ with a $\sup$ and corresponds to the biconjugate $v^{\star \star}_2(y)$ of $v_2(y)$. Hence, $v_2(0) = \Delta (\theta; \gamma, P)$ and $v_2^{\star \star}(0) = \Delta^\star(\theta;\gamma,P)$.

\begin{proof}[Proof of Proposition~\ref{prop:dual_dist}]
For part (i), we need to show $ v_2(0) = v_2^{\star \star}(0)$. As $\varphi_2$ is convex by Lemma~\ref{lem:v_properties_dist} and $v_2(y) \geq 0$ for all $y \in \mb R^{d+1}$, it follows by \citeauthor{BS} (\citeyear{BS}, Theorem 2.144) that $v^{\star \star}_2(0) = \min\{v_2(0), \liminf_{y \to 0} v_2(y)\}$. Therefore, it suffices to show that $v_2(y)$ is l.s.c. at $y = 0$. 

To do so, let $\{y_n\}$ be any sequence converging to $0$. It is without loss of generality to assume $\liminf_{n \to \infty} v_2(y_n) < +\infty$ (otherwise there is nothing to prove). Extract a subsequence $\{n_l\}$ along which $v_2(y_{n_l}) < +\infty$ and $\lim_{l \to \infty} v_2(y_{n_l}) = \liminf_{n \to \infty} v_2(y_n)$. The primal problem at $y_{n_l}$ has a solution $m_{n_l}$ for each $l$ by Lemma~\ref{lem:v_properties_dist}. As $v_2(y_{n_l}) = Q_\phi(m_{n_l}) \leq C$ for all $l$ and a finite positive constant $C$, the sequence $\{m_{n_l}\}$ is $\|\cdot\|_\phi$-norm bounded (by Lemma~\ref{lem:orlicz}(ii)). Extracting a further subsequence if necessary, we may assume the sequence $\{m_{n_l}\}$ is $\mc E$-weakly convergent to some $m_0 \in \mc L$. By similar arguments to the proof of Lemma~\ref{lem:v_properties_dist}, we may deduce that (\ref{eq:indic_C2}) holds for this $m_0$ at $y = 0$. Then by Lemma~\ref{lem:orlicz}(i), we have 
\[
 \liminf_{n \to \infty} v_2(y_n) = \lim_{l \to \infty} v_2(y_{n_l}) = \lim_{l \to \infty} Q_\phi(m_{n_l}) \geq Q_\phi(m_0) \geq v_2(0)\,,
\]
as required. Note by way of contradiction that the preceding argument also implies that $\liminf_{n \to \infty} v_2(y_n) = +\infty$ for all sequences $y_n \to 0$ whenever $v_2(0) = +\infty$. 

For parts (ii) and (iii), Lemma~\ref{lem:cq} shows that Conditions S and S' are sufficient for $0 \in \mr{ri}(\mc Y_2)$ and $0 \in \mr{int}(\mc Y_2)$, respectively. Parts (ii) and (iii) now follow by similar arguments to Lemma~\ref{lem:dual_0}(ii)(iii).
\end{proof}

\subsection{Stability of Constraint Qualifications under Perturbations}

\begin{lemma}\label{lem:attain-1}
Suppose that Assumption~\ref{a:phi} holds and Condition S' holds at $(\theta,\gamma,P)$. Then there exists a neighborhood $N$ of $P$ such that Condition S' holds at $(\theta,\gamma,\tilde P)$ for each $\tilde P \in N$.
\end{lemma}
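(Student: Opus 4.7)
The plan is to observe that the set appearing on the right-hand side of Condition S does not depend on $P$. Writing
\[
 A := \{ \E^{F}[g(U,\theta,\gamma)] : F \in \mc N_\infty \} + \mc C \subset \mb R^{d} \,,
\]
Condition S at $(\theta,\gamma,\tilde P)$ is simply the assertion that the point $(\tilde P', 0_{d_3+d_4}')'$ belongs to $\mr{int}(A)$. Since $A$ is fixed as $\tilde P$ varies, stability under perturbations of $P$ reduces to openness of the preimage of $\mr{int}(A)$ under the embedding $\tilde P \mapsto (\tilde P',0_{d_3+d_4}')'$.

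Concretely, I would proceed in two short steps. First, note that the map $\iota : \mb R^{d_1+d_2} \to \mb R^d$ defined by $\iota(\tilde P) = (\tilde P', 0_{d_3+d_4}')'$ is continuous (being affine), so $\iota^{-1}(\mr{int}(A))$ is open in $\mb R^{d_1+d_2}$. Second, Condition S at $(\theta,\gamma,P)$ says exactly that $P \in \iota^{-1}(\mr{int}(A))$; hence there exists a Euclidean neighborhood $N$ of $P$ with $N \subset \iota^{-1}(\mr{int}(A))$. For every $\tilde P \in N$ this yields $(\tilde P',0_{d_3+d_4}')' \in \mr{int}(A)$, i.e.\ Condition S at $(\theta,\gamma,\tilde P)$.

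There is essentially no obstacle here: the result is a direct consequence of the interior being an open set and of the fact that $(\theta,\gamma)$ enter $A$ while only $P$ is perturbed. No use is made of Assumption $\Phi$, and no properties of $\mc N_\infty$ beyond its independence of $P$ are needed. The one minor point worth flagging in the write-up is that $A$ is not assumed convex or closed here; openness of $\mr{int}(A)$ holds irrespective, so the argument goes through as stated.
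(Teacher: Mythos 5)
Your proof is correct and takes essentially the same approach as the paper: Condition S is an open condition in $P$ with the set $\{ \E^{F}[g(U,\theta,\gamma)] : F \in \mc N_\infty \} + \mc C$ held fixed, so stability under small perturbations of $P$ is immediate. The paper instantiates this explicitly via $\varepsilon$-balls ($B_{2\varepsilon}$ around $(P',0')'$ inside the set, hence $B_\varepsilon$ around $(\tilde P',0')'$ for $\|\tilde P - P\| < \varepsilon$), whereas you phrase it abstractly as openness of the preimage of $\mr{int}(A)$ under the affine embedding $\tilde P \mapsto (\tilde P',0_{d_3+d_4}')'$; these are the same observation.
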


\begin{proof}[Proof of Lemma~\ref{lem:attain-1}]
Recall $\vec P = (P,0_{d_3+d_4})$ and let $\vec{\tilde P} = (\tilde P,0_{d_3 + d_4})$. By Condition S', there exists $\varepsilon > 0$ such that $B_{2 \varepsilon} \subseteq (\{ \E^{F} [ g(U,\theta,\gamma) ]  - \vec P : F \in \mc N_\infty \} + \mc C).$ Then for any $\tilde P$ for which $\|P - \tilde  P\| < \varepsilon$, we  have $\| (\E^F [ g(U,\theta,\gamma) ] - \vec P) - ( \E^F [g(U,\theta,\gamma) ] - \vec{\tilde P}) \| < \varepsilon$ for all $F \in \mc N_\infty$. Hence, $B_{\varepsilon} \subseteq (\{ \E^{F} [ g(U,\theta,\gamma) ] - \vec{\tilde P} : F \in \mc N_\infty \} + \mc C)$.
\end{proof}

\begin{lemma}\label{lem:attain-2}
Suppose that Assumption~\ref{a:phi} holds, each entry of $g$ is $\mc E$-continuous in $(\theta,\gamma)$, and Condition S' holds at $(\theta,\gamma,P)$. Then there exists a neighborhood $N$ of $(\theta,\gamma,P)$ such that Condition S' holds at $(\tilde \theta,\tilde \gamma,\tilde P)$ for each $(\tilde \theta,\tilde \gamma,\tilde P) \in N$.
\end{lemma}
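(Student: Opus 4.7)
The plan is to extend the proof of Lemma \ref{lem:attain-1}, which handled perturbations in $P$ alone, by additionally controlling the effect of perturbing $(\theta,\gamma)$ on the moment set $\{\E^F[g(U,\theta,\gamma)] : F \in \mc N_\infty\}$. The key input is the $\mc E$-continuity of $g$ together with H\"older's inequality for Orlicz classes, which gives $|\mb E^{F_*}[m(U)(g_i(U,\theta,\gamma)-g_i(U,\tilde\theta,\tilde\gamma))]| \leq \|m\|_\phi \|g_i(\cdot,\theta,\gamma)-g_i(\cdot,\tilde\theta,\tilde\gamma)\|_\psi$ for every Radon--Nikodym derivative $m = \mr d F/\mr d F_*$.

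The main obstacle is that $\mc N_\infty$ is not $\|\cdot\|_\phi$-norm bounded, so we cannot apply the H\"older bound uniformly over all $F \in \mc N_\infty$. To circumvent this, the first step is to show that Condition S at $(\theta,\gamma,P)$ in fact holds with $\mc N_\infty$ replaced by some bounded sublevel set $\mc N_{\delta_0}$ with $\delta_0 < \infty$. To do so, I fix $\varepsilon > 0$ with $B_{2\varepsilon} \subseteq \{\E^F[g(U,\theta,\gamma)] - (P',0_{d_3+d_4}')' : F \in \mc N_\infty\} + \mc C$, and choose the $2d$ extreme points $(P',0_{d_3+d_4}')' \pm 2\varepsilon e_i$ of the rescaled cross-polytope. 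Each of these points is attained by some $F_i^\pm \in \mc N_\infty$, so taking $\delta_0 = \max_i D_\phi(F_i^\pm\|F_*) < \infty$ places all $F_i^\pm$ in $\mc N_{\delta_0}$. Since $\mc N_{\delta_0}$ is convex and $\mc C$ is a convex cone, the convex hull of these points lies in $\{\E^F[g(U,\theta,\gamma)] : F \in \mc N_{\delta_0}\} + \mc C$, and this convex hull contains the Euclidean ball $B_{\varepsilon'}((P',0_{d_3+d_4}')')$ with $\varepsilon' = 2\varepsilon/\sqrt{d}$.

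The second step uses the bounded set $\mc N_{\delta_0}$ to propagate perturbations. By Lemma \ref{lem:bdd} and the $\Delta_2$-condition, there is a uniform bound $C<\infty$ with $\|m\|_\phi \leq C$ for every $m$ representing an element of $\mc N_{\delta_0}$. H\"older's inequality together with $\mc E$-continuity of each $g_i$ then yields, for any $\varepsilon'' \in (0,\varepsilon'/2)$, a neighborhood $N_1$ of $(\theta,\gamma)$ such that $\sup_{F \in \mc N_{\delta_0}} \|\E^F[g(U,\tilde\theta,\tilde\gamma)] - \E^F[g(U,\theta,\gamma)]\| < \varepsilon''$ for every $(\tilde\theta,\tilde\gamma) \in N_1$. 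Combining this with the ball from step one gives $B_{\varepsilon'-\varepsilon''}((P',0_{d_3+d_4}')') \subseteq \{\E^F[g(U,\tilde\theta,\tilde\gamma)] : F \in \mc N_{\delta_0}\} + \mc C$.

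The last step mirrors the argument of Lemma \ref{lem:attain-1}: a further restriction of $\tilde P$ to a ball of radius $\varepsilon''$ around $P$ shifts the containment to $B_{\varepsilon'-2\varepsilon''}((\tilde P',0_{d_3+d_4}')') \subseteq \{\E^F[g(U,\tilde\theta,\tilde\gamma)] : F \in \mc N_{\delta_0}\} + \mc C \subseteq \{\E^F[g(U,\tilde\theta,\tilde\gamma)] : F \in \mc N_\infty\} + \mc C$, which is Condition S at $(\tilde\theta,\tilde\gamma,\tilde P)$. Setting $N$ to be the intersection of $N_1$ with an $\varepsilon''$-ball in $P$ completes the proof.
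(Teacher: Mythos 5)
Your proof is correct and follows essentially the same route as the paper's: reduce from $\mc N_\infty$ to a bounded sublevel set $\mc N_{\delta_0}$, use the uniform $\|\cdot\|_\phi$-bound on $\mc M_{\delta_0}$ plus H\"older's inequality and $\mc E$-continuity of $g$ to control the shift in $(\theta,\gamma)$, and then shrink a $P$-ball as in Lemma~\ref{lem:attain-1}. The only difference is that you make explicit, via the cross-polytope/inscribed-ball argument, the step the paper simply asserts (that the interior condition survives passage from $\mc N_\infty$ to some finite $\mc N_{\delta_0}$), which is a welcome clarification rather than a change of approach.
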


\begin{proof}[Proof of Lemma~\ref{lem:attain-2}]
Recall $\vec P = (P,0_{d_3 + d_4})$. As Condition S' holds at $(\theta,\gamma,P)$, there exists sufficiently large $\delta$ such that $0 \in \mr{int}(\{ \E^F [ g(U,\theta,\gamma ) ] - \vec P: F \in \mc N_\delta  \}+ \mc C )$. (To see this, take a sufficiently small hypercube $\mc H$ with $0 \in \mr{int}(\mc H)$ and $\mc H \subseteq \mr{int}(\{ \E^F [ g(U,\theta,\gamma ) ] - \vec P: F \in \mc N_\infty  \}+ \mc C )$, identify a density $F \in \mc N_\infty$ with each vertex of $\mc H$, and take $\delta$ to be the largest $\phi$-divergence from $F_*$ of each of the densities at the vertex.) Therefore, we may choose $\varepsilon > 0$ such that $B_{4\varepsilon} \subseteq \mr{int}(\{ \E^F [ g(U,\theta,\gamma ) ] - \vec P: F \in \mc N_\delta \}+ \mc C )$. 

Let $\mc M_\delta = \{ \frac{\mr d F}{\mr d F_*}  : F \in \mc N_\delta\}$. Then $\mc M_\delta$ is a $\|\cdot\|_\phi$-bounded subset of $\mc L$ by Lemma~\ref{lem:orlicz}(ii). 
By $\mc E$-continuity, there exists a neighborhood $N_1$ of $(\theta,\gamma)$ such that for any $(\tilde \theta,\tilde \gamma) \in N_1$ and with $r$ denoting any entry of $g_1,\ldots,g_4$, we have
\[
 \| r(\cdot,\theta,\gamma) - r(\cdot,\tilde \theta,\tilde \gamma) \|_{\psi} < \frac{\varepsilon}{\sqrt{d} ( 2 + \phi(2) + \delta ) } \,.
\]
It follows by inequality (\ref{eq:holder}) and Lemma~\ref{lem:orlicz}(ii) that for any $(\tilde \theta,\tilde \gamma) \in N_1$, we have
\[
 \sup_{m \in \mc M_\delta} |\E^{F_*}[m(U) r(U,\theta,\gamma)] - \mb E^{F_*}[m(U) r(U,\tilde \theta,\tilde \gamma)] | \leq \frac{\varepsilon}{\sqrt{d}}.
\] 
Let $N_2$ be an $\varepsilon$-neighborhood of $P$. Then for any $F \in \mc N_\delta$ and any $(\tilde \theta,\tilde \gamma,\tilde P) \in N_1 \times N_2$, we have
\[
 \|  (\E^{F} [ m(U) g(U,\theta,\gamma ) ] - \vec P) - (\E^{F} [ m(U) g(U,\tilde \theta,\tilde \gamma ) ] - \vec{\tilde P})  \| < 2\varepsilon \,,
\]
with $\vec{\tilde P} = (\tilde P,0_{d_3 + d_4})$, and so 
\begin{align*}
 B_{2\varepsilon} & \subseteq \mr{int}(\{ \E^F [ g(U,\tilde \theta,\tilde \gamma ) ] - \vec{\tilde P}: F \in \mc N_\delta \}+ \mc C ) 
 \subseteq \mr{int}(\{ \E^F [ g(U,\tilde \theta,\tilde \gamma ) ] - \vec{\tilde P}: F \in \mc N_\infty \}+ \mc C )\,,
\end{align*}
as required.
\end{proof}

\subsection{Continuity of the Optimal Values}

\begin{lemma}\label{lem:phi-cts}
Suppose that Assumption~\ref{a:phi} holds and  $\mb E^{F_*}[\phi^\star(a_1 + a_2'g(U,\theta,\gamma))]$ is continuous in $(\theta,\gamma)$ for every $(a_1,a_2) \in \mb R \times \mb R^d$. Then $\Delta(\tilde \theta;\tilde \gamma,\tilde P)$ and  $\Delta^\star(\tilde \theta;\tilde \gamma,\tilde P)$ are continuous in $(\tilde \theta,\tilde \gamma,\tilde P)$ at any point $(\theta, \gamma, P)$ at which Condition S' holds.
\end{lemma}

\begin{proof}[Proof of Lemma~\ref{lem:phi-cts}]
In view of Proposition~\ref{prop:dual_dist}(i), it suffices to establish continuity of $\Delta^\star(\tilde \theta;\tilde \gamma,\tilde P)$.

Fix some $(\theta,\gamma,P)$ at which  Conditions S' holds. Then by definition of Condition S', we have $\Delta(\theta;\gamma,P) < \infty$ and hence $\Delta^\star(\theta;\gamma,P) < \infty$. To simplify notation, let $\xi = (\zeta,\lambda)$ and $\Xi = \mb R \times \Lambda$ for the remainder of this proof. Note
\[
 \xi \mapsto L(\xi;\theta,\gamma,P) := -\E^{F_*}[ \phi^\star(- \zeta - \lambda' g(U,\theta, \gamma))  ] - \zeta - \lambda_{12}' P
\]
is the pointwise infimum of affine functions of $\xi$ and is therefore concave and u.s.c. By Proposition~\ref{prop:dual_dist}(iii), $L(\,\cdot\,;\theta,\gamma,P)$ has a nonempty, convex, and compact set of maximizers $\Xi_0 \subset \Xi$. Fix $\varepsilon > 0$ and let $\Xi^\varepsilon_0 = \{ \xi \in \Xi : d(\xi,\Xi_0) \leq \varepsilon\}$. 

By continuity of $(\theta,\gamma) \mapsto \mb E^{F_*}[\phi^\star(\zeta + \lambda'g(U,\theta,\gamma))]$, we have $L(\xi;\tilde \theta, \tilde \gamma, \tilde P) \to L(\xi; \theta, \gamma, P)$  as $(\tilde \theta,\tilde \gamma,\tilde P) \to (\theta,\gamma,P)$.
By concavity of $L$ in $\xi$, convergence may be strengthened to hold uniformly over the compact set $\Xi^\varepsilon_0$ \cite[Theorem 10.8]{Rockafellar}, and so
\begin{equation} \label{e:phi-cts:1}
 \sup_{\xi \in \Xi^\varepsilon_0} L(\xi;\tilde \theta, \tilde \gamma, \tilde P) \to \Delta^\star(\theta;\gamma,P) \quad \mbox{as} \quad (\tilde \theta,\tilde \gamma,\tilde P) \to (\theta,\gamma,P)\,.
\end{equation}
By u.s.c. of $L(\,\cdot\,; \theta,\gamma,P)$, definition of $\Xi_0$, and the fact that $\Xi_0 \cap \partial_\Xi \Xi_0^\varepsilon = \emptyset$, we also have that
\[
 \Delta^\star(\theta;\gamma,P) > \sup_{\xi \in \partial_{\Xi} \Xi^\varepsilon_0}  L(\xi; \theta, \gamma, P)  \,.
\]
It follows that there exists a neighborhood $N$ of $(\theta,\gamma,P)$ such that for any $(\tilde \theta,\tilde \gamma,\tilde P) \in N$,
\[
  \sup_{\xi \in \Xi^\varepsilon_0} L(\xi;\tilde \theta, \tilde \gamma, \tilde P) > \sup_{\xi \in \partial_{\Xi} \Xi^\varepsilon_0} L(\xi;\tilde \theta, \tilde \gamma, \tilde P)  \,.
\]
By standard arguments for maximizers of concave functions (e.g., the proof of Theorem 2.7 of \cite{NeweyMcFadden}), whenever $(\tilde \theta,\tilde \gamma,\tilde P) \in N$ we have that 
\[
  \sup_{\xi \in \Xi^\varepsilon_0} L(\xi;\tilde \theta, \tilde \gamma, \tilde P) = \sup_{\xi \in \Xi} L(\xi;\tilde \theta, \tilde \gamma, \tilde P) \equiv \Delta^\star(\tilde \theta;\tilde \gamma,\tilde P) \,.
\]
The result now follows by (\ref{e:phi-cts:1}).
\end{proof}

We now establish a similar result for the objectives (\ref{e:crit_l}) and (\ref{e:crit_u}) and their duals. 
To simplify notation, in the remainder of this subsection fix $(\theta,\gamma,P)$ and let $\xi = (\eta,\zeta,\lambda)$, $\Xi = \mb R_{+} \times \mb R \times \Lambda$, $\ul \Xi_\delta = \ul \Xi_\delta(\theta;\gamma,P)$ and  $\ol \Xi_\delta = \ol \Xi_\delta(\theta;\gamma,P)$. If $\ul \Xi_\delta$ and  $\ol \Xi_\delta$ are compact, then for each $\varepsilon > 0$ we may cover them by compact sets $\ul \Xi_\delta^\varepsilon \subset \Xi$ and $\ol \Xi_\delta^\varepsilon \subset \Xi$ formed as the union of finitely many hypercubes with edges parallel to the coordinate axes, so that $d(\xi,\ul \Xi_\delta) \leq \varepsilon$ for all $\xi \in \ul \Xi_\delta^\varepsilon$, $d(\xi,\ol \Xi_\delta) \leq \varepsilon$ for all $\xi \in \ol \Xi_\delta^\varepsilon$, $(\partial_{\Xi} \ul \Xi_\delta^\varepsilon) \cap \ul \Xi_\delta = \emptyset$, and $(\partial_\Xi \ol \Xi_\delta^\varepsilon) \cap \ol \Xi_\delta = \emptyset$.

\begin{lemma}\label{lem:mult:unif}
Suppose that Assumptions~\ref{a:phi} and \ref{a:m}(i),(ii) hold, Condition S' holds at $(\theta,\gamma,P)$, and $\Delta(\theta;\gamma,P) < \delta$. Then:
\begin{enumerate}[nosep]
\item[(i)] $\ul K_\delta(\tilde \theta;\tilde \gamma,\tilde P)$, $\ol K_\delta(\tilde \theta;\tilde \gamma,\tilde P)$, $\ul K_\delta^\star(\tilde \theta;\tilde \gamma,\tilde P)$, and $\ol K_\delta^\star(\tilde \theta;\tilde \gamma,\tilde P)$ are continuous in $(\tilde \theta,\tilde \gamma,\tilde P)$ at $(\theta,\gamma,P)$;
\item[(ii)] For each $\varepsilon > 0$ there exists a neighborhood $N$ of $(\theta,\gamma,P)$ such that  $\ul \Xi_\delta(\tilde \theta;\tilde \gamma,\tilde P) \subseteq \ul \Xi_\delta^\varepsilon$ and $\ol \Xi_\delta(\tilde \theta;\tilde \gamma,\tilde P) \subseteq \ol \Xi_\delta^\varepsilon$ for each $(\tilde \theta,\tilde \gamma,\tilde P) \in N$;
\item[(iii)] $\vec d_H(\ul \Xi_\delta(\tilde \theta;\tilde \gamma,\tilde P),\ul \Xi_\delta)  \to 0$ and $\vec d_H(\ol \Xi_\delta(\tilde \theta;\tilde \gamma,\tilde P),\ol \Xi_\delta)  \to 0$ as $(\tilde \theta,\tilde \gamma,\tilde P) \to (\theta,\gamma,P)$.
\end{enumerate}
\end{lemma}

\begin{proof}[Proof of Lemma~\ref{lem:mult:unif}]
We prove the result for $\ul K_\delta^\star$ and $\ul \Xi_\delta$; the result for  $\ol K_\delta^\star$ and $\ol \Xi_\delta$ follows similarly.

\underline{Step 1:} Preliminaries. In view of Proposition~\ref{prop:dual}(i), it suffices to establish continuity of $\ul K_\delta^\star(\theta;\gamma,P)$.
Lemmas~\ref{lem:attain-2} and \ref{lem:phi-cts} imply there is a neighborhood $N'$ of $(\theta,\gamma,P)$ such that Condition S' holds at $(\tilde \theta,\tilde \gamma,\tilde P)$ and $\Delta (\tilde \theta,\tilde \gamma,\tilde P) < \delta$ for each $(\tilde \theta,\tilde \gamma,\tilde P) \in N'$. By Proposition~\ref{prop:dual}(iii)(iv), the multipliers $\ul \Xi_\delta(\tilde \theta;\tilde \gamma,\tilde P)$ solving the program $\ul K_\delta^\star(\tilde \theta;\tilde \gamma,\tilde P)$ are a nonempty, convex, compact subset of $\Xi$ for each $(\tilde \theta,\tilde \gamma,\tilde P) \in N'$. 

The dual objective function
\[
 \xi \mapsto L(\xi;\theta,\gamma,P) := -\E^{F_*}[ (\eta\phi)^\star(- k(U,\theta,\gamma) - \zeta - \lambda' g(U,\theta, \gamma))  ] -\eta \delta - \zeta - \lambda_{12}' P
\]
is the pointwise infimum of affine functions of $\xi = (\eta,\zeta,\lambda)$ and is therefore concave and u.s.c. Hence,
\begin{equation}\label{e:kappa:cts:5}
 4a := \ul K_\delta^\star(\theta;\gamma,P) - \sup_{\xi \in \partial_\Xi \ul \Xi_\delta^\varepsilon} L(\xi;\theta,\gamma,P) > 0  \,.
\end{equation}
The remaining steps of the proof depend on whether or not $\min\{ \eta : (\eta,\zeta,\lambda) \in \ul \Xi_\delta\} > 0$. 

\underline{Step 2:} Proof of parts (i) and (ii) when $\min\{ \eta : (\eta,\zeta,\lambda) \in \ul \Xi_\delta\} > 0$.
W.l.o.g. we may choose $\ul \Xi_\delta^\varepsilon$ so that $\min\{ \eta : (\eta,\zeta,\lambda) \in \ul \Xi_\delta^\varepsilon\} > 0$. For any $\xi = (\eta,\zeta,\lambda)$ with $\eta > 0$,
\[
 L(\xi;\tilde \theta,\tilde \gamma,\tilde P) =  -\eta \E^{F_*}\left[ \phi^\star \left({\textstyle \frac{k(U,\tilde \theta,\tilde \gamma) + \zeta + \lambda' g(U,\tilde \theta,\tilde \gamma) }{-\eta} }\right) \right] - \eta \delta - \zeta - \lambda_{12}'\tilde P \,.
\]
By Assumption~\ref{a:m}(ii), for any $\xi \in \Xi$ we have $L(\xi;\tilde \theta, \tilde \gamma, \tilde P) \to L(\xi; \theta, \gamma, P)$ as $(\tilde \theta,\tilde \gamma,\tilde P) \to (\theta,\gamma,P)$. By concavity of $L$ in $\xi$, convergence may be strengthened to hold uniformly over the compact set $\ul \Xi_\delta^\varepsilon$ \cite[Theorem 10.8]{Rockafellar} and so, in particular,
\begin{equation}\label{e:kappa:cts:8}
 \sup_{\xi \in \ul \Xi_\delta^\varepsilon} L(\xi;\tilde \theta, \tilde \gamma, \tilde P) \to \ul K_\delta^\star(\theta;\gamma,P) \mbox{ as $(\tilde \theta,\tilde \gamma,\tilde P) \to (\theta,\gamma,P)$.}
\end{equation}
It follows that there exists a neighborhood $N''$ of $(\theta,\gamma,P)$ such that for $(\tilde \theta,\tilde \gamma,\tilde P) \in N' \cap N''$, we have
\[
  \sup_{\xi \in \ul \Xi_\delta^\varepsilon} L(\xi;\tilde \theta, \tilde \gamma, \tilde P) > \sup_{\xi \in \partial_\Xi \ul \Xi_\delta^\varepsilon} L(\xi;\tilde \theta, \tilde \gamma, \tilde P) \,.
\]
By similar arguments to the proof of Lemma~\ref{lem:phi-cts} we may deduce that $\ul \Xi_\delta(\tilde \theta;\tilde \gamma,\tilde P) \subseteq \ul \Xi_\delta^\varepsilon$ holds on $N := N' \cap N''$. This proves part (ii). Continuity (part (i)) now follows by (\ref{e:kappa:cts:8}).

\underline{Step 3:} Proof of part (ii) when $\min\{ \eta : (\eta,\zeta,\lambda) \in \ul \Xi_\delta\} = 0$. Choose $\xi_0 := (0,\ul \zeta,\ul \lambda) \in \ul \Xi_\delta$. Let $\xi_\eta = (\eta, \ul \zeta, \ul \lambda)$ for $\eta > 0$. As in the proof of Proposition~\ref{prop:dual}, it follows by concavity and u.s.c. of $L$ in $\xi$ that
\[
 \lim_{\eta \downarrow 0} L(\xi_\eta;\theta,\gamma,P) = L(\xi_0;\theta,\gamma,P) = \ul K_\delta^\star(\theta;\gamma,P)\,.
\]
For any $\varepsilon_0  \in (0,a)$, choose $\bar \eta > 0$ such that $L(\xi_{\bar \eta};\theta,\gamma,P)  > \ul K_\delta^\star(\theta;\gamma,P) - \varepsilon_0$ and $\xi_{\bar \eta} \in \mr{int}(\ul \Xi_\delta^\varepsilon)$. By Assumption~\ref{a:m}(ii), there is a neighborhood $N''$ of $(\theta,\gamma,P)$ such that for all $(\tilde \theta,\tilde \gamma,\tilde P) \in N''$ we have
\begin{equation}\label{e:kappa:cts:3}
 L(\xi_{\bar \eta};\tilde \theta, \tilde \gamma, \tilde P) > \ul K_\delta^\star(\theta;\gamma,P)-2\varepsilon_0 .
\end{equation} 

We now argue by contradiction that the inequality 
\begin{equation}\label{e:kappa:cts:4}
 \sup_{\xi \in \partial_\Xi \ul \Xi_\delta^\varepsilon} L(\xi;\theta,\gamma,P) \geq \sup_{\xi \in \partial_\Xi \ul \Xi_\delta^\varepsilon} L(\xi;\tilde \theta, \tilde \gamma, \tilde P) - 2\varepsilon_0 
\end{equation}
holds for all $(\tilde \theta,\tilde \gamma,\tilde P)$ in a neighborhood $N'''$ of $(\theta,\gamma,P)$. Suppose that there is $\varepsilon_1 > 0$ and $(\theta_n,\gamma_n,P_n) \to (\theta,\gamma,P)$ along which 
\begin{equation} \label{e:kappa:cts:1}
 \sup_{\xi \in \partial_\Xi \ul \Xi_\delta^\varepsilon} L(\xi;\theta,\gamma,P) \leq \sup_{\xi \in \partial_\Xi \ul \Xi_\delta^\varepsilon} L(\xi;\theta_n,\gamma_n,P_n) - \varepsilon_1 \,.
\end{equation}
For each $n \geq 1$, choose $\xi_n \in {\textstyle \mr{arg}\sup_{\xi \in \partial_\Xi \ul \Xi_\delta^\varepsilon}} L(\xi;\theta_n,\gamma_n,P_n)$. 
As $\partial_\Xi \ul \Xi_\delta^\varepsilon$ is compact, take a convergent subsequence $\{\xi_{n_l}\}$ and let $\xi^* = (\eta^*, \zeta^*, \lambda^*) \in \partial_\Xi \ul \Xi_\delta^\varepsilon$ denote its limit point. Suppose $\eta^* > 0$. Then as $L(\,\cdot\,;\theta_n,\gamma_n,P_n)$ converges uniformly to $L(\,\cdot\,;\theta,\gamma,P)$ on compact subsets of $(0,\infty) \times \mb R \times \mb R^d$,  we obtain $\lim_{l \to \infty} L(\xi_{n_l};\theta_{n_l},\gamma_{n_l},P_{n_l})  \leq \sup_{\xi \in \partial_\Xi \ul \Xi_\delta^\varepsilon} L(\xi; \theta, \gamma, P)$, which contradicts (\ref{e:kappa:cts:1}). 

Conversely, if $\eta^* = 0$, define $\xi_\eta^* = (\eta,\zeta^*,\lambda^*)$. Fix any small $\varepsilon_2 > 0$ so that $\xi_{\varepsilon_2}^* \in \partial_\Xi \ul \Xi_\delta^\varepsilon$. By u.s.c. and concavity of $L(\,\cdot\,;\theta,\gamma,P)$, we may choose $\varepsilon_2$ sufficiently small that 
\begin{equation} \label{e:kappa:cts:13}
 L(\xi_{\varepsilon_2}^*;\theta,\gamma,P) - L(\xi_{2 \varepsilon_2}^*;\theta,\gamma,P) <  \frac{1}{2}\varepsilon_1\,.
\end{equation}
For all $l$ large enough we have $\eta_{n_l} < \varepsilon_2$ and hence $\tau_{n_l} := \frac{\varepsilon_2}{2\varepsilon_2 - \eta_{n_l}} \in (0,1)$. By concavity,
\[
 L(\varepsilon_2,\zeta_{n_l},\lambda_{n_l};\theta_{n_l},\gamma_{n_l},P_{n_l}) \geq \tau_{n_l} L(\xi_{n_l};\theta_{n_l},\gamma_{n_l},P_{n_l}) + (1-\tau_{n_l}) L(2 \varepsilon_2,\zeta_{n_l},\lambda_{n_l};\theta_{n_l},\gamma_{n_l},P_{n_l}) \,.
\]
which rearranges to yield
\[
 L(\xi_{n_l};\theta_{n_l},\gamma_{n_l},P_{n_l}) \leq \frac{1}{\tau_{n_l}} \left( L(\varepsilon_2,\zeta_{n_l},\lambda_{n_l};\theta_{n_l},\gamma_{n_l},P_{n_l})  - (1-\tau_{n_l}) L(2 \varepsilon_2,\zeta_{n_l},\lambda_{n_l};\theta_{n_l},\gamma_{n_l},P_{n_l})  \right) \,.
\]
As $L(\, \cdot \,;\theta_n,\gamma_n,P_n)$ converges uniformly to $L(\, \cdot \,;\theta,\gamma,P)$ on compact subsets of $(0,\infty) \times \mb R \times \mb R^d$, we obtain
\[
 \lim_{l \to \infty} L(\xi_{n_l};\theta_{n_l},\gamma_{n_l},P_{n_l})  \leq 2 L(\xi_{\varepsilon_2}^*;\theta,\gamma,P) - L(\xi_{2 \varepsilon_2}^*;\theta,\gamma,P) \,.
\]
It follows by (\ref{e:kappa:cts:13}) that for all $l$ sufficiently large we have
\[
 L(\xi_{n_l};\theta_{n_l},\gamma_{n_l},P_{n_l}) < \sup_{\xi \in \partial_\Xi \ul \Xi_\delta^\varepsilon} L(\xi;\theta,\gamma,P) + \varepsilon_1 \,,
\]
which contradicts (\ref{e:kappa:cts:1}). This completes the proof of inequality (\ref{e:kappa:cts:4}).

It now follows from displays (\ref{e:kappa:cts:5}),  (\ref{e:kappa:cts:3}), and (\ref{e:kappa:cts:4}) that on $N' \cap N'' \cap N'''$ we have
\begin{align*}
 L(\xi_{\bar \eta};\tilde \theta, \tilde \gamma, \tilde P) > \ul K_\delta^\star(\theta;\gamma,P)-2\varepsilon_0 
 & = \sup_{\xi \in \partial_\Xi \ul \Xi_\delta^\varepsilon} L(\xi;\theta,\gamma,P) + 4a - 2 \varepsilon_0 \\
 & \geq  \sup_{\xi \in \partial_\Xi \ul \Xi_\delta^\varepsilon} L(\xi;\tilde \theta, \tilde \gamma, \tilde P)  + 4(a - \varepsilon_0)  \,.
\end{align*}
As $a - \varepsilon_0 > 0$, the inequality $\sup_{\xi \in \ul \Xi_\delta^\varepsilon} L(\xi;\tilde \theta, \tilde \gamma, \tilde P) > \sup_{\xi \in \partial_\Xi \ul \Xi_\delta^\varepsilon} L(\xi;\tilde \theta, \tilde \gamma, \tilde P)$ holds on $N := N' \cap N'' \cap N'''$.
It now follows by standard arguments for maximizers of concave objective functions (e.g., the proof of Theorem 2.7 of \cite{NeweyMcFadden}) that $\ul \Xi_\delta(\tilde \theta;\tilde \gamma,\tilde P) \subseteq \ul \Xi_\delta^\varepsilon$ holds on $N$, proving part (ii).

\underline{Step 4:} Proof of part (i) when $\min\{ \eta : (\eta,\zeta,\lambda) \in \Xi\} = 0$. 
For any $(\tilde \theta,\tilde \gamma,\tilde P) \in N$, we have
\begin{align} \label{e:kappa:cts:6}
 \ul K_\delta^\star(\tilde \theta;\tilde \gamma,\tilde P)  = \sup_{\xi \in  \ul \Xi_\delta^\varepsilon} L(\xi ;\tilde \theta, \tilde \gamma, \tilde P) 
\end{align}
by Step 3, so by (\ref{e:kappa:cts:3}) we have $\ul K_\delta^\star(\tilde \theta;\tilde \gamma,\tilde P)  \geq L(\xi_{\bar \eta};\tilde \theta, \tilde \gamma, \tilde P) > \ul K_\delta^\star(\theta;\gamma,P)-2\varepsilon_0$ for $(\tilde \theta,\tilde \gamma,\tilde P) \in N$, proving l.s.c. 

To establish u.s.c., for any $\varepsilon_0 > 0$ one may deduce (by similar arguments used to establish (\ref{e:kappa:cts:4}) in Step 3) that there is a neighborhood $N''''$ of $(\theta,\gamma,P)$ such that for any $(\tilde \theta, \tilde \gamma, \tilde P) \in N''''$ we have
\begin{equation} \label{e:kappa:cts:7}
 \sup_{\xi \in \ul \Xi_\delta^\varepsilon} L(\xi;\theta,\gamma,P) \geq \sup_{\xi \in \ul \Xi_\delta^\varepsilon}L(\xi;\tilde \theta, \tilde \gamma, \tilde P) - \varepsilon_0
\end{equation}
holds. It follows by (\ref{e:kappa:cts:6}) and (\ref{e:kappa:cts:7}) that on $N \cap N''''$, we have
\[
 \ul K_\delta^\star(\tilde \theta;\tilde \gamma,\tilde P) = \sup_{\xi \in \ul \Xi_\delta^\varepsilon} L(\xi;\tilde \theta, \tilde \gamma, \tilde P) \leq \sup_{\xi \in \ul \Xi_\delta^\varepsilon} L(\xi;\theta,\gamma,P)  + \varepsilon_0 = \ul K_\delta^\star(\theta;\gamma,P) +  \varepsilon_0 \,.
\]

\underline{Step 5:} Proof of part (iii).
By part (ii), for each $\varepsilon > 0$ there exists a neighborhood $N$ such that $\ul \Xi_\delta(\tilde \theta;\tilde \gamma,\tilde P) \subseteq \ul \Xi_\delta^\varepsilon$ for all $(\tilde \theta,\tilde \gamma,\tilde P) \in N$. Therefore, $\vec d_H(\ul \Xi_\delta(\tilde \theta;\tilde \gamma,\tilde P),\ul \Xi_\delta) \leq \vec d_H(\ul \Xi_\delta^\varepsilon,\ul \Xi_\delta) \leq \varepsilon$ for all $(\tilde \theta,\tilde \gamma,\tilde P) \in N$.
\end{proof}

\subsection{Supplemental Results on Directional Derivatives}\label{ax:derivative_proofs}

This Appendix presents results to supplement the proof of Theorem~\ref{t:ci}.

\begin{lemma}\label{lem:deriv:set}
Let $\ul \Theta_{\delta,a} = \{ \theta \in \Theta_\delta(P_0) : \ul K_\delta(\theta;P_0) \leq \ul b_\delta(P_0) + a\}$.  Suppose that Assumptions~\ref{a:phi} and \ref{a:m}(i)-(iii),(v),(vi) hold. Then there is $a > 0$ and a neighborhood $N$ of $P_0$ such that Condition S' holds at $(\theta,P)$ and $\Delta(\theta;P) < \delta$ for all $(\theta, P) \in \ul \Theta_{\delta,a} \times N$.
\end{lemma}

\begin{proof}[Proof of Lemma~\ref{lem:deriv:set}]
Suppose the assertion is false. Then there is $a_n \downarrow 0$, $\theta_n \in \ul \Theta_{\delta,a_n}$, and $P_n \to P$ for which Condition S' does not hold at $(\theta_n;P_n)$ and/or $\Delta(\theta_n;P_n) \geq \delta$. By Assumption~\ref{a:m}(v), we can extract a convergent subsequence $\theta_{n_l} \to \ul \theta$. By similar arguments to Step 1 of the proof of Theorem~\ref{t:asydist}, we may deduce that $ \ul \theta \in \ul \Theta_\delta$. Then by Lemma~\ref{lem:dual_stable} and Assumption~\ref{a:m}(iii)(vi) we must have that Condition S' holds at $(\theta_{n_l},P_{b_l})$ and $\Delta(\theta_{n_l};P_{n_l}) < \delta$ for all $l$ sufficiently large, a contradiction. 
\end{proof}

\begin{lemma}\label{lem:deriv:est}
Suppose that the conditions of Theorem~\ref{t:ci} hold. Then:
\begin{enumerate}[topsep=-18pt,itemsep=0pt,parsep=0pt,partopsep=0pt]
\item[(i)] $|\wh{d\ul b}_{\delta,P_0}[h_1] - \wh{d\ul b}_{\delta,P_0}[h_2]| \leq \ul C_n \| h_1 - h_2 \|$ and $|\wh{d\ol b}_{\delta,P_0}[h_1] - \wh{d\ol b}_{\delta,P_0}[h_2]| \leq \ol C_n \| h_1 - h_2 \|$ for all $h_1,h_2 \in \mb R^{d_1+d_2}$ with $\ul C_n = O_p(1)$ and $\ol C_n = O_p(1)$;
\item[(ii)] $\wh{d\ul b}_{\delta,P_0}[h] \to_p d\ul b_{\delta,P_0}[h]$ and $\wh{d\ol b}_{\delta,P_0}[h] \to_p d\ol b_{\delta,P_0}[h]$ for all $h \in \mb R^{d_1+d_2}$.
\end{enumerate} 
\end{lemma}

\begin{proof}[Proof of Lemma~\ref{lem:deriv:est}]
We prove the results for the lower values; the results for the upper values follow similarly. 

\underline{Step 1:} Recall $\ul \Theta_{\delta,a}$ from Lemma~\ref{lem:deriv:set}. We show there exists a sequence $a_n \downarrow 0$ such that $\ul \Theta_\delta \subseteq \hat{\ul \Theta}_{\delta,n} \subseteq \ul \Theta_{\delta,a_n}$ holds with probability approaching one (wpa1). 

Fix any sufficiently small $a > 0$ and let $N$ be the neighborhood from Lemma~\ref{lem:deriv:set}. As $\ul \Theta_\delta \subset \ul \Theta_{\delta,a}$, for all $(\theta,P) \in \ul \Theta_\delta \times N$, we have that $\Delta(\theta;P) < \delta$ and Condition S' holds at $(\theta,P)$. Therefore, $ \ul \Theta_\delta \subseteq \Theta_\delta(\hat P)$ wpa1 by consistency of $\hat P$. Moreover,  $\ul \Lambda_\delta(\theta;P)$ is nonempty and compact for all $(\theta, P) \in \ul \Theta_\delta \times  N$ (cf. Proposition~\ref{prop:dual}(iv)). By similar arguments to the proof of Theorem~\ref{t:asydist}, we have
\[
 \ul K_\delta(\theta;P) = \ul K_\delta(\theta;P) - \ul K_\delta(\theta;P_0) + \ul b_\delta(P_0)  \leq \max_{\ul \lambda_{12} \in \Lambda_\delta(\theta;P)} - \ul \lambda_{12}'(P - P_0) + \ul b_\delta(P_0) \,,
\]
for any  $(\theta, P) \in \ul \Theta_\delta \times  N$, and so
\begin{equation} \label{eq:deriv:est-1}
 \sup_{\theta \in \ul \Theta_\delta} \ul K_\delta(\theta;P) - \ul b_\delta(P) \leq \sup_{\theta \in \ul \Theta_\delta} \max_{\ul \lambda_{12} \in \Lambda_\delta(\theta;P)} - \ul \lambda_{12}'(P - P_0) + \ul b_\delta(P_0) - \ul b_\delta(P).
\end{equation}
As $\ul \Theta_\delta$ is compact (see Step 1 of the proof of Theorem~\ref{t:asydist}) and $(\theta,P) \mapsto \max_{\ul \lambda_{12} \in \Lambda_\delta(\theta;P)} \| \ul \lambda_{12}\|$ is u.s.c. (by Lemma~\ref{lem:mult:unif}(ii)) on $\ul \Theta_\delta \times N$, we may choose a neighborhood $N'$ of $P_0$ upon which $\sup_{\theta \in \ul \Theta_\delta} \max_{\ul \lambda_{12} \in \Lambda_\delta(\theta;P)} \| \ul \lambda_{12}\| < \infty$. Setting $P = \hat P$ in (\ref{eq:deriv:est-1}) and noting $\ul b_\delta(P_0) - \ul b_\delta(\hat P) = O_p(n^{-1/2})$ by Theorem~\ref{t:asydist}, we may deduce that $\sup_{\theta \in \ul \Theta_\delta} \ul K_\delta(\theta;\hat P) - \ul b_\delta(\hat P) \leq O_p( n^{-1/2})$ and therefore that $\ul \Theta_\delta \subseteq \hat{\ul \Theta}_{\delta,n}$ wpa1.

By the almost sure representation theorem (see, e.g., \citeauthor{Shapiro1991} (\citeyear{Shapiro1991}, Theorem A1)), there exists a sequence of random vectors $\{(Z_n,\hat \nu_n)\}$ and a random vector $Z$ defined on a probability space $(\Omega,\mathscr F,\mb P)$ with $Z_n =_d \sqrt n(\hat P - P_0)$, $\hat \nu_n =_d \hat \nu$, $Z \sim N(0,\Sigma)$, and $(Z_n,\hat \nu_n) \to_{a.s.} (Z,\nu)$. Let $P_n = P_0 + n^{-1/2} Z_n$ so that $P_n =_d \hat P$. Fix any $\omega \in \Omega$ for which $(Z_n(\omega),\hat \nu_n(\omega)) \to (Z(\omega),\nu(\omega))$. 
Let $\hat{\ul \Theta}_{\delta,n}(\omega) = \{\theta \in \Theta_\delta(P_n(\omega)) : \ul K_\delta(\theta; P_n(\omega)) \leq \ul b_\delta(P_n(\omega)) + \nu(\omega)\sqrt{\log n/n}\}$. The set $\Theta_{\delta}(P_n(\omega))$, and therefore $\hat{\ul \Theta}_{\delta,n}(\omega)$, is nonempty for $n$ sufficiently large. 

Suppose there is $\{\theta_{n_l}(\omega)\}$ with $\theta_{n_l}(\omega) \in \hat{\ul \Theta}_{\delta,n}(\omega) \setminus \ul \Theta_{\delta, a}$ for all $l$. By Assumption~\ref{a:m}(v) (taking a further subsequence if necessary) we have $\theta_{n_l}(\omega) \to \ul \theta(\omega) \in \Theta$. We may deduce by similar arguments to Step 1 in the proof of Theorem~\ref{t:asydist} that $\ul \theta(\omega) \in \ul \Theta_\delta$. Then by Lemma~\ref{lem:mult:unif}(i) and Assumption~\ref{a:m}(iii)(vi), we have $\lim_{l \to \infty} \ul K_\delta(\theta_{n_l}(\omega);P_0)  = \ul K_\delta(\ul \theta(\omega);P_0) = \ul b_\delta(P_0)$. Moreover, $\lim_{l \to \infty} \Delta(\theta_{n_l}(\omega);P_0) = \Delta(\ul \theta(\omega);P_0) < \delta$ by Lemma~\ref{lem:phi-cts} and Assumption~\ref{a:m}(iii)(vi). Therefore, $\theta_{n_l}(\omega) \in \ul \Theta_{\delta,a}$ for $l$ sufficiently large, a contradiction. Therefore, $ \hat{\ul \Theta}_{\delta,n} \subseteq \ul \Theta_{\delta,a}$ wpa1 for each $a > 0$. 

In view of the above, we may choose $a_n \downarrow 0$ so that $\hat{\ul \Theta}_{\delta,n} \subseteq \ul \Theta_{\delta,a_n}$ wpa1.

\underline{Step 2:} We prove part (i). 
Let $a > 0$ and the neighborhood $N$ be as in Lemma~\ref{lem:deriv:set}. 
It follows by Proposition~\ref{prop:dual}(iv) that $\ul \Lambda_\delta(\theta;P)$ is compact and nonempty for all $(\theta,P) \in \ul \Theta_{\delta,a} \times N$. 
We may also deduce by similar arguments to Step 1 in the proof of Theorem~\ref{t:asydist} that $\ul \Theta_{\delta,a}$ is compact. Finally, as $(\theta,P) \mapsto \max_{\ul \lambda_{12} \in \ul \Lambda_\delta(\theta;P)} \| \ul \lambda_{12}\|$ is u.s.c. (by Lemma~\ref{lem:mult:unif}(ii)) on $\ul \Theta_{\delta,a} \times N$, we have $\sup_{\theta \in \ul \Theta_{\delta,a}} \max_{\ul \lambda_{12} \in \ul \Lambda_\delta(\theta;P)} \| \ul \lambda_{12}\| \leq C$ on a neighborhood $N'$ of $P_0$ for some $C < \infty$. Now, as $\hat P \in N \cap N'$ and $\hat{\ul \Theta}_{\delta,n} \subseteq \ul \Theta_{\delta,a_n}$ both hold wpa1, it follows from the fact that the $\max$ and $\min$ operations are Lipschitz and the Cauchy-Schwarz inequality that
\[
 \left| \wh{d\ul b}_{\delta,P_0} [h_1] - \wh{d\ul b}_{\delta,P_0}[ h_2] \right| \leq  C \| h_1 - h_2 \| \quad \mbox{for all $h_1,h_2 \in \mb R^{d_1 + d_2}$}
\]
wpa1, proving part (i).

\underline{Step 3:} We prove part (ii).
As $\ul \Theta_\delta \subseteq \hat{\ul \Theta}_{\delta,n} \subseteq \ul \Theta_{\delta,a_n}$ wpa1 by step 1, it suffices to show
\begin{align*}
 \inf_{\theta \in \ul \Theta_\delta} \max_{\ul \lambda_{12} \in \ul \Lambda_\delta(\theta, \hat P)} - \ul \lambda_{12}'h & \to_p  d\ul b_{\delta,P_0}[h] \,, &
 \inf_{\theta \in \ul \Theta_{\delta,a_n}} \max_{\ul \lambda_{12} \in \ul \Lambda_\delta(\theta, \hat P)} - \ul \lambda_{12}'h & \to_p  d\ul b_{\delta,P_0}[h] 
\end{align*}
for all $h \in \mb R^{d_1 + d_2}$. Using the almost sure representation in Step 1, fix any $\omega \in \Omega$ for which $Z_n(\omega) \to Z(\omega)$. Let $\ul \theta_h$ solve $\min_{\theta \in \ul \Theta_\delta} \max_{\ul \lambda_{12} \in \ul \Lambda_\delta(\theta;P_0)} - \ul \lambda_{12}h$. By Lemma~\ref{lem:mult:unif}(ii), we have
\begin{align}
 & \limsup_{n \to \infty} \inf_{\theta \in \ul \Theta_\delta} \max_{\ul \lambda_{12} \in \ul \Lambda_\delta(\theta, P_n(\omega))} - \ul \lambda_{12}'h \notag \\
 & \leq \limsup_{n \to \infty} \max_{\ul \lambda_{12} \in \ul \Lambda_\delta(\ul \theta_h, P_n(\omega))} - \ul \lambda_{12}'h
 \leq  \max_{\ul \lambda_{12} \in \ul \Lambda_\delta(\ul \theta_h, P_0)} - \ul \lambda_{12}'h = d\ul b_{\delta,P_0}[h] \,. \label{e:dd:ub}
\end{align}
Now choose $\ul \theta_n(\omega) \in \ul \Theta_{\delta,a_n}$ for which
\[
 \inf_{\theta \in \ul \Theta_{\delta,a_n}} \max_{\ul \lambda_{12} \in \ul \Lambda_\delta(\theta, P_n(\omega))} - \ul \lambda_{12}'h \geq \max_{\ul \lambda_{12} \in \ul \Lambda_\delta(\theta_n(\omega), P_n(\omega))} - \ul \lambda_{12}'h - n^{-1} \,.
\]
Take any subsequence $\{\theta_{n_l}(\omega), P_{n_l}(\omega)\}$. By Assumption~\ref{a:m}(v) (taking a further subsequence if necessary) we have $\theta_{n_l}(\omega) \to \ul \theta(\omega) \in \Theta$. By similar arguments to Step 1 of the proof of Theorem~\ref{t:asydist}, we may deduce $ \ul \theta(\omega) \in \ul \Theta_\delta$. Lemma~\ref{lem:mult:unif}(ii) and Assumption~\ref{a:m}(vii') together imply the correspondence $(\theta,P) \mapsto \ul \Lambda_\delta(\theta,P)$ is continuous at $(\ul \theta, P_0)$. Therefore,
\begin{align*}
 & \liminf_{l \to \infty} \inf_{\theta \in \ul \Theta_{\delta,a_{n_l}}} \, \max_{\ul \lambda_{12} \in \ul \Lambda_\delta(\theta, P_{n_l}(\omega))} - \ul \lambda_{12}'h \\
 & \geq \liminf_{l \to \infty} \max_{\ul \lambda_{12} \in \ul \Lambda_\delta(\theta_{n_l}(\omega), P_{n_l}(\omega))} - \ul \lambda_{12}'h 
 \geq \max_{\ul \lambda_{12} \in \ul \Lambda_\delta(\ul \theta(\omega), P_0)} - \ul \lambda_{12}'h
 \geq  d\ul b_{\delta,P_0}[h] \,.
\end{align*}
As the lower bound on the right-hand side does not depend on the subsequence, we therefore have $\liminf_{n \to \infty} \inf_{\theta \in \ul \Theta_{\delta,a_n}} \max_{\ul \lambda_{12} \in \ul \Lambda_\delta(\theta, P_n(\omega))} - \ul \lambda_{12}'h \geq d\ul b_{\delta,P_0}[h]$. This, in conjunction with the upper bound (\ref{e:dd:ub}), completes the proof.
\end{proof}

\subsection{Proofs for Appendix~\ref{ax:extension}}\label{ax:extension_proofs}

\begin{proof}[Proof of Proposition~\ref{prop:criterion-exch}]
We prove only the result for $\ul K_\delta^\Pi$; the result for $\ol K_\delta^\Pi$ follows similarly.

Dropping dependence of $k$ and $g$ on $(\theta,\gamma)$ to simplify notation, we have 
\begin{align}
 \ul K_\delta^\Pi(\theta;\gamma,P) & = \inf_{F \in \mc N_\delta^\Pi} \mb E^F[k^\Pi(U)] \quad \mbox{subject to} \quad \mb E^F[g_1^\Pi(U)] \leq P_1 \,,\, \ldots, \, \mb E^F[g_4^\Pi(U)] = 0 \notag \\
 & \geq  \inf_{F \in \mc N_\delta^{\phantom{\Pi}}} \mb E^F[k^\Pi(U)] \quad \mbox{subject to} \quad \mb E^F[g_1^\Pi(U)] \leq P_1 \,,\, \ldots, \, \mb E^F[g_4^\Pi(U)] = 0  \,, \label{prop:criterion-exch-1}
\end{align}
where the first line uses $\Pi$-invariance of $F \in \mc N_\delta^\Pi$ and the second is because $\mc N_\delta^\Pi \subseteq \mc N_\delta$. Problem (\ref{prop:criterion-exch-1}) is an optimization over $\mc N_\delta$ may therefore be restated as problem (\ref{e:dual:1_exch}) by virtue of Proposition~\ref{prop:criterion}. The right-hand side of (\ref{e:dual:1_exch}) is therefore a lower bound for $\ul K_\delta^\Pi(\theta;\gamma,P)$. 

To establish equality, first suppose  problem (\ref{prop:criterion-exch-1}) is infeasible. Then problem (\ref{e:crit_l_ex}) must also be infeasible because $\mc N_\delta \supseteq \mc N_\delta^\Pi$, so the value of problem (\ref{e:crit_l_ex})  is $+\infty$. By Proposition~\ref{prop:criterion}, the value of the dual program to (\ref{prop:criterion-exch-1}), and hence the right-hand side of (\ref{e:dual:1_exch}), must also be $+\infty$. 

Now suppose that problem (\ref{prop:criterion-exch-1}) is feasible. We claim that there exists a minimizing $F \in \mc N_\delta$ that is $\Pi$-invariant, so it is without loss of generality to optimize over $\mc N_\delta$ rather than $\mc N_\delta^\Pi$. First note by Lemma~\ref{lem:v_properties} there exists a (not necessarily $\Pi$-invariant) minimizing $F \in \mc N_\delta$, say $F_0$.  Let $m_0 = \frac{\mr d F_0}{\mr d F_*}$ and let $\kappa_0 = \E^{F_0}[k^\Pi(U)]$ denote the minimizing value of (\ref{prop:criterion-exch-1}). Note that $\kappa_0$ must also be the value of the dual program (\ref{e:dual:1_exch}) by virtue of Proposition~\ref{prop:criterion}.  Define $m_\Pi(u) = \frac{1}{|\Pi|} \sum_{\varpi \in \Pi} m_0(\varpi u)$ and $F_\Pi$ by $\mr d F_\Pi = m_\Pi \mr d F_*$. 

We have $\E^{F_*}[m_\Pi(U)] = \frac{1}{|\Pi|} \sum_{\varpi \in \Pi} \E^{F_*}[ m_0(\varpi U)] = \frac{1}{|\Pi|} \sum_{\varpi \in \Pi} \E^{F_*}[ m_0( U)] = 1$ by $\Pi$-invariance of $F_*$, so $F_\Pi$ is a probability measure. By convexity of $\phi$ we also have 
\[
 D_\phi(F_\Pi\|F_*) = \E[\phi(m_\Pi(U))] \leq \frac{1}{|\Pi|} \sum_{\varpi \in \Pi} \E^{F_*}[\phi(m_0(\varpi U))] = \frac{1}{|\Pi|} \sum_{\varpi \in \Pi} \E^{F_*}[\phi(m_0( U))] \leq \delta
\]
because $F_0 \in \mc N_\delta$. This proves $F_\Pi \in \mc N_\delta$. Finally, as $\Pi$ is a group, for any function $h$ and any $\varpi \in \Pi$, we have 
\[
 \sum_{\varpi' \in \Pi} h(\varpi' u) 
 = \sum_{\varpi' \in \Pi} h(\varpi' \varpi^{-1} \varpi u)
 = \sum_{\varpi' \in \Pi} h(\varpi' \varpi u) \,.
\]
Therefore, $m_\Pi(u) = m_\Pi(\varpi u)$ for each $\varpi \in \Pi$. It follows that for any $\varpi \in \Pi$ and $A \subset \mc U$, 
\[
\begin{aligned}
 \E^{F_\Pi}[ \ind\{ \varpi U \in A\}] 
 = \E^{F_*}[m_\Pi( U) \ind\{ \varpi U \in A\}] 
 & = \E^{F_*}[m_\Pi(\varpi U) \ind\{ \varpi U \in A\}] \\
 & = \E^{F_*}[m_\Pi(U) \ind\{ U \in A\}] = \E^{F_\Pi}[ \ind\{U \in A\}] \,,
\end{aligned}
\]
by $\Pi$-invariance of $F_*$. Hence, $F_\Pi$ is $\Pi$-invariant and so $F_\Pi \in \mc N_\delta^\Pi$. 

It remains to show that $F_\Pi$ yields the optimal value $\kappa_0$  satisfies (\ref{e:mod}). Note by $\Pi$-invariance of $F_*$ that 
\[
\begin{aligned}
 \E^{F_\Pi}[k(U)] & = \frac{1}{|\Pi|} \sum_{\varpi \in \Pi} \E^{F_*}[ m_0(\varpi U) k(\varpi \varpi^{-1} U) ] \\
 & = \frac{1}{|\Pi|} \sum_{\varpi \in \Pi} \E^{F_*}[ m_0(U) k( \varpi^{-1} U) ] \\
 & = \frac{1}{|\Pi|} \sum_{\varpi \in \Pi} \E^{F_*}[ m_0(U) k( \varpi U) ]
 = \E^{F_0}[k^\Pi(U)] = \kappa_0 \,,
\end{aligned}
\]
where the first equality on the final line is because $\{\varpi^{-1} : \varpi \in \Pi\} = \Pi$. An identical argument  shows that $\E^{F_\Pi}[g_1(U)] = \mb E^{F_0}[g_1^\Pi(U)] \leq P_1$, \ldots, $\E^{F_\Pi}[g_4(U)] = \mb E^{F_0}[ g_4^\Pi(U)] = 0$.
\end{proof}

\begin{proof}[Proof of Proposition~\ref{prop:criterion-conditional}]
The minimization problem is additively separable across each $x \in \mc X$. The result follows by applying Proposition~\ref{prop:criterion} for each $x$.
\end{proof}

\begin{proof}[Proof of Proposition~\ref{prop:criterion-nonsep}]
Follows by similar arguments to the proof of Proposition~\ref{prop:criterion}.
\end{proof}

\subsection{Proofs for Appendix~\ref{ax:sharp}}\label{ax:sharp_proofs}

\begin{proof}[Proof of Lemma~\ref{lem:limits}]
We prove the result only for $\ul K_\infty$; the result for $\ol K_\infty$ follows similarly.
 First suppose that $\inf \mc K_\infty$ is finite. Fix any $\varepsilon > 0$. Then there is $F_\varepsilon \in \mc N_\infty$ and $\theta_\varepsilon \in \Theta$ such that (\ref{e:mod}) holds at $(\theta_\varepsilon,\gamma_0,P_0)$ under $F_\varepsilon$ and $\mb E^{F_\varepsilon}[k(U,\theta_\varepsilon,\gamma_0)] < \inf \mc K_\infty + \varepsilon$. Then for any $\delta \geq D_\phi(F_\varepsilon\|F_0)$ we have $\ul \kappa_\delta < \inf \mc K_\infty + \varepsilon$. Conversely, $\inf \mc K_\infty = -\infty$, then for each $n \in \mb N$ there exists $F_n \in \mc N_\infty$ and $\theta_n \in \Theta$ such that (\ref{e:mod}) holds at $(\theta_n,\gamma_0,P_0)$ under $F_n$ and $\mb E^{F_n}[k(U,\theta_n,\gamma_0)] < -n$. But then for any $\delta \geq D_\phi(F_n\|F_0)$ we necessarily have $\ul \kappa_\delta < -n$. 
\end{proof}

\begin{proof}[Proof of Lemma~\ref{lem:dual:infty}]
We prove the result only for $\ul K_\infty$; the result for $\ol K_\infty$ follows similarly.

We follow similar arguments to Appendix~\ref{sec:dual_K}. Dropping dependence of $k$ and $g$ on $(\theta,\gamma)$, consider 
\begin{equation} \label{e:p0_infty}
 \inf_F \mb E^F[k(U)] \quad \mbox{subject to} \quad \mb E^F[g_1(U)] \leq P_1 \,,\, \ldots, \, \mb E^F[g_4(U)] = 0.
\end{equation}
Identify each $F \in \mc N_\infty$ with its Radon--Nikodym derivative $m = \mr d F/\mr d F_* \in \mc L$ (see Appendix~\ref{ax:Orlicz}). Define $\varphi_\infty : \mc L \times \mb R^{d+1} \to \mb R \cup \{+\infty\}$ by
\begin{align*}
 \varphi_\infty(m,y) = \langle m, k \rangle + \mathbb I_{C_+}(m) + \mathbb I_{C_2}\left(\langle m, 1 \rangle - 1 + y_1, \langle m, g \rangle - \vec P + y_2 \right) ,
\end{align*}
where $y_1 \in \mb R$, $y_2 \in \mb R^d$, $\mb I_{C} : \mb R^{d+1} \to \mb R \cup \{+ \infty\}$ is given by
\[
 \mb I_{C_2}(y_1,y_2) = \left[ \begin{array}{cl} 0 & \mbox{if $y_1 = 0$, and $y_2 \in \mb R_{-}^{d_1} \times \{0_{d_2}\} \times \mb R_{-}^{d_3} \times \{0_{d_4}\}$}, \\
 +\infty & \mbox{otherwise},
 \end{array} \right.
\]
and $\mb I_{C_+} : \mc L \to \mb R \cup \{+ \infty\}$ is given by
\[
 \mb I_{C_+}(m) = \left[ \begin{array}{cl} 0 & \mbox{if $m \in \mc L_+$}, \\
 +\infty & \mbox{otherwise}.
 \end{array} \right.
\]
The \emph{primal} problem  for $y \in \mb R^{d+1}$ is $\min_{m \in \mc L} \varphi_\infty(m, y)$ and its \emph{value} is $v_\infty(y) = \inf_{m \in \mc L} \varphi_\infty(m, y)$.
By similar arguments to Lemmas~\ref{lem:varphi_properties} and \ref{lem:v_properties}, one may deduce that $\varphi_\infty$ is proper and convex and that $v_\infty : \mb R^{d+1} \to \mb R \cup \{+\infty\}$ is convex and its effective domain is $\mc Y_2$ (see display (\ref{e:y2})).

Now consider the \emph{dual} problem $\max_{y^\star \in \mb R^{d+1}} \left( y' y^\star - \varphi^\star_\infty(0, y^\star)\right)$ at $y = 0$, where $\varphi^\star_\infty : \mc E \times \mb R^{d+1} \to \mb R \cup \{+\infty\}$ is the convex conjugate of $\varphi_\infty$.
By direct calculation, with $y^\star = (y_1^\star,y_2^\star) \in \mb R \times \mb R^d$ we have
\begin{align*}
 \varphi^\star_\infty(0,y^\star) & = \sup_{(m,y) \in \mc L_+ \times \mb R^{d+1}} \left(  y' y^\star - \langle m, k \rangle  - \mb I_{C_2}\left( \langle m, 1 \rangle - 1 + y_1, \langle m, g \rangle - \vec P + y_2 \right) \right) \\
 & = \sup_{m \in \mc L_+ } \left( - y_1^\star ( \langle m,1 \rangle - 1 ) - y_2^{\star\prime} \left( \langle m, g \rangle - \vec P \right) - \langle m, k \rangle \right)  + \mb I_{C_2^o}(y^\star) ,
\end{align*}
where $C_2^o = \mb R \times \Lambda$, $\mb I_{C_2^o}(y^\star) = 0$ if $y^\star \in C_2^o$ and $+\infty$ otherwise, and it suffices to optimize over $\mc L_+$ because $\mb I_{C_+}(m) = +\infty$ for $m \in \mc L \setminus \mc L_+$. Write $y^\star \in C^o_2$ as $y^\star = (\zeta,\lambda)$, where $\zeta \in \mb R$ and $\lambda \in \Lambda$. By \citeauthor{RW} (\citeyear{RW}, Definition 14.59 and Theorem 14.60), we have
\begin{align*}
 \varphi^\star_\infty(0,(\zeta,\lambda)) 
 & = \sup_{m \in \mc L_+ } \mb E^{F_*} \left[ m(U) (-k(U) - \zeta - \lambda' g(U))   \right] + \zeta + \lambda' \vec P \\ 
 & = \mb E^{F_*} \left[ \sup_{x \geq 0} x (-k(U) - \zeta - \lambda' g(U))   \right] + \zeta + \lambda' \vec P \\
 & = \left[ \begin{array}{cl}
 \zeta + \lambda' \vec P & \mbox{if $\zeta + F_*\text{-}\mr{ess}\inf (  k + \lambda'g ) \geq 0$}, \\
 +\infty & \mbox{otherwise}, \end{array} \right.
\end{align*}
provided $(\zeta,\lambda) \in \mb R \times \Lambda$. The dual value at $y = 0$ is therefore
\begin{align}
 & \sup_{\zeta \in \mb R, \lambda \in \Lambda} - \zeta - \lambda' \vec P \quad \mbox{subject to } \zeta + F_*\text{-}\mr{ess}\inf (  k + \lambda'g ) \geq 0 \notag \\
 & = \sup_{\lambda \in \Lambda : F_*\text{-}\mr{ess}\inf (  k + \lambda'g ) > - \infty} \left( F_*\text{-}\mr{ess}\inf (  k + \lambda'g ) - \lambda' \vec P \right) \label{e:p0_dual_infty}.
\end{align}

Lemma~\ref{lem:cq}(ii) implies that $0 \in \mr{ri}(\mc Y_2)$ under Condition S. It then follows by \citeauthor{BS} (\citeyear{BS}, Propositions 2.147 and 2.148(iii)) that the primal and dual values (\ref{e:p0_infty})  and (\ref{e:p0_dual_infty}) are equal and the set of dual solutions is nonempty. 
\end{proof}

Lemma~\ref{lem:dual:sharp} is proved by modifying results of \cite{CM} from a setting with moment equality restrictions to one with moment inequality restrictions.

\begin{proof}[Proof of Lemma~\ref{lem:dual:sharp}]
We prove the result only for $\ul K_{np}$; the result for $\ol K_{np}$ follows similarly.

We drop dependence of $g$ and $k$ on $(\theta,\gamma)$ in what follows to simplify notation. Define $\mc M = \{m \in L^1(\mu) : \int m g \, \mr d \mu$ is finite$\}$ and $\mc M_+ = \{m \in \mc M: m \geq 0$ $\mu$-a.e.$\}$. Note $\mr d F/\mr d \mu \in \mc M_+$ if and only if $F \in \mc F_\theta$.  For $y = (y_1,y_2) \in \mb R \times \mb R^d$, let
\[
 \mc M[y] = \left\{m \in \mc M_+ : \int m \, \mr d \mu = 1 + y_1\,, \, \int m g \, \mr d \mu = \vec P + y_2 \right\} ,
\]
where the second integration is element-wise. 
Define $\varphi_a, \varphi_b : \mb R^{d+1} \to \mb R \cup \{+\infty\}$  by 
\[
\begin{aligned}
 \varphi_a(y) & = \inf_{ m \in \mc M[y]} \int mk \, \mr d \mu  \,, & & & 
 \varphi_b(y) = \left[ \begin{array}{cl}
 0 & \mbox{if } y \in \{0\} \times \mb R_-^{d_1} \times \{0_{d_2}\} \times \mb R_-^{d_3} \times \{0_{d_4}\} , \\
 +\infty & \mbox{otherwise,}
 \end{array} \right.
\end{aligned}
\]
with the understanding that $\varphi_a(y) = +\infty$ if the infimum runs over an empty set. Both $\varphi_a$ and $\varphi_b$ are proper and convex (note that $\mu\text{-}\mr{ess}\inf |k| < \infty$ ensures that $\varphi_a(y) > -\infty$ for all $y$ and $|\varphi_a(y)| < \infty$ whenever $\mc M[y] \not = \emptyset$).

Let $\mc V = \{ \int (1,g) \,m\, \mr d \mu : m \in \mc M_+\}$. Note that $\mr{dom}\,\varphi_a = \mc V - (1,\vec P )$. By Condition S$_{np}$ and similar arguments to the proof of Lemma~\ref{lem:cq}(ii), we have that $(1, \vec P) \in \mr{ri}( \mc V + (\{0 \} \times \mc C))$. By \citeauthor{Rockafellar} (\citeyear{Rockafellar}, Corollary 6.6.2), we also have $\mr{ri}( \mc V + (\{0 \} \times \mc C)) = \mr{ri}( \mc V ) + \mr{ri}( \{0 \} \times \mc C)$ and therefore $0 \in \mr{ri}(\mr{dom}\,\varphi_a) + \mr{ri}(\{0\} \times \mc C)$. As such, there exists $v \in \mb R^{d+1}$ such that $v \in \mr{ri}(\mr{dom}\,\varphi_a)$ and $-v \in \mr{ri}(\{0\} \times \mc C) \equiv - \mr{ri}(\mr{dom}\,\varphi_b)$, so $\mr{ri}(\mr{dom}\, \varphi_a)  \cap \mr{ri} (\mr{dom}\, \varphi_b)$ is nonempty. Then by Fenchel's Duality Theorem \cite[Theorem 31.1]{Rockafellar}, 
\begin{equation}\label{e:sharp-1}
 \ul K_{np}(\theta;\gamma,P) = \inf_{y \in \mb R^{d+1}} \left( \varphi_a(y) + \varphi_b(y) \right) = \sup_{y^\star \in \mb R^{d+1}} (-\varphi_a^\star(y^\star) - \varphi_b^\star(-y^\star)) ,
\end{equation}
where $\varphi_a^\star$ and $\varphi_b^\star$ are the convex conjugates of $\varphi_a$ and $\varphi_b$. Write $y^\star = (\zeta,\lambda)$. Then
\begin{equation}\label{e:sharp-2}
 \varphi_b^\star(-(\zeta,\lambda)) = \left[ \begin{array}{cl}
 0 & \mbox{if } -\lambda \in \Lambda ,\\[-4pt]
 +\infty & \mbox{otherwise,}
 \end{array} \right.
\end{equation}
and with $y = (y_1,y_2) \in \mb R \times \mb R_d$,
\[
\begin{aligned}
 - \varphi_a^\star((\zeta,\lambda)) 
 & = \inf_{y \in \mb R^{d+1}} \inf_{m \in \mc M[y]} \left( - \zeta y_1 - \lambda' y_2 + \int mk \, \mr d \mu \right) \\
 & = \inf_{y \in \mb R^{d+1}} \inf_{m \in \mc M[y]} \left( \zeta + \lambda' \vec P + \int  \left( k - \zeta - \lambda'g \right) m \, \mr d \mu \right) \\
 & = \inf_{m \in \mc M_+} \left( \zeta + \lambda' \vec P  +  \int  \left( k - \zeta - \lambda'g \right) m \, \mr d \mu  \right) .
\end{aligned}
\]
Let 
\[ 
 Q(u,m(u)) = \left[ \begin{array}{ll}
 k(u) m(u) & \mbox{if $m(u) \geq 0$} ,\\
 +\infty & \mbox{otherwise} ,
 \end{array} \right.
\]
so that
\[
 - \varphi_a^\star((\zeta,\lambda)) = \inf_{m \in \mc M} \left( \zeta + \lambda' \vec P  +  \int  Q(u,m(u)) -  \left( \zeta + \lambda' g(u) \right) m(u) \, \mr d \mu(u) \right) .
\]
By Remark A.3 and Theorem A.4 of \cite{CM}, we have
\begin{align}
 - \varphi_a^\star((\zeta,\lambda)) 
 & =  \zeta + \lambda' \vec P  + \int \inf_{x \geq 0} \left( k(u) - \zeta - \lambda' g(u)   \right) x \, \mr d \mu(u) \notag \\
 & = \left[ \begin{array}{cl}
 - \infty & \mbox{if $\mu\text{-}\mr{ess}\inf (k - \zeta - \lambda'g) < 0$} , \\
 \zeta + \lambda' \vec P & \mbox{otherwise.} \label{e:sharp-3}
 \end{array} \right.
\end{align}
It now follows from (\ref{e:sharp-1}), (\ref{e:sharp-2}), and (\ref{e:sharp-3}) that
\begin{align*}
 \ul K_{np}(\theta;\gamma,P) & = \sup_{\zeta \in \mb R, \lambda \in \Lambda : \mu\text{-}\mr{ess}\inf (k - \zeta + \lambda' g ) \geq 0} \zeta  - \lambda' \vec P \\
 & = \sup_{\lambda \in \Lambda : \mu\text{-}\mr{ess}\inf (k + \lambda'g ) > -\infty}  \mu\text{-}\mr{ess}\inf (k + \lambda'g ) - \lambda' \vec P \,,
\end{align*} 
as required.
\end{proof}

\subsection{Proofs for Appendix~\ref{ax:local}}\label{ax:local_proofs}

\begin{proof}[Proof of Theorem~\ref{t:local}]
To simplify notation, we drop dependence of $g(u,\theta,\gamma_0,P_{20})$ on $(\gamma_0,P_{20})$ and $k(u,\theta,\gamma_0)$ on $\gamma_0$. Under the stated regularity conditions, $k$ and each entry of $g$ belong to $L^2(F_*)$ for all $\theta$ in a neighborhood of $\theta_*$.

\underline{Step 1:} We first show $s \geq  2 \mb E^{F_*}[\iota(U)^2]$. Let $L^2_0(F_*) = \{b \in L^2(F_*) : \E^{F_*}[b(U)] = 0\}$. Define $\mb M : L^2_0(F_*) \to L^2_0(F_*)$ by
\[
 \mb M b = b - \mb E^{F_*}[b(U)g_*(U)'](V^{-1} - V^{-1}G(G'V^{-1}G)^{-1}G'V^{-1})g_* \,.
\]
Fix $b \in L^2_0(F_*)$. By Example 3.2.1 of \cite{BKRW}, define a smooth parametric family $\{F_t : t \in (-1,1)\}$ passing through $F_*$ at $t = 0$ via 
\[
 \frac{\mr dF_t}{\mr dF_*}  = \frac{\upsilon(t \mb Mb)}{\mb E^{F_*}[\upsilon(t \mb Mb(U))] } \,, \quad \mbox{where }  
 \upsilon(x)  = \frac{2}{1+e^{-2x}} \,.
\]
Fix any $d_\theta \times (d_2 + d_4)$ matrix $A$ of full rank. By the implicit function theorem and invertibility of $AG$, there exists $\varepsilon > 0$ such that $\mb E^{F_t}[Ag(U,\theta)] = 0$ has a unique solution $\theta(F_t) \in \Theta$ for all $t \in (-\varepsilon,\varepsilon)$, and
\[
 \left. \frac{d\theta(F_t)}{dt} \right|_{t = 0} = - (AG)^{-1} A\mb E^{F_*}[g_*(U)\mb Mb(U)] \,.
\]
Writing $\kappa(F_t) = \mb E^{F_t}[k(U,\theta(F_t))]$ and $\tilde \iota (u) = k_*(u) - \kappa_* - J' (AG)^{-1} Ag_*(u)$, we have
\begin{align*}
 \left. \frac{d\kappa(F_t)}{dt} \right|_{t = 0} & = \mb E^{F_*}[k_*(U)\mb Mb(U)] - J' (AG)^{-1} A\mb E^{F_*}[g_*(U)\mb Mb(U)] \\
 & = \mb E^{F_*} [\tilde \iota(U) \mb Mb(U)] \\
 & =  \mb E^{F_*} [\mb M\tilde \iota(U) \mb Mb(U)] \,,
\end{align*}
where the final line is because $\mb M$ is an orthogonal projection. 
For any $A$, we have
\begin{align*}
 \mb M\tilde \iota & = \mb Mk_* - J' (AG)^{-1} A ( g_* -\mb E^{F_*}[g_*(U)g_*(U)'](V^{-1} - V^{-1}G(G'V^{-1}G)^{-1}G'V^{-1})g_* ) \\
 & = \mb Mk_* - J' (G'V^{-1}G)^{-1}G'V^{-1} g_*  = \iota\,.
\end{align*}
Hence,
\[
 \frac{d\kappa(F_t)}{dt} |_{t = 0} = \mb E^{F_*} [ \iota(U) \mb Mb(U)].
\]
 
As $\phi(x) = \frac{1}{2}(x-1)^2$ for $x \geq 0$, a Taylor series expansion of $\upsilon(x)$ around $x = 0$ yields
\[
 D_\phi(F_t\|F_*) = \frac{t^2}{2}\mb E^{F_*}[(\mb Mb(U))^2] + o(t^2)\,.
\]
Therefore, whenever $\mb E^{F_*}[(\mb Mb(U))^2] \neq 0$, it follows from the preceding two displays that we have
\[
 \frac{(\kappa(F_t) - \kappa(F_{-t}))^2}{4D_\phi(F_t\|F_*)} = \frac{\mb E^{F_*}[\iota(U) \mb Mb(U)]^2 + o(1)}{\frac{1}{2}\mb E^{F_*}[(\mb Mb(U))^2] + o(1)}.
\]
Hence,
\[
 s \geq  \frac{\mb E^{F_*}[ \iota(U) \mb Mb(U)]^2}{\frac{1}{2}\mb E^{F_*}[(\mb Mb(U))^2]} \,.
\]
If $\iota(u) = 0$ ($F_*$-almost everywhere) then we trivially have $s \geq  2 \mb E^{F_*}[\iota(U)^2]$. Otherwise, choosing $b = \iota$ (which is valid because $\E^{F_*}[\iota(U)] = 0$ by construction) yields $s \geq  2 \mb E^{F_*}[\iota(U)^2]$.

\underline{Step 2:} We prove $s \leq  2 \mb E^{F_*}[\iota(U)^2]$ by contradiction. 
Suppose there exists a sequence $\delta_n \downarrow 0$ and $\varepsilon > 0$ such that
\[
 \frac{( \ol \kappa_{\delta_n} - \ul \kappa_{\delta_n})^2}{4\delta_n} \geq 2 \mb E^{F_*}[\iota(U)^2] + 2\varepsilon \,.
\]
for each $n$. We may then choose $\ul \theta_n, \ol \theta_n \in \Theta$ and $\ul F_n, \ol F_n \in \mc N_{\delta_n}$ such that $\ul F_n$ and $\ol F_n$ satisfy $\mb E^{\ol F_n}[g(U,\ol \theta_n)] = 0$ and $\mb E^{\ul F_n}[k(U,\ul \theta_n)] = 0$, and
\begin{equation}\label{e:local:ub1}
 \frac{(\mb E^{\ol F_n}[k(U,\ol \theta_n)] - \mb E^{\ul F_n}[ k(U,\ul \theta_n)])^2}{4\delta_n} \geq 2\mb E^{F_*}[\iota(U)^2] + \varepsilon \,.
\end{equation}
As $\Theta$ is compact, (taking a subsequence if necessary) we have $\ul \theta_n \to \ul \theta^* \in \Theta$ and $\ol \theta_n \to \ol \theta^* \in \Theta$. 

The spaces $\mc L$ and $\mc E$ with $\phi(x) = \frac{1}{2}(x-1)^2$ are equivalent to $L^2(F_*)$. Let $\|\cdot\|_2$ denote the $L^2(F_*)$ norm. Note $\mb E^{F_*}[\phi(m(U))] = \frac{1}{2}\|m-1\|_2^2$ where $m - 1$ is the function $u \mapsto m(u) - 1$. Let $\ul m_n = \mr d \ul F_n/\mr d F_*$ and $\ol m_n = \mr d \ol F_n/\mr d F_*$. As $\ul F_n, \ol F_n \in \mc N_{\delta_n}$, \begin{equation}\label{e:mbdd:l2}
  \| \ul m_n-1\|_2^2 ,  \| \ol m_n-1\|_2^2 \leq 2 \delta_n \downarrow 0 \quad \mbox{as $n \to \infty$. }
\end{equation}
By similar arguments to Lemma~\ref{lem:cgt}, we may deduce $\mb E^{F_*}[g(U,\ul \theta^*)] = \mb E^{F_*}[g(U,\ol \theta^*)] = 0$. It then follows by identifiability of $\theta_*$ that $\ul \theta^* = \ol \theta^* = \theta_*$.

By differentiability of $\theta \mapsto \mb E^{F_*}[g(u,\theta)]$ at $\theta_*$, we have 
\[
 -G(\ul \theta_n - \theta_*) + o( \| \ul \theta_n - \theta_*\|) = \mb E^{F_*}[ (\ul m_n(U)-1)g(U,\ul \theta_n)]  \quad \mbox{as $\ul \theta_n \to \theta_*$.}
\]
It follows by Cauchy--Schwarz and the fact that $G$ has full rank that $\|\ul \theta_n - \theta_*\| = O( \|\ul m_n-1\|_2 )$. Therefore, by (\ref{e:mbdd:l2}), Cauchy--Schwarz, and $L^2(F_*)$ continuity of $\theta \mapsto g(\cdot,\theta,\gamma_0,P_{20})$ at $\theta_*$,
\begin{equation} \label{e:local:ub4}
 -G(\ul \theta_n - \theta_*) = \mb E^{F_*}[ (\ul m_n(U)-1)g_*(U)] + o(  \delta_n^{1/2} )
\end{equation}
and so $\ul \theta_n - \theta_* = -(G'V^{-1}G )^{-1} G'V^{-1}\mb E^{F_*}[ (\ul m_n(U)-1)g_*(U)] + o( \delta_n^{1/2} )$. 
By similar arguments,
\begin{align*}
 & \mb E^{F_*}[\ul m_n(U) k(U,\ul \theta_n)] - \kappa_* \\
 & = \mb E^{F_*}[(\ul m_n(U)-1) (k_*(U)-\kappa_*)] + J'(\ul\theta_n - \theta_*) + o( \delta_n^{1/2} )\\ 
 & = \mb E^{F_*}[(\ul m_n(U)-1) (k_*(U)-\kappa_*-J'(G'V^{-1}G )^{-1} G'V^{-1}g_*(U))]  + o( \delta_n^{1/2} ) \,.
\end{align*}
However, by (\ref{e:local:ub4}) and definition of $\mb M$ we also have
\[
 \mb E^{F_*}[(\ul m_n(U)-1) (k_*(U)-\kappa_*-\mb M (k_*-\kappa_*)(U))] = o( \delta_n^{1/2} )  
\]
hence 
\[
 \mb E^{F_*}[\ul m_n(U) k(U,\ul \theta_n)] - \kappa_* = \mb E^{F_*}[(\ul m_n(U)-1) \iota(U)] + o(\delta_n^{1/2})\,,
\] 
with an analogous result holding for $\ol m_n$ and $\ol \theta_n$. It follows from the preceding display and its counterpart for $\ol m_n$ that
\begin{equation} \label{e:local:ub2}
 \frac{(\mb E^{\ol F_n}[k(U,\ol \theta_n)] - \mb E^{\ul F_n}[ k(U,\ul \theta_n)])^2}{4\delta_n} =  \frac{(\mb E^{F_*}[(\ol m_n(U) - \ul m_n(U)) \iota(U)])^2}{4\delta_n} + o(1) \,.
\end{equation}
Note that $\ol m_n \neq \ul m_n$ must hold for $n$ sufficiently large; otherwise, substituting (\ref{e:local:ub2}) into (\ref{e:local:ub1}) yields $o(1) \geq 2\mb E^{F_*}[\iota(U)^2] + \varepsilon$, a contradiction. By the triangle inequality and (\ref{e:mbdd:l2}) we have
\begin{equation} \label{e:local:ub3}
 \|\ol m_n - \ul m_n\|_2^2 \leq 2 \| \ol m_n - 1\|_2^2 + 2 \| \ul m_n - 1\|_2^2 \leq 8 \delta_n .
\end{equation}
It follows by substituting (\ref{e:local:ub2}) and (\ref{e:local:ub3}) into (\ref{e:local:ub1}) that
\[
 2\mb E^{F_*}[\iota(U)^2] + \varepsilon  \leq \frac{2(\mb E^{F_*}[(\ol m_n(U) - \ul m_n(U)) \iota(U)])^2}{\|\ol m_n - \ul m_n\|_2^2} + o(1) \leq 2 \mb E^{F_*}[\iota(U)^2] + o(1) ,
\]
where the second inequality is by Cauchy--Schwarz. As $n \to \infty$, $\varepsilon$ dominates the $o(1)$ term and we obtain a contradiction.
\end{proof}

\begin{lemma}\label{lem:local}
Suppose that the conditions of Theorem~\ref{t:local} hold, $(\hat \theta,\hat \gamma,\hat P_2) \to_p (\theta_*,\gamma_0,P_{20})$, and $\mb E^{F_*}[g(U, \theta, \gamma, P_2)g(U, \theta, \gamma, P_2)']$, $\mb E^{F_*}[g(U, \theta, \gamma, P_2)k(U, \theta, \gamma)]$, $\mb E^{F_*}[g(U, \theta, \gamma, P_2)]$, $\mb E^{F_*}[k(U, \theta, \gamma)]$, $\frac{\partial}{\partial \theta'} \mb E^{F_*}[g(U, \theta, \gamma, P_2)]$, $\frac{\partial}{\partial \theta'} \mb E^{F_*}[k(U, \theta, \gamma)]$, and $\mb E^{F_*}[k(U, \theta, \gamma)^2]$ are all continuous in $(\theta,\gamma,P_2)$ at $(\theta_*,\gamma_0,P_{20})$.
Then: $\hat s \to_p s$.
\end{lemma}

\begin{proof}[Proof of Lemma~\ref{lem:local}]
Immediate by consistency of $(\hat \theta,\hat \gamma,\hat P)$ and Slutsky's theorem.
\end{proof}

\section{Additional Details for Example~\ref{ex:simple}}\label{ax:simple}

We first check Condition S'. Note $D_\phi(F\|F^*) = \frac{1}{2}\theta^2$ with $F$ the $N(\theta,1)$ distribution. Therefore, $|\theta| < \sqrt{2 \delta}$ implies that Condition S' holds. In what follows we implicitly assume $|\theta| < \sqrt{ 2 \delta}$. 

By Proposition~\ref{prop:criterion} and the discussion thereafter for KL neighborhoods, 
\begin{align}
 \ol K_\delta(\theta) & = \inf_{\eta > 0, \lambda \in \mb R} \eta \log \E^{F_*} \left[ e^{ (\ind\{U \leq \theta\} - \lambda (U - \theta))/\eta} \right] + \eta \delta \notag \\
 & = \inf_{\eta > 0, \lambda \in \mb R} \eta \log \left( e^{1/\eta} \int_{-\infty}^\theta e^{-\lambda (u - \theta)/\eta} f_N(u) \, du + \int_\theta^\infty e^{-\lambda ( u - \theta)/\eta} f_N(u) \, du \right) + \eta \delta \notag \\
 & = \inf_{\eta > 0, \lambda \in \mb R} \lambda \theta + \frac{\lambda^2}{2\eta} + \eta \log \left\{ 1 + (e^{1/\eta} - 1) F_N\left( \theta + \frac{\lambda}{\eta} \right) \right\}  + \eta \delta  , \label{obj3}
\end{align}
where $f_N$ and $F_N$ denote the standard normal PDF and CDF, respectively, and the second line follows from the functional form of the moment generating function of the truncated normal distribution. As Condition S' holds at $\theta$, it follows from Proposition~\ref{prop:dual} that minimizing values of $\eta$ and $\lambda$ exist. The first-order conditions (FOCs)  are
\begin{align*}
 \eta: \quad 0 & = -\frac{\lambda^2}{2\eta^2} + \log \left\{ 1 + (e^{1/\eta} - 1) F_N\left( \theta + \frac{\lambda}{\eta} \right) \right\} + \delta  - \frac{1}{\eta} \frac{e^{1/\eta} F_N( \theta + \frac{\lambda}{\eta} ) + \lambda (e^{1/\eta} - 1) f_N( \theta + \frac{\lambda}{\eta} )}{1 + (e^{1/\eta} - 1) F_N( \theta + \frac{\lambda}{\eta} )} \\
 \lambda: \quad 0 & = \theta + \frac{\lambda}{\eta} + \frac{(e^{1/\eta} - 1)f_N( \theta + \frac{\lambda}{\eta} ) }{1 + (e^{1/\eta} - 1) F_N( \theta + \frac{\lambda}{\eta} )}.
\end{align*}

The multiplier $\lambda$ enters both FOCs through the ratio $r := \lambda/\eta$. Rearranging the FOC for $\lambda$ yields
\begin{equation}
 0 = (\theta + r) + (e^{1/\eta} - 1) ((\theta + r) F_N(\theta + r) + f_N(\theta + r)) \,, \label{obj5}
\end{equation}
which implicitly defines a function $r(\eta)$ on $(0,\infty)$. One may verify that $r(\eta)$ is strictly increasing, with $r(\eta) \to -\infty$ as $\eta \downarrow 0$. An asymptotic expansion of the error function \cite[formula 8.254]{GradshteynRyzhik} yields
\begin{equation}\label{eq:erf-expansion-0}
 F_N(x) = \frac{f_N(x)}{-x} \left( 1 - \frac{1}{x^2} + O\left( \frac{1}{x^4} \right) \right), \quad \mbox{as $x \to -\infty$,}
\end{equation}
 and hence
\begin{equation}\label{eq:erf-expansion}
 x F_N(x) + f_N(x) = \frac{f_N(x)}{x^2} \left( 1 + O \left( \frac{1}{x^2} \right) \right) , \quad \mbox{as $x \to -\infty$.} 
\end{equation}
Substituting into (\ref{obj5}) and taking logs, we obtain for small $\eta$ (and hence large negative $r$) that
\[
  \log(e^{1/\eta} - 1) =  \frac{1}{2} (- \theta - r)^2 + \frac{1}{2} \log(2 \pi) + 3 \log(-\theta - r) + O \left( \frac{1}{(\theta + r)^2} \right) \,.
\]
As $\eta \log(e^{1/\eta}-1) \to 1$ as $\eta \downarrow 0$ and $\log(-x)/x^2 \to 0$ as $x \to -\infty$, we obtain  
\begin{equation} \label{eq:asy-eta-r}
 -\theta-r = \sqrt{\frac{2}{\eta}}(1+o(1))  \quad \mbox{as $\eta \downarrow 0$.}
\end{equation}
Also note by (\ref{obj5}), (\ref{eq:erf-expansion-0}), and (\ref{eq:erf-expansion}) that
\begin{equation} \label{obj8}
 (e^{1/\eta} - 1)F_N(\theta + r) = (\theta + r)^2\left(1 + O\left( \frac{1}{(\theta + r)^2} \right) \right).
\end{equation}

Substituting the FOC for $\lambda$ into the FOC for $\eta$ and rearranging, yields
\begin{align*}
 0 & = \frac{\theta \lambda}{\eta} + \frac{\lambda^2}{2\eta^2} + \log \left\{ 1 + (e^{1/\eta} - 1) F_N( \theta + \frac{\lambda}{\eta} ) \right\} + \delta  - \frac{1}{\eta} \frac{e^{1/\eta} F_N\left( \theta + \frac{\lambda}{\eta} \right) }{1 + (e^{1/\eta} - 1) F_N\left( \theta + \frac{\lambda}{\eta} \right)}  \\
 & = - \frac{1}{2}\theta^2 + \log  (e^{1/\eta} - 1) - \frac{1}{2} \log (2 \pi) - \log(-\theta - r) + \delta  - \frac{1}{\eta} \frac{F_N( \theta + r ) + (e^{1/\eta} - 1) F_N( \theta + r ) }{1 + (e^{1/\eta} - 1) F_N( \theta + r )} \notag .
\end{align*} 
Now substituting the approximation (\ref{obj8}) into the previous display and using the fact that $\eta \log(e^{1/\eta}-1) \to 1$ as $\eta \downarrow 0$, we obtain
\begin{align*}
 0 & =-\frac{1}{2}\theta^2 + \frac{1}{\eta} + o(1) -\frac{1}{2}\log(2 \pi) -  \log(-\theta - r)  + \delta - \frac{1}{\eta} \left( 1 - \frac{1}{(\theta + r)^2} + o \left(  \frac{1}{(\theta + r)^2} \right) \right) \\
 & =-\frac{1}{2}\theta^2 + o(1) -\frac{1}{2}\log(2 \pi) -  \log(-\theta - r)  + \delta + \frac{1}{\eta (\theta + r)^2} + o \left(  \frac{1}{\eta(\theta + r)^2} \right) .
\end{align*}
Note by (\ref{eq:asy-eta-r}) that $\eta (\theta + r)^2 \to 2$ as $\eta \downarrow 0$. It follows that as $\delta \to \infty$, all terms in the above display remain bounded aside from $\log(-\theta - r)$. Therefore, the optimal $r = \lambda/\eta$ behaves like
\begin{equation} \label{eq:asy-eta-r-1}
 \log(-\theta - r) = \frac{1}{2}(1 - \theta^2) - \frac{1}{2} \log (2 \pi) + \delta + o(1) .
\end{equation}
Equivalently, by (\ref{eq:asy-eta-r}) we have that the optimal $\eta$ behaves like
\begin{equation} \label{eq:asy-eta-delta}
 \log \eta = \log (4 \pi) - (1 - \theta^2)  - 2 \delta + o(1). 
\end{equation}
Note, in particular, that this implies that the optimal $\eta$ is always positive but converges to zero at an exponential rate in $\delta$. This approximation is verified numerically in Figure~\ref{fig:toy_eta}.

\begin{figure}[t]
\makebox[\textwidth]{
\begin{subfigure}{.5\textwidth}
  \centering
  \includegraphics[trim = 0.3cm 0.3cm 0.3cm 0cm, clip, width=\linewidth]{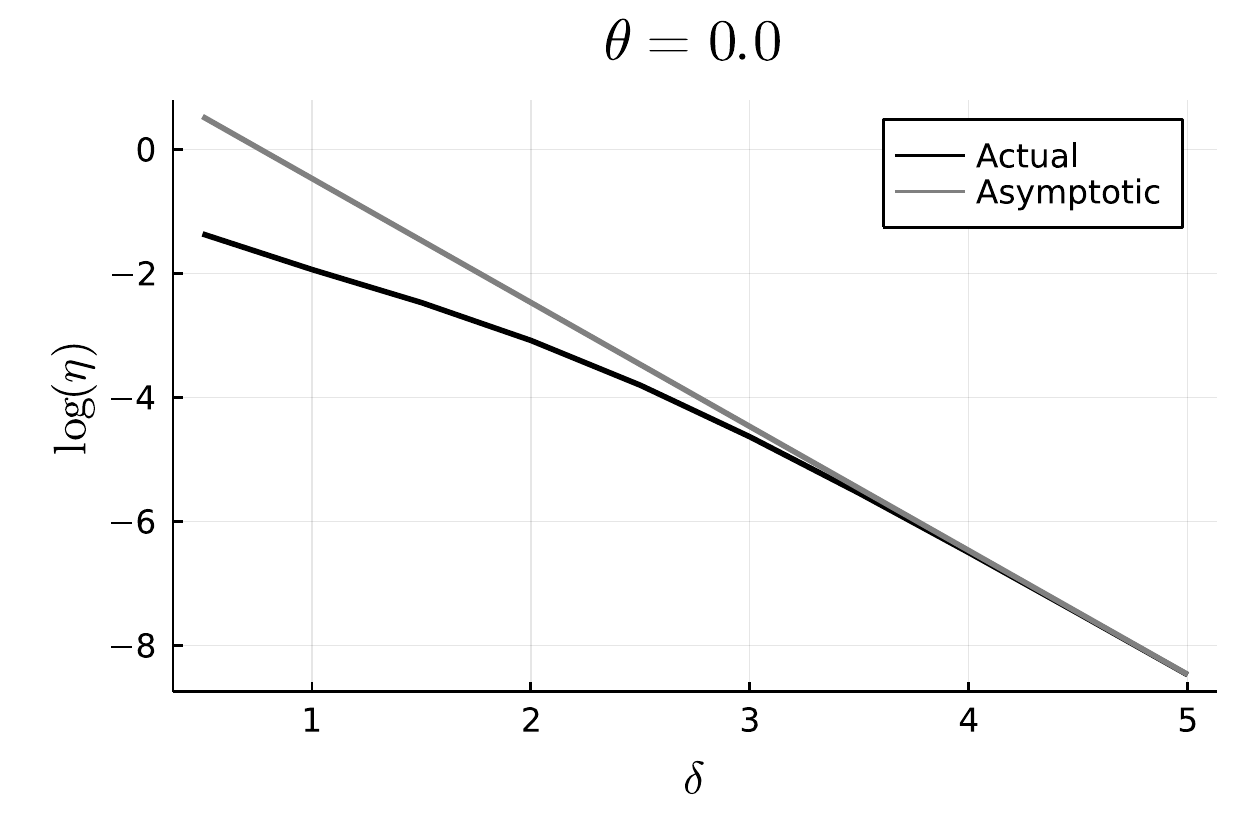}
  \end{subfigure}%
\begin{subfigure}{.5\textwidth}
  \centering
  \includegraphics[trim = 0.3cm 0.3cm 0.3cm 0cm, clip, width=\linewidth]{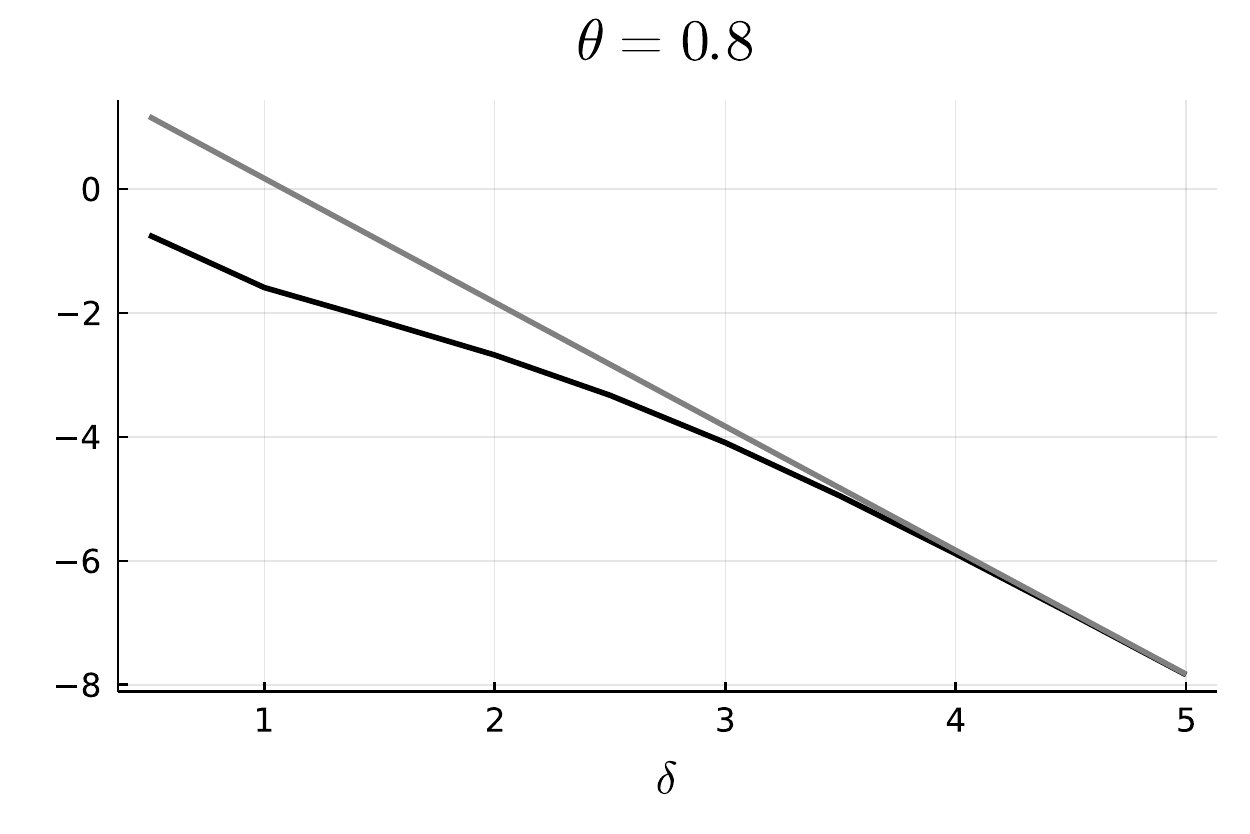}
\end{subfigure}}
\centering

\caption{\label{fig:toy_eta}Optimal value of $\log(\eta)$ and its large-$\delta$ approximation.}

\end{figure}

Finally, substituting the FOC for $\lambda$ into (\ref{obj3}) and then using (\ref{eq:asy-eta-r-1}), we obtain
\begin{align*}
 \ol K_\delta (\theta) & = \eta \left( -\frac{1}{2} \theta^2  +  \log (e^{1/\eta} - 1) - \frac{1}{2} \log(2 \pi) - \log(- \theta - r)  +  \delta \right) \\
 & = \eta \left( -\frac{1}{2} \theta^2  +  \log (e^{1/\eta} - 1) - \frac{1}{2} \log(2 \pi) - \frac{1}{2} + \frac{1}{2} \theta^2 + \frac{1}{2} \log(2 \pi) - \delta + o(1) +  \delta \right) \\
 & = 1 - \frac{\eta}{2} + o(\eta)\,,
\end{align*}
It now follows by substituting (\ref{eq:asy-eta-delta}) into the final line of the above display that
\[
 \ol K_\delta(\theta) = 1 - 2 \pi e^{-2 \delta - (1 - \theta^2)}(1+o(1)).
\]
This approximation is verified numerically in Figure~\ref{fig:toy_K}. 

\begin{figure}[t]
\makebox[\textwidth]{
\begin{subfigure}{.5\textwidth}
  \centering
  \includegraphics[trim = 0.3cm 0.3cm 0.3cm 0cm, clip, width=\linewidth]{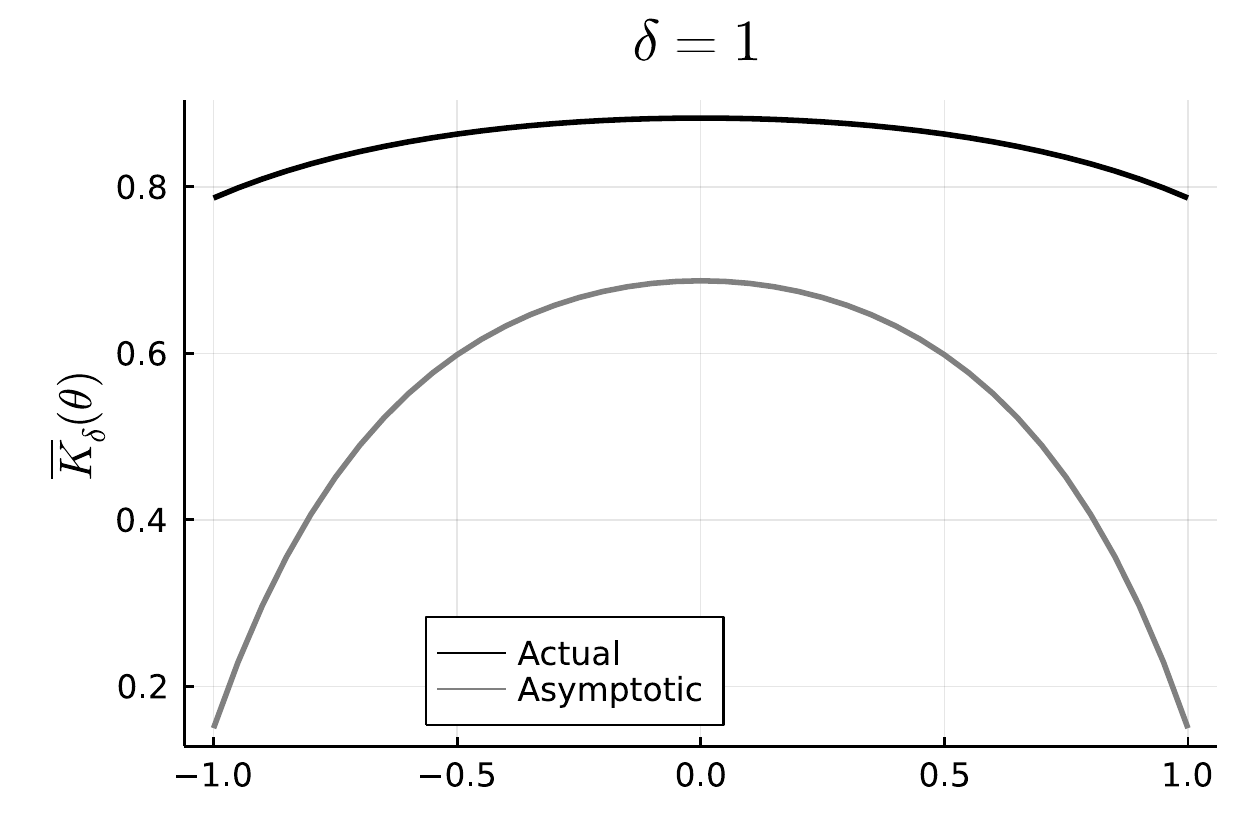}
  \end{subfigure}%
\begin{subfigure}{.5\textwidth}
  \centering
  \includegraphics[trim = 0.3cm 0.3cm 0.3cm 0cm, clip, width=\linewidth]{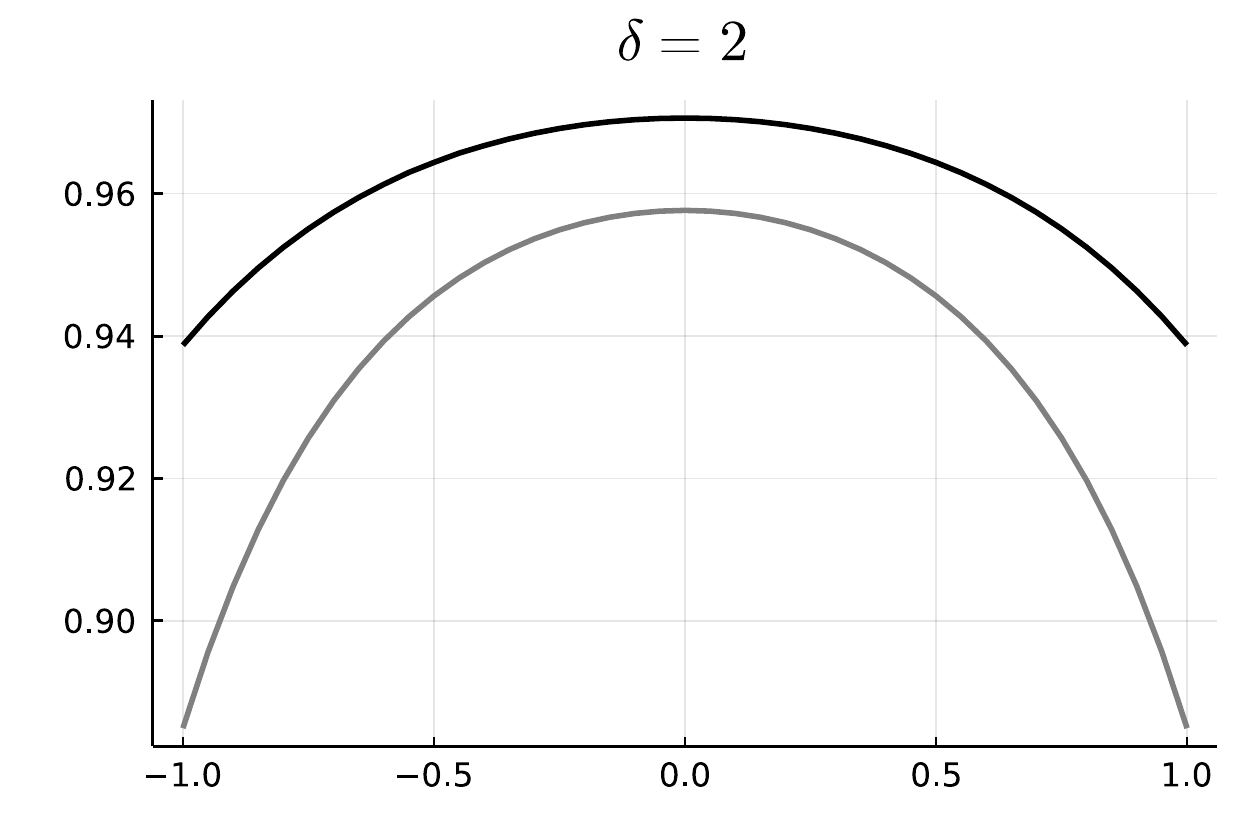}
\end{subfigure}}

\vskip 4pt

\makebox[\textwidth]{
\begin{subfigure}{.5\textwidth}
  \centering
  \includegraphics[trim = 0.3cm 0.3cm 0.3cm 0cm, clip, width=\linewidth]{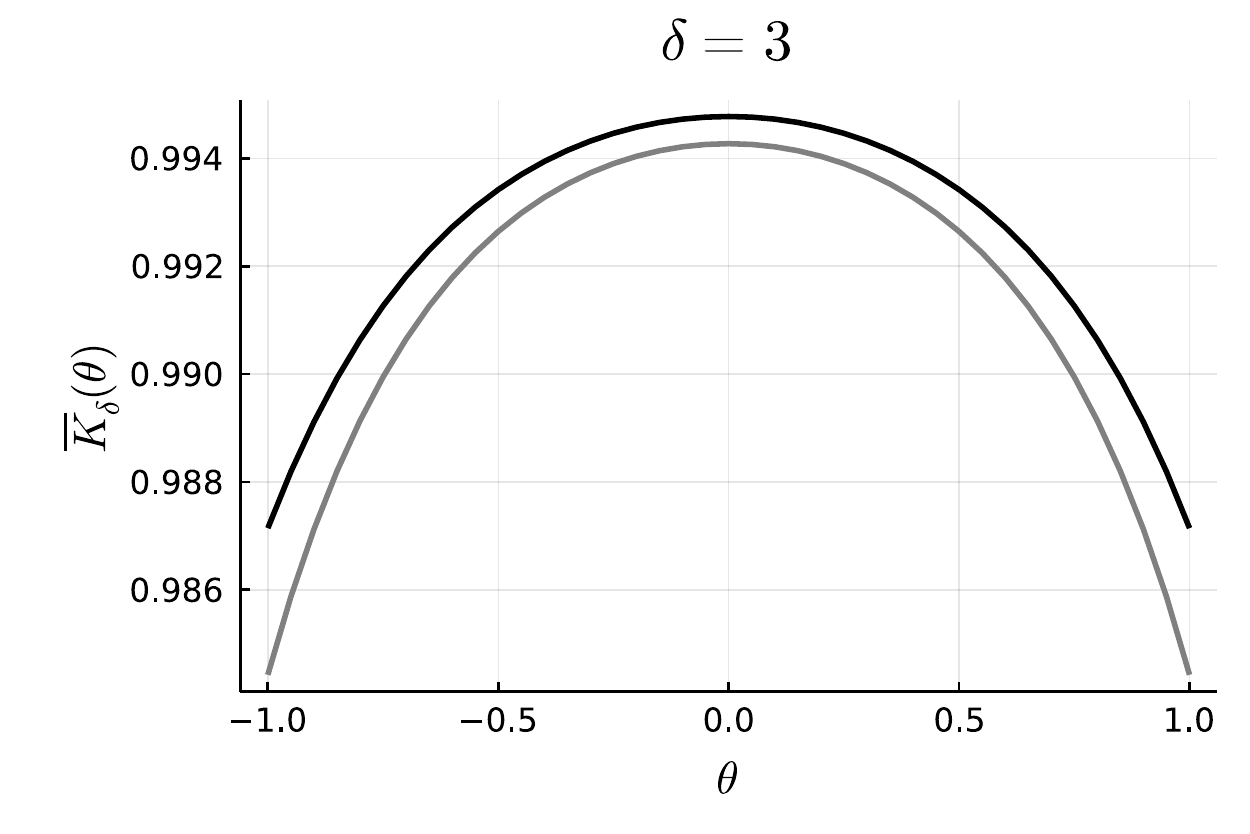}
  \end{subfigure}%
\begin{subfigure}{.5\textwidth}
  \centering
  \includegraphics[trim = 0.3cm 0.3cm 0.3cm 0cm, clip, width=\linewidth]{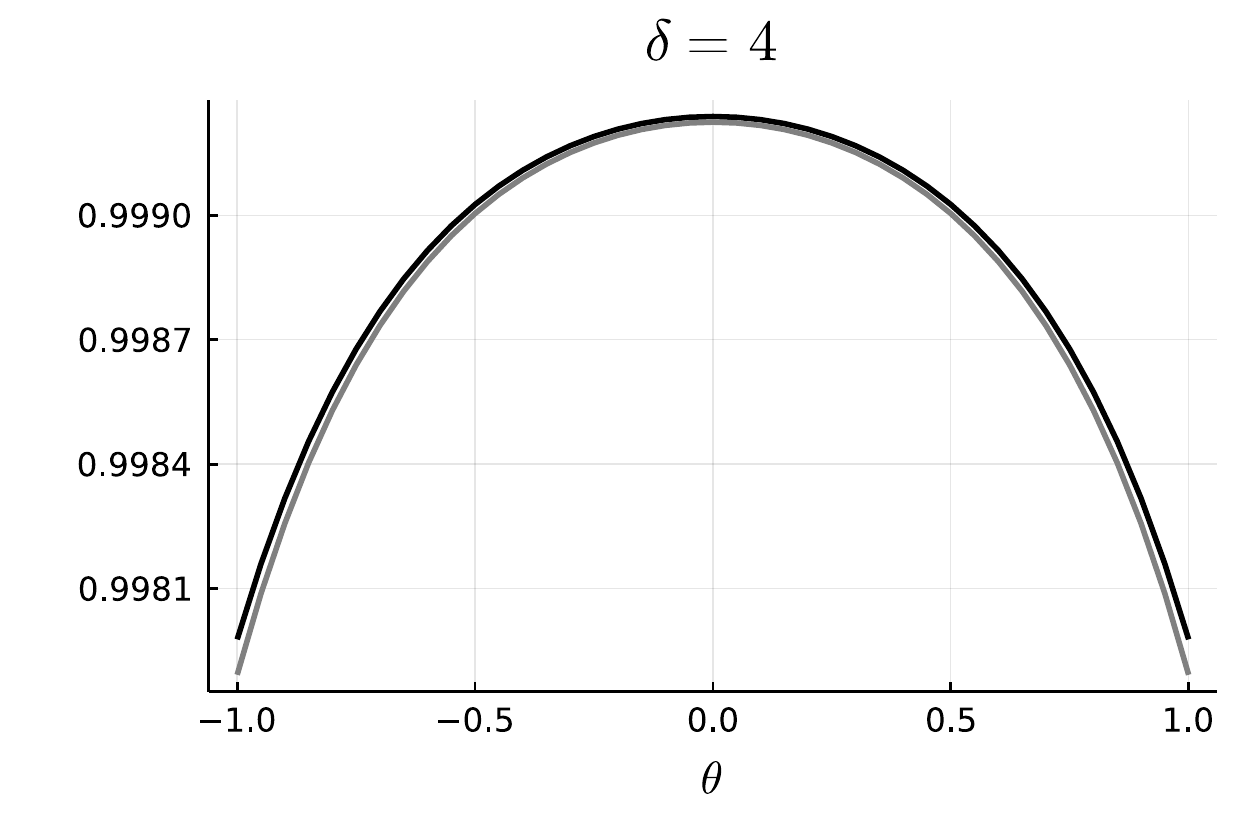}
\end{subfigure}}
\centering

\caption{\label{fig:toy_K} Objective function $\ol K_\delta(\theta)$ and its large-$\delta$ approximation.}

\end{figure}

It remains to derive an asymptotic expansion for the optimal value $\ol \kappa_\delta$. To this end, by the envelope theorem and the FOC for $\lambda$, we have
\[
 \frac{\partial \ol K_\delta(\theta)}{\partial \theta} = - \eta \theta,
\]
where $\eta > 0$ solves the inner problem at $\theta$. 
It follows that $\ol K_\delta(\theta)$ is maximized at $\theta = 0$ for all $\delta > 0$, and therefore that $\ol \kappa_\delta = 1 - 2 \pi e^{-2 \delta - 1}(1 + o(1))$.

\let\oldbibliography\thebibliography
\renewcommand{\thebibliography}[1]{\oldbibliography{#1}
\setlength{\itemsep}{0pt}}

{
\putbib
}

\endgroup

\end{bibunit}


\begin{thebibliography}{}

\bibitem[\protect\citeauthoryear{Aguirregabiria and Mira}{Aguirregabiria and
  Mira}{2007}]{AM2007}
Aguirregabiria, V. and P.~Mira (2007).
\newblock Sequential estimation of dynamic discrete games.
\newblock {\em Econometrica\/}~{\em 75\/}(1), 1--53.

\bibitem[\protect\citeauthoryear{Allen and Rehbeck}{Allen and
  Rehbeck}{2019}]{AllenRehbeck}
Allen, R. and J.~Rehbeck (2019).
\newblock Identification with additively separable heterogeneity.
\newblock {\em Econometrica\/}~{\em 87\/}(3), 1021--1054.

\bibitem[\protect\citeauthoryear{Andrews, Gentzkow, and Shapiro}{Andrews
  et~al.}{2017}]{AGS2017}
Andrews, I., M.~Gentzkow, and J.~M. Shapiro (2017).
\newblock Measuring the sensitivity of parameter estimates to estimation
  moments.
\newblock {\em The Quarterly Journal of Economics\/}~{\em 132\/}(4),
  1553--1592.

\bibitem[\protect\citeauthoryear{Andrews, Gentzkow, and Shapiro}{Andrews
  et~al.}{2020}]{AGS2018}
Andrews, I., M.~Gentzkow, and J.~M. Shapiro (2020).
\newblock On the informativeness of descriptive statistics for structural
  estimates.
\newblock {\em Econometrica\/}~(6), 2231--2258.

\bibitem[\protect\citeauthoryear{Armstrong and Koles\'{a}r}{Armstrong and
  Koles\'{a}r}{2021}]{AK}
Armstrong, T.~B. and M.~Koles\'{a}r (2021).
\newblock Sensitivity analysis using approximate moment condition models.
\newblock {\em Quantitative Economics\/}~{\em 12\/}(1), 77--108.

\bibitem[\protect\citeauthoryear{Bajari, Benkard, and Levin}{Bajari
  et~al.}{2007}]{BBL}
Bajari, P., C.~L. Benkard, and J.~Levin (2007).
\newblock Estimating dynamic models of imperfect competition.
\newblock {\em Econometrica\/}~{\em 75\/}(5), 1331--1370.

\bibitem[\protect\citeauthoryear{Beresteanu, Molchanov, and
  Molinari}{Beresteanu et~al.}{2011}]{BMM}
Beresteanu, A., I.~Molchanov, and F.~Molinari (2011).
\newblock Sharp identification regions in models with convex moment
  predictions.
\newblock {\em Econometrica\/}~{\em 79\/}(6), 1785--1821.

\bibitem[\protect\citeauthoryear{Berger}{Berger}{1984}]{Berger1984}
Berger, J.~O. (1984).
\newblock The robust {B}ayesian viewpoint (with discussion).
\newblock In J.~B. Kadane (Ed.), {\em Robustness of Bayesian analyses}.
  North-Holland.

\bibitem[\protect\citeauthoryear{Berry}{Berry}{1992}]{Berry1992}
Berry, S.~T. (1992).
\newblock Estimation of a model of entry in the airline industry.
\newblock {\em Econometrica\/}~{\em 60\/}(4), 889--917.

\bibitem[\protect\citeauthoryear{Berry and Haile}{Berry and
  Haile}{2010}]{BerryHaile2010}
Berry, S.~T. and P.~A. Haile (2010).
\newblock Nonparametric identification of multinomial choice demand models with
  heterogeneous consumers.
\newblock {Cowles Foundation Discussion Paper No. 1718}.

\bibitem[\protect\citeauthoryear{Berry and Haile}{Berry and
  Haile}{2014}]{BerryHaile2014}
Berry, S.~T. and P.~A. Haile (2014).
\newblock Identification in differentiated products markets using market level
  data.
\newblock {\em Econometrica\/}~{\em 82\/}(5), 1749--1797.

\bibitem[\protect\citeauthoryear{Bonhomme and Weidner}{Bonhomme and
  Weidner}{2021}]{BW}
Bonhomme, S. and M.~Weidner (2021).
\newblock Minimizing sensitivity to model misspecification.
\newblock {\em Quantitative Economics\/}, forthcoming.

\bibitem[\protect\citeauthoryear{Bresnahan and Reiss}{Bresnahan and
  Reiss}{1990}]{BresnahanReiss}
Bresnahan, T.~F. and P.~C. Reiss (1990).
\newblock Entry in monopoly market.
\newblock {\em The Review of Economic Studies\/}~{\em 57\/}(4), 531--553.

\bibitem[\protect\citeauthoryear{Bresnahan and Reiss}{Bresnahan and
  Reiss}{1991}]{BresnahanReiss1991}
Bresnahan, T.~F. and P.~C. Reiss (1991).
\newblock Empirical models of discrete games.
\newblock {\em Journal of Econometrics\/}~{\em 48\/}(1), 57--81.

\bibitem[\protect\citeauthoryear{Chamberlain and Leamer}{Chamberlain and
  Leamer}{1976}]{ChamberlainLeamer}
Chamberlain, G. and E.~E. Leamer (1976).
\newblock Matrix weighted averages and posterior bounds.
\newblock {\em Journal of the Royal Statistical Society. Series B
  (Methodological)\/}~{\em 38\/}(1), 73--84.

\bibitem[\protect\citeauthoryear{Chen, Christensen, and Tamer}{Chen
  et~al.}{2018}]{CCT}
Chen, X., T.~M. Christensen, and E.~Tamer (2018).
\newblock Monte carlo confidence sets for identified sets.
\newblock {\em Econometrica\/}~{\em 86\/}(6), 1965--2018.

\bibitem[\protect\citeauthoryear{Chen, Tamer, and Torgovitsky}{Chen
  et~al.}{2011}]{CTT}
Chen, X., E.~Tamer, and A.~Torgovitsky (2011).
\newblock Sensitivity analysis in semiparametric likelihood models.
\newblock {Cowles Foundation Discussion Paper No. 1836}.

\bibitem[\protect\citeauthoryear{Chesher and Rosen}{Chesher and
  Rosen}{2017}]{ChesherRosen2017}
Chesher, A. and A.~M. Rosen (2017).
\newblock Generalized instrumental variable models.
\newblock {\em Econometrica\/}~{\em 85\/}(3), 959--989.

\bibitem[\protect\citeauthoryear{Chiappori, Salani\'{e}, and Weiss}{Chiappori
  et~al.}{2017}]{CSW}
Chiappori, P.-A., B.~Salani\'{e}, and Y.~Weiss (2017).
\newblock Partner choice, investment in children, and the marital college
  premium.
\newblock {\em American Economic Review\/}~{\em 107\/}(8), 2109--2167.

\bibitem[\protect\citeauthoryear{Choo and Siow}{Choo and Siow}{2006}]{ChooSiow}
Choo, E. and A.~Siow (2006).
\newblock Who marries whom and why.
\newblock {\em Journal of Political Economy\/}~{\em 114\/}(1), 175--201.

\bibitem[\protect\citeauthoryear{Ciliberto and Tamer}{Ciliberto and
  Tamer}{2009}]{CilibertoTamer}
Ciliberto, F. and E.~Tamer (2009).
\newblock Market structure and multiple equilibria in airline markets.
\newblock {\em Econometrica\/}~{\em 77\/}(6), 1791--1828.

\bibitem[\protect\citeauthoryear{Csisz{\'a}r}{Csisz{\'a}r}{1975}]{Csiszar1975}
Csisz{\'a}r, I. (1975).
\newblock ${I}$-divergence geometry of probability distributions and
  minimization problems.
\newblock {\em The Annals of Probability\/}~{\em 3\/}(1), 146--158.

\bibitem[\protect\citeauthoryear{Csisz\'{a}r and Shields}{Csisz\'{a}r and
  Shields}{2004}]{CsiszarShields}
Csisz\'{a}r, I. and P.~C. Shields (2004).
\newblock Information theory and statistics: A tutorial.
\newblock {\em Communications and Information Theory\/}~{\em 1\/}(4), 417--528.

\bibitem[\protect\citeauthoryear{Cuturi}{Cuturi}{2013}]{Cuturi2013}
Cuturi, M. (2013).
\newblock Sinkhorn distances: Lightspeed computation of optimal transport.
\newblock {\em Advances in Neural Information Processing Systems\/}~{\em 26},
  2292--2300.

\bibitem[\protect\citeauthoryear{Dagsvik}{Dagsvik}{2000}]{Dagsvik}
Dagsvik, J.~K. (2000).
\newblock Aggregation in matching markets.
\newblock {\em International Economic Review\/}~{\em 41\/}(1), 27--58.

\bibitem[\protect\citeauthoryear{Duchi and Namkoong}{Duchi and
  Namkoong}{2021}]{DuchiNamkoong}
Duchi, J.~C. and H.~Namkoong (2021).
\newblock {Learning models with uniform performance via distributionally robust
  optimization}.
\newblock {\em The Annals of Statistics\/}~{\em 49\/}(3), 1378--1406.

\bibitem[\protect\citeauthoryear{Ekeland, Galichon, and Henry}{Ekeland
  et~al.}{2010}]{EGH}
Ekeland, I., A.~Galichon, and M.~Henry (2010).
\newblock Optimal transportation and the falsifiability of incompletely
  specified economic models.
\newblock {\em Economic Theory\/}~{\em 42\/}(2), 355--374.

\bibitem[\protect\citeauthoryear{Fang and Santos}{Fang and
  Santos}{2019}]{FangSantos}
Fang, Z. and A.~Santos (2019).
\newblock Inference on directionally differentiable functions.
\newblock {\em Review of Economic Studies\/}~{\em 86\/}(1), 377--412.

\bibitem[\protect\citeauthoryear{Galichon and Henry}{Galichon and
  Henry}{2011}]{GalichonHenry2011}
Galichon, A. and M.~Henry (2011).
\newblock {Set Identification in Models with Multiple Equilibria}.
\newblock {\em The Review of Economic Studies\/}~{\em 78\/}(4), 1264--1298.

\bibitem[\protect\citeauthoryear{Galichon and Salani\'{e}}{Galichon and
  Salani\'{e}}{2021}]{GalichonSalanie}
Galichon, A. and B.~Salani\'{e} (2021).
\newblock Cupid's invisible hand: Social surplus and identification in matching
  models.
\newblock {\em The Review of Economic Studies\/}, forthcoming.

\bibitem[\protect\citeauthoryear{Giacomini, Kitagawa, and Uhlig}{Giacomini
  et~al.}{2016}]{GKU}
Giacomini, R., T.~Kitagawa, and H.~Uhlig (2016).
\newblock Estimation under ambiguity.
\newblock Technical report, UCL and Chicago.

\bibitem[\protect\citeauthoryear{Grieco}{Grieco}{2014}]{Grieco}
Grieco, P. L.~E. (2014).
\newblock Discrete games with flexible information structures: an application
  to local grocery markets.
\newblock {\em The RAND Journal of Economics\/}~{\em 45\/}(2), 303--340.

\bibitem[\protect\citeauthoryear{Gualdani and Sinha}{Gualdani and
  Sinha}{2020}]{GualdaniSinha}
Gualdani, C. and S.~Sinha (2020).
\newblock Partial identification in matching models for the marriage market.
\newblock arxiv:1902.05610 [econ.em].

\bibitem[\protect\citeauthoryear{Hansen and Sargent}{Hansen and
  Sargent}{2001}]{HS2001}
Hansen, L.~P. and T.~J. Sargent (2001).
\newblock Robust control and model uncertainty.
\newblock {\em The American Economic Review\/}~{\em 91\/}(2), 60--66.

\bibitem[\protect\citeauthoryear{Hansen and Sargent}{Hansen and
  Sargent}{2008}]{HS2008}
Hansen, L.~P. and T.~J. Sargent (2008).
\newblock {\em Robustness}.
\newblock Princeton.

\bibitem[\protect\citeauthoryear{Ho}{Ho}{2018}]{Ho}
Ho, P. (2018).
\newblock Global robust bayesian analysis in large models.
\newblock Technical report, Princeton.

\bibitem[\protect\citeauthoryear{Kalouptsidi, Kitamura, Lima, and
  Souza-Rodrigues}{Kalouptsidi et~al.}{2020}]{KKLS}
Kalouptsidi, M., Y.~Kitamura, L.~Lima, and E.~Souza-Rodrigues (2020).
\newblock Partial identification and inference for dynamic models and
  counterfactuals.
\newblock {\em NBER Working Paper No. 26761\/}.

\bibitem[\protect\citeauthoryear{Kalouptsidi, Scott, and
  Souza-Rodrigues}{Kalouptsidi et~al.}{2021}]{KSS}
Kalouptsidi, M., P.~T. Scott, and E.~Souza-Rodrigues (2021).
\newblock Identification of counterfactuals in dynamic discrete choice models.
\newblock {\em Quantitative Economics\/}~{\em 12\/}(2), 351--403.

\bibitem[\protect\citeauthoryear{Kitamura, Otsu, and Evdokimov}{Kitamura
  et~al.}{2013}]{KOE}
Kitamura, Y., T.~Otsu, and K.~Evdokimov (2013).
\newblock Robustness, infinitesimal neighborhoods, and moment restrictions.
\newblock {\em Econometrica\/}~{\em 81\/}(3), 1185--1201.

\bibitem[\protect\citeauthoryear{Kline and Tamer}{Kline and
  Tamer}{2016}]{KlineTamer2016}
Kline, B. and E.~Tamer (2016).
\newblock Bayesian inference in a class of partially identified models.
\newblock {\em Quantitative Economics\/}~{\em 7\/}(2), 329--366.

\bibitem[\protect\citeauthoryear{Komunjer and Ragusa}{Komunjer and
  Ragusa}{2016}]{KR}
Komunjer, I. and G.~Ragusa (2016).
\newblock Existence and characterization of conditional density projections.
\newblock {\em Econometric Theory\/}~{\em 32\/}(4), 947--987.

\bibitem[\protect\citeauthoryear{Laff\'{e}rs}{Laff\'{e}rs}{2019}]{Laffers}
Laff\'{e}rs, L. (2019).
\newblock Bounding average treatment effects using linear programming.
\newblock {\em Empirical Economics\/}~{\em 57\/}(3), 727--767.

\bibitem[\protect\citeauthoryear{Leamer}{Leamer}{1982}]{Leamer1982}
Leamer, E.~E. (1982).
\newblock Sets of posterior means with bounded variance priors.
\newblock {\em Econometrica\/}~{\em 50\/}(3), 725--736.

\bibitem[\protect\citeauthoryear{Leamer}{Leamer}{1985}]{Leamer1985}
Leamer, E.~E. (1985).
\newblock Sensitivity analyses would help.
\newblock {\em The American Economic Review\/}~{\em 75\/}(3), 308--313.

\bibitem[\protect\citeauthoryear{Li}{Li}{2018}]{Li}
Li, L. (2018).
\newblock Identification of structural and counterfactual parameters in a large
  class of structural econometric models.
\newblock Technical report, Pennsylvania State University.

\bibitem[\protect\citeauthoryear{Liese and Vajda}{Liese and
  Vajda}{1987}]{LieseVajda}
Liese, F. and I.~Vajda (1987).
\newblock {\em Convex statistical distances}.
\newblock Teubner, Leipzig.

\bibitem[\protect\citeauthoryear{Maccheroni, Marinacci, and
  Rustichini}{Maccheroni et~al.}{2006}]{MMR}
Maccheroni, F., M.~Marinacci, and A.~Rustichini (2006).
\newblock Ambiguity aversion, robustness, and the variational representation of
  preferences.
\newblock {\em Econometrica\/}~{\em 74\/}(6), 1447--1498.

\bibitem[\protect\citeauthoryear{Manski}{Manski}{2007}]{Manski2007}
Manski, C.~F. (2007).
\newblock Partial identification of counterfactual choice probabilities.
\newblock {\em International Economic Review\/}~{\em 48\/}(4), 1393--1410.

\bibitem[\protect\citeauthoryear{Manski}{Manski}{2014}]{Manski2014}
Manski, C.~F. (2014).
\newblock Identification of income--leisure preferences and evaluation of
  income tax policy.
\newblock {\em Quantitative Economics\/}~{\em 5\/}(1), 145--174.

\bibitem[\protect\citeauthoryear{McFadden}{McFadden}{1974}]{McFadden1974}
McFadden, D.~L. (1974).
\newblock {\em Conditional Logit Analysis of Qualitative Choice Behavior}, pp.\
   105--142.
\newblock Frontiers in Econometrics. Academic Press.

\bibitem[\protect\citeauthoryear{McFadden}{McFadden}{1978}]{McFadden1978}
McFadden, D.~L. (1978).
\newblock {\em Modelling the Choice of Residential Location}, pp.\  75--96.
\newblock Spatial Interaction Theory and Planning Models. North Holland:
  Amsterdam.

\bibitem[\protect\citeauthoryear{Milgrom and Segal}{Milgrom and
  Segal}{2002}]{MilgromSegal}
Milgrom, P. and I.~Segal (2002).
\newblock Envelope theorems for arbitrary choice sets.
\newblock {\em Econometrica\/}~{\em 70\/}(2), 583--601.

\bibitem[\protect\citeauthoryear{Mukhin}{Mukhin}{2018}]{Mukhin}
Mukhin, Y. (2018).
\newblock Sensitivity of regular estimators.
\newblock arxiv:1805.08883 [econ.em].

\bibitem[\protect\citeauthoryear{Norets and Tang}{Norets and
  Tang}{2014}]{NoretsTang}
Norets, A. and X.~Tang (2014).
\newblock Semiparametric inference in dynamic binary choice models.
\newblock {\em The Review of Economic Studies\/}~{\em 81\/}(3), 1229--1262.

\bibitem[\protect\citeauthoryear{Owen}{Owen}{2017}]{Owen2017}
Owen, A.~B. (2017).
\newblock A randomized {H}alton algorithm in {R}.
\newblock arxiv:1706.02808 [stat.co].

\bibitem[\protect\citeauthoryear{{Qiao} and {Minematsu}}{{Qiao} and
  {Minematsu}}{2010}]{QiaoMinematsu}
{Qiao}, Y. and N.~{Minematsu} (2010).
\newblock A study on invariance of $f$-divergence and its application to speech
  recognition.
\newblock {\em IEEE Transactions on Signal Processing\/}~{\em 58\/}(7),
  3884--3890.

\bibitem[\protect\citeauthoryear{Rockafellar}{Rockafellar}{1970}]{Rockafellar}
Rockafellar, R.~T. (1970).
\newblock {\em Convex Analysis}.
\newblock Princeton University Press.

\bibitem[\protect\citeauthoryear{Rust}{Rust}{1987}]{Rust}
Rust, J. (1987).
\newblock Optimal replacement of {GMC} bus engines: An empirical model of
  {H}arold {Z}urcher.
\newblock {\em Econometrica\/}~{\em 55\/}(5), 999--1033.

\bibitem[\protect\citeauthoryear{Schennach}{Schennach}{2014}]{Schennach}
Schennach, S.~M. (2014).
\newblock Entropic latent variable integration via simulation.
\newblock {\em Econometrica\/}~{\em 82\/}(1), 345--385.

\bibitem[\protect\citeauthoryear{Shapiro}{Shapiro}{1990}]{Shapiro1990}
Shapiro, A. (1990).
\newblock On concepts of directional differentiability.
\newblock {\em Journal of Optimization Theory and Applications\/}~{\em
  66\/}(3), 477--487.

\bibitem[\protect\citeauthoryear{Shapiro}{Shapiro}{2017}]{Shapiro2017}
Shapiro, A. (2017).
\newblock Distributionally robust stochastic programming.
\newblock {\em SIAM Journal on Optimization\/}~{\em 27\/}(4), 2258--2275.

\bibitem[\protect\citeauthoryear{Su and Judd}{Su and Judd}{2012}]{SuJudd}
Su, C.-L. and K.~L. Judd (2012).
\newblock Constrained optimization approaches to estimation of structural
  models.
\newblock {\em Econometrica\/}~{\em 80\/}(5), 2213--2230.

\bibitem[\protect\citeauthoryear{Tamer}{Tamer}{2003}]{Tamer2003}
Tamer, E. (2003).
\newblock Incomplete simultaneous discrete response model with multiple
  equilibria.
\newblock {\em The Review of Economic Studies\/}~{\em 70\/}(1), 147--165.

\bibitem[\protect\citeauthoryear{Tamer}{Tamer}{2015}]{Tamer2015}
Tamer, E. (2015).
\newblock Sensitivity analysis in some econometric models.
\newblock Cowles Lecture, Econometric Society World Congress, August 17-21,
  2015, Montr\'{e}al.

\bibitem[\protect\citeauthoryear{Tebaldi, Torgovitsky, and Yang}{Tebaldi
  et~al.}{2019}]{TTY}
Tebaldi, P., A.~Torgovitsky, and H.~Yang (2019).
\newblock Nonparametric estimates of demand in the california health insurance
  exchange.
\newblock {NBER} working paper.

\bibitem[\protect\citeauthoryear{Torgovitsky}{Torgovitsky}{2019a}]{Torgovitsky}
Torgovitsky, A. (2019a).
\newblock Nonparametric inference on state dependence in unemployment.
\newblock {\em Econometrica\/}~{\em 87\/}(5), 1475--1505.

\bibitem[\protect\citeauthoryear{Torgovitsky}{Torgovitsky}{2019b}]{Torgovitsky2019QE}
Torgovitsky, A. (2019b).
\newblock Partial identification by extending subdistributions.
\newblock {\em Quantitative Economics\/}~{\em 10\/}(1), 105--144.

\end{thebibliography}


\begin{thebibliography}{}

\bibitem[\protect\citeauthoryear{Andrews, Gentzkow, and Shapiro}{Andrews
  et~al.}{2017}]{AGS2017}
Andrews, I., M.~Gentzkow, and J.~M. Shapiro (2017).
\newblock Measuring the sensitivity of parameter estimates to estimation
  moments.
\newblock {\em The Quarterly Journal of Economics\/}~{\em 132\/}(4),
  1553--1592.

\bibitem[\protect\citeauthoryear{Andrews, Gentzkow, and Shapiro}{Andrews
  et~al.}{2020}]{AGS2018}
Andrews, I., M.~Gentzkow, and J.~M. Shapiro (2020).
\newblock On the informativeness of descriptive statistics for structural
  estimates.
\newblock {\em Econometrica\/}~(6), 2231--2258.

\bibitem[\protect\citeauthoryear{Bonhomme and Weidner}{Bonhomme and
  Weidner}{2021}]{BW}
Bonhomme, S. and M.~Weidner (2021).
\newblock Minimizing sensitivity to model misspecification.
\newblock {\em Quantitative Economics\/}, forthcoming.

\bibitem[\protect\citeauthoryear{Csisz\'{a}r and Shields}{Csisz\'{a}r and
  Shields}{2004}]{CsiszarShields}
Csisz\'{a}r, I. and P.~C. Shields (2004).
\newblock Information theory and statistics: A tutorial.
\newblock {\em Communications and Information Theory\/}~{\em 1\/}(4), 417--528.

\bibitem[\protect\citeauthoryear{Fang and Santos}{Fang and
  Santos}{2019}]{FangSantos}
Fang, Z. and A.~Santos (2019).
\newblock Inference on directionally differentiable functions.
\newblock {\em Review of Economic Studies\/}~{\em 86\/}(1), 377--412.

\bibitem[\protect\citeauthoryear{Hiriart-Urruty and
  Lemar\'echal}{Hiriart-Urruty and Lemar\'echal}{2001}]{HU-L}
Hiriart-Urruty, J.-B. and C.~Lemar\'echal (2001).
\newblock {\em Fundamentals of Convex Analysis}.
\newblock Springer.

\bibitem[\protect\citeauthoryear{Lehmann and Casella}{Lehmann and
  Casella}{1998}]{LehmannCasella}
Lehmann, E.~L. and G.~Casella (1998).
\newblock {\em Theory of Point Estimation\/} (2 ed.).
\newblock Springer-Verlag New York, Inc.

\bibitem[\protect\citeauthoryear{Milgrom and Segal}{Milgrom and
  Segal}{2002}]{MilgromSegal}
Milgrom, P. and I.~Segal (2002).
\newblock Envelope theorems for arbitrary choice sets.
\newblock {\em Econometrica\/}~{\em 70\/}(2), 583--601.

\bibitem[\protect\citeauthoryear{Rockafellar}{Rockafellar}{1970}]{Rockafellar}
Rockafellar, R.~T. (1970).
\newblock {\em Convex Analysis}.
\newblock Princeton University Press.

\bibitem[\protect\citeauthoryear{Shapiro}{Shapiro}{1991}]{Shapiro1991}
Shapiro, A. (1991).
\newblock Asymptotic analysis of stochastic programs.
\newblock {\em Annals of Operations Research\/}~{\em 30\/}(1), 169--186.

\end{thebibliography}


\begin{thebibliography}{}

\bibitem[\protect\citeauthoryear{Bickel, Klaassen, Ritov, and Wellner}{Bickel
  et~al.}{1993}]{BKRW}
Bickel, P.~J., C.~A. Klaassen, Y.~Ritov, and J.~A. Wellner (1993).
\newblock {\em Efficient and Adaptive Estimation for Semiparametric Models}.
\newblock Springer-Verlag, New York.

\bibitem[\protect\citeauthoryear{Bonnans and Shapiro}{Bonnans and
  Shapiro}{2000}]{BS}
Bonnans, J. and A.~Shapiro (2000).
\newblock {\em Perturbation Analysis of Optimization Problems}.
\newblock Springer.

\bibitem[\protect\citeauthoryear{Csisz\'{a}r and Mat\'{u}\v{s}}{Csisz\'{a}r and
  Mat\'{u}\v{s}}{2012}]{CM}
Csisz\'{a}r, I. and F.~Mat\'{u}\v{s} (2012).
\newblock Generalized minimizers of convex integral functionals, {B}regman
  distance, {P}ythagorean identities.
\newblock {\em Kybernetika\/}~{\em 48\/}(4), 637--689.

\bibitem[\protect\citeauthoryear{Gradshteyn and Ryzhik}{Gradshteyn and
  Ryzhik}{2014}]{GradshteynRyzhik}
Gradshteyn, I.~S. and I.~M. Ryzhik (2014).
\newblock {\em Table of Integrals, Series, and Products\/} (8 ed.).
\newblock Academic Press.

\bibitem[\protect\citeauthoryear{Komunjer and Ragusa}{Komunjer and
  Ragusa}{2016}]{KR}
Komunjer, I. and G.~Ragusa (2016).
\newblock Existence and characterization of conditional density projections.
\newblock {\em Econometric Theory\/}~{\em 32\/}(4), 947--987.

\bibitem[\protect\citeauthoryear{Krasnosel'skii and Rutickii}{Krasnosel'skii
  and Rutickii}{1961}]{Kras}
Krasnosel'skii, M.~A. and {\relax Ya}.~B. Rutickii (1961).
\newblock {\em Convex Functions and Orlicz Spaces}.
\newblock P. Noordhoff Ltd., Groningen.

\bibitem[\protect\citeauthoryear{Newey and McFadden}{Newey and
  McFadden}{1994}]{NeweyMcFadden}
Newey, W.~K. and D.~McFadden (1994).
\newblock Chapter 36 large sample estimation and hypothesis testing.
\newblock Volume~4 of {\em Handbook of Econometrics}, pp.\  2111--2245.
  Elsevier.

\bibitem[\protect\citeauthoryear{Rockafellar}{Rockafellar}{1970}]{Rockafellar}
Rockafellar, R.~T. (1970).
\newblock {\em Convex Analysis}.
\newblock Princeton University Press.

\bibitem[\protect\citeauthoryear{Rockafellar and Wets}{Rockafellar and
  Wets}{1998}]{RW}
Rockafellar, R.~T. and R.~J.~B. Wets (1998).
\newblock {\em Variational Analysis}.
\newblock Springer.

\bibitem[\protect\citeauthoryear{Shapiro}{Shapiro}{1991}]{Shapiro1991}
Shapiro, A. (1991).
\newblock Asymptotic analysis of stochastic programs.
\newblock {\em Annals of Operations Research\/}~{\em 30\/}(1), 169--186.

\end{thebibliography}
\end{document}